\pdfoutput=1
\documentclass{article}

\usepackage{arxiv}
\usepackage{amsmath,amssymb,amsfonts,amsthm,mathtools}
\usepackage[utf8]{inputenc}
\usepackage[T1]{fontenc}
\usepackage{cite}
\usepackage{url}
\usepackage{booktabs}
\usepackage{array}
\usepackage{microtype}
\usepackage{xcolor}
\usepackage{graphicx}
\usepackage{caption}
\usepackage{algorithm}
\usepackage{algorithmic}
\usepackage{enumitem}
\usepackage[section]{placeins}
\usepackage{hyperref}
\hypersetup{colorlinks=true,
  linkcolor={black!45!blue}, citecolor={black!45!blue}, urlcolor={black!45!blue},
  pdftitle={Rarity Determines Minimax Rates of Semantic Entropy From Weak Pairwise Judgments With a Known Channel},
  pdfauthor={Xiaoli Yu; Jiamiao Liu},
  pdfkeywords={entropy estimation, large language models, minimax rates, stochastic block model}}
\usepackage[capitalize,noabbrev]{cleveref}

\theoremstyle{plain}
\newtheorem{theorem}{Theorem}
\newtheorem{corollary}{Corollary}
\newtheorem{proposition}{Proposition}
\newtheorem{lemma}{Lemma}
\theoremstyle{definition}
\newtheorem{definition}{Definition}
\theoremstyle{remark}
\newtheorem{remark}{Remark}

\newcommand{\Real}{\mathbb{R}}
\newcommand{\Cpx}{\mathbb{C}}
\newcommand{\Exp}{\mathbb{E}}
\newcommand{\Prob}{\mathbb{P}}
\newcommand{\Var}{\mathrm{Var}}
\newcommand{\Cov}{\mathrm{Cov}}
\newcommand{\Bern}{\mathrm{Bern}}
\newcommand{\TV}{\mathrm{TV}}
\newcommand{\Ind}[1]{\mathbf{1}\{#1\}}
\newcommand{\Ent}{H}
\newcommand{\hH}{\widehat{H}}
\newcommand{\Simplex}{\Theta_K}
\newcommand{\Rare}[2]{\Theta_{K,#1}(#2)}
\newcommand{\Mom}{\mathcal{M}_K}
\newcommand{\hT}{\widehat{T}}
\newcommand{\hTs}{\widehat{T}^{\star}}
\newcommand{\tP}{\widetilde{P}}
\newcommand{\htheta}{\widehat{\theta}}
\newcommand{\Trees}{\mathcal{T}}
\newcommand{\tr}{\mathsf{t}}
\newcommand{\SG}{S_{\mathrm{G}}}
\newcommand{\Sw}[1]{\widetilde{S}_{#1}}
\newcommand{\Sobs}{S_{\mathrm{obs}}}
\newcommand{\Pu}{\mathsf{P}}
\newcommand{\DS}{\mathsf{D}}
\newcommand{\coef}[1]{[J^{#1}]}
\newcommand{\eps}{\varepsilon}
\newcommand{\pbar}{\bar p}
\newcommand{\Lam}{\Lambda}
\newcommand{\Poly}{\mathrm{Poly}}
\newcommand{\Rem}{\mathrm{Rem}}
\newcommand{\Rloc}{R_{\mathrm{loc}}}
\newcommand{\Chc}{C_{\mathrm{hc}}}

\graphicspath{{figures/}}

\renewcommand{\shorttitle}{Rarity Determines Minimax Rates of Semantic Entropy}

\title{Rarity Determines Minimax Rates of Semantic Entropy From Weak Pairwise Judgments With a Known Channel%
\thanks{This work has been submitted to the IEEE for possible publication. Copyright may be transferred without notice,
after which this version may no longer be accessible.}}

\author{%
  Xiaoli Yu \\
  School of Cyber Security and Information Law \\
  Chongqing University of Posts and Telecommunications \\
  Chongqing 400065, China
  \And
  Jiamiao Liu\thanks{Corresponding author.} \\
  Department of Information, Xinqiao Hospital \\
  Army Medical University (Third Military Medical University) \\
  Chongqing 400037, China \\
  \texttt{ljm6046@163.com}
}

\date{}

\begin{document}

\maketitle

\begin{abstract}
The semantic entropy of a language model is the entropy of the meanings of its sampled answers. The meanings are
never observed, and the entropy is estimated from the verdicts of a pairwise equivalence judge. Given the meanings of
$n$ answers, a known channel declares each pair equivalent independently, with probability $q+\Delta$ within a meaning
and $q$ across meanings. At most $K$ meanings occur, and $q$ is bounded away from $0$ and $1$. For a weak judge, with
$n\Delta^2\to0$, the global minimax rate is set by distributions near a vertex of the simplex and overstates the
difficulty elsewhere. We show that the difficulty is set by how many meanings can be rare. Over distributions with at
most $s$ meanings below a fixed probability $\tau<1/(K-s)$, the minimax squared risk is of order $N^{-1/s}$ for $s\ge2$,
$N^{-1}\log^2(e+N)$ for $s=1$ and $N^{-1}$ for $s=0$, capped at one, where $N=n^K\Delta^{2K-2}$. The matching lower
bounds hold on every fixed polynomial window $n\Delta^2\le n^{-\kappa}$. One estimator attains these rates without
knowing $s$ or $\tau$: it averages products of centered verdicts along all spanning trees of every $k$-set of answers,
projects the resulting power sums onto the valid moment set and inverts them. The case $s=K-1$ gives the global rate
$N^{-1/(K-1)}$ and the consistency threshold: with $\Delta=n^{-\gamma}$ the entropy is uniformly consistently estimable
exactly when $\gamma<K/(2K-2)$. On polynomial windows, the same estimator attains the sharp local minimax constant on
shrinking neighborhoods of interior points with distinct probabilities. Restricting the average to stars
multiplies the constant by $K^{K-3}$.

\end{abstract}

\keywords{entropy estimation \and large language models \and minimax rates \and stochastic block model}

\section{Introduction}
\label{sec:intro}

Semantic entropy scores a language model's uncertainty about a prompt by the entropy of the
distribution of \emph{meanings} among its sampled answers~\cite{kuhn2023semantic,farquhar2024detecting}. The meanings
are never observed: an equivalence judge compares the answers in pairs, and the entropy is estimated from its
verdicts. This paper determines what makes the entropy hard to estimate when the judge is weak.

We model the judge as a known binary channel. Each of $n$ answers carries a meaning, and the meanings are independent
draws from an unknown distribution $\theta$ on at most $K$ meanings. Given the meanings, the judge declares two
answers equivalent with probability $q+\Delta$ if they share a meaning and with probability $q$ otherwise, independently
across pairs. The verdicts thus form a stochastic block model on the complete graph, with $K$
communities of proportions $\theta$, edge probability $q+\Delta$ within communities and $q$ between them. The target is the
entropy $\Ent(\theta)$ of the proportions, not the partition.

We study the weak regime $n\Delta^2\to0$, in which the
$n-1$ verdicts on one answer have a signal-to-noise ratio of order $n\Delta^2/(q(1-q))$ about its meaning. For at most three
meanings, $q$ bounded away from $0$ and $1$, a fixed distribution with no meaning of probability one, and $\Delta=n^{-\gamma}$
with $\gamma>1/2$, no estimator of an answer's meaning has asymptotically smaller error probability than guessing the
most likely meaning, even given the meanings of all other answers~\cite{anonymous2026three}. The entropy can still be
estimated consistently in part of this regime. Concurrent analyses of at most three~\cite{anonymous2026three} and at most four
meanings~\cite{anonymous2026four} give global minimax rates, set by a least-favorable family that approaches a vertex
of the simplex. In that family one meaning, called macroscopic, has probability bounded away from zero, and all others
are rare. These rates do not distinguish distributions by how many meanings are rare, and for $K\ge5$ the rate was open.

We show that the difficulty is set by \emph{how many meanings can be rare}. For $0\le s\le K-1$ and a fixed
$\tau\in(0,1/(K-s))$, let $\Rare s\tau$ be the distributions on $K$ meanings with at most $s$ probabilities below $\tau$. With
$N:=n^K\Delta^{2K-2}$, the minimax squared risk over $\Rare s\tau$ is of order $\psi_s(N)$, that is, $\min\{1,N^{-1/s}\}$ for $s\ge2$,
$\min\{1,N^{-1}\log^2(e+N)\}$ for $s=1$ and $\min\{1,N^{-1}\}$ for $s=0$, on every fixed polynomial window
$n\Delta^2\le n^{-\kappa}$. One estimator attains all of these rates without knowing $s$ or $\tau$. These
windows cover $\Delta=n^{-\gamma}$ for every fixed $\gamma>1/2$, which is the whole weak regime in this parametrization. For
$s=K-1$ and $\tau\le1/K$ the class is the whole simplex, which gives the global rate $\min\{1,N^{-1/(K-1)}\}$ for every $K$ and
the consistency threshold $\gamma_c(K)=K/(2K-2)$ conjectured in~\cite{anonymous2026four}. The classes
share this threshold and differ in the rate.

\begin{table}[t]
  \centering
  \caption{Minimax Squared Risk of the Semantic Entropy From a Known Pairwise Judge}
  \label{tab:comparison}
  \footnotesize
  \begin{tabular}{@{}>{\raggedright\arraybackslash}p{0.34\textwidth}>{\raggedright\arraybackslash}p{0.25\textwidth}>{\raggedright\arraybackslash}p{0.31\textwidth}@{}}
\toprule
setting & minimax squared risk & matching lower bound proved on \\
\midrule
at most $3$ meanings, all distributions \cite{anonymous2026three} & $\min\{1,(n^3\Delta^4)^{-1/2}\}$ & the whole window $n\Delta^2\le c_1$ \\
at most $4$ meanings, all distributions \cite{anonymous2026four} & $\min\{1,(n^4\Delta^6)^{-1/3}\}$ & every fixed window $n\Delta^2\le n^{-\kappa}$ \\
$K$ meanings, at most $s\ge2$ rare & $\min\{1,N^{-1/s}\}$ & every fixed window $n\Delta^2\le n^{-\kappa}$ \\
$K$ meanings, at most one rare & $\min\{1,N^{-1}\log^2(e+N)\}$ & every fixed window $n\Delta^2\le n^{-\kappa}$ \\
$K$ meanings, none rare & $\min\{1,N^{-1}\}$ & every fixed window $n\Delta^2\le n^{-\kappa}$ \\
shrinking $\ell_2$ ball of radius $h_n$ about an interior $\theta_0$ with distinct coordinates & $(1+o(1))\,\mathcal V(\theta_0)/N$ & $n\Delta^2\le n^{-\kappa}$, $\kappa<\frac1{K-1}$, $N\to\infty$, $h_n\to0$, $h_n\sqrt N\to\infty$ \\
\bottomrule
\end{tabular}

  \par\smallskip
  \begin{minipage}{0.92\textwidth}\footnotesize
    The channel parameter $q\in[q_0,1-q_0]$, the rarity threshold $\tau$ and the window exponent $\kappa>0$ are fixed, and
    $N=n^K\Delta^{2K-2}$. The first five rows hold up to multiplicative constants. The first two rows report concurrent
    work, and $c_1>0$ in the first row is a constant. The last row, for fixed $q\in(0,1)$, gives the local asymptotic
    minimax constant $\mathcal V(\theta_0)=(K!/K^K)\Lam(\theta_0)^2(q(1-q))^{K-1}$, where
    $\Lam(\theta_0)=\int_0^\infty\mathrm dt/\prod_c(t+\theta_{0,c})$.
  \end{minipage}
\end{table}

\begin{figure}[t]
  \centering
  \includegraphics{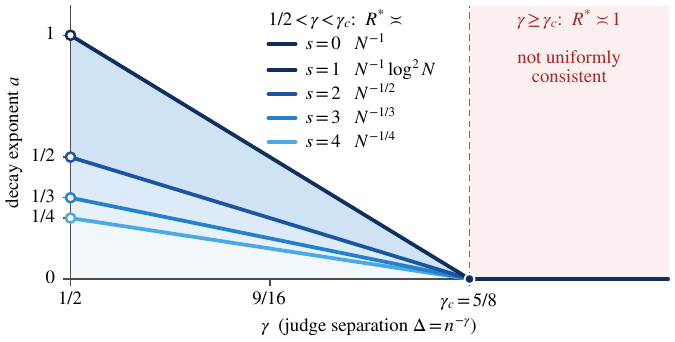}
  \caption{Polynomial decay exponent $a$ of the minimax squared risk for the semantic entropy,
  $R^*=n^{-a+o(1)}$, when
  $\Delta=n^{-\gamma}$ and $q\in[q_0,1-q_0]$ is fixed, for $K=5$ possible meanings of which at most $s$
  have probability below a fixed $\tau\in(0,1/5)$. Here $N=n^5\Delta^8$. The classes
  $s=0$ and $s=1$ have the same exponent and share the darkest curve, and for $s=1$ the risk carries an extra
  factor $\log^2n$ wherever $a>0$. The rates are matched for every
  fixed $\gamma>1/2$, and open circles mark the excluded endpoint $\gamma=1/2$, where $n\Delta^2=1$. Every class loses
  uniform consistency at the same $\gamma_c=K/(2K-2)=5/8$.}
  \label{fig:hero}
\end{figure}

\subsection{Main Contributions}

The contributions of this paper are as follows.
\begin{enumerate}[label=\arabic*),leftmargin=*,itemsep=1pt,topsep=2pt]
  \item \emph{Minimax rates graded by rarity.} Matching upper and lower bounds of order $\psi_s(N)$ hold on every fixed
    polynomial window, for every $K\ge3$ and every class, and one all-tree moment estimator attains these rates adaptively.
    Over the whole simplex this gives the rate $\min\{1,N^{-1/(K-1)}\}$ for every $K$. With no window condition, the
    entropy is uniformly consistently estimable over every class exactly when $N\to\infty$, that is, when
    $\gamma<K/(2K-2)$ for $\Delta=n^{-\gamma}$. Replacing the exact projection by a grid search keeps the rates and
    makes the estimator polynomial-time for every fixed $K$. For $K\le5$ it needs one $n\times n$ matrix product and
    $O(n^2)$ further operations.
  \item \emph{An information inequality that reaches the mixed faces.} For finite-order sweeps of the likelihood score in
    weak pairwise experiments we prove an exact identity for small motifs, a uniform coefficient bound for all motifs, a
    pairing bound for overlapping motifs and a bound on the Taylor remainder. No step needs a single
    macroscopic meaning, so the inequality reaches the mixed faces of the simplex, which have two or more positive
    coordinates, and with them the classes $s<K-1$.
  \item \emph{A one-sided local transfer.} For $\theta\in\Rare s\tau$ and every $\theta'$ in the simplex,
    $|\Ent(\theta)-\Ent(\theta')|$ is at most a constant times $D_e^{1/s}$ for $s\ge2$, $D_e\log(e/D_e)$ for $s=1$ and $D_e$ for
    $s=0$, where $D_e$ is the $\ell_1$ distance between the coefficients of $\prod_c(z-\theta_c)$ and $\prod_c(z-\theta'_c)$.
    As $\theta'$ is unrestricted, the bound applies to the projected estimate even when it leaves
    the class, and this is what lets one estimator adapt to $s$ and $\tau$. For $s\ge2$ the exponent $1/s$ cannot be
    improved.
  \item \emph{The sharp local constant.} On every fixed polynomial window with $N\to\infty$ and at interior points with distinct
    probabilities, $N$ times the local minimax risk converges to an explicit constant, which the same estimator attains.
    Along the interior fiber an information equality holds. Averaging over stars instead of all trees multiplies the
    constant by $K^{K-3}$.
\end{enumerate}

\subsection{Proof Techniques}

The estimator averages, over all spanning trees of every set of $k$ answers, the product of the centered verdicts
$(A_{ij}-q)/\Delta$ along the edges of the tree, where $A_{ij}\in\{0,1\}$ is the verdict on answers $i$ and $j$. These averages are unbiased for the power sums $P_k=\sum_c\theta_c^k$, and the
estimator projects them onto the valid moment set and solves for the distribution. Two facts explain the hierarchy. First, the last power sum is the
statistical bottleneck: with $\delta:=n\Delta^2$, $P_k$ is estimated at the scale $N_k:=n\delta^{k-1}$, and $N_k/N_K\to\infty$ for $k<K$. Second, the
entropy, as a function of the coefficients of $\prod_c(z-\theta_c)$, has a local modulus $x^{1/s}$ for $s\ge2$, set by the
number of roots that can approach zero, whatever the other roots do. The upper bound rests on a one-sided form of this
modulus: the projected estimate may leave the class, and a Rouch\'e argument returns it to $\Rare s{\tau/2}$.

The lower bounds use fibers that move only the constant term of $\prod_c(z-\theta_c)$: they keep $P_2,\dots,P_{K-1}$ fixed, and
for $s\ge2$ they move $P_K$ by order $\eps^s$ while the entropy moves by order $\eps$. We bound the Fisher information of the
verdict graph along such a fiber through a finite-order sweep of the likelihood score. For four meanings the
concurrent analysis~\cite{anonymous2026four} uses majority-vertex counting, which gains a factor of the probability
scale of the rare meanings for every vertex labeled with the single macroscopic meaning, and so needs such a meaning.
Our information inequality needs none: Cauchy estimates on a complex polydisc, used in~\cite{anonymous2026four} to
prove this factor, here bound every coefficient of the sweep uniformly.

At an interior point $\theta_0$ with distinct probabilities, $N$ times the local minimax risk on shrinking neighborhoods tends to
$(K!/K^K)\Lam(\theta_0)^2(q(1-q))^{K-1}$, where $\Lam$ is the derivative of the entropy in $e_K:=\prod_c\theta_c$ with the other
coefficients of $\prod_c(z-\theta_c)$ fixed. Along the fiber that moves only $e_K$, the sweep of order $K$ bounds the Fisher
information, up to a factor $1+o(1)$, by the contribution of the spanning trees of $K$ answers, each paired with itself.
The van Trees inequality then gives the lower bound. A delta method controlled by fourth moments gives the matching
upper bound. A star average, which
uses $K$ of the $K^{K-2}$ spanning trees of each $K$-set, loses the factor $K^{K-3}$.

\subsection{Related Work}
\label{sec:related}

\subsubsection{Concurrent Analyses}
For at most three meanings and $q$ bounded away from $0$ and $1$, the entropy is uniformly consistently estimable for
$\Delta=n^{-\gamma}$ with $\gamma<3/4$, and the minimax squared risk is of order $\min\{1,(n^3\Delta^4)^{-1/2}\}$ on the whole window
$n\Delta^2\le c_1$~\cite{anonymous2026three}. With at most four meanings the rate is $\min\{1,(n^4\Delta^6)^{-1/3}\}$ on every fixed
polynomial window, and the threshold $\gamma_c(K)=K/(2K-2)$ was conjectured for $K$ meanings~\cite{anonymous2026four}. On the
complete graph, the rarity-graded rates contain this rate and threshold as the case $K=4$, $s=3$, and give the three-meaning
rate on polynomial windows. We prove the conjectured threshold for every $K$. \Cref{tab:comparison} lists these rates
beside ours, and \Cref{fig:hero} shows for $K=5$ that all classes share the threshold.

We share with the two
concurrent analyses the model, the moment-projection approach and the sweep identity, and with the four-meaning analysis
the pairing of overlapping motifs and the remainder framework, which we establish for general $K$. Both analyses also treat channel calibration and
query design.
Our proofs differ from theirs in the uniform coefficient bound, which replaces majority-vertex counting and makes the mixed faces,
and with them the classes $s<K-1$, accessible; in the one-sided transfer behind adaptation; and in the sharp constant.

\subsubsection{Semantic Entropy and Noisy Judges}
Semantic entropy was introduced with bidirectional-entailment clustering~\cite{kuhn2023semantic}, and its discrete form,
the entropy of the cluster proportions, detects hallucinations~\cite{farquhar2024detecting}. Later work estimates the
same cluster entropy from fewer samples~\cite{ciosek2025hallucination}, corrects it for meanings not yet
observed~\cite{mccabe2026estimating,mccabe2026seneca}, or changes the uncertainty functional, through a semantic
kernel~\cite{nikitin2024kernel}, pairwise similarities~\cite{nguyen2025beyond} or the degrees and Laplacian spectrum of a
similarity graph~\cite{lin2024generating}. These papers do not derive the minimax entropy risk under a noisy equality channel.
Liu~\cite{liu2026statistically} analyzes the same observation model with a spectral-clustering plug-in, whose guarantee
applies where clustering errors are small. The all-tree estimator does not cluster the answers, and its upper bound holds
for every channel, the weak regime included.

\subsubsection{Entropy Estimation and Block Models}
With observed labels, the entropy on a fixed alphabet has minimax rate $1/n$, and on large alphabets its minimax rates
follow from best polynomial approximation~\cite{wu2016minimax,jiao2015minimax}. Seen through a weak judge, the class $s=0$
has the rate $\min\{1,1/N\}$ on every fixed polynomial window, so the judge replaces $n$ by $N$, and each additional rare
meaning slows the rate further.

For block models and related observation models, the information-theoretic literature
characterizes the recovery of the labels themselves: exact recovery of two communities has a sharp
threshold~\cite{abbe2016exact}. For pairwise difference measurements passed through general channels, the
feasibility of exact recovery is governed by channel divergence measures and the minimum cut of the measurement
graph~\cite{chen2016information}. In the sparse
two-community planted partition model with edge probabilities $a/n$ and $b/n$, neither a partition correlated with the
truth nor consistent estimates of $(a,b)$ exist when $(a-b)^2<2(a+b)$~\cite{mossel2015reconstruction}.

When the
expected degree grows faster than $\log n$ and no two rows of the matrix of edge probabilities coincide, the maximum
likelihood and variational estimators of block-model parameters are asymptotically
normal~\cite{bickel2013asymptotic}. That result is pointwise. Measured uniformly, estimation is much
slower near models whose classes are nearly alike: in the two-class affiliation model with equal proportions and edge
probabilities symmetric about $1/2$, the minimax squared risk for the gap between the within-class and between-class
probabilities is of order $1/n$ rather than $1/n^2$, and lower bounds of this order hold locally around every block
model~\cite{castillo2022uniform}.

Our verdict graph is an affiliation model near the same boundary with the roles
reversed: the gap $\Delta\to0$ is known, and the proportions of the meanings are unknown. On every fixed polynomial window
$n\Delta^2\le n^{-\kappa}$ the rates are set by $N=n^K\Delta^{2K-2}$ rather than by $n$. They hold uniformly over each
rarity class, and the sharp constant holds uniformly over shrinking neighborhoods. The
consistency threshold needs no window: the entropy is uniformly consistently estimable exactly when $N\to\infty$.

\subsubsection{Subgraph Statistics and Moments}
Noise-corrected subgraph products~\cite{chang2022estimation} give our tree statistics, and moments of small subgraphs
have been used to estimate block-model parameters~\cite{bickel2011method,ambroise2012new}. Bravo-Hermsdorff et al.~\cite{bravohermsdorff2022private}
estimate the collision probability $P_2$ from equality indicators of disjoint pairs of samples, each passed through a
channel of the same form as ours. Spectrum estimation also inverts unbiased moment estimates: the moments of a covariance
spectrum are traces of matrix powers, which Kong and Valiant~\cite{kong2017spectrum} estimate without bias by averaging
products of sample inner products along cycles of distinct samples. For equality indicators the product along any
connected graph on $k$ answers is the indicator that they share a meaning, and spanning trees are the connected graphs
with the fewest edges, $k-1$, so each tree product carries the fewest noise factors.

\subsubsection{Moment Hierarchies at Low Signal-to-Noise Ratio}
In i.i.d.\ Gaussian latent-variable models at low signal-to-noise ratio (SNR), the information about the parameter is
graded by moment order. For Gaussian mixtures with a common covariance, Katsevich and Bandeira~\cite{katsevich2023likelihood}
expand the population log-likelihood in powers of the SNR. They show that its maximization reduces to a sequence of
least-squares problems that match the moments one order at a time. In the discrete orbit recovery model at high noise,
the Fisher information at a generic signal has a graded eigenvalue structure determined by the invariant polynomials of
the group, and the maximum likelihood estimator of an invariant of degree $k$ has asymptotic variance
$O(\mathrm{SNR}^{-k}/n)$~\cite{fan2023likelihood}. In Gaussian latent-variable models with a known latent distribution,
optimally weighted moments of the minimal identifying orders match the first-order efficiency of maximum likelihood,
and different directions become informative at different moment orders~\cite{balanov2026generalized}.

These results take
$n\to\infty$ at a fixed noise level and send the SNR to zero afterwards. The joint regime, in which the SNR tends to zero
with $n$, is identified as open in~\cite{balanov2026generalized}. Our verdicts are dependent through the shared labels, and
$\Delta=\Delta_n$ tends to zero with $n$. The power sum $P_k$, an invariant of degree $k$ under relabeling, is estimated at
the scale $N_k=n(n\Delta^2)^{k-1}$, so $n\Delta^2$ takes the place of the SNR in the grading. The least-favorable direction
moves only the last power sum, and the unweighted all-tree estimator attains the local minimax constant in this joint
limit. Our rarity hierarchy has a different source: every class has the scale $N$
of the last power sum, and the number of coordinates that can vanish sets the modulus with which the entropy responds
to that power sum.

\subsubsection{Local Rates Governed by Root Structure}
Hundrieser et al.~\cite{hundrieser2025local} quantify the local minimax risk of recovering an atomic signal from a
Poisson convolution as a function of the clustering structure of its atoms, under a multiscale loss, with estimators
that adapt to that structure. Here the target is a functional rather than the atoms: only roots that approach zero
create the singularity of the entropy, collisions of positive roots do not affect the transfer, and the matching
information bound concerns a dependent graph experiment.

For finite mixtures, Wu and Yang~\cite{wu2020optimal} develop
denoised moment methods, Heinrich and Kahn~\cite{heinrich2018strong} establish local minimax rates, and Le et
al.~\cite{le2026characterizing} characterize configuration-dependent rates of the maximum likelihood estimator. Wu and
Yang~\cite{wu2020optimal} obtain their rates from moment comparison inequalities, which bound the distance between
mixing distributions by the difference of their moments. The one-sided local transfer plays this role for the entropy.
The fibers that move only the constant term show that the exponent of the transfer cannot be improved: along them
$s$ roots within $\eps$ of zero change the coefficients of $\prod_c(z-\theta_c)$ only by order $\eps^s$, while the entropy
moves by order $\eps$. H\"older continuity of the roots of a polynomial in its
coefficients~\cite{parusinski2025perturbation} is the geometric background of the exponent $1/s$. We prove the transfer
directly from the integral representation of Jozsa and Mitchison~\cite{jozsa2015symmetric}.

\subsubsection{Lower Bounds}
Low-degree methods give computational lower bounds for estimation~\cite{schramm2022computational}, including for
distinguishing planted models~\cite{rush2026easier}. Our bounds are information-theoretic: they control the Fisher
information of the full likelihood, a sum over all $K^n$ labelings, through sweeps whose conditional mean given the
verdicts is the observed score. The sharp constant uses the van Trees inequality~\cite{gill1995applications}, and its
attainment uses a fourth-moment bound of the hypercontractive type of Bonami~\cite{bonami1970etude}.

\subsection{Notation}
\label{sec:notation}

We write $[K]:=\{1,\dots,K\}$ and $\Simplex:=\{\theta\in[0,1]^K:\sum_c\theta_c=1\}$. For $\theta\in\Simplex$, $P_k=P_k(\theta):=\sum_c\theta_c^k$ are
the power sums, $P(\theta):=(P_2,\dots,P_K)$, and $e_j=e_j(\theta)$ are the elementary symmetric polynomials, with $e_0=1$ and
$e_1=1$ on $\Simplex$;
\[
  p_\theta(z):=\prod_c(z-\theta_c)=\sum_{j=0}^K(-1)^je_j(\theta)z^{K-j},\qquad
  \pbar_\theta(t):=\prod_c(t+\theta_c)=\sum_{j=0}^Ke_j(\theta)t^{K-j}.
\]
The entropy is $\Ent(\theta):=-\sum_c\theta_c\log\theta_c$, with the natural logarithm and $0\log0:=0$. We write $\Ind{\cdot}$ for an
indicator, $\Bern(r)$ for the Bernoulli law with mean $r$, $\TV$ for the total variation distance, $\|\cdot\|$ for the
Euclidean norm, $\|\cdot\|_1$ for the $\ell_1$ norm and $\mathbf e_c\in\Real^K$ for the $c$-th basis vector. The letters $C$ and $c$ denote
positive constants whose dependence is stated in each result and whose values may change from line to line. The constants
implied by $O(\cdot)$, $\lesssim$ and $\asymp$ depend only on the parameters listed in the statement at hand and, unless stated
otherwise, are uniform in $n$, $\Delta$, $q$ in the stated channel range, the labels, the path coordinate $u$ and the amplitude
of a fiber. An asymptotic statement in $n$ holds along every sequence of admissible parameters and uniformly over the
parameter set stated with it, and its rates of convergence are those displayed.

\section{Problem Formulation}
\label{sec:formulation}

\subsection{Observation Model}

A language model is sampled $n$ times on one prompt. Answer $i$ has a meaning $Z_i\in[K]$, and $Z_1,\dots,Z_n$ are i.i.d.\ from
$\theta\in\Simplex$, where $K\ge3$ is a known upper bound on the number of meanings and zero coordinates are allowed. The target is
the entropy $\Ent(\theta)$. A judge compares every pair of answers and reports $A_{ij}\in\{0,1\}$. Given the meanings, the verdicts
are independent, with
\begin{equation}
  \label{eq:model}
  A_{ij}\mid Z\ \sim\ \Bern\big(q+\Delta\,\Ind{Z_i=Z_j}\big),\qquad 1\le i<j\le n .
\end{equation}
The channel $(q,\Delta)$, $0\le q<q+\Delta\le1$, is known: it can be calibrated from pairs of answers with known meanings,
at no cost in the rate for at most three meanings~\cite{anonymous2026three}. Each verdict models a separate query of the
judge on one pair. We write
\[
  p:=q+\Delta,\qquad W_{ij}:=\Ind{Z_i=Z_j},\qquad \delta:=n\Delta^2,\qquad N_k:=n\delta^{k-1},\qquad N:=N_K=n^K\Delta^{2K-2}.
\]

\subsection{Minimax Risk and Rarity Classes}

For $\Theta\subseteq\Simplex$ the minimax risk is
\[
  R^*(\Theta):=\inf_{\hH}\sup_{\theta\in\Theta}\Exp_\theta\big(\hH-\Ent(\theta)\big)^2,
\]
the infimum running over all estimators $\hH=\hH(A,\Xi)$ with $\Xi$ auxiliary randomness independent of $(Z,A)$. For $0\le s\le K-1$
and $0<\tau<1/(K-s)$, let
\[
  \Rare s\tau:=\{\theta\in\Simplex:\ \text{at least }K-s\text{ coordinates of }\theta\text{ are at least }\tau\},
\]
the distributions in which at most $s$ meanings are rare. Zero coordinates count as rare, and $\tau$ does not depend on $n$.
At $\tau=1/(K-s)$ every member would have $K-s$ coordinates equal to $1/(K-s)$ and entropy $\log(K-s)$, so the restriction
$\tau<1/(K-s)$ excludes only that degenerate class. For $\tau\le1/K$, $\Rare{K-1}\tau=\Simplex$. \Cref{fig:classes} shows
the classes for $K=3$. The rates are
\begin{gather*}
  \psi_s(x):=\min\{1,x^{-1/s}\}\quad(2\le s\le K-1),\\
  \psi_1(x):=\min\{1,x^{-1}\log^2(e+x)\},\qquad
  \psi_0(x):=\min\{1,x^{-1}\}.
\end{gather*}

\begin{figure}[tb]
  \centering
  \includegraphics{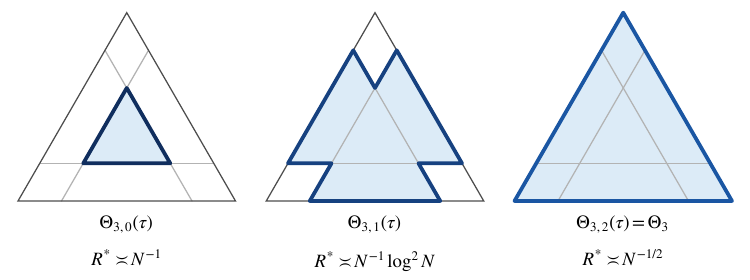}
  \caption{Rarity classes $\Theta_{3,s}(\tau)$ for $K=3$ possible meanings and $\tau=0.2$, in the simplex $\Theta_3$ drawn
  as a triangle whose vertices are the distributions concentrated on one meaning, with gray lines at $\theta_c=\tau$.
  The shaded class $\Theta_{3,s}(\tau)$ holds the distributions with at least $3-s$ coordinates of at least $\tau$, so
  $\Theta_{3,0}(\tau)\subset\Theta_{3,1}(\tau)\subset\Theta_{3,2}(\tau)=\Theta_3$. Under each class is the order of its
  minimax squared risk for large $N=n^3\Delta^4$ under the conditions of \Cref{thm:main}(ii).}
  \label{fig:classes}
\end{figure}

\subsection{The All-Tree Moment Estimator}
\label{sec:estimator}

Let $X_{ij}:=(A_{ij}-q)/\Delta$. Given the meanings, distinct pairs are independent and $\Exp[X_{ij}\mid Z]=W_{ij}$, so the product of
the $X_e$ over the edges of any tree on $k$ answers has mean $P_k$, the probability that $k$ answers share a meaning. Such
noise-corrected subgraph products are those of Chang et al.~\cite{chang2022estimation}. For $2\le k\le K$ let $\Trees_k$ be the set of labeled spanning
trees of the $k$-subsets of $[n]$, so that $|\Trees_k|=\binom nkk^{k-2}$ by Cayley's formula (\Cref{fig:trees} shows the
$16$ trees of a $4$-set). The all-tree average
\begin{equation}
  \label{eq:alltree}
  \hT_k:=\frac1{|\Trees_k|}\sum_{\tr\in\Trees_k}\ \prod_{e\in\tr}X_e
\end{equation}
is therefore unbiased for $P_k$. Newton's identities recover the elementary symmetric polynomials $e_1,\dots,e_K$ of $\theta$,
and hence its coordinates up to permutation, from $(P_2,\dots,P_K)$, since $P_1=1$. The estimator projects
$\hT:=(\hT_2,\dots,\hT_K)$ onto the moment set $\Mom:=\{P(\theta):\theta\in\Simplex\}$ and inverts the moment map
(\Cref{alg:main}).

\begin{algorithm}[tb]
  \caption{All-Tree Moment Estimator}
  \label{alg:main}
  \begin{algorithmic}[1]
    \REQUIRE verdicts $(A_{ij})_{i<j}$, channel $(q,\Delta)$, number $K$ of possible meanings
    \STATE $X_{ij}\leftarrow(A_{ij}-q)/\Delta$ for $i\ne j$
    \STATE $\hT_k\leftarrow\big(\binom nkk^{k-2}\big)^{-1}\sum_{|V|=k}\ \sum_{\tr\text{ spanning tree of }V}\ \prod_{e\in\tr}X_e$ for
      $k=2,\dots,K$
    \STATE $\tP\leftarrow$ a Euclidean nearest point of $(\hT_2,\dots,\hT_K)$ in $\Mom$, chosen by a fixed rule
    \STATE $\htheta\leftarrow$ the roots of $\sum_{j=0}^K(-1)^j\tilde e_jz^{K-j}$, with $\tilde e_j$ computed from $\tP$ by Newton's identities
    \RETURN $\hH:=\Ent(\htheta)$
  \end{algorithmic}
\end{algorithm}

Every $k$-set carries exactly $k$ stars, with distinct centers, and replacing the average over all spanning
trees of a $k$-set by the average over its stars gives the star average
\begin{equation}
  \label{eq:star}
  \hTs_k:=\frac1{k\binom nk}\sum_{i=1}^n\ \sum_{S\subseteq[n]\setminus\{i\},\,|S|=k-1}\ \prod_{j\in S}X_{ij},
\end{equation}
which is also unbiased for $P_k$. We write $\hH^\star$ for the output of \Cref{alg:main} with $\hT$ replaced by
$\hTs:=(\hTs_2,\dots,\hTs_K)$.

\begin{figure}[tb]
  \centering
  \includegraphics{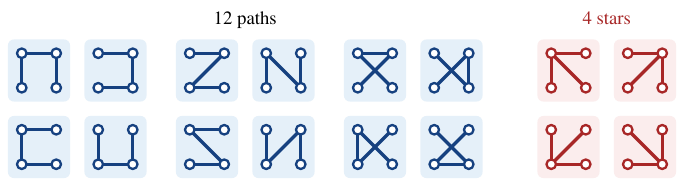}
  \caption{The $16=4^{4-2}$ labeled spanning trees of a set of four answers placed at the corners of a square: $12$
  paths, grouped by shape, and $4$ stars. Given the meanings, the product of the centered verdicts $X_e$ along the
  edges of any of them has conditional mean $1$ if the four answers share a meaning, and $0$ otherwise. The all-tree
  average averages these products over all trees of all $4$-sets, and the star average
  over the stars alone. For $K=4$, restricting to the stars multiplies the leading variance of $\hT_4$ by
  $K^{K-3}=4$.}
  \label{fig:trees}
\end{figure}

\section{Main Results}
\label{sec:results}

\subsection{Minimax Rates Graded by Rarity}

\begin{theorem}[Rarity-Graded Minimax Rates]
  \label{thm:main}
  Fix $K\ge3$, $0\le s\le K-1$ and $0<\tau<1/(K-s)$.
  \begin{enumerate}[label=(\roman*),leftmargin=*,itemsep=1pt,topsep=2pt]
    \item There is $C_U$ depending only on $(K,s,\tau)$ such that for every $n\ge2K$, every channel
      $0\le q<q+\Delta\le1$ and every $\theta\in\Rare s\tau$, \Cref{alg:main} satisfies
      $\Exp_\theta(\hH-\Ent(\theta))^2\le C_U\,\psi_s(\min\{n,N\})$; if $n\Delta^2\le1$, this is $C_U\,\psi_s(N)$.
    \item Fix also $q_0\in(0,1/2]$ and $\kappa\in(0,1)$. There are $c_L>0$ and $n_L$ depending only on
      $(K,s,\tau,q_0,\kappa)$ such that for all $n\ge n_L$, $q\in[q_0,1-q_0]$ and $\Delta>0$ with
      $n\Delta^2\le n^{-\kappa}$, $R^*(\Rare s\tau)\ge c_L\,\psi_s(N)$.
  \end{enumerate}
\end{theorem}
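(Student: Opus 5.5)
The upper bound (i) and the lower bound (ii) are proved separately, and both pass through the coefficients $e=(e_2,\dots,e_K)$ of $p_\theta$.

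\textbf{Upper bound.} The first step is a variance bound for each all-tree average: $\Exp_\theta(\hT_k-P_k)^2\lesssim 1/\min\{n,N_k\}$. To get it, I expand the variance as a sum over pairs of trees $(\tr,\tr')$. Given $Z$, the covariance of two tree products is nonzero only when the trees share edges. A pair sharing $m$ edges contributes a factor $\Delta^{-2m}$ from the shared noise and $O(1)$ from each remaining edge. Counting vertices, a pair with $m$ shared edges spanning $v$ common vertices has weight of order $n^{-v}\Delta^{-2m}$ with $m\le v-1$. Since $n\Delta^2\le1$ when $\delta\le1$, the worst term is $v=k$, $m=k-1$, which is of order $1/(n^k\Delta^{2k-2})=1/N_k$. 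The term $v=1$, $m=0$ comes from the label dependence alone and gives $1/n$. When $\delta>1$ the worst term changes, and the bound becomes $\min\{n,N_k\}^{-1}$.

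Because $N_k\ge N_K$ for $\delta\le1$, this gives $\Exp\|\hT-P(\theta)\|^2\lesssim 1/\min\{n,N\}$. Since $\tP$ is a nearest point of $\Mom$ to $\hT$ and $P(\theta)\in\Mom$, we have $\|\tP-P(\theta)\|\le2\|\hT-P(\theta)\|$. Newton's identities are polynomial on the bounded set $\Mom$, so the coefficient distance satisfies $D_e\lesssim\|\tP-P(\theta)\|$.

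I then apply the one-sided transfer (contribution 3) with $\theta'=\htheta$. It gives $(\hH-\Ent(\theta))^2\lesssim D_e^{2/s}$, $D_e^2\log^2(e/D_e)$ or $D_e^2$, according as $s\ge2$, $s=1$ or $s=0$. Because $x\mapsto x^{2/s}$ is concave for $s\ge2$, Jensen's inequality gives $\Exp D_e^{2/s}\le(\Exp D_e^2)^{1/s}$. For $s=1$, $x\log^2(e/x)$ is concave near zero, which produces the extra $\log^2$ factor. The bound $|\hH-\Ent|\le\log K$ supplies the cap at one.

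The transfer itself is where the adaptivity comes from, and I would prove it as follows. Write $\Ent$ through the Jozsa--Mitchison integral, whose integrand involves $\pbar_\theta$ and its derivative. Rouch\'e's theorem on a disc of radius $\tau/2$ around each large root shows that $\theta'$ keeps $K-s$ roots near the large roots of $\theta$. The at most $s$ small roots of $\theta'$ are then each bounded by $D_e^{1/s}$, since their product is controlled by $e_K$ divided by the large roots. Each such root contributes $x\log(1/x)$ to the entropy, which yields the three moduli.

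\textbf{Lower bound.} I use two-point (Le Cam) fibers that fix $e_2,\dots,e_{K-1}$ and move only $e_K$.

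For $s\ge2$, I take $\theta$ with $s$ roots of size about $\eps$ and $K-s$ roots at least $\tau$. Moving $e_K$ by order $\eps^s$ keeps the point inside $\Rare s\tau$ and changes the entropy by order $\eps$. The chi-square (or Hellinger) distance between the two verdict laws along the fiber must be at most $C\,N\,\eps^{2s}$. With that bound, the choice $\eps^{2s}\asymp1/N$ gives squared separation $\eps^2\asymp N^{-1/s}$. For $s=1$ the single small root near $\eps$ gives an entropy change of order $\eps^{1}\log(1/\eps)$, and the same calibration produces the $\log^2$ factor. For $s=0$ the perturbation is interior and gives $1/N$.

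The key ingredient is the information bound: the Fisher information of $A$ along $u\mapsto e_K$ is at most $C N$, uniformly on the fiber. I would take the finite-order sweep of the likelihood score $\Exp[\partial_u\log \Prob(Z)\mid A]$, expanded in powers of the centered $X$ up to an order $L$ depending on $\kappa$. The terms are grouped by motifs, and the uniform coefficient bound comes from Cauchy estimates on a complex polydisc of labels. Motifs with fewer than $K$ vertices contribute nothing, because moving only $e_K$ leaves $P_2,\dots,P_{K-1}$ fixed. Overlapping motifs are handled by the pairing bound. The Taylor remainder is of order $(n\Delta^2)^{L}\le n^{-\kappa L}$, which is negligible once $L$ is large. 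I expect this step to be the main obstacle, in particular making the coefficient bound uniform on the mixed faces, where several coordinates are small without a single macroscopic meaning.
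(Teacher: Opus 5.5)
Your overall plan matches the paper for both parts. For (i) it is the tree-variance bound, projection and Newton's identities, then the one-sided transfer and Jensen; for (ii) it is fibers moving only $e_K$, a sweep bound on the Fisher information, and Le Cam. The gap is in your proof of the one-sided transfer, which is exactly where the rate in (i) is decided.

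\emph{The small-root claim is false.} You assert that the at most $s$ small roots of $\theta'$ are each at most $D_e^{1/s}$. Take $\theta'=\theta$ with one coordinate equal to $\tau/2$: then $D_e=0$ but the small root is $\tau/2$. Dividing $e_K(\theta')$ by the large roots bounds only the product of the small roots, hence only the smallest one, and only when $e_K(\theta)$ is itself small.

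\emph{The natural repair still loses a logarithm.} Suppose instead the small roots of $\theta'$ lie within $O(D_e^{1/s})$ of those of $\theta$, and each root contributes through $h(x)=-x\log x$. For $s\ge2$ this per-root accounting gives $D_e^{1/s}\log(1/D_e)$, and Jensen then yields $N^{-1/s}\log^2N$, which does not match (ii). The loss is intrinsic to that method. On the mixed fiber each small root of size $\eps$ moves by order $\eps$, so each term $h(\theta_c)$ moves by order $\eps\log(1/\eps)$. The entropy nevertheless moves only by order $\eps$, because the sum of the small roots changes by $O(\eps^s)$ and the $\log(1/\eps)$ parts cancel. Root-wise H\"older bounds also deteriorate when small and large coordinates cluster, even though the entropy is smooth there.

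\emph{The Rouch\'e step can fail too.} With a large root at $\tau$ and a small one at $\tau/2$, the circle of radius $\tau/2$ around the large root passes through a zero of $p_\theta$. So $|p_\theta|$ has no lower bound on that circle.

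The missing idea is to never locate the roots. Use $\Ent(\theta)-\Ent(\theta')=\int_0^\infty\log(\pbar_\theta(t)/\pbar_{\theta'}(t))\,\mathrm dt$. One has $|\pbar_\theta-\pbar_{\theta'}|\le D_e$ on $(0,1]$ and $\le D_et^{K-2}$ on $[1,\infty)$. Any point with $K-s$ coordinates at least $\tau'$ satisfies $\pbar(t)\ge\tau'^{K-s}t^s$ on $(0,1]$. With $a:=D_e\tau'^{-(K-s)}$, the integrand on $(0,1]$ is at most $\log(1+at^{-s})$. Its integral is at most $a^{1/s}\pi/\sin(\pi/s)$ for $s\ge2$, $(1+a)\log(1+a)-a\log a$ for $s=1$, and $a$ for $s=0$. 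The tail on $[1,\infty)$ contributes $D_e\log2$. Only the $K-s$ large coordinates enter, so clusters and coincidences cost nothing.

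Rouch\'e is then needed only to place $\theta'$ in $\Rare s{\tau/2}$ when $D_e<(\tau/(4(K+1)))^K$. By pigeonhole, pick $\varrho\in(\tau/2,\tau)$ at distance at least $\tau/(4(K+1))$ from every coordinate of $\theta$, and compare $p_\theta$ with $p_{\theta'}$ on $|z|=\varrho$. For larger $D_e$, use $|\Ent(\theta)-\Ent(\theta')|\le\log K$.

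Your lower-bound outline follows the paper, with two corrections. First, the bound $I\le CN$ per unit of $e_K$ holds only because the sweep order $r$, separate from the Taylor degree, exceeds $(K-1)/s+(s-1)/(\kappa s)$. Motifs on more than $r$ vertices have coefficients only of order $\|\dot\theta\|_1$, which is $\eps^{1-s}$ per unit of $e_K$. They become negligible only through the factor $n\delta^r$. Second, the Cauchy polydisc is in the edge weights $J$, not in the labels.
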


The estimator uses only $(K,q,\Delta)$, so part~(i) holds for every class at once: the same estimator adapts to the unknown
number of rare meanings and to $\tau$. The upper bound needs no lower bound on $q$ and no window, while the lower bound holds
on every fixed polynomial window, with constants that depend on $\kappa$.

\begin{corollary}[Global Rate and Threshold]
  \label{cor:global}
  \begin{enumerate}[label=(\roman*),leftmargin=*,itemsep=1pt,topsep=2pt]
    \item For $\tau\le1/K$, $\Rare{K-1}\tau=\Simplex$, and under the conditions of \Cref{thm:main}(ii),
      $R^*(\Simplex)\asymp\min\{1,N^{-1/(K-1)}\}$, with constants depending only on $(K,q_0,\kappa)$.
    \item Let $q_n\in[q_0,1-q_0]$ and $\Delta_n>0$ with $q_n+\Delta_n\le1$. Estimators $\hH_n$ with
      $\sup_{\theta\in\Rare s\tau}\Exp_\theta(\hH_n-\Ent(\theta))^2\to0$ exist if and only if $n^K\Delta_n^{2K-2}\to\infty$. For
      $\Delta_n=n^{-\gamma}$ this holds exactly when $\gamma<K/(2K-2)$, for every $s$ and $\tau$.
  \end{enumerate}
\end{corollary}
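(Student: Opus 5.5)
Part (i) follows from \Cref{thm:main}(ii) combined with \Cref{thm:main}(i). For $\tau\le1/K$ every $\theta\in\Simplex$ has some coordinate at least $1/K\ge\tau$, so at least $1=K-(K-1)$ coordinates are at least $\tau$ and $\Rare{K-1}\tau=\Simplex$. We may take $\tau=1/(2K)$ to fix constants, since $\tau$ enters only through this identity.

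\textbf{Part (i).} Under the window $n\Delta^2\le n^{-\kappa}\le1$, the upper bound of \Cref{thm:main}(i) with $s=K-1$ gives
\[
\sup_\theta\Exp_\theta(\hH-\Ent(\theta))^2\le C_U\psi_{K-1}(N).
\]
\Cref{thm:main}(ii) gives the matching lower bound. For $K\ge3$ we have $s=K-1\ge2$, so $\psi_{K-1}(N)=\min\{1,N^{-1/(K-1)}\}$. The constants depend on $(K,q_0,\kappa)$ once $\tau=1/(2K)$ is fixed. For $n<n_L$ both sides are of constant order, because the entropy is bounded by $\log K$ and $N\le n^K$ is bounded. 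The lower bound at small $n$ needs a separate small argument: some two-point risk is positive. Alternatively, we simply absorb these $n$ into the constants, since there are finitely many such $n$. I would state the result for $n\ge n_L$ and handle small $n$ this way.

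\textbf{Part (ii), sufficiency.} Suppose $N_n\to\infty$. Apply \Cref{thm:main}(i), which needs no window and no lower bound on $q$. Its bound is $C_U\psi_s(\min\{n,N_n\})$, and since $\min\{n,N_n\}\to\infty$ this bound tends to $0$ for every $s$. This holds in particular when $n\Delta_n^2>1$.

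\textbf{Part (ii), necessity.} This is the main obstacle, because the lower bound of \Cref{thm:main}(ii) is proved only on polynomial windows, while here $N_n\not\to\infty$ imposes no window. Along a subsequence $N_n\le M$, and then
\[
n\Delta_n^2=(N_n/n)^{1/(K-1)}\le M^{1/(K-1)}n^{-1/(K-1)}.
\]
Hence $n\Delta_n^2\le n^{-\kappa}$ with $\kappa=1/(2(K-1))$ for large $n$, so the window holds automatically. \Cref{thm:main}(ii) then gives $R^*\ge c_L\psi_s(N_n)\ge c_L\psi_s(M)>0$ along the subsequence. Therefore the risk does not tend to $0$.

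\textbf{Exponent.} For $\Delta_n=n^{-\gamma}$ we have $N=n^{K-(2K-2)\gamma}$. This tends to infinity if and only if $\gamma<K/(2K-2)$, and the condition does not involve $s$ or $\tau$.
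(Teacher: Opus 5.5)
Your proposal is correct and follows the paper's proof step by step. For (i) you use the identity $\Rare{K-1}\tau=\Simplex$ together with both parts of \Cref{thm:main} at $s=K-1$; your choice $\tau=1/(2K)$ works as well as the paper's $\tau=1/K$. For sufficiency you use the window-free upper bound with $\min\{n,N_n\}\to\infty$, and for necessity you observe that a bounded $N_n$ forces $n\Delta_n^2\le n^{-1/(2(K-1))}$ once $n\ge M^2$. The digression on $n<n_L$ is unnecessary, because ``the conditions of \Cref{thm:main}(ii)'' already include $n\ge n_L$. Absorbing finitely many $n$ into the lower-bound constant would in any case need an argument uniform over the continuum of channels $(q,\Delta)$.
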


Part~(ii) needs no window condition: if $n^K\Delta_n^{2K-2}$ stays bounded along a subsequence, then eventually
$n\Delta_n^2\le n^{-1/(2K-2)}$ and \Cref{thm:main}(ii) applies. For $K=3$ and $K=4$ the threshold $K/(2K-2)$ equals the values $3/4$ and $2/3$ of the concurrent
analyses, and it decreases to $1/2$ as $K$ grows. \Cref{fig:hero} shows the polynomial decay exponents of the rates
for $K=5$, and \Cref{tab:comparison} sets the rates beside the concurrent results and the sharp local constant.

\subsection{How Rarity Sets the Rate}

\begin{remark}[Why Every Class Has the Same $N$]
  \label{rem:same-N}
  The power sum $P_k$ is estimated at the scale $N_k$ (\Cref{lem:alltree-var}), and $N_k/N_K=\delta^{k-K}\to\infty$ for $k<K$ when
  $\delta\to0$, so the last power sum is the bottleneck. The least-favorable fibers keep $P_2,\dots,P_{K-1}$
  fixed and move $P_K$. The number of coordinates that can vanish sets the modulus with which the entropy responds to that
  last moment, and hence the rate. The threshold is where $N$ stops growing, which is the same for every class.
\end{remark}

\begin{remark}[Where the Exponent Comes From]
  \label{rem:exponent}
  The entropy admits the representation $\Ent(\theta)-\Ent(\theta')=\int_0^\infty\log(\pbar_\theta(t)/\pbar_{\theta'}(t))\,\mathrm dt$ of Jozsa and
  Mitchison~\cite{jozsa2015symmetric}. On $\Rare s\tau$, $\pbar_\theta(t)\ge\tau^{K-s}t^s$ for $t\le1$, and integrating gives
  the one-sided local transfer of \Cref{lem:transfer}: $|\Ent(\theta)-\Ent(\theta')|\le C_T\,\phi_s(D_e(\theta,\theta'))$ for $\theta\in\Rare s\tau$ and every
  $\theta'\in\Simplex$, where $D_e$ is the $\ell_1$ distance between the vectors $(e_2,\dots,e_K)$ of the two points and $\phi_s(x)=x^{1/s}$
  for $s\ge2$. For $s\ge2$ the exponent $1/s$ cannot be improved on $\Rare s\tau$: the endpoints of the mixed fibers have $D_e\le\eps^s$ while their entropies differ by at least a constant multiple of $\eps$
  (\Cref{rem:sharp-exponent}). \Cref{fig:modulus} shows how the entropy responds to $D_e$ along the least-favorable fibers for $K=5$ and every $s$.
\end{remark}

\begin{figure}[t]
  \centering
  \includegraphics{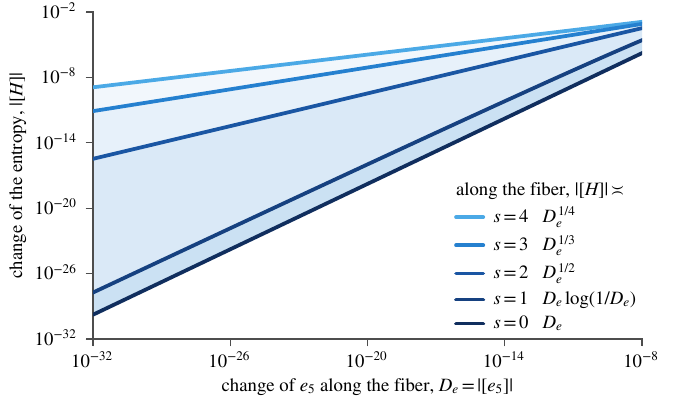}
  \caption{Change $|[\Ent]|$ of the entropy against the change $D_e=|[e_5]|$ of $e_5=\prod_c\theta_c$ between the
  two endpoints of the least-favorable fibers of \Cref{sec:fibers}, for $K=5$ possible meanings and $\tau=0.05$: the
  mixed fiber for $s=2,3,4$, the fiber for $s=1$ and the
  perturbation for $s=0$, at four amplitudes $\eps$ (or $\eta$) per decade, computed from the
  roots of $p_u$ in $160$-digit arithmetic. Only $e_5$ moves along each fiber, so $D_e$ is
  the distance in the one-sided transfer, and the entropy responds with its modulus $\phi_s(D_e)$. At $D_e=N^{-1/2}$ the square of this modulus is of the order of the rate $\psi_s(N)$.}
  \label{fig:modulus}
\end{figure}

\subsection{Sharp Local Minimax Constant}
\label{sec:results-sharp}

On every fixed polynomial window the class $s=0$ has the rate $\min\{1,N^{-1}\}$. At an interior point with distinct probabilities
we identify the constant in front of $N^{-1}$. For $\theta_0\in\Simplex$ and $h>0$ let
$\Rloc(\theta_0;h):=\inf_{\hH}\sup\{\Exp_\theta(\hH-\Ent(\theta))^2:\theta\in\Simplex,\ \|\theta-\theta_0\|\le h\}$, and let
\[
  \Lam(\theta):=\int_0^\infty\frac{\mathrm dt}{\prod_c(t+\theta_c)},\qquad
  \mathcal V(\theta):=\frac{K!}{K^K}\,\Lam(\theta)^2\,(q(1-q))^{K-1}.
\]
The integral $\Lam$ is finite exactly when every coordinate is positive. When the coordinates are also pairwise distinct,
$\Lam=\partial\Ent/\partial e_K$ with $e_1,\dots,e_{K-1}$ fixed.

\begin{theorem}[Sharp Local Constant]
  \label{thm:sharp}
  Fix $K\ge3$, $q\in(0,1)$, $\kappa\in(0,1/(K-1))$ and $\theta_0\in\Simplex$ with pairwise distinct positive coordinates.
  Let $\Delta_n>0$ satisfy $n\Delta_n^2\le n^{-\kappa}$ and $N\to\infty$, and let $h_n\to0$ with $h_n\sqrt{N}\to\infty$. Then
  \[
    \lim_{n\to\infty}N\,\Rloc(\theta_0;h_n)=\mathcal V(\theta_0),
  \]
  and \Cref{alg:main} attains it:
  $\lim_nN\sup_{\|\theta-\theta_0\|\le h_n}\Exp_\theta(\hH-\Ent(\theta))^2=\mathcal V(\theta_0)$.
\end{theorem}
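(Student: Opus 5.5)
The proof splits into an upper bound, attained by \Cref{alg:main}, and a matching lower bound from the van Trees inequality along the fiber that moves only $e_K$.

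\emph{Upper bound.} I would use a delta method. First I decompose the all-tree statistics. For $k<K$, \Cref{lem:alltree-var} gives $\Var(\hT_k)\lesssim 1/N_k=o(1/N)$ because $\delta\to0$. For $\hT_K$ I expand the variance by pairs of trees $(\tr,\tr')$. The term of order $1/N$ comes from pairs on the same $K$-set with $\tr=\tr'$, because only identical edge sets leave unpaired noise factors with mean zero. Each such pair contributes $P_K\,(q(1-q))^{K-1}/\Delta^{2K-2}$, up to lower-order terms in $\Delta$ and in the signal, and there are $\binom nK K^{K-2}$ trees. Dividing by $|\Trees_K|^2$ gives leading variance $\frac{K!}{K^{K-2}}\cdot\frac{(q(1-q))^{K-1}}{N}$. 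This has to be reconciled with $\mathcal V$, and I expect the correct bookkeeping to be as follows. For a tree product the conditional variance given $Z$ is $\prod_e\Var(X_e)+\dots$, which is about $(q(1-q)/\Delta^2)^{K-1}$ and does not depend on $Z$ in the limit $\Delta\to0$. The overlap is then $\binom nK^{-1}K^{-(K-2)}(q(1-q))^{K-1}\Delta^{-2K+2}\approx K!\,K^{-K+2}\cdot(\cdots)/N$. The gradient of $\Ent$ in $P_K$ at fixed $P_2,\dots,P_{K-1}$ is $\Lam\,\partial e_K/\partial P_K=\Lam(-1)^{K-1}/K$ by Newton's identity. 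The factor $1/K^2$ turns $K^{-K+2}$ into $K^{-K}$, which is exactly $\mathcal V(\theta_0)/N$.

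Next, near $\theta_0$ the moment map is a local diffeomorphism, since the coordinates are distinct and positive. $P(\theta)$ is therefore interior to $\Mom$ in the relevant sense, and the projection is the identity with probability tending to one. $\Ent\circ P^{-1}$ is smooth there, so $\hH-\Ent(\theta)=\nabla\cdot(\hT-P)+O(\|\hT-P\|^2)$. I control the quadratic remainder and the event where projection matters using fourth moments of $\hT_K-P_K$, via a Bonami-type hypercontractive bound, giving $\Exp\|\hT-P\|^4\lesssim N^{-2}$. Off the good event I use the bound $|\hH|\le\log K$. Cross terms between $\hT_k$ for $k<K$ and $\hT_K$ are $o(1/N)$ by Cauchy--Schwarz. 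Every estimate must hold uniformly over $\|\theta-\theta_0\|\le h_n\to0$, and the continuity of $\Lam$ near $\theta_0$ gives the limit $\mathcal V(\theta_0)$.

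\emph{Lower bound.} I take the fiber $\theta(u)$ with $e_1,\dots,e_{K-1}$ fixed at their values at $\theta_0$ and $e_K=e_K(\theta_0)+u$, for $|u|\le h_n'$ with $h_n'\sqrt N\to\infty$ and $h_n'\lesssim h_n$. Roots move smoothly, so the fiber stays in the ball. Along it $\mathrm d\Ent/\mathrm du=\Lam$. The key step, and the main obstacle, is to show that the Fisher information $I(u)$ of the verdict graph satisfies $I(u)\le(1+o(1))\,N K^{K}/\bigl(K!\,(q(1-q))^{K-1}\bigr)$, uniformly on the fiber. For this I write the score as the conditional expectation of the complete-data score, which is a polynomial in the labels, and expand it in the order-$K$ sweep. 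Because only $P_K$ moves, the components of orders below $K$ vanish to leading order. The information inequality from the earlier sections (exact identity for small motifs, uniform coefficient bound, pairing bound for overlapping motifs, Taylor remainder) shows that only spanning trees of $K$ answers paired with themselves contribute at leading order. All other motifs contribute $O(N\delta^{c})$ or $O(N\cdot n^{-c})$. I would use $\kappa<1/(K-1)$ to ensure that the order-$(K+1)$ remainder, of relative size roughly $\delta$ against the combinatorial growth, is $o(N)$. Matching the self-paired tree count inverts the variance computed above, which is the Cramér--Rao sharpness of the all-tree estimator.

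Finally I apply van Trees with a smooth prior of width $h_n'$ on $u$ whose prior Fisher information is $O(h_n'^{-2})=o(N)$. This gives
\[
\Rloc\ge\frac{(\int\Lam\,\pi)^2}{\int I\,\pi+J(\pi)}=(1+o(1))\,\mathcal V(\theta_0)/N .
\]
Together with the upper bound, this yields the limit and its attainment by \Cref{alg:main}.
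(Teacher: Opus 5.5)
Your outline follows the paper's proof closely. For attainment you use the delta method on the local chart with Bonami-type fourth moments. For the lower bound you use van Trees along the fiber that moves only $e_K$, with the order-$K$ sweep reduced to spanning trees of $K$ answers paired with themselves, each carrying the coefficient $\dot P_K=(-1)^{K-1}K$ whatever its shape.

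One step fails as you describe it: the control of the Taylor remainder of the sweep in the lower bound. The hypothesis $\kappa<1/(K-1)$ is an upper bound on $\kappa$, so it only weakens the window $n\Delta^2\le n^{-\kappa}$ and cannot make any term small. Its only role is compatibility: $N=n\delta^{K-1}\le n^{1-\kappa(K-1)}$ forces $N\le1$ when $\kappa\ge1/(K-1)$. The remainder bound of \Cref{lem:remainder} uses no cancellation. With $r=K$, $\eta=1$ and $N\ge1$ it gives
\[
  \frac{\Exp\Rem_D^2}{N}\lesssim n^{K+1}\delta^{D+2-K}(\log n)^{2(D+1-K)}+n^{4D+2}e^{-q_0^2\log^2n/50},
\]
so each extra Taylor order gains one factor $\delta$ against the fixed combinatorial factor $n^{K+1}$. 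For a fixed low order such as $D=K+1$ the first term is $n^{K+1}\delta^{3}$ up to logarithms. This bound diverges along every admissible sequence, because $N\to\infty$ forces $\delta\gg n^{-1/(K-1)}$ and $K+1>3/(K-1)$.

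The missing idea is that the Taylor degree must grow like $1/\kappa$. This is where the polynomial window, that is, a fixed $\kappa>0$, enters. The paper takes $D=\lceil(K+2)/\kappa\rceil+K$, so that $\kappa(D+2-K)\ge K+2+2\kappa$. The ratio above is then at most $n^{-1-2\kappa}(\log n)^{2D}$ plus a term that decays faster than any power of $n$ (\Cref{lem:sharp-info}).

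Because you need the constant and not just the rate, do not combine $\Poly_D$ and $\Rem_D$ through the factor-$2$ splitting of \Cref{lem:info}, whose constant $C_M$ is not sharp. Use $\Exp\Poly_D=-\Exp\Rem_D$ together with $2|\Poly_D\Rem_D|\le\varrho_n^{1/2}\Poly_D^2+\varrho_n^{-1/2}\Rem_D^2$, so the remainder costs only a factor $1+o(1)$. Likewise, the claim that all non-self-paired overlaps are $O(N\max\{\delta,\Delta\})$ does not come from \Cref{lem:pairing}. It comes from the classification in \Cref{lem:overlap} and the count \eqref{eq:f:factor}, which your outline asserts but does not derive. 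With these repairs the rest of your argument matches the paper's.
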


The condition $\kappa<1/(K-1)$ serves only to make the assumptions compatible: $n\Delta^2\le n^{-\kappa}$ with $\kappa\ge1/(K-1)$ forces $N\le1$.

\begin{corollary}[Star Averages]
  \label{cor:star}
  Under the conditions of \Cref{thm:sharp}, the estimator $\hH^\star$ that replaces the all-tree averages in
  \Cref{alg:main} by the star averages \eqref{eq:star} satisfies
  $\lim_nN\sup_{\|\theta-\theta_0\|\le h_n}\Exp_\theta(\hH^\star-\Ent(\theta))^2=K^{K-3}\mathcal V(\theta_0)$; the loss factor $K^{K-3}$ equals $1$, $4$
  and $25$ for $K=3$, $4$ and $5$, respectively.
\end{corollary}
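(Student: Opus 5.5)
The plan is to reduce the corollary to \Cref{thm:sharp} by tracking how the leading variance of the last moment statistic changes when the all-tree average is replaced by the star average. The proof of \Cref{thm:sharp} has two parts. The upper bound is a delta method: the estimator is a smooth map of $\hT$ near $P(\theta_0)$, controlled by fourth moments. The lower bound is van Trees along the fiber that moves only $e_K$.

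In that delta method, the leading term of $\hH-\Ent(\theta)$ is $\Lam(\theta)$ times the error in $e_K$. By Newton's identities with $P_2,\dots,P_{K-1}$ essentially exact, an error in $P_K$ contributes $\pm(1/K)$ times itself to $e_K$. Hence the risk is $(1+o(1))\Lam(\theta_0)^2K^{-2}\Var(\hT_K)$. The variances of the lower-order statistics are $O(1/N_k)=o(1/N)$ because $\delta\to0$. Matching $\mathcal V(\theta_0)$ forces $\Var(\hT_K)=(1+o(1))K^2(K!/K^K)(q(1-q))^{K-1}/N$. I would first verify this from the variance expansion behind \Cref{lem:alltree-var}.

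At leading order only pairs of trees on the same $K$-set with identical edge sets contribute. Each such pair contributes $\prod_e\Exp X_e^2\approx(q(1-q)/\Delta^2)^{K-1}$, and there are $\binom nK K^{K-2}$ such pairs. Dividing by $|\Trees_K|^2$ gives
\[
\frac{(q(1-q))^{K-1}}{\Delta^{2K-2}\binom nK K^{K-2}}\approx\frac{K!\,(q(1-q))^{K-1}}{K^{K-2}N}.
\]
This is consistent with the required value.

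For the star average I would redo the same computation on $\hTs_K$. It is a normalized sum of $K\binom nK$ star products. Distinct stars share an edge set only if they coincide, so the diagonal term is $(q(1-q)/\Delta^2)^{K-1}/(K\binom nK)$. This is $K^{K-3}$ times the all-tree value, since $K^{K-2}/K=K^{K-3}$.

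The remaining work is to show that the off-diagonal terms of $\Var(\hTs_K)$ and the contributions of $\hTs_2,\dots,\hTs_{K-1}$ stay negligible. These are pairs of stars that overlap in some edges, with the unshared edges contributing signal factors $W_e$ rather than noise variance. Two stars sharing $m<K-1$ edges contribute at most $\Delta^{-2m}$ times $O(1)$ moments, summed over $O(n^{2K-m-1})$ pairs (or fewer with center coincidence). Relative to the diagonal, which is of order $n^{-K}\Delta^{-2K+2}$, this is $O(\delta^{K-1-m})\to0$. That is the same bookkeeping as for the trees, so \Cref{lem:alltree-var}'s argument applies verbatim. The fourth-moment (Bonami-type) control used for attainment in \Cref{thm:sharp} likewise holds for $\hTs$, being a multilinear polynomial of the same degree in the centered verdicts. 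Then $N\Exp(\hH^\star-\Ent)^2\to K^{K-3}\mathcal V(\theta_0)$, uniformly on the $h_n$-ball, since $\Lam$ is continuous at $\theta_0$.

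The main obstacle is this uniform negligibility of cross terms, together with the remainder of the delta method, when it is redone for stars. I would treat it by reusing the remainder and projection arguments from the proof of \Cref{thm:sharp}. Those arguments use only unbiasedness, the variance scale $1/N_k$, and the fourth-moment bound, all of which the star averages share.
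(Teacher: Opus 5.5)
Your proposal is correct and follows the paper's route: the paper derives the corollary from the second display of \Cref{prop:attain}, whose inputs are exactly the star versions of \Cref{lem:sharp-var} (top term with $K\binom nK$ in place of $|\Trees_K|$, and the double counting with $k$ in place of $k^{k-2}$) and of \Cref{lem:fourth}, together with the same chart and delta-method argument. As in your sketch, the matching lower limit comes from the two-sided delta-method asymptotics evaluated at $\theta=\theta_0$, since the supremum over the ball is at least the risk there; it does not come from van Trees, which applies to every estimator and so gives only $\mathcal V(\theta_0)$.
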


\begin{remark}[Efficiency Gain From Averaging All Trees]
  \label{rem:efficiency}
  The leading noise of both statistics comes from the spanning trees of $K$ answers with noise on every edge, whose noise products are uncorrelated. A
  $K$-set carries $K^{K-2}$ labeled spanning trees but only $K$ stars, so the leading variance of the star average is larger
  by the factor $(K-1)!\,K^{K-2}/K!=K^{K-3}$ (\Cref{lem:sharp-var}).
\end{remark}

\section{Proof of the Upper Bound}
\label{sec:upper}

The upper bound follows by composition. The all-tree averages satisfy $\Exp\|\hT-P(\theta)\|^2\le C/\min\{n,N\}$;
the projection at most doubles the Euclidean error and Newton's identities are Lipschitz on
$[0,1]^{K-1}$, so $\Exp D_e(\htheta,\theta)^2\le C/\min\{n,N\}$; and the one-sided local transfer
with Jensen's inequality (for $s=1$, concavity of $t\mapsto t(1+\frac12\log(1/t))^2$) gives
the rate.

In this section $0\le q<q+\Delta\le1$ and $\theta\in\Simplex$ are arbitrary and $n\ge2K$, except in the first statement of
the next lemma, which needs only $n\ge2k$. No window condition and no lower bound on $q$ are used. Write
$\xi_e:=X_e-W_e$, $r_e:=q+\Delta W_e$ and $v_e:=r_e(1-r_e)$. Given $Z$, the $\xi_e$ are independent with mean zero and variance
$v_e/\Delta^2\le\rho$, where $\rho:=\max\{q(1-q),(q+\Delta)(1-q-\Delta)\}/\Delta^2\le1/(4\Delta^2)$.

\subsection{All-Tree Averages}

\begin{lemma}[All-Tree Variance]
  \label{lem:alltree-var}
  For $2\le k\le K$ and $n\ge2k$, $\Exp\hT_k=P_k$ and
  \begin{gather*}
    \Var(\hT_k)\le\frac{k^2}{4n}+\sum_{j=1}^{k-1}\binom{k-1}j(2k)^{j+1}\frac{\rho^j}{n^{j+1}}
    \le\frac{C_k}{\min\{n,N_k\}},\\
    C_k:=\frac{k^2}4+\sum_{j=1}^{k-1}\binom{k-1}j\frac{(2k)^{j+1}}{4^j}.
  \end{gather*}
  Consequently, for $n\ge2K$, $\Exp\|\hT-P(\theta)\|^2\le C_{(1)}/\min\{n,N\}$ with $C_{(1)}:=\sum_{k=2}^KC_k$, where
  $\hT:=(\hT_2,\dots,\hT_K)$ and $P(\theta):=(P_2,\dots,P_K)$.
\end{lemma}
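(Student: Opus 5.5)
Unbiasedness is immediate. Given $Z$ the $X_e$ on distinct pairs are independent with $\Exp[X_e\mid Z]=W_e$. A tree on a $k$-set $V$ has distinct edges, so $\Exp[\prod_{e\in\tr}X_e\mid Z]=\prod_{e\in\tr}W_e=\Ind{Z_i\text{ constant on }V}$, since a spanning tree is connected. Taking expectations gives $P_k$.

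For the variance I use the law of total variance, $\Var(\hT_k)=\Var(\Exp[\hT_k\mid Z])+\Exp\Var(\hT_k\mid Z)$.

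\emph{First term.} Here $\Exp[\hT_k\mid Z]=\binom nk^{-1}\sum_{|V|=k}\Ind{Z\text{ constant on }V}$ is a U-statistic of order $k$ with a kernel in $[0,1]$. By Hoeffding's bound, the variance of such a U-statistic is at most $(k/n)$ times the variance of the first projection, summed appropriately. The cruder standard bound $\Var\le\frac{k^2}{n}\Var(h_1)\le\frac{k^2}{4n}$ follows from the Efron–Stein inequality. Resampling $Z_i$ changes the U-statistic by at most $\binom{n-1}{k-1}/\binom nk=k/n$ in absolute value. Efron–Stein with the half-difference form gives $\sum_i\frac12(k/n)^2\cdot\frac12\cdot\ldots$. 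I would calibrate it to produce exactly $k^2/(4n)$: the variance of a $[0,1]$ variable is at most $1/4$, and the projection argument yields $\frac{k^2}{n}\cdot\frac14$.

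\emph{Second term.} Write $X_e=W_e+\xi_e$ and expand each tree product as $\sum_{F\subseteq\tr}\prod_{e\in F}\xi_e\prod_{e\in\tr\setminus F}W_e$. Let $j=|F|\ge1$ be the number of noise edges; the term with $j=0$ is the conditional mean. Conditionally on $Z$, products $\prod_{e\in F}\xi_e$ for distinct edge sets $F$ are uncorrelated, and $\Exp[\prod_{e\in F}\xi_e^2\mid Z]\le\rho^j$. So
\[
\Var(\hT_k\mid Z)\le|\Trees_k|^{-2}\sum_{j\ge1}\sum_{|F|=j}\rho^j\Big(\sum_{\tr\supseteq F}1\Big)^2 ,
\]
after grouping the contributions of each noise set $F$ and bounding $|W_e|\le1$.

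The counting step is the main obstacle. A forest $F$ with $j$ edges spans at least $j+1$ vertices, so the number of $k$-sets $V\supseteq V(F)$ is at most $\binom{n}{k-j-1}$. The number of trees of $V$ containing $F$ is at most $k^{k-2}$. The number of $j$-edge forests whose tree sits in some $k$-set, counted with the $\binom{k-1}j$ choices of which tree edges carry noise, is at most about $\binom nk k^{k-2}\binom{k-1}j$. Dividing by $|\Trees_k|^2=\binom nk^2k^{2k-4}$ leaves $\binom{k-1}j\binom n{k-j-1}/\binom nk\le\binom{k-1}j(2k/n)^{j+1}$ for $n\ge2k$, via $\binom n{k-m}/\binom nk\le(k/(n-k))^m\le(2k/n)^m$. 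This gives the displayed sum $\sum_j\binom{k-1}j(2k)^{j+1}\rho^j/n^{j+1}$. I would organize it as a sum over pairs of trees sharing a noise set and apply Cauchy–Schwarz, so that no more precise combinatorics is needed.

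\emph{Final bound.} Using $\rho\le1/(4\Delta^2)$, each term is $\binom{k-1}j(2k)^{j+1}4^{-j}\cdot\frac1n(n\Delta^2)^{-j}$. Since $\frac1n\delta^{-j}\le1/\min\{n,N_k\}$ for $1\le j\le k-1$ (interpolating between $\delta\ge1$ and $\delta<1$), the sum is at most $C_k/\min\{n,N_k\}$. Summing over $k$, and using $\min\{n,N_k\}\ge\min\{n,N\}$ (the same dichotomy on $\delta$), gives the final claim with $C_{(1)}=\sum_kC_k$.
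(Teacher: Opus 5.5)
Your argument follows the paper's proof. The law of total variance is the paper's split into $\Var(\bar P_k)$ and the noise sum. Your display for $\Var(\hT_k\mid Z)$ is the paper's bound $|\Trees_k|^{-2}\sum_{E'}\#\{\tr\supseteq E'\}^2\rho^{|E'|}$. Counting triples $(\tr,F,\tr')$ with $F\subseteq\tr\cap\tr'$ is the paper's double counting; your $\binom n{k-j-1}$ is looser than its $\binom{n-j-1}{k-j-1}$, but it still gives $(2k/n)^{j+1}$ for $n\ge2k$. The dichotomy on $\delta$ is the same. The closing suggestion to bring in Cauchy--Schwarz is unnecessary, since your display already reduces the noise term to pure counting. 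Applied to the conditional covariance of two tree products that share an edge, it would cost order $\rho^{k-1}$ instead of the $\rho^{|F|}$ your display keeps, and the rate would be lost.

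The one step you have not actually proved is the signal bound $k^2/(4n)$, and two claims you make there are wrong.
\begin{itemize}
\item Hoeffding's bound for a U-statistic of order $k$ is $\frac kn$ times the variance of the kernel itself. Here that variance is $P_k(1-P_k)\le\frac14$, which already gives $k/(4n)$. By contrast, $\frac{k^2}n$ times the variance of the first projection is a \emph{lower} bound on the variance, not an upper bound. It fails as an upper bound for degenerate kernels.
\item Efron--Stein combined only with the pointwise bound $|f-f^{(i)}|\le k/n$ yields $\frac12\cdot n\cdot(k/n)^2=k^2/(2n)$. That is a factor $2$ too large, and no calibration removes it.
\end{itemize}
The fix, as in the paper, is to bound each term $\Exp\Var(\bar P_k\mid Z_{-i})$ of Efron--Stein by Popoviciu's inequality. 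Given $Z_{-i}$, $\bar P_k$ ranges over an interval of length at most $k/n$, so its conditional variance is at most $(k/n)^2/4$. Summing over $i$ gives $k^2/(4n)$.
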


\begin{proof}
  \emph{Unbiasedness.} Given $Z$, distinct pairs are conditionally independent and
  $\Exp[X_e\mid Z]=W_e$, so $\Exp[\prod_{e\in\tr}X_e\mid Z]=\prod_{e\in\tr}W_e$, the indicator that all
  vertices of $\tr$ share a meaning, whose expectation is $P_k$.

  \emph{Decomposition.} Expanding $\prod_{e\in\tr}(W_e+\xi_e)$ over the set $E'\subseteq\tr$ of edges that
  carry noise,
  \[
    \hT_k=\bar P_k+\sum_{E'\ne\varnothing}\frac{c_{E'}}{|\Trees_k|}\,\xi_{E'},\qquad
    \bar P_k:=\frac1{|\Trees_k|}\sum_{\tr}\prod_{e\in\tr}W_e,\quad c_{E'}:=\sum_{\tr\supseteq E'}\prod_{e\in\tr\setminus E'}W_e,
    \quad\xi_{E'}:=\prod_{e\in E'}\xi_e ,
  \]
  where $E'$ ranges over the nonempty edge sets contained in some $\tr\in\Trees_k$ (forests). Given $Z$,
  $\Exp[\xi_{E'}\mid Z]=0$ for $E'\ne\varnothing$ and $\Exp[\xi_{E'}\xi_{E''}\mid Z]=0$ for $E'\ne E''$, because
  some edge occurs exactly once. Since $\bar P_k$ and the $c_{E'}$ are functions of $Z$,
  \[
    \Var(\hT_k)=\Var(\bar P_k)+\frac1{|\Trees_k|^2}\sum_{E'\ne\varnothing}\Exp\Big[c_{E'}^2\prod_{e\in E'}\frac{v_e}{\Delta^2}\Big].
  \]

  \emph{Signal.} Each $k$-set carries exactly $k^{k-2}$ trees, and $\prod_{e\in\tr}W_e$ is the indicator
  that the vertex set of $\tr$ is monochromatic, so $\bar P_k=\binom nk^{-1}\sum_{|V|=k}\Ind{V\text{ monochromatic}}$.
  Changing one label changes at most $\binom{n-1}{k-1}$ terms, so $\bar P_k$ moves by at most
  $\binom{n-1}{k-1}/\binom nk=k/n$. The Efron--Stein inequality with Popoviciu's inequality (a variable
  with range of length $\ell$ has variance at most $\ell^2/4$) gives $\Var(\bar P_k)\le n(k/n)^2/4=k^2/(4n)$.

  \emph{Noise.} $0\le c_{E'}\le\#\{\tr\supseteq E'\}$ and $\prod_{e\in E'}v_e/\Delta^2\le\rho^{|E'|}$. By double counting,
  \[
    \sum_{E'\ne\varnothing}\#\{\tr\supseteq E'\}^2\rho^{|E'|}=\sum_{\tr\in\Trees_k}\ \sum_{\varnothing\ne E'\subseteq\tr}\#\{\tr'\supseteq E'\}\,\rho^{|E'|}.
  \]
  Fix $\tr$ and $E'\subseteq\tr$ with $|E'|=j$. As a forest, $E'$ touches $v(E')\ge j+1$ vertices. A tree
  containing $E'$ is determined by its $k$-set, which contains the vertices of $E'$ (at most
  $\binom{n-v(E')}{k-v(E')}\le\binom{n-j-1}{k-j-1}$ choices), and by a spanning tree of that set containing
  $E'$ (at most $k^{k-2}$ choices). There are $\binom{k-1}j$ subsets $E'\subseteq\tr$ with $|E'|=j$. Dividing by
  $|\Trees_k|^2$ and summing over $\tr$, the terms with $|E'|=j$ contribute at most
  \[
    \frac1{|\Trees_k|}\binom{k-1}j\binom{n-j-1}{k-j-1}k^{k-2}\rho^j
    =\binom{k-1}j\frac{(k)_{j+1}}{(n)_{j+1}}\rho^j\le\binom{k-1}j\frac{(2k)^{j+1}}{n^{j+1}}\rho^j ,
  \]
  where $(a)_m$ is the falling factorial and $(n)_{j+1}\ge(n/2)^{j+1}$ for $j\le k-1$ and $n\ge2k$.

  \emph{Simplification.} $\rho\le1/(4\Delta^2)=n/(4\delta)$, so $\rho^j/n^{j+1}\le4^{-j}/(n\delta^j)$. If
  $\delta\le1$, then $1/(n\delta^j)\le1/(n\delta^{k-1})=1/N_k$ and $1/n\le1/N_k$; if $\delta\ge1$, then
  $1/(n\delta^j)\le1/n$. In both cases $\Var(\hT_k)\le C_k/\min\{n,N_k\}$. Finally, $\min\{n,N_k\}\ge\min\{n,N\}$
  for $k\le K$, and summing over $k$ gives the last claim.
\end{proof}

\subsection{Projection and Newton's Identities}

The moment set $\Mom:=\{P(\theta):\theta\in\Simplex\}\subseteq[0,1]^{K-1}$ is compact. The estimator takes a
nearest point $\tP$ of $\hT$ in $\Mom$, selected by a fixed rule, and any $\htheta\in\Simplex$ with
$P(\htheta)=\tP$; $\hH:=\Ent(\htheta)\in[0,\log K]$. For $\theta,\theta'\in\Simplex$ let
$D_e(\theta,\theta'):=\sum_{j=2}^K|e_j(\theta)-e_j(\theta')|$.

\begin{lemma}[Projection and Newton's Identities]
  \label{lem:newton}
  \begin{enumerate}[label=(\roman*),leftmargin=*]
    \item For $\theta\in\Simplex$, $\|\tP-P(\theta)\|\le2\|\hT-P(\theta)\|$.
    \item There are polynomials $E_2,\dots,E_K$, depending only on $K$, with $e_j=E_j(P_2,\dots,P_j)$ on
      $\Simplex$, and a constant $c_e$ depending only on $K$ with
      $D_e(\theta,\theta')\le\frac{c_e}2\|P(\theta)-P(\theta')\|$ for $\theta,\theta'\in\Simplex$. Hence
      $D_e(\htheta,\theta)\le c_e\|\hT-P(\theta)\|$, and $\hH$ does not depend on the choice of $\htheta$.
    \item $D_e(\theta,\theta')\le D_e^{\max}:=(1+1/K)^K-2<1$.
  \end{enumerate}
\end{lemma}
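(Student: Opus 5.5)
Part (i) is a one-line consequence of $P(\theta)\in\Mom$; part (ii) follows from Newton's identities together with compactness; part (iii) follows from Maclaurin's inequality.

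\emph{Part (i).} Since $P(\theta)\in\Mom$ and $\tP$ is a nearest point of $\hT$ in $\Mom$, we have $\|\tP-\hT\|\le\|P(\theta)-\hT\|$. The triangle inequality then gives
\[
\|\tP-P(\theta)\|\le\|\tP-\hT\|+\|\hT-P(\theta)\|\le2\|\hT-P(\theta)\|.
\]

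\emph{Part (ii).} I start from Newton's identities
\[
je_j=\sum_{i=1}^j(-1)^{i-1}e_{j-i}P_i,\qquad e_0=1,
\]
and use $P_1=1$ on $\Simplex$. Induction on $j$ then shows that $e_j=E_j(P_2,\dots,P_j)$ for polynomials $E_j$ with rational coefficients depending only on $K$.

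Each $P_k$ lies in $[0,1]$ on $\Simplex$, so $\Mom\subseteq[0,1]^{K-1}$. On this convex compact set each $E_j$ is Lipschitz in the Euclidean norm, with constant $L_j:=\max_{[0,1]^{K-1}}\|\nabla E_j\|$. Summing over $j$ gives $D_e(\theta,\theta')\le(\sum_jL_j)\|P(\theta)-P(\theta')\|$, so I set $c_e:=2\sum_jL_j$.

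Now take $\htheta$ with $P(\htheta)=\tP$. Combining the Lipschitz bound with part (i),
\[
D_e(\htheta,\theta)\le\tfrac{c_e}{2}\|\tP-P(\theta)\|\le c_e\|\hT-P(\theta)\|.
\]
For the independence claim, $\tP$ determines $e_2,\dots,e_K$ through the $E_j$, and $e_0=e_1=1$, so it determines $p_{\htheta}$. Hence it determines the multiset of coordinates of $\htheta$. Since $\Ent$ is symmetric in the coordinates, $\hH$ does not depend on which $\htheta$ is chosen.

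\emph{Part (iii).} On $\Simplex$ every $e_j\ge0$, so
\[
|e_j(\theta)-e_j(\theta')|\le\max\{e_j(\theta),e_j(\theta')\}\le\sup_{\Simplex}e_j.
\]
Maclaurin's inequality, $(e_j/\binom Kj)^{1/j}\le e_1/K=1/K$ for nonnegative coordinates, gives $e_j\le\binom KjK^{-j}$. Summing over $j=2,\dots,K$ and using the binomial theorem,
\[
D_e(\theta,\theta')\le\sum_{j=2}^K\binom KjK^{-j}=(1+1/K)^K-1-1=D_e^{\max}.
\]
Finally, $(1+1/K)^K<\mathrm e$, so $D_e^{\max}<\mathrm e-2<1$.

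The only step needing care is the Maclaurin bound in (iii); (i) and (ii) are routine once the identities and compactness are in place.
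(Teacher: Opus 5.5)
Your proposal is correct and follows the paper's proof essentially step for step. It uses the triangle inequality with $P(\theta)\in\Mom$ for (i), Newton's identities with $P_1=1$ plus the mean value theorem on $[0,1]^{K-1}$ (with the same $c_e=2\sum_j\sup\|\nabla E_j\|$) for (ii), and Maclaurin's inequality summed via the binomial theorem for (iii).
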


\begin{proof}
  (i) $P(\theta)\in\Mom$, so $\|\tP-P\|\le\|\tP-\hT\|+\|\hT-P\|\le2\|\hT-P\|$.
  (ii) Newton's identities $me_m=\sum_{i=1}^m(-1)^{i-1}e_{m-i}P_i$, with $e_0=1$ and $P_1=1$, give the
  $E_m$ by induction. On the convex set $[0,1]^{K-1}\supseteq\Mom$ the mean value theorem gives
  $|E_j(P)-E_j(P')|\le\sup_{[0,1]^{K-1}}\|\nabla E_j\|\,\|P-P'\|$; put $c_e:=2\sum_{j=2}^K\sup_{[0,1]^{K-1}}\|\nabla E_j\|$.
  Then $D_e(\htheta,\theta)\le\frac{c_e}2\|\tP-P(\theta)\|\le c_e\|\hT-P(\theta)\|$. Two points with the same
  moments have the same $e_j$, hence the same polynomial $p_\theta$, the same multiset of
  coordinates and the same entropy.
  (iii) By Maclaurin's inequality $0\le e_j(\theta)\le\binom Kj(e_1/K)^j=\binom KjK^{-j}$, so
  $|e_j(\theta)-e_j(\theta')|\le\binom KjK^{-j}$, and summing over $j\ge2$ gives $(1+1/K)^K-2<e-2<1$.
\end{proof}

\subsection{The One-Sided Local Transfer}
\label{sec:transfer}

Let $\phi_s(x):=x^{1/s}$ for $s\ge2$, $\phi_1(x):=x\log(e/x)$ and $\phi_0(x):=x$ on $(0,1]$, and $\phi_s(0):=0$; each is
continuous and nondecreasing on $[0,1]$ ($\phi_1'(x)=\log(1/x)\ge0$). The transfer rests on the integral representation of
the entropy in \Cref{lem:intrep} of \Cref{app:tools}: on $\Rare s\tau$, $\pbar_\theta(t)\ge\tau^{K-s}t^s$ for $t\le1$, and integrating gives the following
bounds.

\begin{lemma}[One-Sided Local Transfer]
  \label{lem:transfer}
  Let $0\le s\le K-1$.
  \begin{enumerate}[label=(\alph*),leftmargin=*]
    \item \emph{Two-sided bound.} Let $\tau'>0$, and define $\Rare s{\tau'}$ by the condition of
      \Cref{sec:formulation}, so that it is empty for $\tau'>1/(K-s)$; let $\theta,\theta'\in\Rare s{\tau'}$ and
      $a:=D_e(\theta,\theta')\,\tau'^{-(K-s)}$. Then
      \begin{equation*}
        |\Ent(\theta)-\Ent(\theta')|\le D_e\log2+g_s(a),\qquad
        g_s(a):=\begin{cases}\dfrac{\pi}{\sin(\pi/s)}\,a^{1/s}, & s\ge2,\\[1ex]
        (1+a)\log(1+a)-a\log a, & s=1,\\ a, & s=0.\end{cases}
      \end{equation*}
    \item \emph{Localization.} Let $0<\tau<1/(K-s)$, $\theta\in\Rare s\tau$, $\theta'\in\Simplex$ and
      $c_0:=(\tau/(4(K+1)))^K$. If $D_e(\theta,\theta')<c_0$, then $\theta'\in\Rare s{\tau/2}$.
    \item \emph{One-sided bound.} There is $C_T$ depending only on $(K,s,\tau)$ such that
      $|\Ent(\theta)-\Ent(\theta')|\le C_T\,\phi_s(D_e(\theta,\theta'))$ for all $\theta\in\Rare s\tau$ and $\theta'\in\Simplex$.
  \end{enumerate}
\end{lemma}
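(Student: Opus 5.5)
We prove the three parts in order, always starting from the Jozsa--Mitchison representation of \Cref{lem:intrep}:
\[
\Ent(\theta)-\Ent(\theta')=\int_0^\infty\log\frac{\pbar_\theta(t)}{\pbar_{\theta'}(t)}\,\mathrm dt .
\]

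\emph{Part (a).} We write $\pbar_\theta(t)-\pbar_{\theta'}(t)=\sum_{j\ge2}(e_j-e_j')t^{K-j}$. Because $e_0=e_1=1$ for both points, this difference has degree at most $K-2$. We split the integral at $t=1$.

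For $t\ge1$ we have $|\pbar_\theta-\pbar_{\theta'}|\le D_et^{K-2}$ and $\pbar\ge t^K$. The logarithm of the ratio is therefore bounded by $\log(1+D_e/t^2)$, and integrating over $[1,\infty)$ gives a contribution of order $D_e$. Getting exactly $D_e\log 2$ needs a finer bound, but any $O(D_e)$ bound suffices for (c), so we will not insist on the constant.

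For $t\le1$ we use the fact that both points lie in $\Rare s{\tau'}$. Keeping only the $K-s$ large coordinates gives $\pbar\ge\tau'^{K-s}t^s$. Hence $|\log(\pbar_\theta/\pbar_{\theta'})|\le\log(1+a t^{-s})$ by symmetry, since $|\log x-\log y|\le\log(1+|x-y|/\min)$. We then bound the integral case by case:
\begin{itemize}
\item For $s\ge2$, $\int_0^\infty\log(1+at^{-s})\,\mathrm dt=a^{1/s}\pi/\sin(\pi/s)$, by the substitution $t=a^{1/s}u$ and a Beta integral.
\item For $s=1$, integrating only over $[0,1]$ gives exactly $(1+a)\log(1+a)-a\log a$.
\item For $s=0$ the integrand is at most $a$.
\end{itemize}

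\emph{Part (b).} This is a Rouch\'e argument on $p_\theta$. The point $\theta$ has at least $K-s$ roots in $[\tau,1]$. Consider the disc $|z-x|<\tau/2$ around a cluster of such roots; more simply, take the region $\{\operatorname{Re}z>\tau/2\}\cap\{|z|<2\}$, or a union of discs of radius $\tau/4$ around each root $\ge\tau$.

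On the boundary we need $|p_\theta(z)|>|p_\theta(z)-p_{\theta'}(z)|$. The right side is at most $D_e\max(1,|z|)^{K}$. For the left side, every factor $|z-\theta_c|$ is bounded below by something like $\tau/(4(K+1))$; choosing the boundary curve among $K+1$ candidate radii by pigeonhole keeps it away from all roots. This gives $|p_\theta|\ge(\tau/(4(K+1)))^K=c_0>D_e$ on the curve.

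Rouch\'e's theorem then shows that $p_{\theta'}$ has the same number of roots as $p_\theta$ in the region, namely at least $K-s$, counted with multiplicity. The roots of $p_{\theta'}$ are real, nonnegative coordinates, and the region lies in $\{x\ge\tau/2\}$, so $\theta'$ has at least $K-s$ coordinates $\ge\tau/2$.

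\emph{Part (c).} If $D_e\ge c_0$, the bound is trivial: $|\Ent-\Ent'|\le\log K\le C_T\phi_s(c_0)\le C_T\phi_s(D_e)$ by monotonicity of $\phi_s$.

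Otherwise, (b) puts both points in $\Rare s{\tau/2}$. We apply (a) with $\tau'=\tau/2$, so $a=D_e(2/\tau)^{K-s}$, and check that $D_e\log2+g_s(a)\lesssim\phi_s(D_e)$:
\begin{itemize}
\item For $s\ge2$, $D_e\le D_e^{1/s}$ because $D_e<1$ by \Cref{lem:newton}(iii).
\item For $s=1$, with $a\le C$ we have $(1+a)\log(1+a)\le Ca$ and $-a\log a\lesssim D_e\log(e/D_e)$.
\item For $s=0$ the bound is linear.
\end{itemize}

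We expect the main obstacle to be the Rouch\'e step in (b): producing a contour that stays uniformly far from the roots of $p_\theta$, with an explicit constant, while enclosing exactly the roots $\ge\tau$ and no root below $\tau/2$. The pigeonhole choice among $K+1$ annuli or strips is where the factor $4(K+1)$ comes from.
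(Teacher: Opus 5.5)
Your route is the paper's: the Jozsa--Mitchison representation split at $t=1$, the bound $\pbar(t)\ge\tau'^{K-s}t^s$ on $(0,1]$, a Rouch\'e--pigeonhole localization, and the same case analysis in (c), which you carry out correctly. Two steps, however, do not prove what the statement says.

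First, part (a) asserts the term $D_e\log2$, but your bound $\pbar\ge t^K$ on $[1,\infty)$ only gives $D_e$. You explicitly set this aside, so (a) as stated is not proved. The fix is one line. Since $e_0=e_1=1$ and $e_j\ge0$, $\pbar(t)\ge t^K+t^{K-1}$, so the integrand is at most $\log(1+D_e/(t(t+1)))\le D_e/(t(t+1))$, and $\int_1^\infty\mathrm dt/(t(t+1))=\log2$.

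Second, the Rouch\'e step in (b) is not closed. You bound the perturbation by $D_e\max(1,|z|)^K$; in fact $p_\theta-p_{\theta'}=\sum_{j\ge2}(-1)^j(e_j-e'_j)z^{K-j}$ has degree at most $K-2$. You then compare it only with $|p_\theta|\ge c_0>D_e$, which suffices only where $|z|\le1$.
\begin{itemize}
\item Your region $\{\operatorname{Re}z>\tau/2\}\cap\{|z|<2\}$ has boundary outside the unit disc, where separate estimates are needed and you give none.
\item A union of discs around the large roots fails with this constant. A small coordinate can lie on or near such a boundary, and a pigeonhole over $K+1$ values cannot keep one radius away from all distances between several centers and $K$ roots.
\end{itemize}

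The paper's choice avoids all of this. Split $(\tau/2,\tau]$ into $K+1$ half-open subintervals of length $\tau/(2(K+1))$, and let $\varrho\in(\tau/2,\tau)$ be the midpoint of one that contains no coordinate of $\theta$. Use the circle $|z|=\varrho$.
\begin{itemize}
\item On it, $|p_\theta(z)|\ge\prod_c|\varrho-\theta_c|\ge c_0$, because the $\theta_c$ are real and nonnegative.
\item Also $|p_\theta(z)-p_{\theta'}(z)|\le D_e$, because $\varrho<1$.
\end{itemize}
Then count roots inside rather than outside. The zeros of $p_\theta$ in $|z|<\varrho$ are coordinates below $\tau$, so there are at most $s$ of them. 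Hence $p_{\theta'}$ also has at most $s$ zeros there, and $\theta'$ has at least $K-s$ coordinates $\ge\varrho>\tau/2$.

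Your outside count also works if you use the annulus $\varrho<|z|<2$. You then need the extra check $|p_\theta|\ge\prod_c(2-\theta_c)\ge1>2^{K-2}D_e$ on $|z|=2$.
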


Part~(c) places no assumption on $\theta'$, and the projection step needs exactly this: $\htheta$ may leave the class, and the
Rouch\'e argument of part~(b) returns it to $\Rare s{\tau/2}$. The bound uses only $K-s$ large coordinates, so coinciding
large coordinates cost nothing. The proof of the sharp constant, by contrast, uses distinct coordinates to invert the moment map.

\begin{proof}
  (a) We use \Cref{lem:intrep} and $|\log(x/y)|\le\log(1+|x-y|/\min\{x,y\})$ for $x,y>0$. Since $e_0=e_1=1$
  for both points, $\pbar_\theta(t)-\pbar_{\theta'}(t)=\sum_{j=2}^K(e_j-e'_j)t^{K-j}$.
  On $t\ge1$ the difference is at most $D_et^{K-2}$ and $\pbar\ge t^K+t^{K-1}$, so the integrand is at
  most $\log(1+D_e/(t(t+1)))\le D_e/(t(t+1))$, whose integral over $[1,\infty)$ is $D_e\log2$.
  On $0<t\le1$ the difference is at most $D_e$. Each point has at least $K-s$ coordinates of at least $\tau'$;
  choosing $K-s$ of them, bounding their factors $t+\theta_c$ of $\pbar(t)$ below by $\tau'$ and the other
  $s$ factors by $t$ gives $\pbar(t)\ge\tau'^{K-s}t^s$, so the integrand is at most $\log(1+at^{-s})$. For $s\ge2$ the substitution
  $t=a^{1/s}x$ gives $\int_0^1\log(1+at^{-s})\,\mathrm dt\le a^{1/s}\int_0^\infty\log(1+x^{-s})\,\mathrm dx$, and an
  integration by parts (the boundary term $x\log(1+x^{-s})$ vanishes at $0$ and, as $s\ge2$, at $\infty$)
  gives $\int_0^\infty\log(1+x^{-s})\,\mathrm dx=s\int_0^\infty\frac{\mathrm dx}{1+x^s}=\frac{\pi}{\sin(\pi/s)}$. For $s=1$,
  $\int_0^1\log(1+a/t)\,\mathrm dt=\int_0^1[\log(t+a)-\log t]\,\mathrm dt=(1+a)\log(1+a)-a\log a$. For $s=0$ the
  integrand is at most $\log(1+a)\le a$.

  (b) Divide $(\tau/2,\tau]$ into $K+1$ half-open intervals of length $\tau/(2(K+1))$. The $K$ coordinates of
  $\theta$ miss at least one of them; let $\varrho$ be its midpoint. Then $\tau/2<\varrho<\tau$ and
  $|\varrho-\theta_c|\ge\tau/(4(K+1))$ for every $c$. On the circle $|z|=\varrho$,
  $|p_\theta(z)|=\prod_c|z-\theta_c|\ge\prod_c|\varrho-\theta_c|\ge c_0$, while
  $p_\theta(z)-p_{\theta'}(z)=\sum_{j=2}^K(-1)^j(e_j-e'_j)z^{K-j}$ has modulus at most $D_e<c_0$ because
  $\varrho<1$. By Rouché's theorem $p_\theta$ and $p_{\theta'}$ have the same number of zeros, counted with
  multiplicity, in $|z|<\varrho$. The zeros of $p_\theta$ there are the coordinates $\theta_c<\varrho$; they are all
  below $\tau$, and $\theta$ has at most $s$ coordinates below $\tau$. Hence $\theta'$ has at most $s$ coordinates
  below $\varrho$, that is, at least $K-s$ coordinates of at least $\varrho>\tau/2$.

  (c) If $D_e<c_0$, both points lie in $\Rare s{\tau/2}$ by (b), and (a) applies with $\tau'=\tau/2$ and
  $a=\Upsilon D_e$, $\Upsilon:=(2/\tau)^{K-s}\ge1$. For $s\ge2$, $D_e\le1$ gives $D_e\le D_e^{1/s}$ and the bound
  $[\log2+\frac{\pi}{\sin(\pi/s)}\Upsilon^{1/s}]D_e^{1/s}$. For $s=1$, $(1+a)\log(1+a)\le a(1+a)\le\Upsilon(1+\Upsilon)D_e$ and
  $-a\log a=\Upsilon D_e(\log(1/D_e)-\log\Upsilon)\le\Upsilon D_e\log(1/D_e)$, so the bound is at most
  $[\log2+\Upsilon(1+\Upsilon)]D_e+\Upsilon D_e\log(1/D_e)\le[\log2+\Upsilon(1+\Upsilon)]\,D_e\log(e/D_e)$. For $s=0$ the bound is
  $(\log2+\Upsilon)D_e$. If $D_e\ge c_0$, then $|\Ent(\theta)-\Ent(\theta')|\le\log K\le\log K\cdot\phi_s(D_e)/\phi_s(c_0)$,
  since $\phi_s$ is nondecreasing on $[0,1]$ and $D_e<1$ by \Cref{lem:newton}(iii). Take $C_T$ to be the
  largest of these constants.
\end{proof}

\subsection{Proof of \texorpdfstring{\Cref{thm:main}}{Theorem 1}(i)}
\label{sec:upper-proof}

Each $\psi_s$ is nonincreasing on $(0,\infty)$; for $s=1$, the derivative of $x^{-1}\log^2(e+x)$ is
$x^{-2}\log(e+x)[2x/(e+x)-\log(e+x)]$, and $\log(e+x)-2x/(e+x)$ has derivative $(x-e)/(e+x)^2$ and minimum
$\log2>0$ at $x=e$.

\begin{proof}
  Fix $\theta\in\Rare s\tau$, and let $D:=D_e(\htheta,\theta)$ and $m:=\min\{n,N\}$. By
  \Cref{lem:alltree-var,lem:newton}, $D\le\min\{D_e^{\max},c_e\|\hT-P\|\}$, $\Exp D^2\le c_e^2C_{(1)}/m=:t_1$
  and $D^2<1$. By \Cref{lem:transfer}(c), $(\hH-\Ent)^2\le C_T^2\phi_s(D)^2$; also $(\hH-\Ent)^2\le(\log K)^2$.

  \emph{$s\ge2$.} By Jensen's inequality for the concave map $x\mapsto x^{1/s}$,
  $\Exp\phi_s(D)^2=\Exp(D^2)^{1/s}\le t_1^{1/s}$, so the risk is at most
  $\min\{(\log K)^2,C_T^2(c_e^2C_{(1)})^{1/s}m^{-1/s}\}\le C_U\min\{1,m^{-1/s}\}$.

  \emph{$s=1$.} Let $g(t):=t(1+\frac12\log(1/t))^2$ on $(0,1]$ and $g(0):=0$, so that
  $\phi_1(D)^2=D^2\log^2(e/D)=g(D^2)$. With $\ell:=\log(1/t)$, $g'(t)=(1+\ell/2)\ell/2\ge0$ and
  $g''(t)=-(\ell+1)/(2t)<0$, so $g$ is continuous, nondecreasing and concave on $[0,1]$, and Jensen's
  inequality gives $\Exp g(D^2)\le g(\Exp D^2)\le g(\min\{1,t_1\})$. If $t_1<1$, then with $c':=c_e^2C_{(1)}$,
  $g(t_1)=\frac{c'}m(1+\frac12\log\frac m{c'})^2$ and $0<\log(m/c')\le\log(e+m)+|\log c'|$; as $\log(e+m)\ge1$,
  $g(t_1)\le c'(\frac32+\frac12|\log c'|)^2m^{-1}\log^2(e+m)$, so the risk is at most
  $\min\{(\log K)^2,C_T^2c'(\frac32+\frac12|\log c'|)^2m^{-1}\log^2(e+m)\}$. If $t_1\ge1$, then $m\le c'$, so
  $\psi_1(m)\ge\psi_1(c')>0$ and the risk is at most $(\log K)^2\le(\log K)^2\psi_1(m)/\psi_1(c')$. In both
  cases the risk is at most $C_U\psi_1(m)$.

  \emph{$s=0$.} The risk is at most $\min\{(\log K)^2,C_T^2t_1\}\le C_U\min\{1,m^{-1}\}$.

  In every case the constant depends only on $(K,s,\tau)$, and the estimator does not use $s$ or $\tau$.
  If $n\Delta^2\le1$, then $N=n\delta^{K-1}\le n$ and $m=N$.
\end{proof}

\subsection{Computing the Estimator}
\label{sec:computing}

Let $X$ be the symmetric $n\times n$ matrix with entries $X_{ij}$ and zero diagonal. For a multigraph $G$ let $\hom(G,X)$ be
the sum of $\prod_{\{u,v\}\in E(G)}X_{\varphi(u)\varphi(v)}$ over all maps $\varphi$ from the vertices of $G$ to $[n]$, and
$\mathrm{inj}(G,X)$ the same sum over injective maps. For fixed $K$ the sum over all labeled spanning trees of all
$k$-sets is a finite combination of homomorphism counts: grouping the trees by shape $F$ gives
$\sum_F\mathrm{inj}(F,X)/|\mathrm{Aut}(F)|$, and M\"obius inversion over the set partitions $\pi$ of $V(F)$ writes the
injective count as $\mathrm{inj}(F,X)=\sum_\pi\mu(0,\pi)\hom(F/\pi,X)$. Partitions that merge two adjacent vertices
contribute zero because $X$ has zero diagonal, and parallel edges of a quotient become entrywise powers of $X$. Each
$\hom(F/\pi,X)$ is a sum over at most $n^k$ maps, so for every fixed $K$ the statistics $\hT_2,\dots,\hT_K$ take a number of
operations polynomial in $n$.

For $k\le5$ the shapes are the edge; the path on three vertices; the star and the path on
four vertices; and the star, the path and the spider with legs $(2,1,1)$ on five vertices. With the quotient multigraphs
contracted by variable elimination, the statistics $\hT_2,\dots,\hT_5$ need one $n\times n$ matrix product and otherwise
$O(n^2)$ operations. The next proposition replaces the exact projection by a grid search.

\begin{proposition}[Approximate Projection]
  \label{prop:approx-proj}
  Let $n\ge2K$, $0\le q<q+\Delta\le1$ and $m:=\min\{n,N\}$.
  \begin{enumerate}[label=(\roman*),leftmargin=*]
    \item Let $\bar\theta\in\Simplex$ satisfy $\|P(\bar\theta)-\hT\|\le\mathrm{dist}(\hT,\Mom)+\eta$ for some $\eta\ge0$, and put
      $\bar H:=\Ent(\bar\theta)$. For every $\theta\in\Simplex$, $D_e(\bar\theta,\theta)\le c_e\|\hT-P(\theta)\|+c_e\eta/2$. If
      $\eta\le L_\eta m^{-1/2}$ for a constant $L_\eta$, then for $0\le s\le K-1$, $0<\tau<1/(K-s)$ and every
      $\theta\in\Rare s\tau$, $\Exp_\theta(\bar H-\Ent(\theta))^2\le\bar C_U\,\psi_s(m)$ with $\bar C_U$ depending only on $(K,s,\tau,L_\eta)$.
    \item Let $h:=1/\lceil m^{1/2}\rceil$ and $G_h:=\{\vartheta\in\Simplex:\vartheta_c/h\in\mathbb Z\text{ for all }c\}$, a set of
      $\binom{1/h+K-1}{K-1}\le(m^{1/2}+2)^{K-1}$ points. Every minimizer $\bar\theta$ of $\|P(\vartheta)-\hT\|$ over $\vartheta\in G_h$
      satisfies the hypothesis of (i) with $\eta=2(K-1)L_Ph\le L_\eta m^{-1/2}$, where $L_P:=(\sum_{k=2}^Kk^2)^{1/2}$ and
      $L_\eta:=2(K-1)L_P$.
    \item Let $\tilde e_0:=1$ and let $\tilde e_1,\dots,\tilde e_K$ be computed from $(1,\hT_2,\dots,\hT_K)$ by Newton's
      identities. Then $\hT\in\Mom$ if and only if all roots of $\sum_{j=0}^K(-1)^j\tilde e_jz^{K-j}$ are real and
      nonnegative, and then these roots form a point $\htheta\in\Simplex$ with $P(\htheta)=\hT$. Let $\check H:=\Ent(\htheta)$
      on $\{\hT\in\Mom\}$ and $\check H:=\bar H$ otherwise, with $\bar\theta$ from (ii). Then $\check H$ satisfies the bound of
      (i), and under the conditions of \Cref{prop:attain} its first display holds with $\check H$ in place of $\hH$;
      hence the conclusion of \Cref{thm:sharp} holds for $\check H$. The same holds for the star averages
      with the second display of \Cref{prop:attain} and \Cref{cor:star}.
  \end{enumerate}
\end{proposition}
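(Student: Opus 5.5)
For part~(i) I would copy the proof of \Cref{thm:main}(i), changing only the deterministic step. Since $P(\theta)\in\Mom$, we have $\mathrm{dist}(\hT,\Mom)\le\|\hT-P(\theta)\|$, so
\[
\|P(\bar\theta)-P(\theta)\|\le\|P(\bar\theta)-\hT\|+\|\hT-P(\theta)\|\le2\|\hT-P(\theta)\|+\eta .
\]
\Cref{lem:newton}(ii) then gives $D_e(\bar\theta,\theta)\le\frac{c_e}2\|P(\bar\theta)-P(\theta)\|\le c_e\|\hT-P(\theta)\|+c_e\eta/2$. Squaring with $(x+y)^2\le2x^2+2y^2$ and applying \Cref{lem:alltree-var}, I get $\Exp D^2\le 2c_e^2C_{(1)}/m+c_e^2L_\eta^2/(2m)=:t_1'$. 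This is again of the form $c'/m$, now with $c'$ depending on $(K,L_\eta)$. From there the case analysis for $s\ge2$, $s=1$ and $s=0$ in \Cref{sec:upper-proof} goes through verbatim: \Cref{lem:transfer}(c) gives $(\bar H-\Ent)^2\le C_T^2\phi_s(D)^2$, and Jensen's inequality finishes, using the concave $g$ when $s=1$. Since $\bar\theta\in\Simplex$, \Cref{lem:newton}(iii) still gives $D<1$.

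For part~(ii) I need two facts. The first is a Lipschitz bound for $P$: on $[0,1]^K$ we have $|\partial P_k/\partial\theta_c|=k\theta_c^{k-1}\le k$, so $\|P(\vartheta)-P(\vartheta')\|\le L_P\|\vartheta-\vartheta'\|_1$. The second is grid approximation. Take the exact nearest point $\tP=P(\theta^*)$. I round $\theta^*$ to some $\vartheta\in G_h$ with $\|\vartheta-\theta^*\|_1\le2(K-1)h$ by rounding down the first $K-1$ coordinates to multiples of $h$ and putting the remainder into the last coordinate; the remainder is a multiple of $h$ and nonnegative. Rounding down moves each of the first $K-1$ coordinates by less than $h$, and the last coordinate then changes by at most $(K-1)h$, which gives the stated $\ell_1$ bound. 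A grid minimizer therefore satisfies $\|P(\bar\theta)-\hT\|\le\|P(\vartheta)-\hT\|\le\mathrm{dist}(\hT,\Mom)+2(K-1)L_Ph$. Finally $h\le m^{-1/2}$ gives $\eta\le L_\eta m^{-1/2}$. The grid cardinality is a stars-and-bars count, and it is bounded using $1/h\le m^{1/2}+1$.

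For part~(iii), the first claim is the classical characterization. The numbers $\tilde e_j$ are obtained from $(1,\hT_2,\dots,\hT_K)$ by Newton's identities, and $\tilde e_1=P_1=1$. If the roots are real and nonnegative, they sum to $\tilde e_1=1$, so they form a point of $\Simplex$. Their power sums $P_2,\dots,P_K$ are the given ones, because Newton's identities are invertible for $m\le K$. Conversely, if $\hT=P(\theta)$, then the polynomial is $p_\theta$ by \Cref{lem:newton}(ii).

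For the risk of $\check H$ there are two cases. On $\{\hT\in\Mom\}$ the point $\htheta$ is an exact projection, with $\eta=0$. Otherwise $\bar\theta$ from (ii) is used. In both cases the deterministic inequality of (i) holds with $\eta\le L_\eta m^{-1/2}$, so the bound of (i) follows.

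For the sharp statements, the main obstacle is that \Cref{prop:attain} (not yet seen) presumably uses the exact projection. My plan is to note that on $\{\hT\in\Mom\}$ we have $\check H=\hH$, so the two estimators agree there. I would then bound the contribution of the complement by $(\log K)^2\Prob(\hT\notin\Mom)$, and show this probability is $o(1/N)$. Near an interior $\theta_0$ with distinct coordinates, $P(\theta_0)$ lies in the interior of $\Mom$, since the moment map is a local diffeomorphism there. So $\hT\notin\Mom$ forces $\|\hT-P(\theta)\|$ to exceed a fixed radius. Chebyshev's inequality alone only gives $O(1/N)$, so I would use the fourth-moment (hypercontractive) bound, which is presumably available from the proof of \Cref{prop:attain}. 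That bound gives $\Prob\le C/N^2=o(1/N)$. Alternatively, I could observe that the proof of \Cref{prop:attain} already localizes onto such an event, so this case may be absorbed directly. The star-average version follows identically.
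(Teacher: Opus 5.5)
Your proposal is correct and follows the paper's proof essentially line by line in (i), (ii) and the root characterization in (iii). For the sharp-constant part, the paper re-runs the proof of \Cref{prop:attain}. It observes that on $\mathcal E$ one has $\hT\in B_1\subset\Mom$, so $\check H=\tilde H(\hT)$, and that $0\le\check H\le\log K$ elsewhere. You instead note that $\check H=\hH$ on $\{\hT\in\Mom\}$ and bound the difference by $(\log K)^2\Prob(\mathcal E^c)=O(N^{-2})$ via \Cref{lem:fourth}. This uses the same ingredients, packaged as a perturbation of the already proved display.
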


\begin{proof}
  (i) As $P(\theta)\in\Mom$, $\mathrm{dist}(\hT,\Mom)\le\|\hT-P(\theta)\|$, so
  $\|P(\bar\theta)-P(\theta)\|\le\|P(\bar\theta)-\hT\|+\|\hT-P(\theta)\|\le2\|\hT-P(\theta)\|+\eta$, and \Cref{lem:newton}(ii) gives the
  first claim. Let $D:=D_e(\bar\theta,\theta)$. By \Cref{lem:newton}(iii), $D<1$, and by $(a+b)^2\le2a^2+2b^2$ and
  \Cref{lem:alltree-var}, $\Exp D^2\le(2c_e^2C_{(1)}+c_e^2L_\eta^2/2)/m$. The proof of \Cref{thm:main}(i)
  uses only these two properties of $D$, \Cref{lem:transfer}(c) for $\theta\in\Rare s\tau$ and a
  second point of $\Simplex$, and $0\le\bar H\le\log K$; with $c_e^2C_{(1)}$ replaced by $2c_e^2C_{(1)}+c_e^2L_\eta^2/2$ it gives
  the bound.

  (ii) For $x,y\in[0,1]$, $|x^k-y^k|\le k|x-y|$, so $|P_k(\vartheta)-P_k(\vartheta')|\le k\|\vartheta-\vartheta'\|_1$ and
  $\|P(\vartheta)-P(\vartheta')\|\le L_P\|\vartheta-\vartheta'\|_1$ on $\Simplex$. As $\Mom$ is compact, there is $\theta^*\in\Simplex$ with
  $\|P(\theta^*)-\hT\|=\mathrm{dist}(\hT,\Mom)$. Rounding $\theta^*_1,\dots,\theta^*_{K-1}$ down to multiples of $h$ and setting
  $\vartheta_K:=1-\sum_{c<K}\vartheta_c\ge\theta^*_K$ gives a point $\vartheta$ with $\|\vartheta-\theta^*\|_1\le2(K-1)h$, and
  $\vartheta\in G_h$ because $1/h$ is an integer. Hence $\|P(\bar\theta)-\hT\|\le\|P(\vartheta)-\hT\|\le\mathrm{dist}(\hT,\Mom)+2(K-1)L_Ph$, and
  $h\le m^{-1/2}$. The points of $G_h$ correspond to the compositions of $1/h$ into $K$ nonnegative parts, and
  $1/h\le m^{1/2}+1$.

  (iii) If $\hT=P(\theta)$ with $\theta\in\Simplex$, Newton's identities give $\tilde e_j=e_j(\theta)$, and the polynomial is
  $p_\theta$, whose roots are the coordinates of $\theta$. Conversely, if the roots $r_1,\dots,r_K$ are real and
  nonnegative, then $\sum_cr_c=\tilde e_1=1$, so $r\in\Simplex$, and $e_j(r)=\tilde e_j$; since Newton's identities determine
  the power sums from the elementary symmetric polynomials, $P(r)=\hT$. On $\{\hT\in\Mom\}$, $\htheta$ satisfies the
  hypothesis of (i) with $\eta=0$, and otherwise $\bar\theta$ satisfies it by (ii); so $\check H$ satisfies the bound of
  (i). In the proof of \Cref{prop:attain}, on the event $\mathcal E$ one has $\hT\in B_1\subset\Mom$ by \Cref{lem:chart}(ii), so
  $\check H=\Ent(\htheta)$ with $P(\htheta)=\hT$. By Newton's identities, $\htheta$ has the multiset of coordinates of
  $(\Pi|_W)^{-1}(\hT)$, so $\check H=\tilde H(\hT)$, with $\Pi$, $W$ and $\tilde H$ as in \Cref{lem:chart} and its proof. On
  $\mathcal E^c$, $0\le\check H\le\log K$. These are the only properties of $\hH$
  that the proof uses, so its first display holds for $\check H$. The proof of \Cref{thm:sharp} then applies,
  since \Cref{prop:vantrees} bounds every estimator. For the star averages replace $\hT$ by $\hTs$.
\end{proof}

The grid of part~(ii) has at most $(m^{1/2}+2)^{K-1}\le(n^{1/2}+2)^{K-1}$ points, and applying Newton's identities and
computing the roots in part~(iii) take a number of operations that depends only on $K$. Hence the estimator $\check H$ runs in time
polynomial in $n$ for every fixed $K$, and for $K\le5$ in one $n\times n$ matrix product and $O(n^2)$ further operations.

\section{Proof of the Lower Bound}
\label{sec:lower}

Throughout this section, $q_0\in(0,1/2]$ and the window exponent
$\kappa\in(0,1)$ are fixed, $q\in[q_0,1-q_0]$ and $0<\Delta\le q_0/2$, so that $p\le1-q_0/2$. On the window $n\Delta^2\le n^{-\kappa}$ the last condition
holds for $n\ge4/q_0^2$, because $\Delta=\sqrt{\delta/n}\le n^{-1/2}$. The proof uses fibers, paths
in the class along which $P_2,\dots,P_{K-1}$ stay fixed while $P_K$ moves. By a Hellinger bound along the fiber and Le Cam's
two-point argument, the rate follows once the Fisher information $I(u)$ of the verdict
graph is bounded along the fiber at the amplitude $\eps=\min\{\eps_1,a_0N^{-1/(2s)}\}$ for $s\ge2$, and at
$\min\{\eps_1,a_0N^{-1/2}\}$ for $s\in\{0,1\}$, with constants $\eps_1,a_0>0$. The capped amplitude covers both regimes:
vanishing risk when $N$ is large and constant risk when $N$ is bounded.

\subsection{Fibers That Move Only the Constant Term}
\label{sec:fibers}

The lower bounds use paths of the following kind, with $r$ the order of the sweep and a dot for $d/du$.

\begin{definition}[Fiber Conditions]
  \label{def:H}
  Let $r\ge1$ be an integer, $L\ge1$ and $\mu,\eta\in(0,1]$. A path $u\mapsto\theta(u)$, $u\in[u_0,u_1]$ with
  $u_1-u_0\le1$, satisfies condition~(H) with parameters $(\mu,\eta,L)$ if for every $u\in[u_0,u_1]$:
  \begin{itemize}[leftmargin=2.6em,itemsep=0pt,topsep=2pt]
    \item[(H0)] $\theta$ is real analytic on an open interval containing $[u_0,u_1]$, $\theta(u)\in\Simplex$ and
      $\theta_c(u)>0$ for all $c$;
    \item[(H1)] $\dot P_j(u)=0$ for $2\le j\le K-1$;
    \item[(H2)] $|\dot P_k(u)|\le L\mu$ for $K\le k\le r$ (void if $r<K$);
    \item[(H3)] $\|\dot\theta(u)\|_1\le L\eta$;
    \item[(H4)] $\max_c|\dot\theta_c(u)|/\theta_c(u)\le L$.
  \end{itemize}
\end{definition}

The coefficients of $p_\theta(z)=\prod_c(z-\theta_c)$ are $\pm e_j(\theta)$. A path that changes only the constant term
$(-1)^Ke_K$ keeps $e_1,\dots,e_{K-1}$ fixed and, by Newton's identities, also $P_2,\dots,P_{K-1}$, so it satisfies~(H1). By
Newton's identity $Ke_K=\sum_{i=1}^K(-1)^{i-1}e_{K-i}P_i$,
\begin{equation}
  \label{eq:d:pk}
  P_K=\sum_{i=1}^{K-1}(-1)^{i-1}e_iP_{K-i}+(-1)^{K-1}Ke_K,\qquad\text{so}\qquad \dot P_K=(-1)^{K-1}K\,\dot e_K
\end{equation}
along such a path. For a function $F$ on the simplex and a path $\theta(u)$ we write
$[F]^b_a:=F(\theta(b))-F(\theta(a))$.

For $2\le s\le K-1$, the mixed fiber starts from a polynomial with $K-s$ distinct roots above $\tau$ and $s$ roots at $\eps$
times the zeros of a Chebyshev polynomial, and perturbs its constant term by an amount $\sigma_\eps\asymp\eps^s$. The small roots then move by order $\eps$, and so does the entropy, while $P_K$ moves by
$K\sigma_\eps$. For $s=K-1$ there is a single macroscopic meaning, and the fiber runs next to a vertex of the simplex.
The case $s=1$
uses a fiber on which the single small root moves from $\eps$ to $2\eps$, with entropy gap at least $\frac12\eps\log(1/\eps)$, and the case $s=0$ perturbs an interior point by $\eta$. The three fiber lemmas below are proved in
\Cref{app:fibers}.

\subsubsection{Distinct Masses}
For an integer $m\ge1$ and $0<\tau<1/m$, let $\bar y(m,\tau):=(1)$ if $m=1$, and otherwise
$\bar y_c:=1/m+(1/m-\tau)(c-(m+1)/2)/m$, $c=1,\dots,m$. The entries are pairwise distinct
and sum to $1$, and the smallest is $1/m-(1/m-\tau)(m-1)/(2m)>1/m-(1/m-\tau)=\tau$.

\subsubsection{The Mixed Fiber}
Let $2\le s\le K-1$, $0<\tau<1/(K-s)$, $m:=K-s$, $\bar y:=\bar y(m,\tau)$, and let
$x_j:=(1+\cos\frac{(2j-1)\pi}{2s})/2$, $j=1,\dots,s$, be the zeros of $T_s(2x-1)$, where $T_s$ is the
Chebyshev polynomial of the first kind; $\sum_jx_j=s/2$. For $\eps\in(0,1/s)$ put $y_c:=\bar y_c(1-\eps s/2)$,
$\sigma_\eps:=2^{1-2s}\eps^s\prod_cy_c$ and, for $u\in[1/4,3/4]$,
\[
  p_u(z):=\prod_{c=1}^m(z-y_c)\prod_{j=1}^s(z-\eps x_j)+(u-\tfrac12)\,\sigma_\eps .
\]

\begin{lemma}[Mixed Fiber]
  \label{lem:fiber-mixed}
  There are $\eps_F>0$ and $C_F$, depending only on $(K,s,\tau)$, such that the following hold for $0<\eps\le\eps_F$ and
  $u\in[1/4,3/4]$.
  \begin{enumerate}[label=(\roman*),leftmargin=*]
    \item $p_u$ has $K$ positive simple real roots; their vector $\theta(u)$ lies in $\Simplex\cap\Rare s\tau$ and is
      real analytic in $u$ on a neighborhood of $[1/4,3/4]$. The $m$ largest roots satisfy
      $|\theta_c(u)-y_c|\le C_F\eps^s$ and $\theta_c(u)>\tau$; the $s$ smallest are $\eps w_j(u,\eps)$, where the
      $w_j$ are real analytic on a neighborhood of $[1/4,3/4]\times[0,\eps_F]$, $w_j(u,0)$ are the $s$ roots of
      $T_s(2w-1)=(-1)^{m+1}(u-\frac12)$ in $(0,1)$, and $w_{\min}\le w_j\le2$ for a constant $w_{\min}>0$
      depending only on $(K,s,\tau)$. At $u=\frac12$ the roots are $y_1,\dots,y_m$ and $\eps x_1,\dots,\eps x_s$.
    \item $e_1,\dots,e_{K-1}$ are constant along the path and $e_K(u)=e_K(\theta(\frac12))+(-1)^K(u-\frac12)\sigma_\eps$;
      hence $\dot P_j=0$ for $j\le K-1$ and $\dot P_K=-K\sigma_\eps$.
    \item The large roots have $|\dot\theta_c|\le C_F\eps^s$; the small roots have $|\dot\theta_c|\le C_F\eps$ and
      $|\dot\theta_c|/\theta_c\le C_F$. Hence $\|\dot\theta\|_1\le C_F\eps$ and
      $|\dot P_k|\le k(m+s2^{k-1})C_F\eps^s$ for every $k\ge K$. Therefore, for every $r\ge1$ the path satisfies~(H) with
      $\mu=\eps^s$, $\eta=\eps$ and $L$ depending only on $(K,s,\tau,r)$.
    \item There are $h_s>0$ depending only on $(K,s)$ and $\eps_H>0$ depending only on $(K,s,\tau)$ such that
      $|\Ent(\theta(\frac34))-\Ent(\theta(\frac14))|\ge h_s\eps/2$ for $\eps\le\eps_H$.
  \end{enumerate}
\end{lemma}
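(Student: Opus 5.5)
The plan is perturbation theory of polynomial roots, handled separately for the large and small roots, with the Chebyshev structure giving simple small roots at $\eps=0$.

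\emph{Large roots.} At $u=\tfrac12$ the roots of $p_{1/2}$ are $y_1,\dots,y_m$ and $\eps x_1,\dots,\eps x_s$. These are pairwise distinct for small $\eps$. The reason is that the $y_c$ are distinct and exceed $\tau(1-\eps s/2)$, while the $x_j$ are distinct zeros of the Chebyshev polynomial in $(0,1)$. The perturbation $(u-\tfrac12)\sigma_\eps$ is a constant of size at most $\sigma_\eps/4\lesssim\eps^s$. Near each $y_c$ the derivative satisfies $|p_{1/2}'(y_c)|\ge c>0$, uniformly in $\eps\le\eps_F$. So the implicit function theorem, or Rouché on a disc of radius $C\eps^s$, gives a unique simple real root $\theta_c(u)$ with $|\theta_c-y_c|\le C_F\eps^s$. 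Hence $\theta_c>\tau$ once $\eps$ is small, because $\bar y_c>\tau$ by a margin depending only on $(m,\tau)$. Differentiating $p_u(\theta_c(u))=0$ gives $\dot\theta_c=-\sigma_\eps/p_u'(\theta_c)$, which is $O(\eps^s)$. This yields the large-root part of (iii).

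\emph{Small roots.} Rescale by $z=\eps w$. Then
\[
p_u(\eps w)=\eps^s\Big[\prod_c(\eps w-y_c)\prod_j(w-x_j)+(u-\tfrac12)2^{1-2s}\textstyle\prod_cy_c\Big].
\]
Call the bracket $F(w,u,\eps)$. It is real analytic, including at $\eps=0$. Since $\prod_j(w-x_j)=2^{1-2s}T_s(2w-1)$, we get
\[
F(w,u,0)=(-1)^m\textstyle\prod_c\bar y_c\cdot 2^{1-2s}\bigl[T_s(2w-1)-(-1)^{m+1}(u-\tfrac12)\bigr].
\]
Because $|u-\tfrac12|\le\tfrac14<1$ and $T_s$ equioscillates between $\pm1$ on $[-1,1]$, the equation $T_s(2w-1)=\pm(u-\tfrac12)$ has $s$ simple roots in $(0,1)$. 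They stay at distance from $0$ and $1$, and their separation is bounded below uniformly for $u\in[1/4,3/4]$. This gives $w_{\min}$.

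The analytic implicit function theorem then produces analytic $w_j(u,\eps)$ on a neighborhood of the compact set $[1/4,3/4]\times[0,\eps_F]$. Compactness makes $\eps_F$ uniform, and $w_j\le2$. The roots $\eps w_j$ are positive and simple, and all $K$ roots are distinct from each other. So $\theta(u)$ is analytic, has positive entries, and $\sum\theta_c=e_1$ is unchanged, equal to $(1-\eps s/2)+\eps s/2=1$. Thus $\theta(u)\in\Simplex\cap\Rare s\tau$, which is (i).

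\emph{Part (ii).} Only the constant coefficient $(-1)^Ke_K$ of $p_u$ depends on $u$. So $e_1,\dots,e_{K-1}$ are constant along the path, and $e_K(u)=e_K(\tfrac12)+(-1)^K(u-\tfrac12)\sigma_\eps$. By \eqref{eq:d:pk}, $\dot P_K=(-1)^{K-1}K(-1)^K\sigma_\eps=-K\sigma_\eps$. Newton's identities give $\dot P_j=0$ for $j<K$.

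\emph{Small-root derivatives and (iii).} For the small roots, $\dot\theta_c=\eps\,\partial_uw_j$, which is $O(\eps)$ by analyticity on the compact set. Also $\theta_c\ge\eps w_{\min}$, so $|\dot\theta_c|/\theta_c\le C_F$. Then $\|\dot\theta\|_1\le C_F\eps$.

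For $k\ge K$, write $\dot P_k=k\sum_c\theta_c^{k-1}\dot\theta_c$. The large roots contribute at most $mkC_F\eps^s$. The small roots contribute $k\eps^{k-1}2^{k-1}s\,C_F\eps$, which is at most $ks2^{k-1}C_F\eps^s$ since $k\ge K>s$. This gives (H2) with $\mu=\eps^s$, and (H0)–(H4) follow with $\eta=\eps$.

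\emph{Entropy gap (iv), the main obstacle.} The large roots move by $O(\eps^s)$, so they change $\Ent$ by only $O(\eps^s)$. The small roots contribute
\[
-\sum_j\eps w_j\log(\eps w_j)=-\eps\log\eps\sum_jw_j-\eps\sum_jw_j\log w_j.
\]
Trace conservation gives $\sum_c\theta_c=1$, and the large roots' sum changes only by $O(\eps^s)$. So $\sum_jw_j$ changes by only $O(\eps^{s-1})$, with no leading $\eps$ change. The $\eps\log\eps$ term therefore contributes $O(\eps^s\log(1/\eps))$.

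The gap thus comes from $-\eps\bigl[\sum_jw_j\log w_j\bigr]^{3/4}_{1/4}$. At $\eps=0$ this requires
\[
h_s:=\Bigl|\sum_j w_j\log w_j\Big|_{u=3/4}-\sum_jw_j\log w_j\Big|_{u=1/4}\Bigr|>0,
\]
where the $w_j$ solve $T_s(2w-1)=\pm\tfrac14$. This is the step I expect to be delicate. The root sum $\sum_j w_j(u,0)=s/2$ is constant, because only the constant term of $T_s$ shifts. The roots are reflections of each other under $w\mapsto1-w$: the roots of $T_s(x)=a$ and of $T_s(x)=-a$ are related by $x\mapsto -x$ when $s$ is odd, and interlace when $s$ is even. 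The plan is to show that the function $G(u)=\sum_j w_j\log w_j$ is strictly monotone in $u$. Its derivative is
\[
G'(u)=\sum_j\log w_j\cdot\partial_uw_j,\qquad \partial_uw_j=\pm\frac{1}{2T_s'(2w_j-1)},\qquad \sum_j\partial_uw_j=0.
\]
So $G'(u)$ is a covariance of $\log w$ against weights of alternating sign ordered along $w$. Strict concavity/monotonicity of $\log$, together with the alternation and interlacing of the Chebyshev roots, should give a definite sign; failing that, I would settle it by the explicit formula $w_j=(1+\cos((\arccos(\pm\tfrac14)+2\pi j')/s))/2$.

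With $h_s$ depending only on $s$, and on $K$ through $m$'s sign, the analytic error terms give $|[\Ent]|\ge h_s\eps-C\eps^{2}\ge h_s\eps/2$ for $\eps\le\eps_H$.
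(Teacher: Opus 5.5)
\paragraph{A gap at the key step of (iv).} Your parts (i)--(iii) are essentially the paper's argument: the implicit function theorem near each $\bar y_c$ for the large roots, the rescaling $z=\eps w$ and the Chebyshev equation for the small roots, and differentiation of $p_u(\theta_c(u))=0$. Your reduction in (iv) is also sound. The large roots and the term $-\eps\log\eps\,[\sum_jw_j]^{3/4}_{1/4}$ contribute $o(\eps)$. For $s=2$ the latter is $O(\eps^2\log(1/\eps))$ rather than $C\eps^2$, which is harmless. Hence $[\Ent]^{3/4}_{1/4}=-\eps\,[G]^{3/4}_{1/4}+o(\eps)$ with $G(u)=\sum_jw_j(u,0)\log w_j(u,0)$.

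The whole content of (iv) is then the claim $G(\frac34)\neq G(\frac14)$, and you do not prove it. The phrase ``alternating weights against $\log$ should give a definite sign'' is a heuristic. Your fallback to the explicit trigonometric formula would be a separate computation for each $s$, not an argument for all $2\le s\le K-1$. Your symmetry remarks do not rule out $h_s=0$ either: for odd $s$ the reflection $w\mapsto1-w$ only turns the question into whether $\sum_j[(1-w_j)\log(1-w_j)-w_j\log w_j]$ vanishes, which is no easier. So (iv) is left open at exactly its decisive step.

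\paragraph{Two ways to close it.} Your own derivative route can be finished. Put $Q_u(w):=\prod_j(w-w_j(u,0))=2^{1-2s}[T_s(2w-1)+(-1)^m(u-\frac12)]$. Then $\partial_uw_j=(-1)^{m+1}2^{1-2s}/Q_u'(w_j)$ and $\sum_j\partial_uw_j=0$, so
\[
G'(u)=(-1)^{m+1}2^{1-2s}\sum_j\frac{\log w_j}{Q_u'(w_j)} .
\]
The sum is the divided difference of $\log$ at the $s$ nodes $w_j$. By the mean value theorem for divided differences it equals $\log^{(s-1)}(\xi)/(s-1)!=(-1)^s/((s-1)\xi^{s-1})$ for some $\xi\in(0,1)$. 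Hence $G'$ has the constant sign $(-1)^{K+1}$, $|G'|\ge2^{1-2s}/(s-1)$, and $h_s\ge2^{-2s}/(s-1)$.

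The paper never computes $G$. It writes $[\Ent]^{3/4}_{1/4}$ through the integral representation of \Cref{lem:intrep} and substitutes $t=\eps v$. It then passes to the limit by dominated convergence, the integrand being $O((v+w_{\min})^{-s})$, which is integrable because $s\ge2$. The limiting integrand is $\log\bigl(1+(-1)^K2^{-2s}/\prod_j(v+w_j(\frac14,0))\bigr)$. It has constant sign because $Q_{3/4}$ and $Q_{1/4}$ differ only in their constant term, so the limit $H_s^*$, which equals your $-[G]^{3/4}_{1/4}$, is nonzero at once. This is the same ``only the constant term moves'' mechanism behind the fibers themselves, and it is the idea your proposal is missing.
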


\begin{remark}[Sharpness of the Exponent]
  \label{rem:sharp-exponent}
  For the endpoints $\theta(\frac14),\theta(\frac34)$ of the mixed fiber, $D_e=|[e_K]^{3/4}_{1/4}|=\sigma_\eps/2=2^{-2s}\eps^s\prod_cy_c\le2^{-2s}\eps^s$,
  while $|[\Ent]^{3/4}_{1/4}|\ge h_s\eps/2$ for $\eps\le\eps_H$ (part~(iv) above). A bound $|[\Ent]^{3/4}_{1/4}|\le CD_e^{\alpha}$ on
  $\Rare s\tau$ would give $h_s\eps/2\le C2^{-2s\alpha}\eps^{s\alpha}$ for all small $\eps$, so $\alpha\le1/s$: for $s\ge2$ the exponent
  $1/s$ of the one-sided transfer cannot be improved on $\Rare s\tau$.
\end{remark}

\subsubsection{The Fiber for \texorpdfstring{$s=1$}{s = 1}}
Let $0<\tau<1/(K-1)$ and $b:=\bar y(K-1,\tau)$. For $\eps\in(0,1)$ let $y_j:=(1-\eps)b_j$ and
$F_\eps(z):=(z-\eps)\prod_{j=1}^{K-1}(z-y_j)$; for $u\in[0,1]$ let $\omega(u):=\eps(1+u)$ and
\[
  p_u(z):=F_\eps(z)-F_\eps(\omega(u)).
\]

\begin{lemma}[Fiber for $s=1$]
  \label{lem:fiber-one}
  There are $\eps_F>0$, $C_F$ and $\eps_H>0$, depending only on $(K,\tau)$, such that for $0<\eps\le\eps_F$ and
  $u\in[0,1]$:
  \begin{enumerate}[label=(\roman*),leftmargin=*]
    \item the roots of $p_u$ are $\omega(u)$ and $\theta_j(u)$, $j=1,\dots,K-1$, all positive, real and simple, with
      $|\theta_j(u)-b_j|\le C_F\eps$ and $\theta_j(u)>\tau$; the root vector $\theta(u)$ lies in $\Simplex\cap\Rare1\tau$ and is
      real analytic in $u$;
    \item $e_1,\dots,e_{K-1}$ are constant and $(-1)^Ke_K(u)=F_\eps(0)-F_\eps(\omega(u))$; hence $\dot P_j=0$ for
      $j\le K-1$, $\dot P_K=K\eps F_\eps'(\omega(u))$ and $|\dot P_K|\le C_F\eps$;
    \item $\dot\omega=\eps$ and $\dot\omega/\omega=1/(1+u)\in[\frac12,1]$; the large roots have $|\dot\theta_j|\le C_F\eps$ and
      $|\dot\theta_j|/\theta_j\le C_F\eps/\tau$; hence $\|\dot\theta\|_1\le C_F\eps$, $|\dot P_k|\le kC_F\eps$ for every $k$, and for
      every $r\ge1$ the path satisfies~(H) with $\mu=\eta=\eps$ and $L$ depending only on $(K,\tau,r)$;
    \item for $\eps\le\eps_H$, $\Ent(\theta(1))-\Ent(\theta(0))\ge\frac12\eps\log(1/\eps)$.
  \end{enumerate}
\end{lemma}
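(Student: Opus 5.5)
The plan is to treat $p_u$ as a perturbation of $z\prod_j(z-b_j)$ and get every claim from the implicit function theorem applied to the simple roots $b_j$. The only genuinely quantitative step is part~(iv), and it comes down to one derivative computation.

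\emph{Roots and their location, part~(i).} Clearly $p_u(\omega(u))=0$. I would factor
\[
  p_u(z)=(z-\omega)\,Q(z;\omega,\eps),\qquad Q(z;\omega,\eps):=\frac{F_\eps(z)-F_\eps(\omega)}{z-\omega}.
\]
The quotient $Q$ is a polynomial of degree $K-1$ in $z$, with coefficients polynomial in $(\omega,\eps)$. At $\eps=\omega=0$ it reduces to $F_0(z)/z=\prod_j(z-b_j)$. The $b_j$ are pairwise distinct and all exceed $\tau$, so each is a simple root of $Q(\cdot\,;0,0)$. The implicit function theorem then gives real analytic roots $\theta_j(\omega,\eps)$ near $b_j$ with $|\theta_j-b_j|\le C(|\omega|+\eps)\le C_F\eps$, since $\omega=\eps(1+u)\le2\eps$.

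For $\eps$ small the following hold:
\begin{itemize}
  \item the roots $\theta_j$ stay real, positive, simple and above $\tau$, because $\min_jb_j>\tau$;
  \item they are distinct from $\omega\le2\eps$;
  \item composing with $\omega=\eps(1+u)$ gives analyticity in $u$.
\end{itemize}
The roots sum to the coefficient $e_1$ of $p_u$. This equals $e_1$ of $F_\eps$, namely $\eps+(1-\eps)\sum_jb_j=1$. So $\theta(u)\in\Simplex$, and it lies in $\Rare1\tau$ because $K-1$ of its coordinates exceed $\tau$.

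\emph{Moments, part~(ii).} The polynomials $p_u$ and $F_\eps$ differ only by a constant. Hence $e_1,\dots,e_{K-1}$ are constant along the path, and $(-1)^Ke_K(u)=F_\eps(0)-F_\eps(\omega(u))$. By Newton's identities $P_2,\dots,P_{K-1}$ are then constant, so $\dot P_j=0$ for $j\le K-1$. For $P_K$, \eqref{eq:d:pk} gives
\[
  \dot P_K=(-1)^{K-1}K\dot e_K=K\,\dot\omega\,F_\eps'(\omega)=K\eps F_\eps'(\omega).
\]
Since $|F_\eps'|\le K$ on $[0,1]$, this yields $|\dot P_K|\le C_F\eps$.

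\emph{Derivatives, part~(iii).} We have $\dot\omega=\eps$ and $\dot\omega/\omega=1/(1+u)\in[\frac12,1]$. For a large root, differentiating $p_u(\theta_j(u))=0$ gives
\[
  F_\eps'(\theta_j)\,\dot\theta_j=F_\eps'(\omega)\,\dot\omega,\qquad\text{so}\qquad \dot\theta_j=\frac{\eps F_\eps'(\omega)}{F_\eps'(\theta_j)}.
\]
The denominator is bounded below by a constant depending on $(K,\tau)$: $\theta_j$ is within $C_F\eps$ of $b_j$, and $F_0'(b_j)=b_j\prod_{i\ne j}(b_j-b_i)\ne0$. Hence $|\dot\theta_j|\le C_F\eps$ and $|\dot\theta_j|/\theta_j\le C_F\eps/\tau$.

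From $\dot P_k=k\sum_c\theta_c^{k-1}\dot\theta_c$ we get $|\dot P_k|\le kC_F\eps$ and $\|\dot\theta\|_1\le C_F\eps$. These bounds deliver the conditions of \Cref{def:H}:
\begin{itemize}
  \item (H0) follows from part~(i);
  \item (H1) follows from part~(ii);
  \item (H2) and (H3) hold with $\mu=\eta=\eps$;
  \item (H4) holds with $L=\max\{1,C_F/\tau\}$, enlarged as needed for the finitely many $k\le r$.
\end{itemize}

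\emph{Entropy gap, part~(iv), the main step.} Because $\sum_c\dot\theta_c=0$, the entropy derivative simplifies to
\[
  \frac{\mathrm d}{\mathrm du}\Ent(\theta(u))=-\sum_c\dot\theta_c(\log\theta_c+1)=-\sum_c\dot\theta_c\log\theta_c .
\]
\begin{itemize}
  \item The small root contributes $-\eps\log(\eps(1+u))\ge\eps\log(1/\eps)-\eps\log2$.
  \item Each large root contributes at most $|\dot\theta_j|\,|\log\theta_j|\le C_F\eps\log(1/\tau)$ in absolute value, since $\tau<\theta_j\le1$.
\end{itemize}
Integrating over $u\in[0,1]$ gives
\[
  \Ent(\theta(1))-\Ent(\theta(0))\ge\eps\log(1/\eps)-C\eps .
\]
The constant $C$ depends only on $(K,\tau)$, so this is at least $\frac12\eps\log(1/\eps)$ once $\log(1/\eps)\ge2C$; this fixes $\eps_H$.

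The delicate point is keeping every constant uniform in $u$ and $\eps$, and in particular the lower bound on $|F_\eps'(\theta_j)|$. This rests on the separation of the $b_j$, which holds because $\bar y(K-1,\tau)$ has pairwise distinct entries all exceeding $\tau$.
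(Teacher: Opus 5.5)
Your proposal is correct and matches the paper's proof essentially step for step: the same factorization $p_u=(z-\omega)\,(F_\eps(z)-F_\eps(\omega))/(z-\omega)$ with the implicit function theorem at the distinct roots $b_j$, the same constant-term argument for $e_1,\dots,e_{K-1}$ and $\dot P_K=K\eps F_\eps'(\omega)$, and the same formula $\dot\theta_j=\eps F_\eps'(\omega)/F_\eps'(\theta_j)$. The only cosmetic difference is in part~(iv). You integrate $\frac{\mathrm d}{\mathrm du}\Ent=-\sum_c\dot\theta_c\log\theta_c$ along the path. The paper instead computes the small-root term $h(2\eps)-h(\eps)=\eps\log(1/\eps)-2\eps\log2$ exactly and bounds the large-root differences by the mean value theorem with $|h'|\le1+\log(1/\tau)$ on $[\tau,1]$. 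The two give the same conclusion.
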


\subsubsection{The Perturbation for \texorpdfstring{$s=0$}{s = 0}}
Let $0<\tau<1/K$, $\bar y:=\bar y(K,\tau)$ and $\Lam(\theta):=\int_0^\infty\mathrm dt/\pbar_\theta(t)$. For $\eta\in(0,1]$ and
$u\in[0,1]$ let $\theta(u)$ be the vector of roots of $p_u(z):=p_{\bar y}(z)+(-1)^K\eta u$.

\begin{lemma}[Perturbation for $s=0$]
  \label{lem:fiber-zero}
  There are $\eta_F>0$ and $C_F$, depending only on $(K,\tau)$, such that for $0<\eta\le\eta_F$ and $u\in[0,1]$:
  \begin{enumerate}[label=(\roman*),leftmargin=*]
    \item $p_u$ has $K$ positive simple real roots with $|\theta_c(u)-\bar y_c|\le C_F\eta$ and $\theta_c(u)>\tau$;
      $\theta(u)\in\Simplex\cap\Rare0\tau$ is real analytic in $u$;
    \item $e_K(u)=e_K(\bar y)+\eta u$ and the other $e_j$ are constant; $\dot P_j=0$ for $j\le K-1$ and
      $\dot P_K=(-1)^{K-1}K\eta$;
    \item $|\dot\theta_c|\le C_F\eta$, $|\dot\theta_c|/\theta_c\le C_F\eta/\tau$ and $|\dot P_k|\le kKC_F\eta$; for every $r\ge1$ the path
      satisfies~(H) with parameters $(\eta,\eta,L)$, where $L$ depends only on $(K,\tau,r)$;
    \item if moreover $\eta\le\prod_c\bar y_c$, then $\Ent(\theta(1))-\Ent(\theta(0))\ge\eta\Lam(\bar y)/2$, and
      $0<\Lam(\bar y)<\infty$.
  \end{enumerate}
\end{lemma}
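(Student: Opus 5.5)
The plan is to treat $p_u$ as a small perturbation of the constant term of $p_{\bar y}$, whose $K$ roots $\bar y_1,\dots,\bar y_K$ are simple, pairwise distinct and strictly above $\tau$. The separation $g:=\min_{c\ne d}|\bar y_c-\bar y_d|=(1/K-\tau)/K$ and the margin $\min_c\bar y_c-\tau>0$ depend only on $(K,\tau)$. All constants below will be obtained from these two quantities.

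\emph{Part~(i).} Let $\varrho:=\tfrac13\min\{g,\min_c\bar y_c-\tau\}$ and $m_0:=\min_c\min_{|z-\bar y_c|=\varrho}|p_{\bar y}(z)|$. Since $|z-\bar y_d|\ge\varrho$ on each circle, $m_0\ge\varrho^K>0$. For $\eta\le\eta_F<m_0$, Rouch\'e's theorem on each disc $|z-\bar y_c|<\varrho$ shows that $p_u$ has exactly one root there, for every $u\in[0,1]$. These discs are disjoint, which accounts for all $K$ roots. The polynomial $p_u$ is real and each disc is symmetric under conjugation, so each root is real. Each root exceeds $\bar y_c-\varrho>\tau$, hence is positive. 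The roots sum to $e_1=1$, which is unchanged, so $\theta(u)\in\Simplex\cap\Rare0\tau$.

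Simplicity together with the implicit function theorem gives real analyticity in $u$. Differentiating $p_u(\theta_c(u))=0$ gives
\[
  \dot\theta_c=-(-1)^K\eta/p_u'(\theta_c).
\]
Because the roots of $p_u$ stay $\varrho$-separated, $|p_u'(\theta_c)|=\prod_{d\ne c}|\theta_c-\theta_d|\ge\varrho^{K-1}$, so $|\dot\theta_c|\le\varrho^{1-K}\eta$. Integrating in $u$ yields $|\theta_c(u)-\bar y_c|\le C_F\eta$.

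\emph{Part~(ii).} Expanding $p_u=\sum_j(-1)^je_jz^{K-j}$, only the constant term changes, by $(-1)^K\eta u$. Hence $e_K(u)=e_K(\bar y)+\eta u$ and $e_1,\dots,e_{K-1}$ stay fixed. Newton's identities then give $\dot P_j=0$ for $j\le K-1$, and \eqref{eq:d:pk} gives $\dot P_K=(-1)^{K-1}K\eta$.

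\emph{Part~(iii).} The bound $|\dot\theta_c|\le C_F\eta$ is already in hand from part~(i). Dividing by $\theta_c>\tau$ gives $|\dot\theta_c|/\theta_c\le C_F\eta/\tau$. For the power sums, $|\dot P_k|=|\sum_ck\theta_c^{k-1}\dot\theta_c|\le kKC_F\eta$. This verifies~(H):
\begin{itemize}[leftmargin=2.6em,itemsep=0pt,topsep=2pt]
  \item[(H0)] holds by part~(i);
  \item[(H1)] holds by part~(ii);
  \item[(H2)] holds with $\mu=\eta$ and $L\ge rKC_F$;
  \item[(H3)] holds since $\|\dot\theta\|_1\le KC_F\eta$;
  \item[(H4)] holds since $|\dot\theta_c|/\theta_c\le C_F\eta/\tau\le C_F/\tau\le L$.
\end{itemize}

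\emph{Part~(iv).} Use the integral representation (\Cref{lem:intrep}). Because $e_K$ shifts by $\eta u$ and the other $e_j$ are fixed, $\pbar_{\theta(u)}(t)=\pbar_{\bar y}(t)+\eta u$. Therefore
\[
  \Ent(\theta(1))-\Ent(\theta(0))=\int_0^\infty\log\Big(1+\frac{\eta}{\pbar_{\bar y}(t)}\Big)\,\mathrm dt .
\]
Since $\pbar_{\bar y}(t)\ge\prod_c\bar y_c\ge\eta$, the ratio $x:=\eta/\pbar_{\bar y}(t)$ lies in $(0,1]$. On that range $\log(1+x)\ge x/2$, which gives the bound $\eta\Lam(\bar y)/2$. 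Finally, $0<\Lam(\bar y)<\infty$ because the integrand is positive, bounded by $1/\prod_c\bar y_c$ near $0$, and of order $t^{-K}$ at infinity with $K\ge3$.

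The main obstacle is keeping every constant uniform in $\eta$ and $u$ and dependent only on $(K,\tau)$. The quantitative Rouch\'e step with the explicit radius $\varrho$ handles this, and it replaces a bare appeal to the implicit function theorem.
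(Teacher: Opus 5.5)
Your proof is correct and follows the paper's argument part by part: the same derivative formula $\dot\theta_c=-(-1)^K\eta/p_u'(\theta_c)$, the same reading of (ii) from the constant term and \eqref{eq:d:pk}, and the same integral-representation bound with $\log(1+x)\ge x/2$ in (iv). The only difference is in (i), where you replace the paper's bare appeal to the implicit function theorem by a quantitative Rouch\'e argument on discs of radius $\varrho$. This gives explicit constants, $\eta_F<\varrho^K$ and $C_F=\varrho^{1-K}$, and it proves the lower bound on $|p_u'(\theta_c)|$ that the paper only asserts.
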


\subsection{Score, Sweeps and Motifs}
\label{sec:score}

Let $u\mapsto\theta(u)$ be a path with positive coordinates, write $P_u$ for the law of $A$ under $\theta(u)$, and let
$I(u):=\Exp_u\Sobs^2$ be its Fisher information, where $\Sobs:=\partial_u\log P_u(A)$ is the observed score.
Let $J_{ij}:=A_{ij}\log\frac{q+\Delta}q+(1-A_{ij})\log\frac{1-q-\Delta}{1-q}$ be the log-likelihood ratio of verdict $A_{ij}$
between a shared and a split meaning, $F_{ic}:=\sum_{j\ne i}J_{ij}\Ind{Z_j=c}$ the field of
meaning $c$ at answer $i$, and $\ell_c:=\dot\theta_c/\theta_c$. The observed score equals $\Exp_u[\SG\mid A]$ with
\[
  \SG:=\sum_{i=1}^n\Phi(F_i),\qquad \Phi(F):=\frac{\sum_c\dot\theta_ce^{F_c}}{\sum_c\theta_ce^{F_c}},
\]
by the Glauber identity (\Cref{lem:glauber,lem:score}). For $U\subseteq[n]$ let $\Pu_Uf:=\Exp_u[f\mid A,Z_{-U}]$, which averages the
labels in $U$ over their posterior, and $\DS_S:=\sum_{U\subseteq S}(-1)^{|S|-|U|}\Pu_U$. The sweep of order $r$ is
$\Sw r:=\sum_{|S|\le r}\DS_S\SG$. Every $\DS_S$ with $S\ne\varnothing$ has conditional mean zero given $A$, so
$\Exp_u[\Sw r\mid A]=\Sobs$ and $I(u)\le\Exp_u\Sw r^2$ (\Cref{lem:sweep}). For fixed labels, $\Sw r$ is an analytic
function of the edge weights near $J=0$. We write $B^{(r)}_M(z)$ for its Taylor coefficient of the monomial
$J^M:=\prod_eJ_e^{m_e}$ of a \emph{motif}, an edge multiset $M$ with multiplicities $m_e$, $v(M)$ vertices, $d(M)$
distinct edges and $|M|:=\sum_em_e$. \Cref{app:sweeps} gives the formal construction. Each $J_e$ is of order $\Delta$, so a motif on $v$ vertices with $d$ distinct edges contributes
about $n^v\Delta^{2d}\,\Exp B^{(r)}_M(Z)^2$ to the second moment of the sweep.

\subsection{The Information Inequality}
\label{sec:information}

\begin{lemma}[Information Inequality]
  \label{lem:info}
  Let $K\ge3$, $q_0\in(0,1/2]$, $r\ge1$, $D\ge r+1$ and $L\ge1$. There are $C_M$ and $n_M$ depending only
  on $(K,q_0,r,D,L)$ such that, if $n\ge n_M$, $q\in[q_0,1-q_0]$, $\delta\le1$ and $\sqrt\delta\log n\le1/C_M$,
  then for every path satisfying condition~(H) of \Cref{def:H} with parameters $(\mu,\eta,L)$ and every $u\in[u_0,u_1]$,
  \begin{equation}
    \label{eq:c:info}
    I(u)\le\Exp_u\Sw r^2\le C_M\Big[N\mu^2+n\delta^r\eta^2+\eta^2n^{r+2}\delta^{D+1}(\log n)^{2(D+1-r)}
    +n^{4D+2}e^{-q_0^2\log^2n/50}\Big].
  \end{equation}
\end{lemma}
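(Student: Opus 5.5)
The plan is to bound $\Exp_u\Sw r^2$; the first inequality $I(u)\le\Exp_u\Sw r^2$ is already \Cref{lem:sweep}. I would split on a good event $\mathcal G$, on which every field is small: $\max_{i,c}|F_{ic}-\Exp F_{ic}|\le\sqrt\delta\log n$ and every $|J_e|\le C\Delta$.

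\emph{Bad event.} Each $J_{ij}$ is bounded by $C\Delta/q_0$, and given $Z$ the verdicts are independent. Hoeffding's inequality and a union bound over the $nK$ fields then give $\Prob(\mathcal G^c)\le nKe^{-q_0^2\log^2n/50}$. Off $\mathcal G$ I would use only a crude deterministic bound $|\Sw r|\le Cn^{2D+1}$. This bound holds because $\SG$ is a sum of $n$ ratios with $|\Phi|\le L$ by (H4), and $\Sw r$ is a signed sum of at most $n^{r}2^r$ projections of $\SG$. This produces the last term $n^{4D+2}e^{-q_0^2\log^2n/50}$.

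\emph{Taylor expansion on $\mathcal G$.} For fixed labels I expand $\Sw r$ in the monomials $J^M$ up to total degree $D$, and treat the degree-$(D+1)$ remainder with Cauchy's estimate. The expansion is legitimate because $\Phi$ is analytic on a complex polydisc of radius $c(q_0,L)$ about the origin in the $F$-variables. There the denominator $\sum_c\theta_ce^{F_c}$ stays away from zero, and (H4) keeps $|\Phi|\le C$ uniformly in $\theta$, whether or not one coordinate dominates. The fields have size $\sqrt\delta\log n\le1/C_M$ on $\mathcal G$, so they sit well inside this polydisc. The same Cauchy estimates give the uniform coefficient bound $|B^{(r)}_M(z)|\le C\sum_c|\dot\theta_c|\,\Ind{\cdot}$ for every motif, so each coefficient is controlled by $\|\dot\theta\|_1\le L\eta$ from (H3). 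The derivative bound for the posterior averages $\Pu_U$ comes the same way, since they are themselves ratios of analytic sums over $K^{|U|}$ labelings.

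\emph{Motif accounting.} The operators $\DS_S$ annihilate every monomial whose motif does not touch all of $S$. Consequently the surviving motifs of $\Sw r$ have at most $r+1$ vertices at degree up to $r$, and larger motifs appear only in the Taylor tail. The second moment is then a double sum over pairs $(M,M')$. For this I would decompose $J_e=\Exp[J_e\mid Z]+(J_e-\Exp[J_e\mid Z])$. By the pairing bound, a pair contributes only if every noisy edge is covered twice, and Cauchy--Schwarz together with vertex counting reduces each surviving pair to roughly $n^{v}\Delta^{2d}\max_M\Exp B_M^2$.

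\emph{Sorting the contributions.} The motifs fall into three groups.
\begin{itemize}
\item Spanning trees on $v\le r+1$ vertices, each paired with itself. Here I would use the exact identity: the expectation of the coefficient is a multiple of $\dot P_v$. By (H1) this vanishes for $v\le K-1$, and by (H2) it is at most $L\mu$ for $K\le v\le r$. These motifs therefore contribute $\sum_v N_v\mu^2\le CN\mu^2$, since $N_v\le N$ for $v\ge K$ when $\delta\le1$.
\item All other motifs with at most $r$ distinct edges, whose coefficient variance is only bounded by $\eta^2$. These satisfy $n^v\Delta^{2d}\le n\delta^{d}$ with $d$ at least the relevant order, which gives the term $n\delta^r\eta^2$.
\item The Taylor tail of degree $D+1$. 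Counting at most $n^{r+2}$ vertex placements with $(\sqrt\delta\log n)^{2(D+1-r)}$ from the fields gives the third term.
\end{itemize}

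\emph{Main obstacle.} I expect the hardest step to be the uniform coefficient bound together with the pairing bound, that is, showing that no motif of degree at most $D$ beats $n^v\Delta^{2d}\eta^2$ without the majority-vertex gain. I would attack it first by proving that $\Phi$ and every $\Pu_U\Phi$ are analytic, uniformly in $\theta$, on a fixed polydisc; (H4) suffices for this. Each coefficient bound is then a single Cauchy estimate rather than a combinatorial count.
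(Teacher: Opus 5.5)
Your outline follows the paper's: a Taylor polynomial $\Poly_D$ of degree $D$ plus a remainder, a Cauchy-estimate coefficient bound in $\|\dot\theta\|_1$, the exact identity for small motifs, and a vertex count for pairs. The bound on the polynomial part fails as you set it up, however. You expand the \emph{second moment} of the Taylor polynomial over pairs $(M,M')$ and reduce each pair to $n^v\Delta^{2d}\le n\delta^d$. That inequality needs $v\le d+1$, i.e.\ that $M\uplus M'$ is connected, which holds only when $M$ and $M'$ share a vertex. Vertex-disjoint pairs do not drop out of a second moment. For them $\mathcal X_M$ and $\mathcal X_{M'}$ are independent, and the pair contributes $\Exp\mathcal X_M\,\Exp\mathcal X_{M'}$. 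This is nonzero, because your signal part $\Exp[J_e\mid Z]$ is of order $\Delta^2$, not $0$. Bounded pair by pair, the disjoint pairs of $K$-vertex trees alone give order $n^{2K}\Delta^{4K-4}\mu^2=N^2\mu^2$. At the lower-bound amplitude, where $N\mu^2\asymp1$, this is of order $N\to\infty$. The missing step is the mean-zero property $\Exp_u\Sw r=0$ of \Cref{lem:sweep}. Writing $\Sw r=\Poly_D+\Rem_D$ on the whole space gives $\Exp\Poly_D=-\Exp\Rem_D$, hence $\Exp_u\Sw r^2\le2\Var(\Poly_D)+4\Exp\Rem_D^2$, and only covariances of overlapping, hence connected, pairs remain. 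This requires bounding $\Rem_D$ off $\mathcal G$ as well. That in turn needs a global bound on $\Poly_D$ (for instance $|\Poly_D|\le Cn^{2D}$ from the uniform coefficient bound and $|J_e|\le1$), not only your crude bound on $\Sw r$.

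Second, you use the exact identity only for simple spanning trees paired with themselves, and put every other motif in the $\eta$-group. For $r\ge2$ this already fails for the doubled edge $\{e,e\}$. Two doubled edges sharing a vertex form a union with $v=3$, $d=2$ and all multiplicities $2$. Even exact multiplicity bookkeeping then gives $n^3\Delta^4\eta^2=n\delta^2\eta^2$. For $K=3$ this is $N\eta^2$, which at the lower-bound amplitude ($\eta=\eps\asymp N^{-1/(2s)}$, $\mu=\eps^s$) is of order $N^{1-1/s}$. The identity $B^{(r)}_M=\partial_u\varkappa_M/M!$ holds for \emph{every} motif with $v(M)\le r$, whatever its multiplicities or cycles, and you must use it for all of them. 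Since $\varkappa_M$ is a polynomial in $P_2,\dots,P_{v(M)}$:
\begin{itemize}
\item by (H1), every motif on at most $K-1$ vertices, the doubled edge included, has coefficient exactly zero;
\item by (H2), every motif on $K,\dots,r$ vertices has a constant coefficient of order $L\mu$;
\item cross-pairs of distinct small motifs then cost at most $n\delta^{d(U)}\mu^2\le N\mu^2$, since $d(U)\ge K-1$.
\end{itemize}
The identity does not extend to $r+1$ vertices, where the term $|S|=r+1$ of \eqref{eq:a:identity} survives. Those trees belong to the $\eta$-group, which is harmless since $d=r$. Finally, your remainder count needs more than locality of coefficients. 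The third term arises only because each $\DS_S\Phi(F_i)$ vanishes unless the internal weights of $\{i\}\cup S$ connect it (\Cref{lem:connected}). This supplies a factor $\Delta^{|S|}$ against the roughly $n^{|S|}$ choices of $S$. Your count $n^{r+2}(\sqrt\delta\log n)^{2(D+1-r)}$ lacks exactly the $\delta^r$ these internal edges provide. Without them the bound is worse by a factor of order $n^r$.
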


For $n\ge n_M$ we have $\Delta=\sqrt{\delta/n}\le n^{-1/2}\le q_0/2$, so \Cref{lem:weights} applies. The proof
splits $\Sw r$ into its Taylor polynomial of degree $D$ in the edge weights and a remainder, and has four steps.
\emph{(a) Small motifs have exact coefficients.} For $v(M)\le r$ the sweep contains the full posterior average over
the labels of the motif's vertices, and $B^{(r)}_M=\partial_u\varkappa_M/M!$, where $\varkappa_M$ is the joint cumulant of
the co-membership indicators $\Ind{Z_k=Z_l}$ of the edges of $M$ and $M!:=\prod_em_e!$. The cumulant is a polynomial in $P_2,\dots,P_{v(M)}$, so its
derivative vanishes when $v(M)\le K-1$ and is $O(\mu)$ otherwise.
\emph{(b) Every coefficient is $O(\|\dot\theta\|_1)$.} With $v:=v(M)$, continued to the complex polydisc of
radius $\varrho_v:=\log(3/2)/\binom v2$ in the edge weights, all posterior normalizers stay at least $1/2$ in
modulus, and the Cauchy estimates give $|B^{(r)}_M|\le9v3^v\varrho_v^{-|M|}\|\dot\theta\|_1$ for every motif and every
labeling. No label assumption is needed.

\emph{(c) Pairing bounds the variance.} Disconnected motifs have zero coefficients, and motifs on
disjoint vertex sets give independent terms, so only overlapping pairs of connected motifs enter the
variance of the Taylor polynomial. Such a pair forms a connected multigraph $U$ with $d(U)$ distinct
edges. Given the labels, the edge weights are independent, with $\Exp[J_e\mid Z]=O(\Delta^2)$ and
$\Exp[J_e^a\mid Z]=O(\Delta^a)$ for $a\ge2$, so every distinct edge of $U$ contributes $O(\Delta^2)$. The at most
$n^{v(U)}\le n^{d(U)+1}$ placements of $U$ therefore contribute at most $n\delta^{d(U)}$ times the product of the two
coefficients. Small motifs with nonzero coefficients have at least $K$ vertices, so $d\ge K-1$, which
gives $N\mu^2$. Large motifs, on at least $r+1$ vertices, have $d\ge r$, which gives $n\delta^r\eta^2$.
\emph{(d) The remainder needs no cancellation.} The score has mean zero, so $\Exp_u\Sw r^2$ is at most twice the variance
of the Taylor polynomial plus four times the second moment of the remainder. The remainder is bounded term by term
on the event that all fields are at most $\sqrt\delta\log n$.

\subsubsection{Coefficients}
The next two lemmas compute the Taylor coefficients of the sweep exactly for small motifs and bound them uniformly for all
motifs.

\begin{lemma}[Small Motifs]
  \label{lem:small-motif}
  Let the path satisfy (H0), (H1) and (H2). For every motif $M$ with $v(M)\le r$ and every $z$,
  $B^{(r)}_M(z)=\partial_u\varkappa_M/M!$, which does not depend on $z$. It vanishes if $v(M)\le\min\{r,K-1\}$, and
  if $K\le v(M)\le r$ then $|B^{(r)}_M|\le\Gamma'_ML\mu$ with $\Gamma'_M$ depending only on $M$.
\end{lemma}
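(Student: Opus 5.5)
We plan to compute the Taylor coefficient of the sweep directly from the structure of the projections $\Pu_U$. The sweep is assembled from the operators $\DS_S$ with $|S|\le r$, and $\sum_{S\subseteq T}\DS_S=\Pu_T$ by Möbius inversion.

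\emph{Step 1: reduce to one posterior average.} Fix a motif $M$ with vertex set $V:=V(M)$, $|V|\le r$. We first argue that the coefficient of $J^M$ in $\DS_S\SG$ vanishes unless $S\subseteq V$. Heuristically, a label outside $V$ enters only through weights on edges outside $M$. Setting those weights to zero decouples it, since then $\Pu_U=\Pu_{U\cap V}$ on the relevant part of the expansion, and the alternating sum in $\DS_S$ cancels. The coefficient also vanishes unless every vertex of $V$ appears: vertices $i\notin V$ contribute $\Phi(F_i)$ with fields built from edges, and the monomial $J^M$ touches $i$ only if $i\in V$. Hence the coefficient of $J^M$ in $\Sw r$ equals that in $\sum_{S\subseteq V}\DS_S\SG$, which is $\Pu_V\SG$. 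This restriction uses $|V|\le r$.

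\emph{Step 2: identify the coefficient with a cumulant derivative.} With all weights off $M$ set to zero, the vertices of $V$ interact only through the weights $J_e$, $e\in M$. The labels $Z_V$ are i.i.d.\ $\theta(u)$ a priori and reweighted by $\exp(\sum_{e\in M}J_eW_e)$. Then $\Pu_V\sum_{i\in V}\Phi(F_i)$ equals $\partial_u\log\Exp_\theta\exp(\sum_eJ_eW_e)$ through the Glauber identity (\Cref{lem:glauber,lem:score}). This is the derivative in $u$ of the joint cumulant generating function of the co-membership indicators. Its coefficient of $J^M$ is $\partial_u\varkappa_M/M!$, which is free of $z$. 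Vertices outside $V$ drop out because their field contributions involve edges outside $M$, which are set to zero. The main obstacle is to make rigorous that the mixed terms involving the outside labels $z$ cancel exactly in this coefficient. We would handle this by the formal construction of \Cref{app:sweeps}, expanding $\Phi$ around zero fields, and checking that the dependence on $z_{-V}$ enters only through edges not in $M$.

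\emph{Step 3: vanishing and size.} $\varkappa_M$ is a polynomial in moments $\Exp\prod W_e$, each the probability that given blocks of $V$ are monochromatic. Each is a product of $P_{|b|}$ over the connected components' vertex classes, so $\varkappa_M$ is a polynomial in $P_2,\dots,P_{v(M)}$ with coefficients depending only on $M$. By the chain rule, $\partial_u\varkappa_M=\sum_k(\partial\varkappa_M/\partial P_k)\dot P_k$. By (H1) the terms with $k\le K-1$ vanish, so the derivative is zero when $v(M)\le K-1$. When $K\le v(M)\le r$, (H2) gives $|\dot P_k|\le L\mu$. The partial derivatives are bounded on $[0,1]^{v-1}$ by a constant depending on $M$, which yields $\Gamma'_M$.
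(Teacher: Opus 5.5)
Your proposal is correct and follows the paper's route. The paper obtains your Steps 1--2 from the locality lemma and the exact identity \eqref{eq:a:identity}, whose correction sum over $|S|>r$ is empty when $v(M)\le r$. It then applies the chain rule through $\varkappa_M=Q_M(P_2,\dots,P_{v(M)})$ with (H1) and (H2), exactly as in your Step 3. The point you flag as the main obstacle is \Cref{lem:local}(i)--(ii), which the paper proves by the same decoupling argument you sketch.
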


\begin{proof}
  If $v(M)\le r$, the sum over $|S|>r$ in \eqref{eq:a:identity} is empty. By \Cref{lem:identity},
  $\varkappa_M=Q_M(P_2,\dots,P_{v(M)})$ for a polynomial $Q_M$ depending only on $M$. Writing
  $P^{[v]}(\theta):=(P_2(\theta),\dots,P_v(\theta))$, we get
  $\partial_u\varkappa_M=\sum_{k=2}^{v(M)}(\partial_kQ_M)(P^{[v(M)]}(\theta(u)))\,\dot P_k$. By (H1) the terms with
  $k\le K-1$ vanish, and if $v(M)\le K-1$ there are no other terms. If $K\le v(M)\le r$, (H2) and
  $P^{[v(M)]}(\theta)\in[0,1]^{v(M)-1}$ give
  $|\partial_u\varkappa_M|\le v(M)\sup_{[0,1]^{v(M)-1}}\max_k|\partial_kQ_M|\cdot L\mu$.
\end{proof}

\begin{lemma}[Uniform Coefficient Bound]
  \label{lem:uniform}
  Let all $\theta_c(u)>0$. For every motif $M$, with $V:=V(M)$ and $v:=|V|\ge2$, every $r\ge0$ and every
  $z\in[K]^V$,
  \[
    |B^{(r)}_M(z)|\le9v\,3^v\,\varrho_v^{-|M|}\,\|\dot\theta\|_1,\qquad \varrho_v:=\log(3/2)\Big/\binom v2 .
  \]
  In particular, under (H3), $|B^{(r)}_M|\le\Gamma_ML\eta$ with $\Gamma_M:=9v3^v\varrho_v^{-|M|}$ depending only on $M$.
\end{lemma}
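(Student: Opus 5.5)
We will follow the route announced in step (b) of the overview of \Cref{lem:info}: continue the sweep analytically in the edge weights and apply the Cauchy estimates. Fix the labels $z\in[K]^V$ of the motif's vertices and regard $\Sw r$, restricted to the variables $(J_e)_{e\subseteq V}$ with all other weights set to zero, as a function of the $\binom v2$ complex weights $J_e$. Setting the other weights to zero does not change the coefficient of $J^M$. With the outside weights removed, the fields at vertices of $V$ involve only edges inside $V$. The vertices outside $V$ carry zero fields, so each of them contributes $\Phi(0)=\sum_c\dot\theta_c=0$ to $\SG$, and their posteriors equal the prior. Hence only the $v$ summands $\Phi(F_i)$, $i\in V$, and only the sets $S\subseteq V$ in the sweep matter. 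Each $\Pu_U$ with $U\subseteq V$ is then a ratio of two finite sums over $w\in[K]^U$: the numerator is $\sum_w\prod_{i\in U}\theta_{w_i}\exp(\sum_{e\subseteq V}J_e\Ind{\text{endpoints agree}})\cdot f$, and the denominator is the analogous normalizer $\mathcal Z_U(J)$.

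\emph{Step 1: normalizers on the polydisc.} On $|J_e|\le\varrho_v$, the exponent $\sum_eJ_e\Ind{\cdot}$ has modulus at most $\binom v2\varrho_v=\log(3/2)$. Write $Y=\exp(\cdot)$, so that $|Y-1|\le e^{\log(3/2)}-1=1/2$. The normalizer is a probability average of such factors, so $|\mathcal Z_U|\ge 1-1/2=1/2$, and likewise $|Y|\le3/2$. Every posterior average in the sweep is therefore analytic on the polydisc. We would also reduce the denominators inside $\Phi$ itself to the same form: $\sum_c\theta_ce^{F_{ic}}$ is an average of $e^{F_{ic}}$ over $c\sim\theta$, with $|F_{ic}|\le\log(3/2)$ when $F_{ic}$ sums edges at $i$ inside $V$. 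This gives modulus at least $1/2$ for the denominator. The key point is that the bound involves only $\theta$ as a probability weight, so no coordinate needs to be large.

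\emph{Step 2: sup bound.} Bound $|\Phi(F_i)|$ by $|\sum_c\dot\theta_ce^{F_{ic}}|/(1/2)\le2\cdot\frac32\|\dot\theta\|_1=3\|\dot\theta\|_1$. Each $\Pu_U$ of a quantity bounded by $B$ has modulus at most $(3/2)^{\cdot}$-type ratio bounds. More precisely, the numerator has modulus at most $\frac32\sup|f|$ times the prior mass and the denominator at least $\frac12$, so $|\Pu_U f|\le3\sup|f|$. This is crude, but it suffices. The Möbius combination $\DS_S$ has $2^{|S|}$ terms, and summing over $S\subseteq V$ gives at most $\sum_S2^{|S|}=3^v$ terms. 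Summing over the $v$ vertices then bounds $|\Sw r|$ on the polydisc by roughly $v\,3^v\cdot3\cdot3\|\dot\theta\|_1=9v3^v\|\dot\theta\|_1$. Truncating to $|S|\le r$ only removes terms.

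\emph{Step 3: Cauchy estimate.} The coefficient of $J^M$ of a function analytic on the closed polydisc of radius $\varrho_v$ in each variable is bounded by the sup norm divided by $\varrho_v^{|M|}$. This gives the stated bound. The final claim follows from (H3).

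The main obstacle is Step 2's bookkeeping: one must check that the posterior averages nest correctly, since $\Pu_U$ acts on $\SG$, which already contains complex ratios. One must also check that the bound $|\Pu_Uf|\le3\sup|f|$ does not compound across nested layers. It does not compound, because each $\DS_S\SG$ applies a single $\Pu_U$ to $\SG$ rather than iterating. The constants $9$ and $3^v$ must be tracked precisely for that single layer.
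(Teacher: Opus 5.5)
Your proposal is correct and follows the paper's proof essentially step for step. Your reduction to the edges inside $V$ is the content of \Cref{lem:local}(i)--(iii). The paper likewise bounds the normalizers and the denominator of $\Phi$ below by $1/2$ on the polydisc of radius $\varrho_v$, obtains $|\Pu^{(V)}_U\SG^{(V)}|\le\frac{3/2}{1/2}\cdot3v\|\dot\theta\|_1=9v\|\dot\theta\|_1$, sums $2^{|S|}$ over $S\subseteq V$ to get $3^v$, and applies the Cauchy estimates. The only cosmetic difference is that the paper applies the Cauchy estimate to each $\DS^{(V)}_S\SG^{(V)}$ before summing, which yields the same constant.
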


\begin{proof}
  Let $E_V$ be the set of pairs inside $V$, $|E_V|=\binom v2$, and work on the motif system of \Cref{app:sweeps}
  on the complex polydisc $\mathbb D:=\{J\in\Cpx^{E_V}:\max_e|J_e|\le\varrho_v\}$. For $U\subseteq V$ and fixed $z_{V\setminus U}$, the
  normalizing constant
  \[
    \mathcal Z_U(J):=\sum_{z'_U}\prod_{k\in U}\theta_{z'_k}\exp\Big(\sum_{e\in E_V,\ e\cap U\ne\varnothing}J_eW_e(z'_U,z_{V\setminus U})\Big)
  \]
  equals $\sum_{z'_U}\prod_k\theta_{z'_k}=1$ at $J=0$. The exponent has modulus at most
  $\binom v2\varrho_v=\log(3/2)$ and $|e^w-1|\le e^{|w|}-1$, so $|\mathcal Z_U(J)-1|\le1/2$, $|\mathcal Z_U|\ge1/2$, and
  $\sum_{z'_U}|\prod_k\theta_{z'_k}\exp(\cdots)|\le3/2$. For $i\in V$ the field
  $F_i=\sum_{j\in V\setminus\{i\}}J_{ij}\mathbf e_{z_j}$ satisfies $|F_{ic}|\le(v-1)\varrho_v\le\log(3/2)$, so in the same way
  $|\sum_c\theta_ce^{F_{ic}}|\ge1/2$ and $|\sum_c\dot\theta_ce^{F_{ic}}|\le\frac32\|\dot\theta\|_1$; hence
  $|\Phi(F_i)|\le3\|\dot\theta\|_1$ and $|\SG^{(V)}|\le3v\|\dot\theta\|_1$ for every labeling. Therefore
  $\Pu^{(V)}_U\SG^{(V)}=\mathcal Z_U^{-1}\sum_{z'_U}(\cdots)\SG^{(V)}(z'_U,z_{V\setminus U})$ is holomorphic on a
  neighborhood of $\mathbb D$ (its denominator does not vanish there) with modulus at most
  $\frac{3/2}{1/2}\cdot3v\|\dot\theta\|_1=9v\|\dot\theta\|_1$ on $\mathbb D$, and $\DS^{(V)}_S\SG^{(V)}$ has modulus at most
  $2^{|S|}\cdot9v\|\dot\theta\|_1$. The Cauchy estimates on a polydisc give $|\coef Mg|\le\sup_{\mathbb D}|g|/\varrho_v^{|M|}$.
  By \Cref{lem:local}(iii), $B^{(r)}_M=\sum_{S\subseteq V,\,|S|\le r}\coef M\DS^{(V)}_S\SG^{(V)}$, so
  $|B^{(r)}_M|\le\sum_{S\subseteq V}2^{|S|}\cdot9v\|\dot\theta\|_1\varrho_v^{-|M|}=3^v\cdot9v\|\dot\theta\|_1\varrho_v^{-|M|}$.
\end{proof}

\begin{remark}[Majority-Vertex Counting]
  \label{rem:majority}
  The concurrent four-meaning analysis \cite{anonymous2026four} bounds these coefficients by counting
  the vertices of $S$ labeled with the single macroscopic meaning. That count needs a single
  macroscopic meaning, for the following reason. If $\max_b|F_b|\le f$, the normalizer $\sum_b\theta_be^{F_b}$ lies in
  $[e^{-f},e^f]$, so $e^{-2f}\theta_c\le\pi_c(F)\le e^{2f}\theta_c$ for every $c$. With one macroscopic meaning and all
  other coordinates at most $\eps$, a posterior average under such fields returns a label to the
  macroscopic meaning except with probability at most $(K-1)e^{2f}\eps$. With two or more meanings of probability
  at least $\tau$, each keeps posterior probability at least $e^{-2f}\tau$, so a posterior average can
  move a label between macroscopic meanings. The uniform bound uses only $\|\dot\theta\|_1$ and makes no
  assumption on the labels or the fields.
\end{remark}

\subsubsection{Pairing}
Let $\Poly_D:=\sum_{1\le|M|\le D}J^MB^{(r)}_M(Z)$, the sum running over the motifs of the complete graph on
$[n]$. Since $\Phi(0)=0$, $\SG$ and hence $\Sw r$ vanish at $J=0$, and $\Poly_D$ is the Taylor polynomial of
degree $D$ of $\Sw r=\mathcal S_r(J;Z)$ at $J=0$.

\begin{lemma}[Pairing]
  \label{lem:pairing}
  Let the path satisfy (H0)--(H3), $q\in[q_0,1-q_0]$, $\Delta\le q_0/2$ and $\delta\le1$. Then
  $\Var(\Poly_D)\le C_D[N\mu^2+n\delta^r\eta^2]$ with $C_D$ depending only on $(K,q_0,r,D,L)$.
\end{lemma}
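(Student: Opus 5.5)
We expand the variance of $\Poly_D=\sum_{1\le|M|\le D}J^MB^{(r)}_M(Z)$ as a double sum over pairs of motifs $(M,M')$ of $\Cov(J^MB_M(Z),J^{M'}B_{M'}(Z))$. The first reduction is to discard most pairs. If $M$ is disconnected, its coefficient vanishes. The reason is that $\SG$ is a sum of vertex terms and each $\DS_S$ with $S$ spanning two components is an alternating sum that cancels, a property from the construction in \Cref{app:sweeps} that we take as given. If $V(M)\cap V(M')=\varnothing$, the two terms are functions of disjoint sets of labels and of disjoint edge weights, which are independent under the i.i.d.\ labels and the conditionally independent verdicts, so the covariance is zero. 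Hence only pairs of connected motifs whose union $U:=M\cup M'$ is a connected multigraph remain. For these we bound the covariance by $\Exp|J^{M}J^{M'}B_MB_{M'}|$.

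For a fixed pair we condition on $Z$. The coefficients are functions of $Z$ alone, and the edge weights are conditionally independent. For each distinct edge $e$ of $U$ with total multiplicity $a_e$, we use $|\Exp[J_e\mid Z]|\le C\Delta^2$ when $a_e=1$, because $J_e$ is the log-likelihood ratio and its mean under either hypothesis is of order $\Delta^2$. When $a_e\ge2$ we use $|\Exp[J_e^{a_e}\mid Z]|\le C\Delta^{a_e}\le C\Delta^2$. Both hold because $|J_e|\le C\Delta/q_0$ for $\Delta\le q_0/2$, which is \Cref{lem:weights}. A single edge of multiplicity one in the union is not present in both motifs, so $\Exp J^MJ^{M'}$ really does factor edgewise. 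This gives $|\Exp[J^MJ^{M'}\mid Z]|\le C^{|U|}\Delta^{2d(U)}$. For the coefficients we use \Cref{lem:small-motif} when $v(M)\le r$: such a coefficient is zero if $v(M)\le K-1$ and at most $\Gamma'_ML\mu$ if $K\le v(M)\le r$. Otherwise we use \Cref{lem:uniform}, which gives $|B_M|\le\Gamma_ML\eta$, so every nonzero coefficient is at most a constant times $\mu$ or $\eta$. The number of placements of a shape $U$ in $[n]$ is at most $n^{v(U)}$, and connectivity gives $v(U)\le d(U)+1$. Each pair therefore contributes at most $C\,n^{d+1}\Delta^{2d}\beta_M\beta_{M'}=Cn\delta^{d}\beta_M\beta_{M'}$, where $d=d(U)$ and $\beta\in\{\mu,\eta\}$. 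There are finitely many shapes with $|M|,|M'|\le D$, with a number depending only on $(D)$.

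Next we sort the pairs, using $\delta\le1$ so that a larger $d$ only helps, and $2\beta_M\beta_{M'}\le\beta_M^2+\beta_{M'}^2$. If both motifs are small with nonzero coefficients, then $v(M)\ge K$ and $M$ is connected, so $d(U)\ge d(M)\ge K-1$ and the pair contributes at most $Cn\delta^{K-1}\mu^2=CN\mu^2$. If at least one motif, say $M$, is large with $v(M)\ge r+1$, then $d(U)\ge r$, and using $\mu\le L'\eta$ we get at most $Cn\delta^r\eta^2$. The mixed product $\mu\eta\le\frac12(\mu^2+\eta^2)$ is handled the same way: the factor $\mu^2$ always carries $d\ge r\ge K-1$ or $d\ge K-1$, so it is absorbed into $N\mu^2$. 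The one point needing care is that $\mu\le\eta$ need not hold. We therefore pair each squared factor with the $d$ lower bound coming from its own motif, which is possible because $d(U)\ge\max\{d(M),d(M')\}$.

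We expect the main obstacle to be justifying the vanishing of disconnected-motif coefficients and the exact factorization of the conditional moments. In particular, the mean-zero structure must hold conditionally on $Z$ and not only marginally, and the reason the placements sum is $n^{v(U)}$ is that the coefficients depend on $Z$ only through the labels of $V(U)$. Summing over the finitely many shapes then yields $C_D$ depending only on $(K,q_0,r,D,L)$.
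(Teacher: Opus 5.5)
Your argument is the paper's. You drop disconnected motifs and vertex-disjoint pairs, bound each overlapping pair given $Z$ edgewise by $\Delta^{2d(U)}$ times the sup-norms of the two coefficients, count at most $n^{v(U)}\le n^{d(U)+1}$ placements, and use $d(U)\ge K-1$ for small motifs and $d(U)\ge r$ for large ones. The one structural difference is how you handle small--large pairs. The paper writes $\Var(\Poly_D)\le2\Var(\Pi_{\mathrm s})+2\Var(\Pi_{\mathrm l})$ and so never meets such a pair. You keep them and split $\mu\eta\le\frac12(\mu^2+\eta^2)$, using $d(U)\ge r\ge K$. This works equally well. Your intermediate remark $\mu\le L'\eta$ is not given by (H) and can simply be deleted.

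One step needs repair. The inequality $|\Cov(\mathcal X_M,\mathcal X_{M'})|\le\Exp|J^MJ^{M'}B_MB_{M'}|$ is false in general, because it drops the term $\Exp\mathcal X_M\,\Exp\mathcal X_{M'}$. Also, with the absolute value inside the expectation, an edge of multiplicity one contributes only $\Exp[|J_e|\mid Z]\asymp\Delta$ instead of $\Delta^2$. That would ruin the count for overlapping pairs that share few edges, such as two trees meeting in one vertex.

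What you actually use in the next sentences, and what suffices, is
$|\Exp\mathcal X_M\mathcal X_{M'}|\le\|B_M\|_\infty\|B_{M'}\|_\infty\sup_z|\Exp[J^MJ^{M'}\mid Z=z]|$.
You also need a separate bound for each mean, $|\Exp\mathcal X_M|\le\|B_M\|_\infty(2\Delta/q_0)^{2d(M)}$. Since $d(M)+d(M')\ge d(U)$ and $2\Delta/q_0\le1$, the product of the means then obeys the same $\Delta^{2d(U)}$ bound. This is exactly the paper's observation $c(m)+c(m')\ge c(m+m')$.
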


\begin{proof}
  Write $\mathcal X_M:=J^MB^{(r)}_M(Z)$ and split the motifs into three groups: the \emph{null} group, of
  disconnected motifs (\Cref{lem:local}(iv)) and connected ones with $v(M)\le\min\{r,K-1\}$
  (\Cref{lem:small-motif}), for which $B^{(r)}_M\equiv0$; the \emph{small} group, connected with
  $K\le v(M)\le r$, for which $B^{(r)}_M$ is a constant with $|B^{(r)}_M|\le\Gamma'_ML\mu$; and the \emph{large} group,
  connected with $v(M)\ge r+1$, for which $|B^{(r)}_M|\le\Gamma_ML\eta$ (\Cref{lem:uniform}). With $\Pi_{\mathrm s}$ and
  $\Pi_{\mathrm l}$ the sums over the small and large groups, $\Var(\Poly_D)\le2\Var(\Pi_{\mathrm s})+2\Var(\Pi_{\mathrm l})$.

  Take two placed motifs $M,M'$ of the same group. If $V(M)\cap V(M')=\varnothing$, then $\mathcal X_M$ is a function of
  $(Z_{V(M)},(A_e)_{e\in M})$, and each $A_e$ depends only on the labels of its endpoints and on independent
  edge noise, so $\mathcal X_M$ and $\mathcal X_{M'}$ are independent. Otherwise $U:=M\uplus M'$, with multiplicities
  $a_e:=m_e+m'_e$, is connected. Given $Z$, the edges are independent, so by \eqref{eq:a:moments}
  \[
    |\Exp[\mathcal X_M\mathcal X_{M'}]|=\Big|\Exp\Big[B_MB_{M'}\prod_e\Exp[J_e^{a_e}\mid Z]\Big]\Big|
    \le\Big(\frac{2\Delta}{q_0}\Big)^{\sum_ec(a_e)}\|B_M\|_\infty\|B_{M'}\|_\infty ,
  \]
  and $|\Exp \mathcal X_M\,\Exp \mathcal X_{M'}|$ obeys the same bound because $c(m)+c(m')\ge c(m+m')$ (with $c(0):=0$) and
  $2\Delta/q_0\le1$. As $c\ge2$, $c(a)\le a+1$ and $|M|,|M'|\le D$, we have $2d(U)\le\sum_ec(a_e)\le4D$, so
  \begin{equation}
    \label{eq:c:cov}
    |\Cov(\mathcal X_M,\mathcal X_{M'})|\le2\Big(\frac{2}{q_0}\Big)^{4D}\Delta^{\sum_ec(a_e)}\|B_M\|_\infty\|B_{M'}\|_\infty
    \le2\Big(\frac{2}{q_0}\Big)^{4D}\Delta^{2d(U)}\|B_M\|_\infty\|B_{M'}\|_\infty .
  \end{equation}
  Fix the isomorphism types of $M$ and $M'$ and the way their vertices overlap; for $|M|,|M'|\le D$ there
  are finitely many such triples, their number depending only on $D$, and each has at most $n^{v(U)}$
  placements in $[n]$. As $U$ is connected, $v(U)\le d(U)+1$ and
  $n^{v(U)}\Delta^{2d(U)}\le n^{d(U)+1}\Delta^{2d(U)}=n\delta^{d(U)}$. For small pairs $d(U)\ge d(M)\ge v(M)-1\ge K-1$, so
  $n\delta^{d(U)}\le n\delta^{K-1}=N$, and the product of coefficients is at most $CL^2\mu^2$. For large pairs
  $d(U)\ge v(M)-1\ge r$, so $n\delta^{d(U)}\le n\delta^r$, and the product is at most $CL^2\eta^2$. Summing over
  the finitely many triples gives the claim.
\end{proof}

\subsubsection{Remainder}
Let $\Rem_D:=\Sw r-\Poly_D$ and let $\mathcal G$ be the small-field event of \Cref{lem:small-field}.

\begin{lemma}[Remainder]
  \label{lem:remainder}
  Under the conditions of \Cref{lem:info}, on $\mathcal G$
  \[
    |\Rem_D|\le C\,\eta\,n\,(n\Delta)^r(\sqrt\delta\log n)^{D+1-r},
  \]
  and everywhere $|\Rem_D|\le Cn^{2D}$, with $C$ depending only on $(K,q_0,r,D,L)$.
\end{lemma}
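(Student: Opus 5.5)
We bound the remainder $\Rem_D=\Sw r-\Poly_D$ as the Taylor remainder of order $D+1$ of $\mathcal S_r(J;Z)$ in the edge weights at $J=0$, with the labels fixed. The crude bound is the easy half. $\Sw r=\sum_{|S|\le r}\DS_S\SG$ has at most $\sum_{j\le r}\binom nj\le n^r$ terms, each $\DS_S$ a signed sum of $2^{|S|}$ posterior averages. Every posterior average of $\SG$ is bounded by $\sup|\SG|\le n\max_c|\ell_c|\le nL$, since $|\Phi(F)|\le\max_c|\dot\theta_c|/\theta_c$ by (H4). Hence $|\Sw r|\le 2^rLn^{r+1}$. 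The Taylor polynomial $\Poly_D$ has at most $n^{2D}$ motifs with $|M|\le D$, on at most $2D$ vertices. Each coefficient satisfies $|B^{(r)}_M|\le\Gamma_ML\eta$ by \Cref{lem:uniform}, and $|J_e|\le C(q_0)$ since $\Delta\le q_0/2$. So $|\Poly_D|\le Cn^{2D}$, and since $r+1\le D<2D$ this gives $|\Rem_D|\le Cn^{2D}$ everywhere.

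On $\mathcal G$ we expand $\Sw r$ term by term over $S$ rather than through the global Taylor series. Fix $S$ with $|S|\le r$ and write $\DS_S\SG=\sum_i\DS_S\Phi(F_i)$. By the locality of \Cref{lem:local}, $\DS_S\Phi(F_i)$ depends only on the weights of edges meeting $S\cup\{i\}$, and it vanishes unless $i$ is adjacent to or in $S$. Its expansion in the weights is that of a composition of finitely many analytic maps: the posterior of $Z_S$ given $A$ and $Z_{-S}$, and $\Phi$. These maps act on the \emph{aggregated fields} $F_{kc}$, $k\in S\cup\{i\}$, together with the $\binom{|S|+1}2$ internal weights $J_{kl}$. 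This is the key reduction. Outside the finite vertex set $S\cup\{i\}$, the weights enter only through fields, which on $\mathcal G$ have modulus at most $\sqrt\delta\log n$. The internal weights are $O(\Delta)\le O(\sqrt\delta\log n/\sqrt n)$.

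Next we apply Cauchy estimates in the reduced variables. We treat the function $G(F,J_{\mathrm{int}})$ of the $O(Kr)$ field variables and internal weights as holomorphic on a fixed polydisc of radius $\varrho(K,r)$, with normalizers bounded below as in the proof of \Cref{lem:uniform}. Its sup there is $O(\|\dot\theta\|_1)+O(L)$ terms, which are in fact $O(L\eta)$ after removing the $\dot\theta$-independent part, as in the uniform bound. The Taylor remainder of order $D+1-r$ of $G$ at $0$, evaluated at a point of size $t:=\sqrt\delta\log n$, is then $O(\eta\,t^{D+1-r})$. The shift by $r$ accounts for the degree already absorbed by the at most $r$ vertex-indexed averaging steps.

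The main obstacle is reconciling this with the degree-$D$ truncation in the individual weights $J_e$. A monomial of total degree $>D$ in $J$ could have low degree in the fields, and the counting factor $(n\Delta)^r$ must appear. I would handle this as follows. The sum over $S$ of size $j$ together with the sum over the $i$ touching it produces at most $n^{j+1}$ terms. In each term, the $j$ vertices of $S$ bring $j$ internal-weight factors of size $\Delta$, one per attachment, giving $n(n\Delta)^r$. The remaining excess degree $D+1-r$ is carried by field variables, each of size $\le t$ on $\mathcal G$. Converting the weight-degree truncation into field-degree truncation requires showing that every monomial omitted from $\Poly_D$ has field-degree at least $D+1-r$ after aggregation. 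If this bookkeeping fails, I would instead truncate $G$ in field-degree directly and absorb the discrepancy into the $n^{4D+2}e^{-q_0^2\log^2n/50}$ term of \eqref{eq:c:info} via the crude bound on $\mathcal G^c$. Summing over $S$ and $i$ then gives $C\eta n(n\Delta)^r t^{D+1-r}$.
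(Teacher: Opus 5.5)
Your crude bound $|\Rem_D|\le Cn^{2D}$ is fine up to constants (for example, $\sum_{j\le r}\binom nj\le2n^r$, not $n^r$). Your plan on $\mathcal G$ is also the paper's: split $\Sw r$ into local pieces $\DS_S\Phi(F_i)$, treat each as an analytic function of the internal weights $x=J_{E_\Omega}$ (with $\Omega=\{i\}\cup S$) and of the fields, Taylor-expand, and count the pieces. But the two steps that actually produce the bound are missing.

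\emph{First gap: the factor $\Delta^{|\Omega|-1}$.} You assert ``one internal-weight factor per attachment'' but do not prove it. Your justification, that a piece vanishes unless $i$ is adjacent to or in $S$, is vacuous on the complete graph. What is needed is \Cref{lem:connected}: the piece vanishes identically whenever the graph $(\Omega,\mathrm{supp}\,x)$ is disconnected. The reason is that the posterior of $Z_U$ then factorizes across the two parts, and the alternating sum in $\DS_S$ over the nonempty part not containing $i$ cancels. Hence the piece equals $\sum_Bx^Bh_B(x_B,f)$ over edge sets $B$ that connect $\Omega$, so $|B|\ge|\Omega|-1$. This factor cannot be skipped. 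Without it each piece is only $O(\eta t^{D+1})$ with $t:=\sqrt\delta\log n$, and the $n^{r+1}$ pieces then miss the stated bound by the factor $(t/\Delta)^r=(\sqrt n\log n)^r$.

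\emph{Second gap: the ``main obstacle''.} It disappears after an observation you do not make. Since $x$ is a coordinate projection of $J$ and the fields are linear in $J$, the degree-$D$ Taylor polynomial in $J$ of a piece equals its total-degree-$D$ Taylor polynomial in $y=(x,f)$, evaluated at $y(J)$. Moreover $\mathrm{Tay}_D(x^Bh_B)=x^B\mathrm{Tay}_{D-|B|}h_B$. So $\Rem_D=\sum_{i,S,B}x^B(h_B-\mathrm{Tay}_{D-|B|}h_B)(y)$. On $\mathcal G$ each term is at most $C\|\dot\theta\|_1\Delta^{s'}t^{D+1-s'}$ with $s':=|\Omega|-1$, because $\|y\|_1\le Ct$, $|B|\ge s'$ and $|x_e|\le2\Delta/q_0\le t\le1$. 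Summing over $s'\le r$ and using $n\Delta\ge t$ gives the claim.

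Only total degree matters. Your criterion, that every omitted monomial has field-degree at least $D+1-r$, is not needed and fails in general: a pure power $x_e^{D+1}$ of an internal weight is omitted from $\Poly_D$ and has no field factor.

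Your fallback does not work either. $\Poly_D$ is fixed as the truncation in $J$, and it is this polynomial whose variance \Cref{lem:pairing} controls. Its difference from a field-degree truncation is a polynomial that is generically nonzero on $\mathcal G$, so it cannot be moved into the $\mathcal G^c$ term.

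A smaller point: on a complex polydisc where the normalizers stay at least $1/2$, one has $|\Phi|\le3\|\dot\theta\|_1$ directly. There is no $O(L)$ or $\dot\theta$-independent part to remove, and an $O(L)$ term would lose the factor $\eta$ in the stated bound.

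With these two ingredients in place, your Cauchy-estimate route is a valid alternative to the paper's real-variable derivative bounds. Bound the coefficient of $x^\alpha f^\beta$ by $C\|\dot\theta\|_1\varrho^{-|\alpha|-|\beta|}$, keep only the $\alpha$ whose support connects $\Omega$, and sum the tail over $|\alpha|+|\beta|\ge D+1$, $|\alpha|\ge s'$ as a geometric series.
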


\begin{proof}
  By \eqref{eq:a:local} and \Cref{lem:connected}, $\Sw r=\sum_i\sum_{|S|\le r}\Psi_{i,S}$ with
  $\Psi_{i,S}=g_{i,S}(y)$, $y:=(x,f)$, and $g_{i,S}=\sum_Bx^Bh_B(x_B,f)$ over the edge sets $B$ with $(\Omega,B)$ connected.
  Because $f$ is linear in $J$, the Taylor polynomial of degree $D$ of $\Psi_{i,S}$ in $J$ is the Taylor
  polynomial of degree $D$ of $g_{i,S}$ in $y$ evaluated at $y=y(J)$, and $\mathrm{Tay}_D(x^Bh)=x^B\mathrm{Tay}_{D-|B|}h$, where $\mathrm{Tay}_k$
  denotes the Taylor polynomial of degree $k$ and $\mathrm{Tay}_k:=0$ for $k<0$. Hence
  \begin{equation*}
    \Rem_D=\sum_i\sum_{|S|\le r}\sum_Bx^B\big(h_B-\mathrm{Tay}_{D-|B|}h_B\big)(y).
  \end{equation*}

  \emph{Scale on $\mathcal G$.} On $\mathcal G$, $F^{\mathrm{ext}}_v$ differs from $F_v$ by the weights of at most $r$ edges
  inside $\Omega$, each of modulus at most $2\Delta/q_0$, so $\|F^{\mathrm{ext}}_v\|_\infty\le\sqrt\delta\log n+2r\Delta/q_0$; also
  $\|x\|_1\le\binom{r+1}2\,2\Delta/q_0$. Since $\Delta/(\sqrt\delta\log n)=1/(\sqrt n\log n)\to0$, choose $n_M$ so that
  for $n\ge n_M$ each of $2r\Delta/q_0$, $\binom{r+1}22\Delta/q_0$ and $2\Delta/q_0$ is at most $\sqrt\delta\log n$. Then
  $\|y\|_1\le c_y\sqrt\delta\log n$ with $c_y:=2(r+1)K+1$, and on the segment $\{ty:t\in[0,1]\}$,
  $\exp(2(\|f\|_\infty+\|x\|_1))\le\exp(6\sqrt\delta\log n)\le2$ as soon as $\sqrt\delta\log n\le(\log2)/6$.

  \emph{Each term.} By Taylor's formula with integral remainder and \Cref{lem:derivatives}, for
  $|B|\le D$, $|(h_B-\mathrm{Tay}_{D-|B|}h_B)(y)|\le C\|\dot\theta\|_1\|y\|_1^{D+1-|B|}$; for $|B|\ge D+1$, $\mathrm{Tay}_{D-|B|}h_B=0$ and
  $|h_B(y)|\le C\|\dot\theta\|_1$. Let $s':=|\Omega|-1$; connectedness of $(\Omega,B)$ forces $|B|\ge s'$. As
  $|x^B|\le(2\Delta/q_0)^{|B|}$ and $2\Delta/q_0\le\sqrt\delta\log n\le1$, in both cases
  \[
    |x^B|\cdot\big|(h_B-\mathrm{Tay}_{D-|B|}h_B)(y)\big|\le C\|\dot\theta\|_1\,\Delta^{s'}(\sqrt\delta\log n)^{D+1-s'} .
  \]

  \emph{Counting.} For fixed $i$ and $s'\le r$ there are at most $2\binom{n-1}{s'}\le2n^{s'}$ sets $S$ with
  $|\{i\}\cup S|=s'+1$ (the set $\Omega$ has $\binom{n-1}{s'}$ choices, and $S=\Omega$ or $S=\Omega\setminus\{i\}$), and each pair $(i,S)$
  has at most $2^{\binom{r+1}2}$ sets $B$. On $\mathcal G$, therefore,
  \[
    |\Rem_D|\le C\|\dot\theta\|_1\,n\sum_{s'=0}^r(n\Delta)^{s'}(\sqrt\delta\log n)^{D+1-s'}
    \le C(r+1)L\eta\,n\,(n\Delta)^r(\sqrt\delta\log n)^{D+1-r},
  \]
  because $n\Delta=\sqrt{n\delta}\ge\sqrt\delta\log n$, so that the term $s'=r$ is the largest.

  \emph{Everywhere.} For real $J$, $|\Psi_{i,S}|\le2^r\max_c|\ell_c|\le2^rL$ by \Cref{lem:derivatives} and
  (H4), so $|\Sw r|\le n\sum_{j\le r}\binom nj2^rL\le Cn^{r+1}$. Since $|J_e|\le1$, the number of motifs with
  $|M|\le D$ is at most $2n^{2D}$, and each coefficient is bounded by \Cref{lem:uniform}, we get
  $|\Poly_D|\le Cn^{2D}$. As $r+1\le2D$, $|\Rem_D|\le Cn^{2D}$.
\end{proof}

\subsubsection{Proof of \texorpdfstring{\Cref{lem:info}}{the information inequality}}
The pieces combine as follows.

\begin{proof}
  By \Cref{lem:sweep}, $\Exp_u\Sw r=0$, so $\Exp\Poly_D=-\Exp\Rem_D$ and
  \begin{align*}
    \Exp_u\Sw r^2&\le2\Exp\Poly_D^2+2\Exp\Rem_D^2=2\Var(\Poly_D)+2(\Exp\Rem_D)^2+2\Exp\Rem_D^2\\
    &\le2\Var(\Poly_D)+4\Exp\Rem_D^2 .
  \end{align*}
  By \Cref{lem:remainder,lem:small-field},
  \[
    \Exp\Rem_D^2\le C\eta^2n^2(n\Delta)^{2r}(\delta\log^2n)^{D+1-r}+Cn^{4D}\cdot2Kn\,e^{-q_0^2\log^2n/50},
  \]
  and $n^2(n\Delta)^{2r}\delta^{D+1-r}=n^2(n\delta)^r\delta^{D+1-r}=n^{r+2}\delta^{D+1}$. Together with \Cref{lem:pairing} and
  $I(u)\le\Exp_u\Sw r^2$ this gives \eqref{eq:c:info}.
\end{proof}

\subsection{Proof of \texorpdfstring{\Cref{thm:main}}{Theorem 1}(ii)}
\label{sec:proof-lower}

\begin{proof}
  The window $\delta\le n^{-\kappa}$ gives $\delta\le1$ and, for large $n$, $\sqrt\delta\log n\le n^{-\kappa/2}\log n\le1/C_M$.
  In each of the three cases below we choose a fiber, a sweep order $r$, a Taylor degree $D$ and an
  amplitude; $C:=C_M$ is the constant of \Cref{lem:info} for $(K,q_0,r,D,L)$, with $L$ the constant
  of the fiber lemma. Every family $u\mapsto P_{\theta(u)}$ below lives on the finite set of verdict graphs,
  is positive because $0<q<p<1$ and $\theta_c(u)>0$, and is continuously differentiable because $P_u(A)$ is
  a polynomial in the analytic path $\theta(u)$; so \Cref{lem:hellinger} applies.

  \emph{Case $2\le s\le K-1$.} Let
  \[
    r:=\Big\lfloor\frac{K-1}s+\frac{s-1}{\kappa s}\Big\rfloor+1,\qquad
    D:=\Big\lceil\frac{r+2}\kappa+\frac{K-1}s\Big\rceil+1,
  \]
  so that $r>(K-1)/s+(s-1)/(\kappa s)$ and $D\ge r+1$. Use \Cref{lem:fiber-mixed}, with $\mu=\eps^s$ and
  $\eta=\eps$. Let $a_0:=\min\{1,(4C)^{-1/(2s)}\}$, $\eps_1:=\min\{\eps_F,\eps_H\}$ and
  $\eps:=\min\{\eps_1,a_0N^{-1/(2s)}\}$. Since $\eps\le a_0N^{-1/(2s)}$, that is,
  $\eps^2\le a_0^2n^{-1/s}\delta^{-(K-1)/s}$, the four terms of \eqref{eq:c:info} satisfy:
  \begin{itemize}[leftmargin=*,itemsep=1pt,topsep=2pt]
    \item $CN\eps^{2s}\le Ca_0^{2s}\le1/4$;
    \item $Cn\delta^r\eps^2\le Ca_0^2n^{1-1/s}\delta^{r-(K-1)/s}\le Cn^{1-1/s-\kappa(r-(K-1)/s)}$, because
      $r-(K-1)/s>0$ and $\delta\le n^{-\kappa}$; by the choice of $r$, $\kappa(r-(K-1)/s)>1-1/s$, so the exponent is
      negative;
    \item $C\eps^2n^{r+2}\delta^{D+1}(\log n)^{2(D+1-r)}\le Cn^{r+2-1/s-\kappa(D+1-(K-1)/s)}(\log n)^{2D}$; by the choice
      of $D$, $D+1-(K-1)/s\ge(r+2)/\kappa+2$, so the exponent is at most $-1/s-2\kappa<0$;
    \item $Cn^{4D+2}e^{-q_0^2\log^2n/50}\to0$.
  \end{itemize}
  Hence there is $n_L$ such that for $n\ge n_L$, $\sup_{u\in[1/4,3/4]}I(u)\le1$ and $\int_{1/4}^{3/4}I\le1/2$. By
  \Cref{lem:hellinger},
  \[
    \TV(P_{\theta(1/4)},P_{\theta(3/4)})\le\tfrac12\sqrt{1/2}<\tfrac12 .
  \]
  Both points lie in
  $\Rare s\tau$ (\Cref{lem:fiber-mixed}(i)) and their entropies differ by at least $h_s\eps/2$ (part (iv)), so
  \Cref{lem:lecam} gives
  \[
    R^*(\Rare s\tau)\ge\frac{h_s^2\eps^2}{64}=\frac{h_s^2}{64}\min\{\eps_1^2,a_0^2N^{-1/s}\}\ge c_L\min\{1,N^{-1/s}\},
    \quad c_L:=\frac{h_s^2}{64}\min\{\eps_1^2,a_0^2\}.
  \]
  When $\eps=\eps_1$, which happens when $N$ is bounded, this is the constant-risk part of the bound; the
  estimates above hold in both regimes.

  \emph{Case $s=1$.} Let $r:=K$ and $D:=\lceil(K+2)/\kappa\rceil+K$, and use \Cref{lem:fiber-one} with
  $\mu=\eta=\eps$. Let $a_0:=\min\{\frac12,(4C)^{-1/2}\}$, $\eps_1:=\min\{\eps_F,\eps_H,e^{-1}\}$ and
  $\eps:=\min\{\eps_1,a_0N^{-1/2}\}$, so that $N\eps^2\le a_0^2$. Then:
  \begin{itemize}[leftmargin=*,itemsep=1pt,topsep=2pt]
    \item $CN\eps^2\le1/4$;
    \item $Cn\delta^K\eps^2=C\delta\cdot N\eps^2\le Ca_0^2n^{-\kappa}\to0$;
    \item $C\eps^2n^{K+2}\delta^{D+1}(\log n)^{2(D+1-K)}=C(N\eps^2)\,n^{K+1}\delta^{D+2-K}(\log n)^{2(D+1-K)}
      \le Cn^{K+1-\kappa(D+2-K)}(\log n)^{2D}$, and $\kappa(D+2-K)\ge K+2+2\kappa$, so this tends to $0$;
    \item the last term tends to $0$.
  \end{itemize}
  Hence, for large $n$, $\sup_{u\in[0,1]}I(u)\le1$ and $\TV(P_{\theta(0)},P_{\theta(1)})\le1/2$. The entropy gap is at
  least $\frac12\eps\log(1/\eps)$, and \Cref{lem:lecam} gives $R^*\ge\eps^2\log^2(1/\eps)/64$. If $\eps=\eps_1$,
  then $R^*\ge\eps_1^2\log^2(1/\eps_1)/64\ge c\,\psi_1(N)$, as $\psi_1\le1$ and $\log(1/\eps_1)\ge1$. If
  $\eps=a_0N^{-1/2}<\eps_1\le e^{-1}$, then $\log(1/\eps)\ge1$ and $\log(1/\eps)=\frac12\log N+\log(1/a_0)\ge\frac12\log N$. For
  $N\ge1$, $\max\{1,\frac12\log N\}\ge\frac14(2+\log N)\ge\frac14\log(e+N)$ because $e^2N\ge e+N$; for $N<1$,
  $\log(e+N)\le2$. In both cases $\log(1/\eps)\ge\frac14\log(e+N)$, so
  $R^*\ge\frac{a_0^2}{64\cdot16}N^{-1}\log^2(e+N)\ge c\,\psi_1(N)$.

  \emph{Case $s=0$.} Let $r:=K$ and $D:=\lceil(K+2)/\kappa\rceil+K$, and use \Cref{lem:fiber-zero} with amplitude $\eta$
  and $\mu=\eta$. Let $a_0:=\min\{1,(4C)^{-1/2}\}$ and $\eta:=\min\{\eta_F,\prod_c\bar y_c,a_0N^{-1/2}\}$, so that
  $N\eta^2\le a_0^2$; the four terms are bounded exactly as for $s=1$, and $\sup_{u\in[0,1]}I(u)\le1$ for large
  $n$. The entropy gap is at least $\eta\Lam(\bar y)/2$, and \Cref{lem:lecam} gives
  $R^*\ge\eta^2\Lam(\bar y)^2/64\ge c\min\{1,N^{-1}\}$.

  In all three cases $r$ and $D$ depend only on $(K,s,\kappa)$, the fiber constants only on $(K,s,\tau)$ and
  $r$, and $C_M$ only on $(K,q_0,r,D,L)$; so $c_L$ and $n_L$ depend only on $(K,s,\tau,q_0,\kappa)$.
\end{proof}

\begin{remark}[Role of the Polynomial Window]
  \label{rem:window}
  The uniform coefficient bound needs no assumption on the macroscopic meanings, and the large motifs are
  controlled through the
  factor $n\delta^r$ of the pairing count alone. At the amplitude $\eps\asymp N^{-1/(2s)}$, with $\mu=\eps^s$ and $\eta=\eps$,
  \[
    N\mu^2\asymp1\qquad\text{and}\qquad n\delta^r\eta^2\asymp n^{1-1/s}\delta^{\,r-(K-1)/s},
  \]
  and on the window $\delta\le n^{-\kappa}$ the second term vanishes uniformly once $r>(K-1)/s+(s-1)/(\kappa s)$. The
  remainder vanishes once $D$ is of order $r/\kappa$, and $r$ itself grows like $1/\kappa$. This is where the polynomial
  window enters: each fixed window needs only fixed orders $r$ and $D$, and decay of $n\Delta^2$ slower than every power
  of $n$ would need orders that grow with $n$. For $s\in\{0,1\}$ the amplitude is $N^{-1/2}$ and $r=K$ suffices, since $n\delta^K\eta^2=\delta N\eta^2\to0$.
\end{remark}

\subsection{Proof of \texorpdfstring{\Cref{cor:global}}{Corollary 1}}

\begin{proof}
  (i) If $\tau\le1/K$, the largest coordinate of every $\theta\in\Simplex$ is at least $1/K\ge\tau$, so
  $\Rare{K-1}\tau=\Simplex$; and $K-1\ge2$ gives $\psi_{K-1}(N)=\min\{1,N^{-1/(K-1)}\}$. Take $s=K-1$ and $\tau=1/K$ in
  both parts of \Cref{thm:main}; under $n\Delta^2\le n^{-\kappa}$, $\min\{n,N\}=N$. Hence for $n\ge n_L$, $q\in[q_0,1-q_0]$ and
  $n\Delta^2\le n^{-\kappa}$, $c_L\min\{1,N^{-1/(K-1)}\}\le R^*(\Simplex)\le C_U\min\{1,N^{-1/(K-1)}\}$, with constants depending only on
  $(K,q_0,\kappa)$.

  (ii) Sufficiency: if $N:=n^K\Delta_n^{2K-2}\to\infty$, then $N\le n^K$ because $\Delta_n\le1$, so
  $\min\{n,N\}\to\infty$; the upper bound of \Cref{thm:main}(i) holds for every channel and gives
  $\sup_{\Rare s\tau}\Exp(\hH-\Ent)^2\le C_U\psi_s(\min\{n,N\})\to0$. Necessity: if $N\not\to\infty$, then along a
  subsequence $N\le M<\infty$, so $\delta_n=(N/n)^{1/(K-1)}\le(M/n)^{1/(K-1)}\le n^{-\kappa_0}$ with
  $\kappa_0:=1/(2(K-1))$ as soon as $n\ge M^2$. \Cref{thm:main}(ii) with $\kappa=\kappa_0$ gives, along that
  subsequence, $R^*(\Rare s\tau)\ge c_L\psi_s(N)\ge c_L\psi_s(M)>0$, since $\psi_s$ is nonincreasing; no sequence of
  estimators is uniformly consistent. With $\Delta_n=n^{-\gamma}$, $N=n^{K-(2K-2)\gamma}\to\infty$ if and only if
  $\gamma<K/(2K-2)$.
\end{proof}

\section{Proof of the Sharp Local Constant}
\label{sec:sharp-proof}

Along the interior fiber $p_{\theta_0}(z)+(-1)^Ku$, with coordinate $u=e_K-e_K(\theta_0)$, only $e_K$ moves and
$\dot P_K=(-1)^{K-1}K$. With the sweep order $r=K$, a pair of overlapping motifs with nonzero coefficients carries the factor
$N\delta^{v-K}\Delta^{2\beta+\chi}$ in its variance count, where $v$ is the number of vertices of their union,
$\beta$ its cycle rank and $\chi$ its excess edge multiplicity. The only pairs with $(v-K,\beta,\chi)=(0,0,0)$ are the
spanning trees of $K$ answers, each paired with itself, and each such tree has coefficient $\dot P_K$ whatever its shape. These self-pairings give $I(u)\le L_0(1+o(1))$ with $L_0:=\binom nKK^K\Delta^{2K-2}/(q(1-q))^{K-1}$. The all-tree average $\hT_K$ is unbiased for $P_K$ with
$N\Var(\hT_K)\to(K!/K^{K-2})(q(1-q))^{K-1}$, so the Cramér--Rao inequality gives the
reverse bound, and along the fiber $I(u)=L_0(1+o(1))=K^2/\Var_u(\hT_K)\cdot(1+o(1))$.

The van Trees inequality with a prior of width of order $h_n$ on the fiber
gives the lower bound. For attainment, the moment map is a local diffeomorphism
near $\theta_0$ because the coordinates are distinct, the derivative of the entropy in $P_K$ is $\Lam/K$ in
modulus, and a delta method controlled by fourth moments gives the constant. The fourth moments
follow from a Bonami-type inequality, and they bound the probability of leaving the
chart by $O(N^{-2})$.

Throughout this section $q\in(0,1)$ is fixed, $q_0:=\min\{q,1-q\}$, $\Delta=\Delta_n\to0$, and only $n$ with $\Delta\le q_0/2$ are
considered. The integral
$\Lam(\theta)=\int_0^\infty\mathrm dt/\pbar_\theta(t)$ is finite exactly when all coordinates are
positive, since the integrand is of order $t^{-K}$ at infinity and at least of order $t^{-1}$ at zero when
some $\theta_c=0$. The identity $\Lam=\partial\Ent/\partial e_K$ is used only at pairwise distinct positive coordinates. For distinct coordinates, partial fractions give
$\Lam(\theta)=-\sum_c\log\theta_c/\prod_{d\ne c}(\theta_d-\theta_c)$, and $\Lam$ is positive and continuous near any
$\theta_0$ with distinct positive coordinates.

\subsection{Sharp Variance and Fourth Moments}

\begin{lemma}[Sharp Variance]
  \label{lem:sharp-var}
  There is $C$ depending only on $(K,q)$ such that for $n\ge2K$, $\Delta\le q_0/2$, $\delta\le1$ and every
  $\theta\in\Simplex$,
  \begin{gather*}
    \Big|N\Var(\hT_K)-\frac{K!}{K^{K-2}}(q(1-q))^{K-1}\Big|\le C\Big(\Delta+\frac1n+\delta\Big),\\
    \Big|N\Var(\hTs_K)-(K-1)!\,(q(1-q))^{K-1}\Big|\le C\Big(\Delta+\frac1n+\delta\Big),
  \end{gather*}
  and $\Var(\hT_k)\le C\delta^{K-k}/N$, $\Var(\hTs_k)\le C\delta^{K-k}/N$ for $2\le k\le K-1$.
\end{lemma}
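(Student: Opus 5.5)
The plan is to reuse the exact decomposition from the proof of \Cref{lem:alltree-var}, namely $\Var(\hT_K)=\Var(\bar P_K)+|\Trees_K|^{-2}\sum_{E'\ne\varnothing}\Exp[c_{E'}^2\prod_{e\in E'}v_e/\Delta^2]$, and to split it into a leading term and remainders. The leading term is given by the full trees $E'=\tr$ with $|E'|=K-1$. For such $E'$ the coefficient $c_{E'}$ is $1$, since the only tree containing a spanning tree of a $K$-set is itself, so no $W$ factors survive. The contribution is therefore $|\Trees_K|^{-1}\Exp\prod_{e\in\tr}v_e/\Delta^{2}$, averaged over trees. Since $v_e=(q+\Delta W_e)(1-q-\Delta W_e)=q(1-q)+O(\Delta)$ uniformly in $Z$, this equals $(q(1-q))^{K-1}(1+O(\Delta))\Delta^{-2(K-1)}/\big(\binom nK K^{K-2}\big)$. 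Using $\binom nK=n^K/K!\,(1+O(1/n))$ and multiplying by $N=n^K\Delta^{2K-2}$ gives $K!K^{-(K-2)}(q(1-q))^{K-1}$ up to an error of $O(\Delta+1/n)$.

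Next I bound the other terms. The signal part satisfies $\Var(\bar P_K)\le K^2/(4n)$, and $N/n=\delta^{K-1}\le\delta$. For noise sets $E'$ with $1\le|E'|=j\le K-2$, the counting in \Cref{lem:alltree-var} gives a contribution of at most $C\rho^j/n^{j+1}\le C/(n\delta^j)$. With $q$ fixed, $\rho\le C/\Delta^2$, so after multiplying by $N$ this is $C\delta^{K-1-j}\le C\delta$. This gives the first display with error $C(\Delta+1/n+\delta)$. The same computation for $k<K$ gives $\Var(\hT_k)\le C\max_{j\le k-1}(n\delta^j)^{-1}+C/n\le C/(n\delta^{k-1})=C\delta^{K-k}/N$, using $\delta\le1$.

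For the star average, I expand $\hTs_K$ in the same way. The products now run over stars, and the noise sets are substars. A full star, with center $i$ and leaf set $S$, is a set of $K-1$ edges. It determines its center because $K\ge3$, so distinct full stars give orthogonal noise products and $c_{E'}=1$. The normalization is $K\binom nK$ and there are $K\binom nK$ stars, so the leading term is $(K\binom nK)^{-1}(q(1-q))^{K-1}\Delta^{-2(K-1)}(1+O(\Delta))$. Multiplying by $N$ gives $K!/K\cdot(q(1-q))^{K-1}=(K-1)!\,(q(1-q))^{K-1}$. For a substar with $j<K-1$ edges, the number of stars containing it is $O(n^{K-1-j})$, except when $j=1$, where an edge can serve as a substar from either endpoint. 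That is still $O(n^{K-2})$ stars. The double-counting bound from \Cref{lem:alltree-var} therefore goes through with $k^{k-2}$ replaced by a constant depending on $K$. The signal part is a U-statistic, handled by Efron--Stein exactly as before. This gives the same errors and the bounds for $k<K$.

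The main obstacle is bookkeeping rather than ideas. For the all-tree average I must check that the full-tree terms are exactly orthogonal with coefficient $1$, and that $v_e$ depends on $Z$ only at order $\Delta$. For stars I must recount the multiplicities $\#\{\text{stars}\supseteq E'\}$, since these differ from the tree counts. I expect both to go through directly.
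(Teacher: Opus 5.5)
Your proposal is correct and follows the paper's proof essentially step by step. You use the same decomposition from \Cref{lem:alltree-var}. The leading term comes from the full trees, or the full stars, which determine their centers since $K\ge3$; these have $c_{E'}=1$ and $\prod_e v_e=(q(1-q))^{K-1}(1+O(\Delta))$. The lower noise orders are bounded by $C\delta^{K-1-j}\le C\delta$ after multiplying by $N$, the signal term by $N K^2/(4n)=K^2\delta^{K-1}/4$, and the bounds for $k<K$ are read off from \Cref{lem:alltree-var} with $\delta\le1$.
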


\begin{proof}
  We use the decomposition in the proof of \Cref{lem:alltree-var},
  \[
    \Var(\hT_K)=\Var(\bar P_K)+|\Trees_K|^{-2}\sum_{E'\ne\varnothing}\Exp\Big[c_{E'}^2\prod_{e\in E'}v_e/\Delta^2\Big].
  \]

  \emph{Top term.} An $E'$ with $|E'|=K-1$ is itself a tree of $\Trees_K$, the only tree containing it, so
  $c_{E'}=1$. For each tree $\tr$, $v_e=(q+\Delta W_e)(1-q-\Delta W_e)=q(1-q)[1+\Delta W_e(1-2q-\Delta)/(q(1-q))]$, so
  $\prod_{e\in\tr}v_e=(q(1-q))^{K-1}(1+O(\Delta))$ uniformly in $Z$. The top term equals
  \[
    \frac{(q(1-q))^{K-1}(1+O(\Delta))}{|\Trees_K|\,\Delta^{2K-2}},\qquad
    N\times\text{top term}=\frac{n^K}{\binom nK}\cdot\frac{(q(1-q))^{K-1}}{K^{K-2}}\,(1+O(\Delta)),
  \]
  and $n^K/\binom nK=K!\,n^K/(n)_K=K!(1+O(1/n))$.

  \emph{Other noise terms.} By the double counting in \Cref{lem:alltree-var}, the terms with $|E'|=j$ sum
  to at most $\binom{K-1}j(2K)^{j+1}\rho^j/n^{j+1}\le C/(n\delta^j)$; for $j\le K-2$, multiplying by $N=n\delta^{K-1}$ gives
  at most $C\delta^{K-1-j}\le C\delta$.

  \emph{Signal term.} $\Var(\bar P_K)\le K^2/(4n)$, so $N\Var(\bar P_K)\le(K^2/4)\delta^{K-1}\le C\delta$.

  \emph{Stars.} Every $k$-set carries exactly $k$ stars, so the signal term of the star average is the
  same $\bar P_k$. A star containing a forest $E'$ is determined by its $k$-set, which contains the vertices of
  $E'$, and by its center, so the double counting of \Cref{lem:alltree-var} applies with $k$ in place of
  $k^{k-2}$ and gives the same bound. A star with $K-1\ge2$ edges determines its center, so the only star
  containing a star $E'$ is $E'$ itself, and the top term has $K\binom nK$ in place of $|\Trees_K|$, which
  gives the constant $K!/K=(K-1)!$.

  Finally, for $k\le K-1$, \Cref{lem:alltree-var} gives $\Var(\hT_k)\le C_k/N_k=C_k\delta^{K-k}/N$, and likewise
  for the star average.
\end{proof}

The two constants differ by the factor $(K-1)!\,K^{K-2}/K!=K^{K-3}$. The terms with $1\le|E'|\le K-2$ mix
signal and noise. Two trees that share $j$ noisy edges share at least $j+1$ vertices, so these terms
are smaller than the top term by the factor $\delta^{K-1-j}$.

The next bound is of the hypercontractive type of Bonami~\cite{bonami1970etude}, and we prove the form we need.

\begin{lemma}[Bonami-Type Inequality]
  \label{lem:bonami}
  Let $Y_1,\dots,Y_m$ be independent with $\Exp Y_i=0$, $0<\Exp Y_i^2<\infty$ and $\Exp Y_i^4\le B(\Exp Y_i^2)^2$ for some
  $B\ge1$. Then every multilinear polynomial $f(Y)=\sum_{S\subseteq[m],\,|S|\le d}a_S\prod_{i\in S}Y_i$ of degree at most $d$
  satisfies $\Exp f^4\le(81B^2)^d(\Exp f^2)^2$.
\end{lemma}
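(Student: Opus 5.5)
We prove the bound by induction on the number of variables $m$, using the standard Bonami argument. First normalize: by scaling each $Y_i$ we may assume $\Exp Y_i^2=1$, so $\Exp Y_i^4\le B$ and $\Exp Y_i^3\le(\Exp Y_i^2)^{1/2}(\Exp Y_i^4)^{1/2}\le B^{1/2}$ in absolute value. The multilinear structure is preserved under scaling, and $\Exp f^2=\sum_Sa_S^2$.

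The case $m=0$ is trivial, since $f$ is then constant. For the inductive step, write $f=g+Y_mh$, where $g$ and $h$ are multilinear in $Y_1,\dots,Y_{m-1}$, with $\deg g\le d$ and $\deg h\le d-1$. Both $g$ and $h$ are independent of $Y_m$. Expanding $\Exp(g+Y_mh)^4$ and using $\Exp Y_m=0$ gives
\[
  \Exp f^4=\Exp g^4+6\Exp Y_m^2\,\Exp g^2h^2+4\Exp Y_m^3\,\Exp gh^3+\Exp Y_m^4\,\Exp h^4 .
\]
We bound the cross terms by Cauchy--Schwarz, $\Exp g^2h^2\le(\Exp g^4)^{1/2}(\Exp h^4)^{1/2}$, and by Hölder, $|\Exp gh^3|\le(\Exp g^4)^{1/4}(\Exp h^4)^{3/4}$. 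Then we insert the induction hypotheses
\[
  \Exp g^4\le(81B^2)^d(\Exp g^2)^2,\qquad \Exp h^4\le(81B^2)^{d-1}(\Exp h^2)^2 .
\]
Put $\alpha:=\Exp g^2$ and $\beta:=\Exp h^2$, so that $\Exp f^2=\alpha+\beta$. With $\Lambda:=81B^2$ the estimate reads
\[
  \Exp f^4\le\Lambda^d\alpha^2+6\Lambda^{d-1/2}\alpha\beta+4B^{1/2}\Lambda^{d/4+3(d-1)/4}\alpha^{1/2}\beta^{3/2}+B\Lambda^{d-1}\beta^2 .
\]
It remains to check that the right side is at most $\Lambda^d(\alpha+\beta)^2=\Lambda^d(\alpha^2+2\alpha\beta+\beta^2)$. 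Compare term by term after dividing by $\Lambda^d$:
\begin{itemize}
  \item the $\alpha\beta$ coefficient is $6\Lambda^{-1/2}=6/(9B)\le 2/3$;
  \item the $\alpha^{1/2}\beta^{3/2}$ coefficient is $4B^{1/2}\Lambda^{-3/4}\le4B^{1/2}/(27B^{3/2})\le 4/27$, and we absorb it via $\alpha^{1/2}\beta^{3/2}\le\frac12(\alpha\beta+\beta^2)$;
  \item the $\beta^2$ coefficient is $B/\Lambda=1/(81B)\le1/81$.
\end{itemize}
Summing, the total $\alpha\beta$ coefficient is at most $2/3+2/27<2$ and the total $\beta^2$ coefficient is at most $2/27+1/81<1$. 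This closes the induction.

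The only delicate point is the bookkeeping of powers of $B$ in the cubic term. That is exactly why the constant is $81B^2$ rather than $9B$: the factor $B^{1/2}$ from $\Exp Y_m^3$ must be dominated by $\Lambda^{3/4}$. No other step uses any structure beyond independence and the zero means.
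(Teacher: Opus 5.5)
Your proof is correct and is essentially the paper's proof: you induct on $m$ (for all degrees at once, with $h=0$ when $d=0$), split off $Y_m$, apply Cauchy--Schwarz and Hölder to the cross terms, and use the induction hypotheses at degrees $d$ and $d-1$. The cubic term is then absorbed by AM--GM into $\Lambda^d(\Exp f^2)^2$. The differences are cosmetic: you normalize $\Exp Y_i^2=1$ where the paper keeps $\sigma^2$ inside $a=\sigma^2\Exp g^2$, and your $g$ and $h$ play the roles of the paper's $h$ and $g$.
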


\begin{proof}
  Let $\Chc:=81B^2$. We argue by induction on $m$, for all $d\ge0$ at once. For $m=0$, $f$ is constant and
  $\Exp f^4=(\Exp f^2)^2$. For $m\ge1$ write $f=Y_mg+h$, where $g$ and $h$ do not involve $Y_m$, $g$ has degree at
  most $d-1$ ($g=0$ if $d=0$) and $h$ degree at most $d$. Let $\sigma^2:=\Exp Y_m^2$. By independence and
  $\Exp Y_m=0$,
  \[
    \Exp f^2=\sigma^2\Exp g^2+\Exp h^2,\qquad
    \Exp f^4=\Exp Y_m^4\,\Exp g^4+4\,\Exp Y_m^3\,\Exp[g^3h]+6\sigma^2\,\Exp[g^2h^2]+\Exp h^4 .
  \]
  By induction $\Exp g^4\le\Chc^{d-1}(\Exp g^2)^2$ and $\Exp h^4\le\Chc^d(\Exp h^2)^2$. By Cauchy--Schwarz,
  $|\Exp Y_m^3|\le\sqrt B\sigma^3$ and $\Exp[g^2h^2]\le\Chc^{d-1/2}\Exp g^2\,\Exp h^2$; by Hölder's inequality with
  exponents $4/3$ and $4$, $|\Exp[g^3h]|\le\Chc^{d-3/4}(\Exp g^2)^{3/2}(\Exp h^2)^{1/2}$. With $a:=\sigma^2\Exp g^2$,
  $b:=\Exp h^2$ and $a^{3/2}b^{1/2}\le a(a+b)/2$,
  \[
    \Exp f^4\le\Chc^d\Big[a^2\Big(\frac B{\Chc}+\frac{2\sqrt B}{\Chc^{3/4}}\Big)+ab\Big(\frac{2\sqrt B}{\Chc^{3/4}}+\frac6{\Chc^{1/2}}\Big)+b^2\Big].
  \]
  As $\Chc^{1/2}=9B$ and $\Chc^{3/4}=27B^{3/2}$, the first bracket is $\frac1{81B}+\frac2{27B}\le1$ and the second is
  $\frac2{27B}+\frac2{3B}\le2$. Hence $\Exp f^4\le\Chc^d(a+b)^2=\Chc^d(\Exp f^2)^2$.
\end{proof}

\begin{lemma}[Fourth Moments]
  \label{lem:fourth}
  There is $C$ depending only on $(K,q_0)$ such that for $n\ge2K$, $q\in[q_0,1-q_0]$, $\Delta\le q_0/2$,
  $\delta\le1$, every $\theta\in\Simplex$ and $2\le k\le K$, $\Exp(\hT_k-P_k)^4\le C/N_k^2$; hence
  $\Exp\|\hT-P(\theta)\|^4\le C/N^2$. The same bounds hold for the star averages.
\end{lemma}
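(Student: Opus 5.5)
The plan is to write $\hT_k-P_k=(\bar P_k-P_k)+(\hT_k-\bar P_k)$, using the decomposition from the proof of \Cref{lem:alltree-var}, and to bound the fourth moment of each piece separately, since $(a+b)^4\le8a^4+8b^4$.

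\emph{Signal piece.} $\bar P_k$ is a U-statistic of order $k$ in the i.i.d.\ labels with a bounded kernel. Changing one label moves it by at most $k/n$, so the bounded-differences (McDiarmid) inequality gives a sub-Gaussian tail with variance proxy $n(k/n)^2$. Hence $\Exp(\bar P_k-P_k)^4\le Ck^4/n^2$. Since $\delta\le1$, we have $N_k\le n$, so this is at most $C/N_k^2$.

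\emph{Noise piece.} Condition on $Z$. Then $\hT_k-\bar P_k=\sum_{E'\ne\varnothing}(c_{E'}/|\Trees_k|)\,\xi_{E'}$ is a multilinear polynomial of degree at most $k-1$ in the independent centered variables $\xi_e$. Two facts make \Cref{lem:bonami} applicable:
\begin{itemize}
\item The coefficients depend only on $Z$.
\item Given $Z$, $\xi_e$ is a rescaled centered Bernoulli with success probability $r_e\in[q,q+\Delta]\subseteq[q_0,1-q_0/2]$. Its kurtosis $\Exp\xi_e^4/(\Exp\xi_e^2)^2=(1-3v_e)/v_e$ is therefore bounded by some $B$ depending only on $q_0$.
\end{itemize}
Edges whose conditional variance vanishes do not occur, because $0<r_e<1$. \Cref{lem:bonami} then gives
\[
\Exp[(\hT_k-\bar P_k)^4\mid Z]\le(81B^2)^{k-1}\big(\Exp[(\hT_k-\bar P_k)^2\mid Z]\big)^2 .
\]
The conditional second moment is the sum $|\Trees_k|^{-2}\sum_{E'}c_{E'}^2\prod_{e\in E'}v_e/\Delta^2$. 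The double-counting argument of \Cref{lem:alltree-var} bounds it pointwise in $Z$, not just in expectation: it uses only $c_{E'}\le\#\{\tr\supseteq E'\}$ and $v_e/\Delta^2\le\rho$. So it is at most $C/\min\{n,N_k\}=C/N_k$ deterministically. Squaring and taking expectations over $Z$ gives $C/N_k^2$.

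Summing over $k$ with $N_k\ge N$ gives $\Exp\|\hT-P\|^4\le(K-1)\sum_k\Exp(\hT_k-P_k)^4\le C/N^2$. For the star averages the argument is identical: the signal piece is the same $\bar P_k$, and the pointwise double counting holds with $k$ in place of $k^{k-2}$, as already noted in \Cref{lem:sharp-var}.

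The only step needing care is that the conditional second-moment bound is uniform in $Z$. I expect the main (minor) obstacle to be checking the kurtosis constant uniformly over $q\in[q_0,1-q_0]$ and $\Delta\le q_0/2$; this holds because $v_e\ge (q_0/2)(1-q_0/2)\cdot$const, bounded away from zero.
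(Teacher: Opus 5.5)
Your proposal is correct and follows the paper's proof essentially step by step. The paper uses the same split into $\bar P_k-P_k$ and $Q_k$. It applies \Cref{lem:bonami} conditionally on $Z$, with the kurtosis bounded via $r_e\in[q_0,1-q_0/2]$ (the paper takes $B=4/q_0$). It bounds the conditional second moment pointwise in $Z$ by the double counting of \Cref{lem:alltree-var}, and controls the signal term by McDiarmid. It combines these with $(x+y)^4\le8(x^4+y^4)$, $N_k\le n$ and $N_k\ge N$, and treats the star averages in the same way.
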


\begin{proof}
  Write $\hT_k-P_k=Q_k+(\bar P_k-P_k)$ with $Q_k:=|\Trees_k|^{-1}\sum_{E'\ne\varnothing}c_{E'}\xi_{E'}$ and $\bar P_k=\Exp[\hT_k\mid Z]$.

  \emph{Noise.} Given $Z$, the $\xi_e=(A_e-r_e)/\Delta$ are independent with mean zero, and
  $r_e\in\{q,q+\Delta\}\subseteq[q_0,1-q_0/2]$. A centered $\Bern(r)$ variable satisfies
  $\Exp(A-r)^4/(\Exp(A-r)^2)^2=((1-r)^3+r^3)/(r(1-r))\le1/(r(1-r))\le4/q_0=:B$, because $r(1-r)\ge q_0/4$ on that
  interval. Given $Z$, $Q_k$ is a multilinear polynomial of degree at most $k-1$ in the $\xi_e$ (each
  $\xi_{E'}$ is a product over distinct edges), with coefficients determined by $Z$, so \Cref{lem:bonami}
  gives $\Exp[Q_k^4\mid Z]\le(81B^2)^{k-1}(\Exp[Q_k^2\mid Z])^2$. The noise estimate in the proof of
  \Cref{lem:alltree-var} holds for every $Z$, since it uses only $0\le c_{E'}\le\#\{\tr\supseteq E'\}$ and
  $v_e/\Delta^2\le\rho$, so $\Exp[Q_k^2\mid Z]\le C/N_k$ and $\Exp Q_k^4\le C/N_k^2$.

  \emph{Signal.} $\bar P_k$ is a function of $Z$ with $\Exp \bar P_k=P_k$ that moves by at most $k/n$ when one
  label changes (\Cref{lem:alltree-var}). McDiarmid's inequality gives
  $\Prob(|\bar P_k-P_k|\ge t)\le2\exp(-2nt^2/k^2)$, so
  $\Exp(\bar P_k-P_k)^4=\int_0^\infty4t^3\Prob(|\bar P_k-P_k|\ge t)\,\mathrm dt\le8\int_0^\infty t^3e^{-2nt^2/k^2}\mathrm dt=k^4/n^2$.

  \emph{Combination.} $(x+y)^4\le8(x^4+y^4)$ and $N_k\le n$ for $\delta\le1$ give $\Exp(\hT_k-P_k)^4\le C/N_k^2$. As
  $\|x\|^4\le(K-1)\sum_kx_k^4$ and $N_k\ge N$, $\Exp\|\hT-P\|^4\le C/N^2$. For the star averages the argument is the
  same.
\end{proof}

\subsection{The Interior Fiber and the Sharp Information Bound}

\begin{lemma}[Interior Fiber]
  \label{lem:fiber-int}
  Let $\theta_0\in\Simplex$ have pairwise distinct positive coordinates. There are $w_0>0$, $C_0$ and
  $L_{\theta_0}\ge1$, depending only on $\theta_0$, such that for $|u|\le w_0$ the polynomial $p_{\theta_0}(z)+(-1)^Ku$ has $K$
  positive simple real roots; let $\theta(u)$ be their vector, ordered as $\theta_0$. Then:
  \begin{enumerate}[label=(\roman*),leftmargin=*]
    \item $\theta(u)$ is real analytic, $\theta(u)\in\Simplex$, $\|\theta(u)-\theta_0\|\le C_0|u|$, and the coordinates are
      pairwise distinct and at least $\frac12\min_c\theta_{0,c}$;
    \item $e_K(\theta(u))=e_K(\theta_0)+u$ and the other $e_j$ are constant; $\dot P_j=0$ for $j\le K-1$ and
      $\dot P_K=(-1)^{K-1}K$;
    \item $\|\dot\theta\|_1\le L_{\theta_0}$, $\max_c|\dot\theta_c|/\theta_c\le L_{\theta_0}$ and $|\dot P_k|\le kL_{\theta_0}$; hence for every $r\ge1$ the path
      satisfies~(H) with $\mu=\eta=1$ and $L=rL_{\theta_0}$;
    \item $\frac{\mathrm d}{\mathrm du}\Ent(\theta(u))=\Lam(\theta(u))$.
  \end{enumerate}
\end{lemma}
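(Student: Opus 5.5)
We treat $p_{\theta_0}(z)+(-1)^Ku$ as an analytic perturbation of a polynomial with simple roots and apply the implicit function theorem. Since the coordinates of $\theta_0$ are pairwise distinct and positive, every root $\theta_{0,c}$ is simple, so $p_{\theta_0}'(\theta_{0,c})=\prod_{d\ne c}(\theta_{0,c}-\theta_{0,d})\ne0$.

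By the analytic implicit function theorem, applied to $G(z,u):=p_{\theta_0}(z)+(-1)^Ku$ at each $(\theta_{0,c},0)$, there are real analytic functions $\theta_c(u)$ on some $|u|<2w_0$ with $\theta_c(0)=\theta_{0,c}$. Their derivative is
\[
\dot\theta_c=-(-1)^K/p_u'(\theta_c(u)).
\]
Shrinking $w_0$ keeps these $K$ real roots distinct, so they are all the roots of the degree-$K$ monic polynomial $p_u$. It also keeps each root within $\frac12\min_c\theta_{0,c}$ of its starting point, so every coordinate is at least $\frac12\min_c\theta_{0,c}$, and it keeps $|p_u'(\theta_c(u))|\ge\frac12\min_c\prod_{d\ne c}|\theta_{0,c}-\theta_{0,d}|$.

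This gives (i) and (iii) together:
\begin{itemize}
\item $|\dot\theta_c|$ is bounded by a constant depending only on $\theta_0$, which yields both $\|\theta(u)-\theta_0\|\le C_0|u|$ and $\|\dot\theta\|_1\le L_{\theta_0}$.
\item Dividing by the lower bound on the coordinates gives the bound on $|\dot\theta_c|/\theta_c$.
\item Since $\dot P_k=k\sum_c\theta_c^{k-1}\dot\theta_c$ and $\theta_c\le1$, we get $|\dot P_k|\le k\|\dot\theta\|_1$.
\end{itemize}

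Membership in $\Simplex$ and part (ii) come from the coefficients. The perturbation changes only the constant term $(-1)^Ke_K$ of $p_\theta$. Hence $e_1,\dots,e_{K-1}$ are fixed, so $e_1=1$ and $\theta(u)\in\Simplex$, and $e_K(u)=e_K(\theta_0)+u$. By Newton's identities $P_2,\dots,P_{K-1}$ are constant, and \eqref{eq:d:pk} gives $\dot P_K=(-1)^{K-1}K$.

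Condition~(H) with $\mu=\eta=1$ and $L=rL_{\theta_0}$ then follows:
\begin{itemize}
\item (H0) holds by analyticity and positivity.
\item (H1) holds since $\dot P_j=0$ for $2\le j\le K-1$.
\item (H2) holds because $|\dot P_k|\le kL_{\theta_0}\le rL_{\theta_0}$ for $k\le r$.
\item (H3) and (H4) hold since $L_{\theta_0}\le rL_{\theta_0}$.
\end{itemize}

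For (iv) I would differentiate the Jozsa--Mitchison representation of \Cref{lem:intrep}. Along the path,
\[
\frac{\mathrm d}{\mathrm du}\pbar_{\theta(u)}(t)=\frac{\mathrm d}{\mathrm du}e_K=1,
\]
since only $e_K$ moves. Then
\[
\frac{\mathrm d}{\mathrm du}\Ent(\theta(u))=\int_0^\infty\frac{\mathrm dt}{\pbar_{\theta(u)}(t)}=\Lam(\theta(u)).
\]
Moving the derivative under the integral is justified by dominated convergence. The integrand of the difference quotient, $\log(\pbar_{\theta(u+h)}/\pbar_{\theta(u)})/h$, is bounded by $C/\pbar_{\theta(u)}(t)$ for small $h$, and $1/\pbar$ is integrable because it is at most $(\frac12\min_c\theta_{0,c})^{-K}$ near $0$ and at most $t^{-K}$ at infinity. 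Alternatively, one can use the explicit formula $\frac{\mathrm d}{\mathrm du}\Ent=-\sum_c(1+\log\theta_c)\dot\theta_c$ with $\sum_c\dot\theta_c=0$ and compare it with the partial-fraction formula for $\Lam$.

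The main care point is the sign bookkeeping in (iv): matching the convention of \Cref{lem:intrep} and the sign of $\dot\theta_c$ so that the result is $+\Lam$. Everything else is routine implicit-function and compactness estimates.
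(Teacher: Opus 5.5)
Your proposal is correct and follows the paper's proof. The implicit function theorem at the simple roots gives (i) and (iii) with $\dot\theta_c=(-1)^{K+1}/p_u'(\theta_c)$, and the fact that only the constant term moves gives membership in $\Simplex$ and (ii) via \eqref{eq:d:pk}; (iv) comes from differentiating the representation of \Cref{lem:intrep} under the integral sign using $\pbar_{\theta(u)}(t)=\pbar_{\theta_0}(t)+u$ and an integrable dominating function of order $1/\max\{\prod_c\theta_c,t^K\}$. The differences are cosmetic: the paper fixes $w_0\le\frac12\prod_c\theta_{0,c}$ once and differentiates at base point $\theta_0$ rather than taking difference quotients at each $u$, and no real sign issue arises in (iv) because $\mathrm d\pbar_{\theta(u)}/\mathrm du=1$ directly.
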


\begin{proof}
  (i)--(iii) The coordinates of $\theta_0$ are distinct, so the implicit function theorem gives analytic
  simple roots, with $\dot\theta_c=(-1)^{K+1}/p_u'(\theta_c)$ and $|p_u'(\theta_c)|$ bounded below for $|u|\le w_0$; only the
  constant term changes, so the roots still sum to $1$, and \eqref{eq:d:pk} gives $\dot P_K$. (iv) By
  \Cref{lem:intrep} and $\pbar_{\theta(u)}(t)=\pbar_{\theta_0}(t)+u$,
  $\Ent(\theta(u))-\Ent(\theta_0)=\int_0^\infty\log(1+u/\pbar_{\theta_0}(t))\,\mathrm dt$. Take $w_0\le\frac12\prod_c\theta_{0,c}$; then
  $|u|/\pbar_{\theta_0}(t)\le\frac12$, and the $u$-derivative of the integrand, $1/(\pbar_{\theta_0}(t)+u)$, is bounded by
  $2/\max\{\prod_c\theta_{0,c},t^K\}$, which is integrable. Differentiating under the integral gives
  $\int_0^\infty\mathrm dt/\pbar_{\theta(u)}(t)=\Lam(\theta(u))$.
\end{proof}

Take the sweep order $r=K$. By \Cref{lem:small-motif}, motifs on at most $K-1$ vertices have coefficient
zero and motifs on $K$ vertices have the constant coefficient $\partial_u\varkappa_M/M!$; by \Cref{lem:uniform},
$|B^{(K)}_M|\le\Gamma_M\|\dot\theta\|_1\le\Gamma_ML_{\theta_0}$ for every motif.

\emph{Coefficient of a spanning tree.} Let $\tr$ be a simple spanning tree of a $K$-set. In the expansion of the joint
cumulant $\varkappa_\tr$ over the partitions of the edges of $\tr$, the partition with a single block contributes
$\Exp\prod_{e\in\tr}W_e=P_K$; every other partition has blocks that are proper subsets of $\tr$, whose
components have at most $K-1$ vertices, so they contribute only $P_2,\dots,P_{K-1}$. Hence
$\partial\varkappa_\tr/\partial P_K=1$ and $B^{(K)}_\tr=\dot P_K=(-1)^{K-1}K$, whatever the shape of $\tr$.

Let $M,M'$ be two motifs with nonzero coefficients that share a vertex. By
\Cref{lem:local}(iv) both are connected. Let $U:=M\uplus M'$ with multiplicities $a_e:=m_e+m'_e$, $v:=v(U)$,
$d:=d(U)$, the cycle rank $\beta:=d-v+1\ge0$ and the excess edge multiplicity $\chi:=\sum_e(a_e-2)_+$.

\begin{lemma}[Unique Leading Overlap]
  \label{lem:overlap}
  Let $M,M'$ share a vertex, with $B^{(K)}_M$ and $B^{(K)}_{M'}$ not identically zero. Then $v\ge K$, and
  $(v-K,\beta,\chi)=(0,0,0)$ if and only if $M=M'=\tr$ is a simple spanning tree of a $K$-set.
\end{lemma}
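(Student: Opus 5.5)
The proof is a combinatorial analysis of $U=M\uplus M'$, using only which coefficients can be nonzero. By \Cref{lem:local}(iv) both motifs are connected, and by \Cref{lem:small-motif} with $r=K$, a connected motif on at most $K-1$ vertices has $B^{(K)}_M\equiv0$. So each of $M,M'$ has at least $K$ vertices, and $v\ge v(M)\ge K$. This gives the first claim.

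For the characterization, the ``if'' direction is immediate. If $M=M'=\tr$ is a simple spanning tree of a $K$-set, then $U$ has vertex set $V(\tr)$, so $v=K$. Its distinct edges are those of $\tr$, so $d=K-1$ and $\beta=0$. Every multiplicity is $a_e=2$, so $\chi=0$. The coefficient is nonzero by the computation $B^{(K)}_\tr=(-1)^{K-1}K$ preceding the lemma.

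For ``only if'', assume $(v-K,\beta,\chi)=(0,0,0)$. Since $v(M),v(M')\ge K=v$, both motifs span the full vertex set $V(U)$, so $V(M)=V(M')=V(U)$. Since $\beta=0$, the simple graph underlying $U$ is a forest, and it is connected because $M$ is connected. Hence it is a tree $\tr$ on these $K$ vertices, with $d=K-1$ edges.
\begin{itemize}
\item The underlying simple graph of $M$ is connected and spanning on $V(U)$, and uses only edges of $\tr$. It therefore contains all $K-1$ edges of $\tr$, so every edge of $\tr$ has $m_e\ge1$. The same holds for $m'_e$.
\item Then $\chi=\sum_e(m_e+m'_e-2)_+=0$ forces $m_e=m'_e=1$ for every edge.
\end{itemize}
So $M=M'=\tr$ as edge multisets, and $\tr$ is a simple spanning tree of a $K$-set.

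The only point needing care is the use of the vanishing of coefficients for motifs on fewer than $K$ vertices. That statement applies to connected motifs with $v(M)\le\min\{r,K-1\}$, which holds here since $r=K$. Disconnected motifs are excluded by \Cref{lem:local}(iv), so the hypothesis that the coefficients are not identically zero supplies exactly the lower bound $v(M)\ge K$ that I need. I expect no real obstacle beyond this bookkeeping.
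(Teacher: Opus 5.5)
Your proposal is correct and follows essentially the same route as the paper's proof. Like the paper, you get $v(M),v(M')\ge K$ from \Cref{lem:local}(iv) and \Cref{lem:small-motif}, force $V(M)=V(M')=V(U)$, and use $\beta=0$ to make the distinct edges of $U$ a spanning tree that both motifs must cover. Then $\chi=0$ forces $m_e=m'_e=1$ on every edge.
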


\begin{proof}
  Nonzero coefficients force $M,M'$ to be connected with $v(M),v(M')\ge K$ (\Cref{lem:local}(iv) and
  \Cref{lem:small-motif}), so $v\ge K$. Let $v=K$, $\beta=0$ and $\chi=0$. As $V(M),V(M')\subseteq V(U)$ have at
  least $K=|V(U)|$ elements, $V(M)=V(M')=V(U)$. By $\beta=0$ the set of distinct edges of $U$ is a tree on
  these $K$ vertices. $M$ is connected and covers the $K$ vertices, so its distinct edges form a
  connected spanning subgraph of that tree, hence the whole tree; the same holds for $M'$. Therefore every
  edge has $a_e\ge2$, and $\chi=0$ forces $a_e=2$, that is, $m_e=m'_e=1$ and $M=M'=\tr$. The converse is clear.
\end{proof}

Since $c(a)=2+(a-2)_+$ in \eqref{eq:a:moments}, $\sum_ec(a_e)=2d+\chi$, and with $d=v-1+\beta$,
\begin{equation}
  \label{eq:f:factor}
  n^v\Delta^{\sum_ec(a_e)}=n^v\Delta^{2(v-1+\beta)+\chi}=N\,\delta^{v-K}\,\Delta^{2\beta+\chi}.
\end{equation}
By \Cref{lem:overlap}, for every overlap pattern other than the self-pairing of a tree, the factor
$\delta^{v-K}\Delta^{2\beta+\chi}$ is at most $\max\{\delta,\Delta\}$, as $\delta,\Delta\le1$. \Cref{fig:overlap} shows the
self-pairing of a tree and, for each of $v-K$, $\beta$ and $\chi$, a pair in which it equals one and the other two
vanish. Let
\begin{gather*}
  L_0:=\binom nKK^{K-2}\Big(\frac{\Delta^2}{q(1-q)}\Big)^{K-1}\dot P_K^2=\binom nK\frac{K^K\Delta^{2K-2}}{(q(1-q))^{K-1}},\\
  \frac N{L_0}=\frac{n^K}{\binom nK}\cdot\frac{(q(1-q))^{K-1}}{K^K}\longrightarrow\frac{K!}{K^K}(q(1-q))^{K-1}.
\end{gather*}

\begin{figure}[t]
  \centering
  \includegraphics{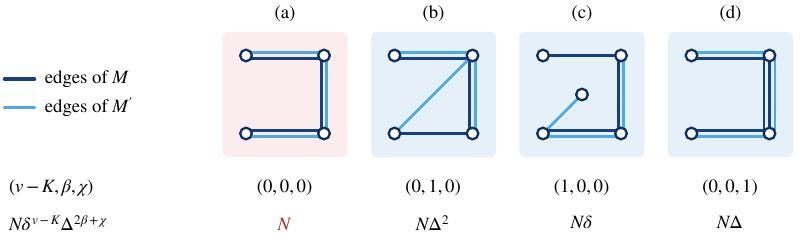}
  \caption{Pairs $(M,M')$ of motifs that share a vertex, for $K=4$, drawn as the multigraph $U=M\uplus M'$ with the
  edges of $M$ dark and those of $M'$ light. Under each pair are the triple $(v-K,\beta,\chi)$ and the factor $N\delta^{v-K}\Delta^{2\beta+\chi}$, where $v$ is the number of vertices, $\beta$ the cycle rank and $\chi$ the excess edge multiplicity
  of $U$. (a) A spanning tree paired with itself. No other pair of motifs with
  nonzero coefficients has the factor $N$. (b) Two spanning trees of the same $4$-set, a path and a star. (c) Spanning trees of two
  $4$-sets that share three answers. (d) A spanning tree with one edge doubled, paired with the tree.}
  \label{fig:overlap}
\end{figure}

\begin{lemma}[Sharp Information Bound]
  \label{lem:sharp-info}
  Fix $K\ge3$, $q\in(0,1)$, $\kappa\in(0,1)$ and $\theta_0\in\Simplex$ with pairwise distinct positive coordinates; let
  $\theta(u)$, $|u|\le w_0$, be the interior fiber of \Cref{lem:fiber-int}, $I(u)$ the Fisher information of the
  verdicts along it, and $D:=\lceil(K+2)/\kappa\rceil+K$. There are $C$ and $n_0$ depending only on $(K,q,\kappa,\theta_0)$
  such that for $n\ge n_0$, $\delta\le n^{-\kappa}$, $N\ge1$ and $|u|\le w_0$,
  \[
    I(u)\le\Exp_u\Sw K^2\le L_0(1+\eps_n),\qquad \eps_n:=C\big(\Delta+\delta+\varrho_n^{1/2}\big),
  \]
  where $\varrho_n:=n^{K+1}\delta^{D+2-K}(\log n)^{2(D+1-K)}+n^{4D+2}e^{-q_0^2\log^2n/50}$; moreover $\eps_n\to0$.
\end{lemma}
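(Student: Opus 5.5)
The plan is to rerun the proof of \Cref{lem:info} with $r=K$ on the interior fiber of \Cref{lem:fiber-int}, and to keep track of the leading constant in the pairing step instead of absorbing it into $C_D$. First, $I(u)\le\Exp_u\Sw K^2$ holds by \Cref{lem:sweep}. As in the proof of \Cref{lem:info}, $\Exp_u\Sw K^2\le\Exp\Poly_D^2+2|\Exp\Poly_D\Rem_D|+\Exp\Rem_D^2$. I will not use the crude bound $2\Var+4\Exp\Rem^2$ here, because it would double the constant. Cauchy--Schwarz bounds the cross term by $2(\Exp\Poly_D^2)^{1/2}(\Exp\Rem_D^2)^{1/2}$. \Cref{lem:remainder} and \Cref{lem:small-field} apply with $\eta=1$ and $L=KL_{\theta_0}$, and give $\Exp\Rem_D^2\le Cn^2(n\Delta)^{2K}(\delta\log^2n)^{D+1-K}+Cn^{4D+1}e^{-q_0^2\log^2n/50}$. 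Rewriting, $n^{K+2}\delta^{D+1}=N\cdot n^{K+1}\delta^{D+2-K}$, so $\Exp\Rem_D^2\le CN\varrho_n$. The choice of $D$ makes $\varrho_n\to0$ on the window, as in the $s=0$ case of \Cref{thm:main}(ii). Since $N/L_0$ is bounded, once $\Exp\Poly_D^2\le L_0(1+C(\Delta+\delta))$ is established, the cross and remainder terms add at most $CL_0\varrho_n^{1/2}$. Finally, $(\Exp\Poly_D)^2=(\Exp\Rem_D)^2\le CN\varrho_n$, which is negligible, so it is enough to bound $\Var(\Poly_D)$ sharply.

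For the variance I will expand $\Var(\Poly_D)=\sum_{M,M'}\Cov(\mathcal X_M,\mathcal X_{M'})$ over placed motifs with nonzero coefficients. As in \Cref{lem:pairing}, pairs on disjoint vertex sets are independent and contribute zero. The remaining pairs are grouped by overlap type $U=M\uplus M'$, and there are finitely many types because $|M|,|M'|\le D$. \Cref{lem:uniform} gives $|B^{(K)}_M|\le\Gamma_ML_{\theta_0}$. Each type has at most $n^v$ placements, and by \eqref{eq:c:cov} and \eqref{eq:f:factor} it contributes at most $CN\delta^{v-K}\Delta^{2\beta+\chi}$. By \Cref{lem:overlap}, every type other than the self-pairing of a simple spanning tree $\tr$ of a $K$-set has $(v-K,\beta,\chi)\ne(0,0,0)$. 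Since $v\ge K$, such a type carries a factor of at most $\max\{\delta,\Delta\}$. Together these types give $CN(\delta+\Delta)\le CL_0(\delta+\Delta)$.

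It remains to evaluate the self-pairings exactly. For a simple spanning tree $\tr$, the coefficient is the constant $B^{(K)}_\tr=\dot P_K=(-1)^{K-1}K$, as computed above, and $\Var(J^\tr B_\tr)\le\dot P_K^2\,\Exp\prod_{e\in\tr}J_e^2$. Given $Z$, the edges are independent. A direct expansion of the log-likelihood ratio gives $J_e=\log(p/q)A_e+\log((1-p)/(1-q))(1-A_e)$ with $|J_e|\approx\Delta/q$ or $\Delta/(1-q)$, and hence $\Exp[J_e^2\mid Z]=\frac{\Delta^2}{q(1-q)}(1+O(\Delta))$ uniformly in $W_e$. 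Summing over the $\binom nKK^{K-2}$ trees gives exactly $L_0(1+O(\Delta))$. This step needs care: the bound in \eqref{eq:a:moments} has the lossy constant $(2/q_0)^{c(a)}$, so I have to compute $\Exp[J_e^2\mid Z]$ sharply here rather than quote it.

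I expect the main obstacle to be the bookkeeping in the second step. Every non-leading overlap pair needs an honest factor $\delta$ or $\Delta$, and this includes pairs with $\chi>0$ that come from high-multiplicity motifs, whose $\Delta^{\chi}$ must dominate. The bound \eqref{eq:c:cov} gives this, but the lossy constants there are harmless only because these pairs are lower order. That is why the exact computation is reserved for the self-pairings alone. That $\eps_n\to0$ is then immediate from $\Delta,\delta\le n^{-\kappa}$ and $\varrho_n\to0$.
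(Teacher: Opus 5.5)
Your proposal is correct and follows essentially the same route as the paper: split $\Sw K=\Poly_D+\Rem_D$, bound the remainder by $CL_0\varrho_n$ through \Cref{lem:remainder} and \Cref{lem:small-field}, compute the spanning-tree self-pairings sharply via $\Exp[J_e^2\mid Z]=\frac{\Delta^2}{q(1-q)}(1+O(\Delta))$, and absorb all other overlap types into $CN\max\{\delta,\Delta\}$ by \eqref{eq:c:cov}, \eqref{eq:f:factor} and \Cref{lem:overlap}. The only difference is cosmetic: you control the cross term $2\Exp[\Poly_D\Rem_D]$ by Cauchy--Schwarz, whereas the paper uses $2|\Poly_D\Rem_D|\le\varrho_n^{1/2}\Poly_D^2+\varrho_n^{-1/2}\Rem_D^2$; both give the same $L_0\cdot O(\varrho_n^{1/2})$ correction.
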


\begin{proof}
  With $r=K$ write $\Sw K=\Poly_D+\Rem_D$.

  \emph{Polynomial part.} $\Var(\Poly_D)=\sum_{(M,M')}\Cov(\mathcal X_M,\mathcal X_{M'})$ with $\mathcal X_M:=J^MB^{(K)}_M(Z)$; vertex-disjoint pairs
  contribute nothing (proof of \Cref{lem:pairing}). For the self-pairing of a tree,
  $\Var(\mathcal X_\tr)=\dot P_K^2[\Exp(J^\tr)^2-(\Exp J^\tr)^2]$. With $\beta_J:=\log\frac{p(1-q)}{q(1-p)}=\frac\Delta{q(1-q)}+O(\Delta^2)$,
  $\Var(J_e\mid Z)=\beta_J^2r_e(1-r_e)=\frac{\Delta^2}{q(1-q)}(1+O(\Delta))$, and $|\Exp[J_e\mid Z]|\le2\Delta^2/q_0^2$ by
  \Cref{lem:weights}; hence $\Exp[J_e^2\mid Z]=\frac{\Delta^2}{q(1-q)}(1+O(\Delta))$ uniformly in $Z$ and
  $\Exp(J^\tr)^2=(\frac{\Delta^2}{q(1-q)})^{K-1}(1+O(\Delta))$, while $|\Exp J^\tr|\le(2\Delta/q_0)^{2(K-1)}$, whose square is
  smaller by the factor $O(\Delta^{2(K-1)})$. There are $|\Trees_K|=\binom nKK^{K-2}$ such placements, which gives
  $L_0(1+O(\Delta))$. Every other pair that shares a vertex has, by \eqref{eq:c:cov},
  $|\Cov(\mathcal X_M,\mathcal X_{M'})|\le2(2/q_0)^{4D}\Delta^{\sum_ec(a_e)}\|B_M\|_\infty\|B_{M'}\|_\infty$ with bounded coefficients; for given types and
  overlap pattern there are at most $n^v$ placements, so by \eqref{eq:f:factor} and \Cref{lem:overlap} the
  contribution is at most $CN\max\{\delta,\Delta\}$. There are finitely many triples, their number depending on
  $D$. As $L_0\ge cN$ with $c$ depending only on $(K,q)$, $\Var(\Poly_D)\le L_0(1+C(\Delta+\delta))$.

  \emph{Remainder.} By \Cref{lem:fiber-int}, (H) holds with $\eta=1$, and \Cref{lem:remainder} with the
  computation in the proof of \Cref{lem:info} gives
  $\Exp\Rem_D^2\le Cn^{K+2}\delta^{D+1}(\log n)^{2(D+1-K)}+Cn^{4D+2}e^{-q_0^2\log^2n/50}$; dividing by $L_0\ge cN=cn\delta^{K-1}$
  and using $N\ge1$ gives $\Exp\Rem_D^2\le CL_0\varrho_n$. Since $\delta\le n^{-\kappa}$ and
  $\kappa(D+2-K)\ge K+2+2\kappa$, $\varrho_n\le n^{-1-2\kappa}(\log n)^{2D}+n^{4D+2}e^{-q_0^2\log^2n/50}\to0$.

  \emph{Combination.} $\Exp\Poly_D=-\Exp\Rem_D$, so $\Exp\Poly_D^2\le\Var(\Poly_D)+\Exp\Rem_D^2$. As $\Sw K=\Poly_D+\Rem_D$ and
  $2|\Poly_D\Rem_D|\le\varrho_n^{1/2}\Poly_D^2+\varrho_n^{-1/2}\Rem_D^2$,
  \begin{align*}
    \Exp_u\Sw K^2&\le(1+\varrho_n^{1/2})\Exp\Poly_D^2+(1+\varrho_n^{-1/2})\Exp\Rem_D^2\\
    &\le(1+\varrho_n^{1/2})\Var(\Poly_D)+(2+\varrho_n^{1/2}+\varrho_n^{-1/2})\Exp\Rem_D^2 ,
  \end{align*}
  which gives $\Exp_u\Sw K^2\le L_0[1+C(\Delta+\delta+\varrho_n^{1/2})]$ once $\varrho_n\le1$, and
  $I(u)\le\Exp_u\Sw K^2$ by \Cref{lem:sweep}. The constants are uniform in $|u|\le w_0$, because they depend on
  the fiber only through $L_{\theta_0}$ and $\theta_0$.
\end{proof}

\begin{remark}[Information Equality]
  \label{rem:info-eq}
  Along the interior fiber of \Cref{lem:fiber-int} with the coordinate $u=e_K-e_K(\theta_0)$, under the
  conditions of \Cref{lem:sharp-info},
  \[
    I(u)=L_0(1+o(1))\qquad\text{and}\qquad K^2/\Var_u(\hT_K)=L_0(1+o(1))\qquad\text{uniformly in }|u|\le w_0 .
  \]
  Indeed, $\hT_K$ is unbiased for $P_K(\theta(u))$, and $P_u$ is a positive family on a finite set,
  differentiable in $u$; the Cramér--Rao inequality gives $\Var_u(\hT_K)\ge\dot P_K^2/I(u)$, that is,
  $I(u)\ge K^2/\Var_u(\hT_K)$. By \Cref{lem:sharp-var},
  $K^2/\Var_u(\hT_K)=K^2N/[(K!/K^{K-2})(q(1-q))^{K-1}](1+O(\Delta+1/n+\delta))$, which is $L_0(1+O(\Delta+1/n+\delta))$ because
  $L_0=NK^K/(K!(q(1-q))^{K-1})\cdot(1+O(1/n))$. Together with \Cref{lem:sharp-info}, this gives $I(u)=L_0(1+o(1))$. The proof of
  the sharp constant does not use this lower bound on $I(u)$.
\end{remark}

\subsection{Lower Bound via the Van Trees Inequality}

We use the one-dimensional van Trees inequality \cite{gill1995applications}: let $\varpi$ be an
absolutely continuous density on $[a,b]$ with $\varpi(a)=\varpi(b)=0$ and $\mathcal I(\varpi):=\int(\varpi')^2/\varpi<\infty$, let
$u\mapsto P_u(A)$ be positive and continuously differentiable for each $A$, and let $\zeta$ be continuously
differentiable. Then every estimator $T=T(A,\Xi)$ satisfies
\[
  \int_a^b\Exp_u(T-\zeta(u))^2\,\varpi(u)\,\mathrm du\ \ge\ \frac{\big(\int_a^b\zeta'(u)\varpi(u)\,\mathrm du\big)^2}{\int_a^bI(u)\varpi(u)\,\mathrm du+\mathcal I(\varpi)} ,
\]
where the auxiliary randomness $\Xi$, independent of $A$, does not change the Fisher information.

\begin{proposition}[Van Trees Lower Bound]
  \label{prop:vantrees}
  Let $K\ge3$, $q\in(0,1)$, $\kappa\in(0,1/(K-1))$ and $\theta_0\in\Simplex$ with pairwise distinct positive
  coordinates. Let $\Delta_n>0$ satisfy $n\Delta_n^2\le n^{-\kappa}$ and $N\to\infty$, and let $h_n\to0$ with
  $h_n\sqrt{N}\to\infty$. Then $\liminf_nN\Rloc(\theta_0;h_n)\ge \mathcal V(\theta_0)$.
\end{proposition}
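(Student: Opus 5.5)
We apply the one-dimensional van Trees inequality along the interior fiber $\theta(u)$ of \Cref{lem:fiber-int}, with the coordinate $u=e_K-e_K(\theta_0)$, to the functional $\zeta(u):=\Ent(\theta(u))$. By \Cref{lem:fiber-int}(iv) its derivative is $\zeta'(u)=\Lam(\theta(u))$.

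\emph{Choice of prior.} Fix a smooth density $\varpi_0$ on $[-1,1]$ that vanishes at $\pm1$ and has finite Fisher information $\mathcal I(\varpi_0)$; the cosine-squared density works. Put $w_n:=h_n/(2C_0)$, with $C_0$ from \Cref{lem:fiber-int}(i), and let $\varpi(u):=w_n^{-1}\varpi_0(u/w_n)$ on $[-w_n,w_n]$. For large $n$, $w_n\le w_0$ because $h_n\to0$. The fiber bound gives $\|\theta(u)-\theta_0\|\le C_0w_n<h_n$, so the whole support of the prior lies in the local ball. The local minimax risk therefore dominates the Bayes risk of every estimator:
\[
  \Rloc(\theta_0;h_n)\ \ge\ \inf_{\hH}\int\Exp_u(\hH-\zeta(u))^2\varpi(u)\,\mathrm du .
\]
The prior information scales as $\mathcal I(\varpi)=\mathcal I(\varpi_0)/w_n^2$.

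\emph{Ingredients of van Trees.} The family $u\mapsto P_u(A)$ is positive and continuously differentiable, for the same reasons given in the proof of \Cref{thm:main}(ii). For $|u|\le w_n$, continuity of $\Lam$ near $\theta_0$ (all coordinates positive and distinct) gives $\int\zeta'\varpi=\Lam(\theta_0)(1+o(1))$. Before using \Cref{lem:sharp-info}, check its hypotheses. The window $\delta\le n^{-\kappa}$ is assumed, and $N\to\infty$ gives $N\ge1$ eventually. The lemma then yields $\sup_{|u|\le w_0}I(u)\le L_0(1+\eps_n)$ with $\eps_n\to0$, so $\int I\varpi\le L_0(1+o(1))$. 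Since $h_n\sqrt N\to\infty$ and $L_0\asymp N$, we get
\[
  \mathcal I(\varpi)/L_0=\mathcal I(\varpi_0)(2C_0)^2/(h_n^2L_0)\to0 .
\]

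\emph{Conclusion.} Substituting these estimates into van Trees gives
\[
  \Rloc\ge\Lam(\theta_0)^2(1+o(1))/L_0 .
\]
Multiplying by $N$ and using $N/L_0\to(K!/K^K)(q(1-q))^{K-1}$ gives
\[
  \liminf_n N\,\Rloc(\theta_0;h_n)\ \ge\ \frac{K!}{K^K}\,\Lam(\theta_0)^2\,(q(1-q))^{K-1}=\mathcal V(\theta_0).
\]
Auxiliary randomness $\Xi$ is allowed because it does not change the Fisher information.

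\emph{Expected obstacle.} The substantive difficulty, the sharp bound on $I(u)$, has already been handled by \Cref{lem:sharp-info}. What remains delicate is bookkeeping. The constants of \Cref{lem:sharp-info} must be uniform over $|u|\le w_0$ so that the bound can be integrated against a prior whose width shrinks. The support of the prior must fit inside the $\ell_2$ ball of radius $h_n$. And the prior information term must be negligible, which requires exactly $h_n^2N\to\infty$. One more check is that \Cref{lem:sharp-info} is stated for $\kappa\in(0,1)$, which covers the assumed range $\kappa<1/(K-1)$.
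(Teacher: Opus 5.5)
Your proposal is correct and follows the paper's proof essentially step for step. Both arguments apply the van Trees inequality along the interior fiber with a rescaled $\cos^2$ prior of width of order $h_n$, use the uniform bound $I(u)\le L_0(1+\eps_n)$ from \Cref{lem:sharp-info}, and note that $\mathcal I(\varpi)/L_0\asymp 1/(h_n^2N)\to0$. The differences are cosmetic: the paper takes $w_n=\min\{w_0,h_n/C_0\}$ (the ball is closed, so the factor $2$ is unnecessary), and it bounds $\int\zeta'\varpi$ below by $\min_{|u|\le w_n}\Lam(\theta(u))$ instead of writing $\Lam(\theta_0)(1+o(1))$.
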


\begin{proof}
  Let $\varpi_1(v):=\cos^2(\pi v/2)$ on $[-1,1]$, a density with $\mathcal I(\varpi_1)=\pi^2$; let $w_n:=\min\{w_0,h_n/C_0\}$
  and $\varpi_n(u):=w_n^{-1}\varpi_1(u/w_n)$, so $\mathcal I(\varpi_n)=\pi^2/w_n^2$. By \Cref{lem:fiber-int}(i),
  $\|\theta(u)-\theta_0\|\le h_n$ for $|u|\le w_n$, so the local minimax risk is at least the Bayes risk under $\varpi_n$
  along the fiber. Let $\zeta(u):=\Ent(\theta(u))$. Then $\zeta'=\Lam(\theta(u))$ by \Cref{lem:fiber-int}(iv), and the van
  Trees inequality with \Cref{lem:sharp-info} gives
  \[
    \Rloc(\theta_0;h_n)\ge\frac{\big(\min_{|u|\le w_n}\Lam(\theta(u))\big)^2}{L_0(1+\eps_n)+\pi^2/w_n^2}.
  \]
  For large $n$, $w_n=h_n/C_0$. Since $\binom nKK^K\ge n^K/K!$ for $n\ge2K$, we have $L_0\ge N/(K!\,(q(1-q))^{K-1})$, so
  $(\pi^2/w_n^2)/L_0\le\pi^2C_0^2K!\,(q(1-q))^{K-1}/(h_n^2N)\to0$. Moreover, $\min_{|u|\le w_n}\Lam(\theta(u))\to\Lam(\theta_0)$
  and $N/L_0\to(K!/K^K)(q(1-q))^{K-1}$. Hence $\liminf N\Rloc\ge\Lam(\theta_0)^2(K!/K^K)(q(1-q))^{K-1}=\mathcal V(\theta_0)$.
\end{proof}

\subsection{Attainment}

Let $\Sigma:=\{\theta\in\Real^K:\sum_c\theta_c=1\}$ and $\Pi(\theta):=(P_2(\theta),\dots,P_K(\theta))$.

\begin{lemma}[Local Chart]
  \label{lem:chart}
  Let $\theta_0\in\Simplex$ have pairwise distinct positive coordinates. There are $\rho_1>0$ and $r_U>0$, depending only
  on $\theta_0$, and a $C^\infty$ function $\tilde H$ on $B_1:=B(\Pi(\theta_0),\rho_1)\subset\Real^{K-1}$ such that:
  \begin{enumerate}[label=(\roman*),leftmargin=*]
    \item for $\theta\in\Simplex$ with $\|\theta-\theta_0\|\le r_U$, $\Pi(\theta)\in B(\Pi(\theta_0),\rho_1/2)$ and $\Ent(\theta)=\tilde H(\Pi(\theta))$;
    \item if $\hT\in B_1$, then $\hT\in\Mom$ and \Cref{alg:main} outputs $\hH=\tilde H(\hT)$;
    \item $\partial\tilde H/\partial P_K\,(\Pi(\theta))=(-1)^{K-1}\Lam(\theta)/K$ for $\theta$ as in (i);
    \item $\sup_{B_1}(\|\nabla\tilde H\|+\|\nabla^2\tilde H\|)<\infty$.
  \end{enumerate}
\end{lemma}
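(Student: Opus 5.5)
The plan is to build $\tilde H$ from the inverse function theorem applied to the restricted moment map $\Pi|_\Sigma$ near $\theta_0$, and then to check (i)--(iv) one by one.

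\emph{The chart.} First we show that $\Pi|_\Sigma$ is a local diffeomorphism at $\theta_0$. By Newton's identities, $\Pi$ and the map $\theta\mapsto(e_2,\dots,e_K)$ differ by a polynomial diffeomorphism of $\Real^{K-1}$. Its inverse is again polynomial, because Newton's identities are triangular with unit diagonal up to the factors $m$. The map $\theta\mapsto(e_1,\dots,e_K)$ has Jacobian determinant $\pm\prod_{c<d}(\theta_c-\theta_d)$, which is nonzero at distinct coordinates. Restricting to $\Sigma$, where $e_1=1$, leaves an invertible $(K-1)\times(K-1)$ differential. The inverse function theorem then gives an open $W\subset\Sigma$ around $\theta_0$ and a ball $B_1=B(\Pi(\theta_0),\rho_1)$ with $\Pi|_W:W\to\Pi(W)\supseteq B_1$ a $C^\infty$ (indeed analytic) diffeomorphism. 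We shrink $W$ so that its closure lies in the open positive orthant with pairwise distinct coordinates, and take $\rho_1$ small enough that the closed ball $\bar B_1$ sits inside $\Pi(W)$. Set $\tilde H:=\Ent\circ(\Pi|_W)^{-1}$ on $B_1$. This is $C^\infty$ because $\Ent$ is smooth on the positive orthant, and it has bounded first and second derivatives on $B_1$ because it extends smoothly to a neighbourhood of the compact set $\bar B_1$. That gives (iv).

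\emph{Parts (i) and (ii).} For (i), continuity of $\Pi$ lets us choose $r_U$ so that $\|\theta-\theta_0\|\le r_U$ implies both $\Pi(\theta)\in B(\Pi(\theta_0),\rho_1/2)$ and $\theta\in W$. Then $\Ent(\theta)=\tilde H(\Pi(\theta))$ directly.

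For (ii), take $\hT\in B_1$. The point $\vartheta:=(\Pi|_W)^{-1}(\hT)$ lies in $W$, so it has positive coordinates summing to $1$. Hence $\vartheta\in\Simplex$ and $\hT=P(\vartheta)\in\Mom$. The nearest point is then $\tP=\hT$ itself. Algorithm~\ref{alg:main} returns roots with the same $e_j$ as $\vartheta$, which by Lemma~\ref{lem:newton}(ii) is the same multiset of coordinates. So $\hH=\Ent(\vartheta)=\tilde H(\hT)$.

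\emph{Part (iii).} Fix $\theta$ as in (i) and use the interior fiber through $\theta$ itself, $p_\theta(z)+(-1)^Ku$. This is legitimate because $\theta$ still has distinct positive coordinates, so Lemma~\ref{lem:fiber-int} applies with $\theta$ in place of $\theta_0$. For small $|u|$ the fiber stays in $W$. Along it $P_2,\dots,P_{K-1}$ are constant and $\dot P_K=(-1)^{K-1}K$ by Lemma~\ref{lem:fiber-int}(ii), while $\frac{\mathrm d}{\mathrm du}\Ent=\Lam$ by part (iv) of that lemma. The chain rule applied to $\Ent(\theta(u))=\tilde H(\Pi(\theta(u)))$ at $u=0$ gives
\[
\Lam(\theta)=\frac{\partial\tilde H}{\partial P_K}\,(-1)^{K-1}K,
\]
which is the claimed formula, since $((-1)^{K-1})^{-1}=(-1)^{K-1}$.

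\emph{Main obstacle.} The one delicate point is uniformity in (iv). The second-derivative bounds must hold on all of $B_1$, not only near its centre, which is why we take $\bar B_1$ compactly inside $\Pi(W)$ rather than $B_1=\Pi(W)$. A smaller issue in (ii) is the fixed tie-breaking rule in the projection; it is harmless, because $\hT\in\Mom$ makes the nearest point unique (distance zero).
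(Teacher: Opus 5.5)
Your proposal is correct and follows essentially the paper's route. You invert $\Pi$ on $\Sigma$ by the inverse function theorem using a Vandermonde nondegeneracy, set $\tilde H=\Ent\circ(\Pi|_W)^{-1}$ and place a closed ball inside $\Pi(W)$ for (iv), note that the nearest point is $\hT$ itself for (ii), and apply the chain rule along the direction that moves only $e_K$ for (iii). The only differences are cosmetic: you get the nondegeneracy from the Jacobian of $(e_1,\dots,e_K)$ and Newton's identities rather than from the matrix $[k\theta_c^{k-1}]$ directly, and in (iii) you cite the interior fiber lemma at $\theta$ instead of redoing the $\partial e_K/\partial P_K$ and integral-representation computation.
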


\begin{proof}
  The differential of $\Pi$ on the tangent space $\{v:\sum_cv_c=0\}$ of $\Sigma$ is $v\mapsto(k\sum_c\theta_c^{k-1}v_c)_{k=2}^K$. If it
  maps $v$ to zero, then the $K\times K$ matrix $[k\theta_c^{k-1}]_{k,c}$, whose row $k=1$ encodes $\sum_cv_c=0$, maps $v$
  to zero; this matrix is $\mathrm{diag}(1,\dots,K)$ times the transposed Vandermonde matrix of $\theta$, which is
  nonsingular for distinct coordinates, so $v=0$. By the inverse function theorem, there are open
  neighborhoods $W$ of $\theta_0$ in $\Sigma$ and $\mathcal O$ of $\Pi(\theta_0)$ in $\Real^{K-1}$ such that $\Pi:W\to\mathcal O$ is a $C^\infty$
  diffeomorphism; shrink $W$ into the interior of $\Simplex$, with pairwise distinct coordinates bounded
  below. Let $\tilde H:=\Ent\circ(\Pi|_W)^{-1}$, which is $C^\infty$ because $\Ent$ is smooth on the
  positive orthant. Choose $\rho_1$ with $\bar B(\Pi(\theta_0),\rho_1)\subset\mathcal O$, which gives (iv) by compactness, and
  $r_U$ such that $\{\theta\in\Sigma:\|\theta-\theta_0\|\le r_U\}\subset W$ and $\Pi$ maps this set into $B(\Pi(\theta_0),\rho_1/2)$,
  which gives (i).
  (ii) $\hT\in B_1\subset\Pi(W)\subseteq\Mom$, so the nearest point is $\hT$ itself, and every $\htheta$ with $P(\htheta)=\hT$ has
  the multiset of coordinates of $(\Pi|_W)^{-1}(\hT)$ by Newton's identities; so $\hH=\tilde H(\hT)$.
  (iii) On $W$, Newton's identities make $e_1,\dots,e_{K-1}$ functions of $P_2,\dots,P_{K-1}$ alone, and, by
  \eqref{eq:d:pk}, $P_K=f(e_1,\dots,e_{K-1})+(-1)^{K-1}Ke_K$; so, with $P_2,\dots,P_{K-1}$ fixed,
  $\partial e_K/\partial P_K=(-1)^{K-1}/K$ and the other $e_j$ do not move. Moving $e_K$ by $t$ with $e_1,\dots,e_{K-1}$ fixed
  changes $\Ent$ by $\int_0^\infty\log(1+t/\pbar_\theta(t'))\,\mathrm dt'$ (\Cref{lem:intrep}), whose derivative at $t=0$ is
  $\Lam(\theta)$ by the dominated convergence argument of \Cref{lem:fiber-int}(iv). The chain rule gives (iii).
\end{proof}

\begin{proposition}[Attainment]
  \label{prop:attain}
  Let $q\in(0,1)$, let $\theta_0\in\Simplex$ have pairwise distinct positive coordinates, and let $r_U$ be as in
  \Cref{lem:chart}. There are $C$ and $n_0$ depending only on $(K,q,\theta_0)$ such that for $n\ge n_0$, $\Delta\le q_0/2$,
  $\delta\le1$, $N\ge1$ and every $\theta\in\Simplex$ with $\|\theta-\theta_0\|\le r_U$,
  \begin{gather*}
    \big|N\Exp_\theta(\hH-\Ent(\theta))^2-\mathcal V(\theta)\big|\le C\big(\Delta+\delta^{1/2}+n^{-1}+N^{-1/2}\big),\\
    \big|N\Exp_\theta(\hH^\star-\Ent(\theta))^2-K^{K-3}\mathcal V(\theta)\big|\le C\big(\Delta+\delta^{1/2}+n^{-1}+N^{-1/2}\big).
  \end{gather*}
\end{proposition}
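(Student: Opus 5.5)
We prove the first display; the second follows by the same argument with $\hTs$ in place of $\hT$. The plan is a delta method on the good event $\mathcal E:=\{\|\hT-P(\theta)\|<\rho_1/2\}$ of \Cref{lem:chart}, with the complement controlled by fourth moments.

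\emph{Localization.} For $\|\theta-\theta_0\|\le r_U$, \Cref{lem:chart}(i) gives $\Pi(\theta)\in B(\Pi(\theta_0),\rho_1/2)$. On $\mathcal E$ we therefore have $\hT\in B_1$, so \Cref{lem:chart}(ii) gives $\hH=\tilde H(\hT)$. By Markov's inequality and \Cref{lem:fourth}, $\Prob(\mathcal E^c)\le C/N^2$. Since $|\hH-\Ent|\le\log K$, the event $\mathcal E^c$ contributes at most $C/N^2$ to the risk, which is $O(N^{-1})$ after multiplying by $N$.

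\emph{Taylor expansion on $\mathcal E$.} Write $\Ent(\theta)=\tilde H(P)$ with $P=\Pi(\theta)$, and $\hT-P=:\xi$. Taylor's theorem with the Hessian bound of \Cref{lem:chart}(iv) on the convex ball $B_1$ gives
\[
\tilde H(\hT)-\tilde H(P)=\nabla\tilde H(P)\cdot\xi+R,\qquad |R|\le C\|\xi\|^2 .
\]
Let $G:=\nabla\tilde H(P)\cdot\xi$. Then $\Exp[G^2\Ind{\mathcal E}]\le\Exp G^2$, and the term $\Exp[G^2\Ind{\mathcal E^c}]\le(\Exp G^4)^{1/2}\Prob(\mathcal E^c)^{1/2}$ is $O(N^{-2})$. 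The cross term $2\Exp|GR|\le C(\Exp G^2)^{1/2}(\Exp\|\xi\|^4)^{1/2}=O(N^{-1/2}\cdot N^{-1})$, and $\Exp R^2\le C\Exp\|\xi\|^4=O(N^{-2})$. Multiplying by $N$, everything except $N\Exp G^2$ is $O(N^{-1/2})$.

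\emph{Leading variance.} Next we compute $N\Exp G^2=N\,\nabla\tilde H^{\top}\Cov(\hT)\nabla\tilde H$. By \Cref{lem:sharp-var} and Cauchy--Schwarz, every entry of $\Cov(\hT)$ involving a coordinate $k\le K-1$ is at most $C\delta^{1/2}/N$, since $\Var(\hT_k)\le C\delta/N$ and $\Var(\hT_K)\le C/N$. This is the source of the $\delta^{1/2}$ term in the statement. The $(K,K)$ entry satisfies
\[
N\Var(\hT_K)=\tfrac{K!}{K^{K-2}}(q(1-q))^{K-1}+O(\Delta+n^{-1}+\delta).
\]
By \Cref{lem:chart}(iii), $(\partial\tilde H/\partial P_K)^2=\Lam(\theta)^2/K^2$. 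Hence
\[
N\Exp G^2=\frac{K!}{K^{K}}\Lam(\theta)^2(q(1-q))^{K-1}+O(\Delta+\delta^{1/2}+n^{-1})=\mathcal V(\theta)+O(\cdots).
\]
The constants are uniform over the ball of radius $r_U$ because $\nabla\tilde H$ is bounded on $B_1$.

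\emph{Star averages.} The same argument applies, using the star constant $(K-1)!$ of \Cref{lem:sharp-var} and the fourth moments for stars from \Cref{lem:fourth}. This replaces $K!/K^{K-2}$ by $(K-1)!$, which multiplies the constant by $K^{K-3}$.

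\emph{Main obstacle.} The delicate step is keeping the cross-covariances between $\hT_K$ and the lower $\hT_k$ at order $\delta^{1/2}/N$, uniformly in $\theta$ near $\theta_0$. The Cauchy--Schwarz bound above achieves this. If a sharper rate were needed, one would instead compute these cross terms directly from the forest decomposition.
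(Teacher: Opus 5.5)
Your proposal is correct and follows essentially the same route as the paper: a delta method through the local chart on the event $\{\|\hT-P\|<\rho_1/2\}$, fourth moments from \Cref{lem:fourth} to control the complement and the quadratic remainder, and Cauchy--Schwarz on the off-diagonal covariances to produce the $\delta^{1/2}$ term. The only cosmetic difference is that you expand $(G+R)^2$ exactly and bound the cross term by Cauchy--Schwarz, whereas the paper sandwiches the square between $(1-N^{-1/2})a^2-N^{1/2}R^2$ and $(1+N^{-1/2})a^2+(1+N^{1/2})R^2$; both give the same $N^{-1/2}$ error.
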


\begin{proof}
  Fix $\theta$ and write $P:=\Pi(\theta)$, $g:=\nabla\tilde H(P)$, $a:=g\cdot(\hT-P)$ and $\mathcal E:=\{\|\hT-P\|<\rho_1/2\}$. On $\mathcal E$,
  $\hT\in B_1$ because $P\in B(\Pi(\theta_0),\rho_1/2)$, so, by \Cref{lem:chart},
  $\hH-\Ent(\theta)=\tilde H(\hT)-\tilde H(P)=a+R$ with $|R|\le C_2\|\hT-P\|^2$, $C_2:=\frac12\sup_{B_1}\|\nabla^2\tilde H\|$, the segment from
  $P$ to $\hT$ lying in the convex set $B_1$. By \Cref{lem:fourth} and Markov's inequality,
  $\Prob(\mathcal E^c)\le(2/\rho_1)^4\Exp\|\hT-P\|^4\le C/N^2$, $\Exp R^2\mathbf 1_{\mathcal E}\le C/N^2$ and
  $\Exp a^4\le\|g\|^4\Exp\|\hT-P\|^4\le C/N^2$.

  \emph{Upper bound.} As $N\ge1$, $(a+R)^2\le(1+N^{-1/2})a^2+(1+N^{1/2})R^2$, and $(\hH-\Ent)^2\le(\log K)^2$ on $\mathcal E^c$,
  so $\Exp(\hH-\Ent)^2\le(1+N^{-1/2})\Exp a^2+2CN^{-3/2}+C(\log K)^2/N^2$.

  \emph{Lower bound.} $(a+R)^2\ge(1-N^{-1/2})a^2-N^{1/2}R^2$, and
  $\Exp a^2\mathbf 1_{\mathcal E}\ge\Exp a^2-(\Exp a^4)^{1/2}\Prob(\mathcal E^c)^{1/2}\ge\Exp a^2-C/N^2$, so
  $\Exp(\hH-\Ent)^2\ge(1-N^{-1/2})\Exp a^2-CN^{-3/2}$.

  \emph{Main term.} $\Exp a^2=\Var(g\cdot\hT)=\sum_{k,l}g_kg_l\Cov(\hT_k,\hT_l)$. For $(k,l)\ne(K,K)$,
  $|\Cov(\hT_k,\hT_l)|\le(\Var\hT_k\,\Var\hT_l)^{1/2}\le C\delta^{1/2}/N$ by \Cref{lem:sharp-var}. By
  \Cref{lem:chart}(iii), $g_K^2=\Lam(\theta)^2/K^2$, and by \Cref{lem:sharp-var},
  \[
    N\Exp a^2=\frac{\Lam(\theta)^2}{K^2}\cdot\frac{K!}{K^{K-2}}(q(1-q))^{K-1}+O(\Delta+n^{-1}+\delta+\delta^{1/2})=\mathcal V(\theta)+O(\Delta+n^{-1}+\delta^{1/2}),
  \]
  uniformly in $\|\theta-\theta_0\|\le r_U$, since $\|g\|$ and $\Lam$ are bounded there; in particular $N\Exp a^2\le C$.
  The two bounds give $|N\Exp(\hH-\Ent)^2-N\Exp a^2|\le CN^{-1/2}$, which proves the first
  display. For the star average the proof is the same: \Cref{lem:sharp-var,lem:fourth} hold for
  $\hTs$, with $(K-1)!$ in place of $K!/K^{K-2}$, and $\frac{\Lam^2}{K^2}(K-1)!(q(1-q))^{K-1}=K^{K-3}\mathcal V(\theta)$.
\end{proof}

\subsection{Proofs of \texorpdfstring{\Cref{thm:sharp}}{Theorem 2} and \texorpdfstring{\Cref{cor:star}}{Corollary 2}}

\begin{proof}
  \emph{Upper bound.} As $h_n\to0$, $h_n\le r_U$ for large $n$. The conditions $\delta_n\le n^{-\kappa}$ and $N\to\infty$
  give $\delta_n\to0$ and $\Delta_n\le\sqrt{\delta_n}\to0$. By \Cref{prop:attain},
  $N\sup_{\|\theta-\theta_0\|\le h_n}\Exp_\theta(\hH-\Ent(\theta))^2\le\sup_{\|\theta-\theta_0\|\le h_n}\mathcal V(\theta)+o(1)\to \mathcal V(\theta_0)$, since $\mathcal V$ is continuous at
  $\theta_0$. Hence $\limsup N\Rloc(\theta_0;h_n)\le \mathcal V(\theta_0)$, and the local worst-case risk of \Cref{alg:main} has the same
  upper limit.

  \emph{Lower bound.} This is \Cref{prop:vantrees}.

  Together, the two bounds give $N\Rloc(\theta_0;h_n)\to \mathcal V(\theta_0)$. As $\Rloc$ is at most the local worst-case risk of $\hH$,
  that risk times $N$ also tends to $\mathcal V(\theta_0)$. \Cref{cor:star} follows in the same way from the second
  display of \Cref{prop:attain}: $N$ times the local worst-case risk of $\hH^\star$ has upper limit at most $\sup_{\|\theta-\theta_0\|\le h_n}K^{K-3}\mathcal V(\theta)+o(1)$, and $N$ times its risk at
  $\theta=\theta_0$ is $K^{K-3}\mathcal V(\theta_0)+o(1)$.
\end{proof}

\section{Numerical Checks of the Mechanisms}
\label{sec:numerics}

The computations below check the mechanisms that the proofs use:
the all-tree and star averages, the coefficient structure of the sweep, the fibers, the transfer, and the variance
and information constants. All checks are implemented in Python 3.12, with mpmath~\cite{mpmath2023} for arbitrary precision,
NumPy~\cite{harris2020array} for double and 80-bit extended precision, SciPy~\cite{virtanen2020scipy} for the
Nelder--Mead searches~\cite{nelder1965simplex} and NetworkX~\cite{hagberg2008exploring} for the enumeration of the
motif graphs.
The code of every check, figure and table, the saved outputs, the seeds and the pinned package versions, with
instructions that reproduce the reported summaries, are available at
\url{https://github.com/yahiko-l/semantic-entropy-rarity}.

\subsection{All-Tree and Star Averages}
\label{sec:num-trees}

The all-tree averages, implemented through homomorphism counts, and the star averages
are checked against exhaustive enumeration over
every $k$-set with all its labeled spanning trees and all its stars, for $k=2,\dots,5$. The test inputs are $50$ symmetric
matrices with zero diagonal ($n\in\{5,6,8,10,12\}$, five kinds of entries including Cauchy draws and
centered verdicts, two draws each). The entries are dyadic rationals, so the references are exact
integer computations. The largest relative error of the floating-point implementation is
$3.6\times10^{-11}$ for the all-tree averages and $5.3\times10^{-15}$ for the star averages, and the same code
run on exact rational input reproduces the references exactly.

\subsection{Coefficients of the Sweep}
\label{sec:num-coefficients}

We evaluate the exact expressions of the Taylor coefficients $B^{(5)}_M(z)$ of the sweep of order $5$ in double precision
(IEEE 754 binary64),
over all label assignments $z$ of each motif, for $K=5$ and $s\in\{2,3,4\}$. The fiber is that
of \Cref{lem:fiber-mixed} with $\bar y$ replaced by $(0.5,0.3,0.2)$, $(0.6,0.4)$ and $(1)$ for $s=2,3,4$. These
entries are distinct and exceed $\tau$ for every $\tau<0.2$. The fiber moves the constant term by $0.9(2u-1)\sigma_\eps$, a
linear reparametrization under which derivatives in $u$ are $1.8$ times
those of the lemma. All quantities are evaluated at the midpoint $u=1/2$, where
$\|\dot\theta\|_1/\eps$ lies in $[0.588,0.775]$ for all $s$ and $\eps$. The values of $\eps$ are $10^{-2}$,
$3\times10^{-3}$, $10^{-3}$, $3\times10^{-4}$ and $10^{-4}$.

The motifs are of three kinds. The first kind comprises all $52$
connected graphs on six vertices with at least nine edges, including the complete graph. The second starts from eight
small shapes: the star and the path on four vertices, the path on five vertices, the triangle, the triangle with a
pendant edge, the four-cycle, the star on five vertices and the spider with legs $(2,1,1)$. It comprises one motif per
edge orbit of each shape with that edge doubled, three tripled motifs, and the eight simple shapes themselves. The third comprises the $24$
distinct motifs selected
from a scan at $s\in\{2,3\}$ and sweep orders $r=2,\dots,5$: for each $s$ and $r$, the motifs on more than $r$ vertices
with the three smallest measured growth exponents of $\Exp B^2$ in $\eps$, or the five smallest for $r=5$. Six of these $24$
motifs also belong to the first two kinds, so the three kinds make up $95$ distinct motifs, $30$ of them on at most five
vertices and $65$ on six or seven.

\begin{table}[!htbp]
  \centering
  \caption{Taylor Coefficients of the Sweep of Order 5 on the Mixed Fibers}
  \label{tab:mechanism}
  \small
  \begin{tabular}{@{}ccccccc@{}}
\toprule
 & & \multicolumn{2}{c}{30 motifs on at most $5$ vertices} & \multicolumn{3}{c}{65 motifs on $6$ or $7$ vertices} \\
\cmidrule(lr){3-4}\cmidrule(l){5-7}
$s$ & macroscopic & identity & slope of $|\partial_u\varkappa_M|$ & largest & last & smallest slope \\
 & meanings & metric & ($5$ vertices) & $R(\varepsilon)$ & ratio & of $\max_z|B^{(5)}_M|$ \\
\midrule
$2$ & $3$ & $1.99\times10^{-19}$ & $[1.989,\,2.002]$ & $0.080$ & $[0.331,\,1.001]$ & $0.997$ \\
$3$ & $2$ & $6.76\times10^{-20}$ & $[2.951,\,2.999]$ & $0.178$ & $[0.111,\,1.010]$ & $0.971$ \\
$4$ & $1$ & $1.13\times10^{-19}$ & $[3.994,\,4.000]$ & $0.152$ & $[0.013,\,0.335]$ & $1.982$ \\
\bottomrule
\end{tabular}

  \par\smallskip
  \begin{minipage}{0.96\textwidth}\footnotesize
    Taylor coefficients $B^{(5)}_M(z)$ of the sweep of order $5$ in the edge weights, for motifs
      (edge multisets) $M$ and label assignments $z$ of their vertices, at the midpoint of the mixed fibers
      of the text ($K=5$, $s$ rare meanings), evaluated in double precision over all label assignments. $\varkappa_M$ is
      the joint cumulant of the
      co-membership indicators of the edges of $M$ and $M!$ the product of the factorials of the edge
      multiplicities. Identity metric: the largest value of
      $\max_z|B^{(5)}_M(z)-\partial_u\varkappa_M/M!|/\max\{1,\max_z|B^{(5)}_M(z)|\}$ over the $30$ motifs on at
      most five vertices and all $\varepsilon$. Slope: the range of the local slopes of
      $\log|\partial_u\varkappa_M|$ against $\log\varepsilon$ on the last two segments, over the $13$
      five-vertex motifs. $R(\varepsilon):=\max_z|B^{(5)}_M(z)|/\varepsilon$; largest $R(\varepsilon)$: over the
      $65$ motifs on six or seven vertices and all $\varepsilon$; last ratio: the range of
      $R(10^{-4})/R(3\times10^{-4})$ over these motifs, near $1$ for a coefficient of order $\varepsilon$ and
      near $1/3$ or below for one that decays at least like $\varepsilon^2$; smallest slope: the smallest
      local slope of $\log\max_z|B^{(5)}_M(z)|$ against $\log\varepsilon$ on the last two segments. Rounded
      to the digits shown.
  \end{minipage}
\end{table}

\Cref{tab:mechanism} summarizes
the results and \Cref{fig:mechanism} plots the largest normalized coefficient. On the motifs with at
most five vertices the coefficients agree with $\partial_u\varkappa_M/M!$ to the level of
floating-point rounding, and on the tested five-vertex motifs $|\partial_u\varkappa_M|$ scales like $\eps^s$. On six or
seven vertices the largest tested coefficient is consistent with order $\eps$ for $s=2,3$ and decays faster
for $s=4$. Apart from one motif at $s=4$, a recomputation in extended (long double) precision of the $22$
motif, class and $\eps$ combinations with the smallest coefficients agrees
with double precision on $\Exp B^2$ to within a relative difference of $2.4\times10^{-5}$. For that motif the
largest coefficient is below $3\times10^{-16}$, and the relative differences are $3.43\times10^{-4}$ and
$8.15\times10^{-3}$ at $\eps=3\times10^{-4}$ and $10^{-4}$, respectively. Substituting these extended-precision
values into \Cref{tab:mechanism} changes one entry: the lower end of the last-ratio range for $s=4$ is $0.012$, whereas
double precision gives $0.013$.

\begin{figure}[!htbp]
  \centering
  \includegraphics{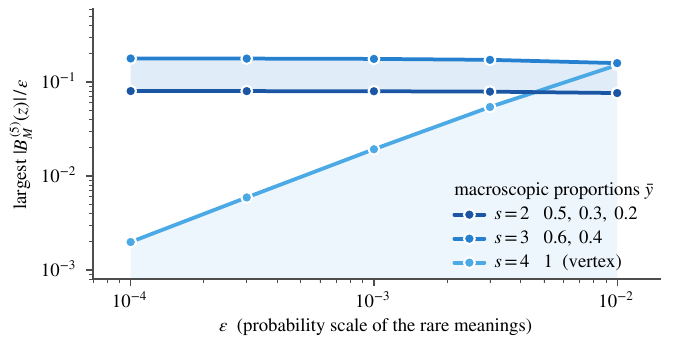}
  \caption{Largest Taylor coefficient $B^{(5)}_M(z)$ of the sweep of order $5$, over the $65$ tested
  motifs $M$ on six or seven vertices and all their label assignments $z$, divided by the probability
  scale $\varepsilon$ of the rare meanings, at the midpoint of a mixed fiber for $K=5$ with $s$ rare meanings and
  macroscopic proportions $(0.5,0.3,0.2)$, $(0.6,0.4)$ and $(1)$ for $s=2,3,4$. The coefficients are evaluated in double precision. The largest tested coefficient is
  consistent with order $\varepsilon$, the order of the uniform coefficient bound, when two or three
  meanings are macroscopic ($s=3,2$), and decays faster at the vertex ($s=4$).}
  \label{fig:mechanism}
\end{figure}

\subsection{The Fibers}
\label{sec:num-fibers}

\Cref{tab:fiber} checks \Cref{lem:fiber-one} in $80$-digit arithmetic. Along the fiber,
$P_2,\dots,P_{K-1}$ stay constant to within $2\times10^{-81}$, the roots are real, simple and positive, and
the increment $[P_K]^1_0:=P_K(\theta(1))-P_K(\theta(0))$ equals $KF_\eps(2\eps)$ to working precision. Let
$\beta_j:=(-1)^{K-1}\prod_ib_i/(b_j\prod_{i\ne j}(b_j-b_i))$. The reference limit of the last column is
\begin{equation}
  \label{eq:g:fiber-limit}
  \lim_{\eps\to0}\frac{[H]^1_0-\eps\log(1/\eps)}{\eps}=-2\log2+1-\sum_{j=1}^{K-1}\beta_j\log b_j .
\end{equation}
Indeed, in the proof of \Cref{lem:fiber-one}(iv) the small root contributes $\eps\log(1/\eps)-2\eps\log2$
exactly. For the large roots, $\dot\theta_j=\eps F_\eps'(\omega)/F_\eps'(\theta_j)$, and
$F_\eps'(\omega(u))/F_\eps'(\theta_j(u))\to F_0'(0)/F_0'(b_j)=\beta_j$ uniformly in $u\in[0,1]$, so
$(\theta_j(1)-\theta_j(0))/\eps\to\beta_j$ and $(h(\theta_j(1))-h(\theta_j(0)))/\eps\to-(1+\log b_j)\beta_j$ with $h(x)=-x\log x$.
Since $\sum_c\dot\theta_c=0$, we have $\eps+\sum_j\dot\theta_j=0$, hence $\sum_j\beta_j=-1$, and summing gives \eqref{eq:g:fiber-limit}.
The column $[P_K]^1_0/\eps$ tends to $KF_0'(0)=(-1)^{K-1}K\prod_jb_j$. In \Cref{tab:fiber} the last two columns approach these
limits as $\eps$ decreases, and the third stays bounded, as \Cref{lem:fiber-one}(i) states. At every tabulated $\eps$
the entropy gap $[H]^1_0$ exceeds $\frac12\eps\log(1/\eps)$.

\begin{table}[!htbp]
  \centering
  \caption{The Fiber for $s=1$ in 80-Digit Arithmetic}
  \label{tab:fiber}
  \small
  \begin{tabular}{@{}ccccc@{}}
\toprule
$K$ & $\varepsilon$ & $\max_{u,j}|\theta_j(u)-b_j|/\varepsilon$ & $[P_K]^1_0/\varepsilon$ & $([H]^1_0-\varepsilon\log(1/\varepsilon))/\varepsilon$ \\
\midrule
$4$ & $10^{-3}$ & $6.88$ & $-0.127749$ & $-3.05336$ \\
 & $10^{-4}$ & $7.14$ & $-0.130385$ & $-3.05845$ \\
 & $10^{-5}$ & $7.17$ & $-0.130650$ & $-3.05896$ \\
 & $10^{-6}$ & $7.17$ & $-0.130677$ & $-3.05901$ \\
 & limit & & $-0.13068$ & $-3.05902$ \\
\midrule
$5$ & $10^{-4}$ & $27.09$ & $+0.016107$ & $-3.69178$ \\
 & $10^{-5}$ & $26.74$ & $+0.016164$ & $-3.69301$ \\
 & $10^{-6}$ & $26.71$ & $+0.016169$ & $-3.69313$ \\
 & limit & & $+0.01617$ & $-3.69315$ \\
\bottomrule
\end{tabular}

  \par\smallskip
  \begin{minipage}{0.96\textwidth}\footnotesize
    The fiber for $s=1$: $\theta(u)$ has coordinates $\varepsilon(1+u)$ and the
      roots $\theta_1(u),\dots,\theta_{K-1}(u)$ of $F_\varepsilon(z)-F_\varepsilon(\varepsilon(1+u))$ other than
      $\varepsilon(1+u)$, where
      $F_\varepsilon(z)=(z-\varepsilon)\prod_{j=1}^{K-1}(z-(1-\varepsilon)b_j)$; computed in $80$-digit
      arithmetic for $K=4$ with $b=(0.45,0.33,0.22)$ and $K=5$ with $b=(0.35,0.28,0.22,0.15)$ in place of
      $\bar y(K-1,\tau)$, on $41$
      equally spaced values of $u\in[0,1]$. Only values of $\varepsilon$ for which the whole fiber is
      real-rooted are shown. For these $b$ the numerical thresholds, found by bisection on a $201$-point
      grid in $u$, are $\varepsilon<9.02\times10^{-3}$ for $K=4$ and $\varepsilon<7.13\times10^{-4}$ for $K=5$.
      The maximum in the third column runs over $u$ and $j=1,\dots,K-1$. In the last two columns,
      $[P_K]^1_0:=P_K(\theta(1))-P_K(\theta(0))=KF_\varepsilon(2\varepsilon)$ with $P_K(\theta)=\sum_c\theta_c^K$,
      and $[H]^1_0:=H(\theta(1))-H(\theta(0))$ is the change of the entropy. The limits are
      $(-1)^{K-1}K\prod_jb_j$ and the closed form \eqref{eq:g:fiber-limit}. Rounded to the digits shown.
  \end{minipage}
\end{table}

For \Cref{fig:modulus} the three least-favorable fibers are evaluated for $K=5$ and $\tau=0.05$ in $160$-digit arithmetic,
with the masses $\bar y(m,\tau)$ of the lemmas. For each $s$ the amplitudes form a grid of four values per decade,
from a value with $D_e\le10^{-32}$ up to $0.1$ for $s\ge2$, $1.8\times10^{-4}$ for $s=1$ and $3.2\times10^{-8}$ for $s=0$. At
the next values of the grid the last two fibers have complex roots. At both endpoints and at seven interior points of
every fiber on the grid the roots are real, simple and positive, and at least $K-s$ of them are no smaller than $\tau$. The power
sums $P_2,\dots,P_4$ agree at the two endpoints to within $4\times10^{-161}$. At the grid points with
$10^{-32}\le D_e\le10^{-8}$, which is the range of the figure, the ratio $|[H]|/\phi_s(D_e)$ varies by less than $3\%$ for $s=2,3,4$ and by a
factor $1.8$ for $s=1$, where it converges logarithmically as $D_e\to0$. For $s=0$, the ratio $|[H]|/\eta$ agrees with
$\Lam(\bar y(K,\tau))=168.86476$ to within a relative difference of $10^{-5}$.

\subsection{The Transfer}

\Cref{tab:transfer} checks \Cref{lem:transfer} on $K\in\{4,5\}$, $s\in\{0,1,2,K-1\}$ and
$\tau\in\{0.8,0.3\}/(K-s)$, in $50$-digit arithmetic with seed $20260928$. The point $\theta$ has $s$ small
coordinates, each drawn equal to $0$, to $\tau10^{-U}$ with $U$ uniform on $(0,15)$, or to $\tau U'$ with $U'$ uniform
on $(0.5,1)$, with probabilities $0.2$, $0.6$ and $0.2$, respectively. The small coordinates are rescaled jointly when their sum exceeds
$0.9(1-(K-s)\tau)$. The $K-s$ large coordinates are $\tau$ plus uniformly random shares of the remaining mass,
and when $K-s\ge2$ one of them equals $\tau$ with probability $1/3$. The point $\theta'$ comes from eight families: multiplicative and
additive perturbations of $\theta$ (4000 and 3000 pairs); moves of $e_K$ alone, small (4000) and large
(2000); independent Dirichlet draws with concentration in $\{0.05,0.2,1,5,50\}$ (4000); independent
draws of the construction of $\theta$ (2000); moves of one coordinate that is at least $\tau/2$, the smallest such coordinate or a uniformly chosen
one, to $0$ or below $\tau/2$, with the removed mass added to another coordinate (3000); and
Nelder--Mead searches for the $\theta'$ with $s+1$ coordinates below $\tau/2$ that minimizes $D_e$ (300). Each row
of \Cref{tab:transfer} has $22300$ pairs, or $22000$ where the search family is vacuous because $s=K-1$
and $K\tau/2\le1$. The two-sided bound is checked with $\tau'=\tau/2$ and with each pair's largest admissible
$\tau'$. There
are no violations in $356200$ pairs: none of the localization among the $150630$ pairs with
$D_e<c_0$, and none of the two-sided bound among the $305588$ pairs with both points in
$\Theta_{K,s}(\tau/2)$. The localization holds with a wide margin: on the tested pairs, every $\theta'$ outside
$\Theta_{K,s}(\tau/2)$ has $D_e\ge3.1\times10^4\,c_0$.

\begin{table}[!htbp]
  \centering
  \caption{Checks of the One-Sided Local Transfer}
  \label{tab:transfer}
  \scriptsize
  \setlength{\tabcolsep}{3.2pt}
\begin{tabular}{@{}rrrrrrrrrrr@{}}
\toprule
$K$ & $s$ & $\tau$ & $c_0$ & pairs & $D_e<c_0$ & viol. & one-sided & both in $\Theta_{K,s}(\tfrac\tau2)$ & two-sided & min $D_e/c_0$ \\
\midrule
$4$ & $0$ & $0.2$ & $1.00\times10^{-8}$ & $22300$ & $8863$ & $0$ & $2.4\times10^{-3}$ & $16648$ & $2.4\times10^{-3}$ & $5.5\times10^{5}$ \\
$4$ & $0$ & $0.075$ & $1.98\times10^{-10}$ & $22300$ & $7657$ & $0$ & $1.9\times10^{-4}$ & $17032$ & $2.0\times10^{-4}$ & $1.1\times10^{6}$ \\
$4$ & $1$ & $0.2667$ & $3.16\times10^{-8}$ & $22300$ & $9385$ & $0$ & $0.071$ & $18568$ & $0.096$ & $4.7\times10^{4}$ \\
$4$ & $1$ & $0.1$ & $6.25\times10^{-10}$ & $22300$ & $8311$ & $0$ & $0.016$ & $19041$ & $0.084$ & $1.6\times10^{5}$ \\
$4$ & $2$ & $0.4$ & $1.60\times10^{-7}$ & $22300$ & $10244$ & $0$ & $0.180$ & $19687$ & $0.180$ & $3.1\times10^{4}$ \\
$4$ & $2$ & $0.15$ & $3.16\times10^{-9}$ & $22300$ & $9712$ & $0$ & $0.093$ & $20572$ & $0.093$ & $1.2\times10^{6}$ \\
$4$ & $3$ & $0.8$ & $2.56\times10^{-6}$ & $22300$ & $11396$ & $0$ & $0.133$ & $20494$ & $0.334$ & $7.2\times10^{4}$ \\
$4$ & $3$ & $0.3$ & $5.06\times10^{-8}$ & $22000$ & $11030$ & $0$ & $0.096$ & $22000$ & $0.252$ & $\text{--}$ \\
$5$ & $0$ & $0.16$ & $1.32\times10^{-11}$ & $22300$ & $8556$ & $0$ & $6.3\times10^{-4}$ & $16506$ & $6.3\times10^{-4}$ & $1.2\times10^{8}$ \\
$5$ & $0$ & $0.06$ & $9.77\times10^{-14}$ & $22300$ & $7533$ & $0$ & $3.2\times10^{-5}$ & $17005$ & $3.2\times10^{-5}$ & $9.9\times10^{8}$ \\
$5$ & $1$ & $0.2$ & $4.02\times10^{-11}$ & $22300$ & $8996$ & $0$ & $0.031$ & $18153$ & $0.051$ & $4.8\times10^{6}$ \\
$5$ & $1$ & $0.075$ & $2.98\times10^{-13}$ & $22300$ & $7903$ & $0$ & $5.6\times10^{-3}$ & $18578$ & $0.060$ & $2.0\times10^{7}$ \\
$5$ & $2$ & $0.2667$ & $1.69\times10^{-10}$ & $22300$ & $9977$ & $0$ & $0.117$ & $19295$ & $0.117$ & $8.0\times10^{6}$ \\
$5$ & $2$ & $0.1$ & $1.26\times10^{-12}$ & $22300$ & $9101$ & $0$ & $0.053$ & $19890$ & $0.053$ & $9.7\times10^{7}$ \\
$5$ & $4$ & $0.8$ & $4.12\times10^{-8}$ & $22300$ & $11119$ & $0$ & $0.066$ & $20119$ & $0.306$ & $4.4\times10^{6}$ \\
$5$ & $4$ & $0.3$ & $3.05\times10^{-10}$ & $22000$ & $10847$ & $0$ & $0.053$ & $22000$ & $0.255$ & $\text{--}$ \\
\midrule
\multicolumn{4}{@{}l}{total} & $356200$ & $150630$ & $0$ & $0.180$ & $305588$ & $0.334$ & $3.1\times10^{4}$ \\
\bottomrule
\end{tabular}

  \par\smallskip
  \begin{minipage}{0.96\textwidth}\footnotesize
    Pairs $(\theta,\theta')$ with $\theta\in\Theta_{K,s}(\tau)$ and $\theta'\in\Theta_K$, in
      $50$-digit arithmetic; sampling and pair counts in the text, $D_e$ and $c_0=(\tau/(4(K+1)))^K$ as in
      \Cref{lem:transfer}. Columns: the number of pairs with $D_e<c_0$; violations of the
      localization and of the two-sided bound, counted over the pairs to which each
      applies; the largest ratio of $|H(\theta)-H(\theta')|$ to the two-sided bound with $\tau'=\tau/2$ over the
      pairs with $D_e<c_0$ (one-sided), and the number of pairs with both points in $\Theta_{K,s}(\tau/2)$ with
      the largest ratio over them (two-sided); the smallest $D_e/c_0$ among the $\theta'$ outside
      $\Theta_{K,s}(\tau/2)$ (a dash where none occurred). Rounded to the digits shown.
  \end{minipage}
\end{table}

\subsection{Variances of the Star and All-Tree Averages}
\label{sec:num-variance}

\Cref{tab:variance-exact} and \Cref{fig:variance}(a, b) give exact variances at $n\in\{K,K+1\}$, computed in
rational arithmetic by two routes that agree exactly in all $80$ settings: a decomposition over pairs of trees or
stars, and a direct enumeration of all verdict outcomes. As $\Delta$ decreases, the
ratio of the star to the all-tree variance approaches $K^{K-3}$ from below, and both normalized
variances approach $1$. At $\Delta=10^{-3}$ the ratio agrees with $K^{K-3}$ to within a relative difference of
$5.3\times10^{-6}$.

\begin{table}[!htbp]
  \centering
  \caption{Exact Variances of the All-Tree and Star Averages at Fixed $n$}
  \label{tab:variance-exact}
  \scriptsize
  \begin{tabular}{@{}cccccccccc@{}}
\toprule
 & & & \multicolumn{3}{c}{$\mathrm{Var}(\hat T^\star_K)/\mathrm{Var}(\hat T_K)$} & \multicolumn{3}{c}{$\nu_T$} & $\nu_S$ \\
\cmidrule(lr){4-6}\cmidrule(lr){7-9}\cmidrule(l){10-10}
$K$ & $n$ & $q$ & $\Delta=10^{-1}$ & $10^{-2}$ & $10^{-3}$ & $10^{-1}$ & $10^{-2}$ & $10^{-3}$ & $10^{-3}$ \\
\midrule
$4$ & $4$ & $0.3$ & $3.871281$ & $3.998734$ & $3.999987$ & $1.222173$ & $1.017574$ & $1.001719$ & $1.0017154$ \\
$4$ & $4$ & $0.5$ & $3.895025$ & $3.998932$ & $3.999989$ & $1.027125$ & $1.000267$ & $1.000003$ & $1.0000000$ \\
$4$ & $5$ & $0.3$ & $3.813419$ & $3.998108$ & $3.999981$ & $1.258022$ & $1.017878$ & $1.001722$ & $1.0017168$ \\
$4$ & $5$ & $0.5$ & $3.843904$ & $3.998404$ & $3.999984$ & $1.053407$ & $1.000519$ & $1.000005$ & $1.0000012$ \\
$5$ & $5$ & $0.3$ & $24.020839$ & $24.990460$ & $24.999905$ & $1.232056$ & $1.017651$ & $1.001719$ & $1.0017155$ \\
$5$ & $5$ & $0.5$ & $24.201749$ & $24.991943$ & $24.999919$ & $1.032984$ & $1.000322$ & $1.000003$ & $1.0000000$ \\
$5$ & $6$ & $0.3$ & $23.677714$ & $24.986789$ & $24.999868$ & $1.262692$ & $1.017907$ & $1.001722$ & $1.0017165$ \\
$5$ & $6$ & $0.5$ & $23.901186$ & $24.988848$ & $24.999888$ & $1.055180$ & $1.000534$ & $1.000005$ & $1.0000009$ \\
\bottomrule
\end{tabular}

  \par\smallskip
  \begin{minipage}{0.96\textwidth}\footnotesize
    Exact variances of the all-tree average $\hat T_K$ and the star average $\hat T^\star_K$ at
      fixed $n\in\{K,K+1\}$ as $\Delta$ decreases, in exact rational arithmetic, for
      $\theta=(0.4,0.3,0.2,0.1)$ ($K=4$) and $\theta=(0.3,0.25,0.2,0.15,0.1)$ ($K=5$), with the normalizations
      $\nu_T:=\mathrm{Var}(\hat T_K)\binom nKK^{K-2}\Delta^{2(K-1)}/(q(1-q))^{K-1}$ and
      $\nu_S:=\mathrm{Var}(\hat T^\star_K)K\binom nK\Delta^{2(K-1)}/(q(1-q))^{K-1}$, both of which tend to $1$ as
      $\Delta\to0$ at fixed $n$. Rounded to the digits shown.
  \end{minipage}
\end{table}

\Cref{tab:variance-mc} and \Cref{fig:variance}(c) report a Monte Carlo
experiment for $K=4$ and $\theta=(0.4,0.3,0.2,0.1)$ with $1000$ replicates per cell and an independent random
stream per cell, derived from seed $20260928$. In these cells $N=n\delta^3$ ranges from $0.025$ to $6.4$. The
standard errors of the ratio in \Cref{tab:variance-mc}, the error bars of \Cref{fig:variance}(c) and the ratio
$z$-scores use the delta method, and a bootstrap with $2000$ resamples (seed $20260929$) gives standard errors
within $5\%$ of these. The estimates agree with the exact finite-$n$ values: the $36$ variance $z$-scores have mean square
$1.01$ and the $18$ ratio $z$-scores mean square $1.31$. As $\delta$ decreases the exact ratio rises toward $4$, with
corrections of order $\delta$: at $\delta=0.05$ it is between $3.71$ and $3.76$.

\begin{table}[!htbp]
  \centering
  \caption{Monte Carlo Variances of the All-Tree and Star Averages for $K=4$}
  \label{tab:variance-mc}
  \scriptsize
  \begin{tabular}{@{}ccccccccc@{}}
\toprule
 & & & \multicolumn{2}{c}{$\mathrm{Var}(\hat T^\star_4)/\mathrm{Var}(\hat T_4)$} & \multicolumn{2}{c}{$\tilde\nu_T$} & \multicolumn{2}{c}{$\tilde\nu_S$} \\
\cmidrule(lr){4-5}\cmidrule(lr){6-7}\cmidrule(l){8-9}
$q$ & $n$ & $\delta$ & Monte Carlo & exact & Monte Carlo & exact & Monte Carlo & exact \\
\midrule
$0.3$ & $200$ & $0.2$ & $3.227\,(0.164)$ & $3.069$ & $1.737\,(0.082)$ & $1.854$ & $1.359\,(0.067)$ & $1.380$ \\
$0.3$ & $200$ & $0.1$ & $3.132\,(0.175)$ & $3.467$ & $1.435\,(0.076)$ & $1.416$ & $1.090\,(0.054)$ & $1.191$ \\
$0.3$ & $200$ & $0.05$ & $3.258\,(0.184)$ & $3.713$ & $1.416\,(0.071)$ & $1.222$ & $1.119\,(0.055)$ & $1.101$ \\
$0.3$ & $400$ & $0.2$ & $2.769\,(0.140)$ & $3.064$ & $1.972\,(0.092)$ & $1.807$ & $1.345\,(0.066)$ & $1.363$ \\
$0.3$ & $400$ & $0.1$ & $3.410\,(0.182)$ & $3.464$ & $1.496\,(0.070)$ & $1.383$ & $1.257\,(0.061)$ & $1.180$ \\
$0.3$ & $400$ & $0.05$ & $3.673\,(0.195)$ & $3.711$ & $1.227\,(0.059)$ & $1.196$ & $1.110\,(0.052)$ & $1.093$ \\
$0.3$ & $800$ & $0.2$ & $3.100\,(0.149)$ & $3.061$ & $1.768\,(0.083)$ & $1.778$ & $1.360\,(0.062)$ & $1.351$ \\
$0.3$ & $800$ & $0.1$ & $3.466\,(0.174)$ & $3.462$ & $1.355\,(0.057)$ & $1.364$ & $1.166\,(0.050)$ & $1.172$ \\
$0.3$ & $800$ & $0.05$ & $3.885\,(0.218)$ & $3.710$ & $1.226\,(0.057)$ & $1.181$ & $1.181\,(0.055)$ & $1.087$ \\
$0.5$ & $200$ & $0.2$ & $3.059\,(0.159)$ & $3.176$ & $1.710\,(0.087)$ & $1.632$ & $1.269\,(0.062)$ & $1.257$ \\
$0.5$ & $200$ & $0.1$ & $3.377\,(0.191)$ & $3.538$ & $1.341\,(0.071)$ & $1.308$ & $1.099\,(0.056)$ & $1.123$ \\
$0.5$ & $200$ & $0.05$ & $3.544\,(0.196)$ & $3.755$ & $1.208\,(0.059)$ & $1.164$ & $1.038\,(0.047)$ & $1.060$ \\
$0.5$ & $400$ & $0.2$ & $2.995\,(0.169)$ & $3.174$ & $1.594\,(0.086)$ & $1.612$ & $1.175\,(0.055)$ & $1.260$ \\
$0.5$ & $400$ & $0.1$ & $3.648\,(0.203)$ & $3.537$ & $1.263\,(0.060)$ & $1.291$ & $1.134\,(0.055)$ & $1.124$ \\
$0.5$ & $400$ & $0.05$ & $4.012\,(0.215)$ & $3.754$ & $1.067\,(0.050)$ & $1.148$ & $1.054\,(0.050)$ & $1.061$ \\
$0.5$ & $800$ & $0.2$ & $3.057\,(0.160)$ & $3.173$ & $1.634\,(0.078)$ & $1.602$ & $1.240\,(0.057)$ & $1.262$ \\
$0.5$ & $800$ & $0.1$ & $3.562\,(0.182)$ & $3.536$ & $1.277\,(0.056)$ & $1.282$ & $1.129\,(0.053)$ & $1.125$ \\
$0.5$ & $800$ & $0.05$ & $4.031\,(0.212)$ & $3.753$ & $1.159\,(0.053)$ & $1.139$ & $1.159\,(0.052)$ & $1.061$ \\
\bottomrule
\end{tabular}

  \par\smallskip
  \begin{minipage}{0.96\textwidth}\footnotesize
    Monte Carlo variances of the all-tree average $\hat T_4$ and the star average $\hat T^\star_4$
      for $K=4$ and $\theta=(0.4,0.3,0.2,0.1)$, at $\delta=n\Delta^2$, $1000$ replicates per row, with standard
      errors in parentheses (from the empirical fourth central moment for the variances and by the delta
      method for the ratio), against the exact finite-$n$ values. Normalizations:
      $\tilde\nu_T:=N\mathrm{Var}(\hat T_4)/[(4!/4^2)(q(1-q))^3]$ with $N=n^4\Delta^6$, whose limit as $\Delta\to0$
      at fixed $n$ is $n^4/(24\binom n4)$, about $1.0306$ at $n=200$ and $1.0075$ at $n=800$, not $1$; and
      $\tilde\nu_S:=4\binom n4\Delta^6\mathrm{Var}(\hat T^\star_4)/(q(1-q))^3$, whose limit as $\Delta\to0$ at
      fixed $n$ is $1$. Rounded to the digits shown.
  \end{minipage}
\end{table}

\begin{figure}[!htbp]
  \centering
  \includegraphics[width=\linewidth]{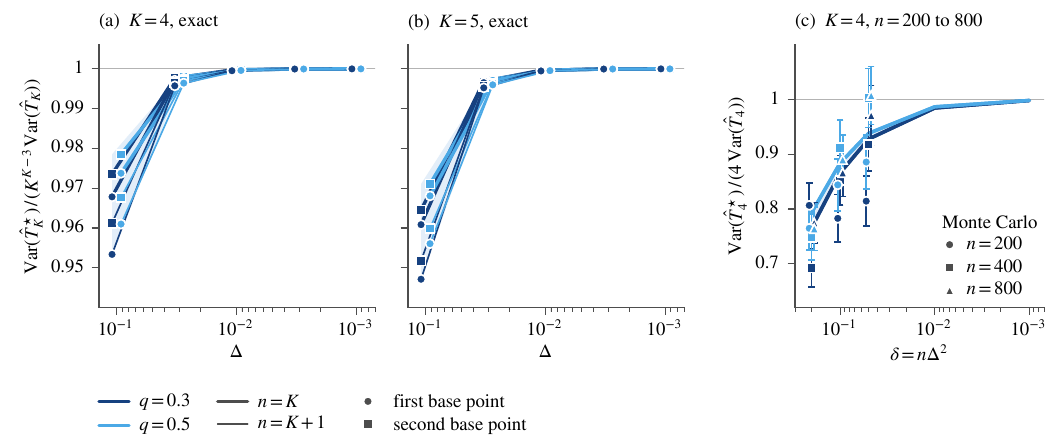}
  \caption{Star versus all-tree averages of the centered verdicts, with $q=0.3$ in dark blue and $q=0.5$ in
  light blue. The gray line marks the limit $1$, and the shading spans the exact values drawn in each panel.
  \textbf{(a, b)} The exact ratio $\mathrm{Var}(\hat T^\star_K)/(K^{K-3}\mathrm{Var}(\hat T_K))$
  for $K=4$ and $K=5$ at fixed $n=K$ (thick lines) and $n=K+1$ (thin lines) as $\Delta$ decreases, in exact rational
  arithmetic, at two base points per $K$: circles $\theta=(0.4,0.3,0.2,0.1)$ and
  $(0.3,0.25,0.2,0.15,0.1)$, squares $\theta=(0.35,0.3,0.2,0.15)$ and $(0.28,0.24,0.2,0.16,0.12)$, with the $q=0.3$
  series offset horizontally to the left of the $q=0.5$ series for legibility.
  \textbf{(c)} $K=4$, $\theta=(0.4,0.3,0.2,0.1)$: the same ratio against $\delta=n\Delta^2$, exact finite-$n$
  values at $n=200$ and $n=800$ (lines, which coincide at this scale) and Monte Carlo estimates with $1000$
  replicates each and delta-method standard errors (circles $n=200$, squares $n=400$, triangles $n=800$, offset
  horizontally in that order for legibility). All panels check the variance constant behind
  the loss factor $K^{K-3}$.}
  \label{fig:variance}
\end{figure}

\subsection{Fisher Information Along the Interior Fiber}

\Cref{tab:fisher} reports the Fisher information of the verdict graph at $u=0$ on the interior fiber, for $n\in\{K,K+1\}$, four base points and $q\in\{0.3,0.5\}$, computed by summing over all
$2^{\binom n2}$ verdict graphs without any expansion in $\Delta$. The ratio $I(0)/L_0$ approaches $1$ as $\Delta$ decreases, and
the deviation is proportional to $\Delta$ for $q=0.3$ and of smaller order for $q=0.5$.

\begin{table}[!htbp]
  \centering
  \caption{Fisher Information Along the Interior Fiber}
  \label{tab:fisher}
  \scriptsize
  \setlength{\tabcolsep}{4pt}
\begin{tabular}{@{}cccccccc@{}}
\toprule
 & & & & \multicolumn{3}{c}{$I(0)/L_0$} & \\
\cmidrule(lr){5-7}
$K$ & $\theta_0$ & $n$ & $q$ & $\Delta=10^{-2}$ & $10^{-3}$ & $10^{-4}$ & $(I(0)/L_0-1)/\Delta$ at $\Delta=10^{-4}$ \\
\midrule
$4$ & $(0.4,0.3,0.2,0.1)$ & $4$ & $0.3$ & $0.9831455$ & $0.9982887$ & $0.9998286$ & $-1.7140$ \\
$4$ & $(0.4,0.3,0.2,0.1)$ & $4$ & $0.5$ & $1.0000900$ & $1.0000009$ & $1.0000000$ & $+0.0001$ \\
$4$ & $(0.4,0.3,0.2,0.1)$ & $5$ & $0.3$ & $0.9833249$ & $0.9982905$ & $0.9998286$ & $-1.7138$ \\
$4$ & $(0.4,0.3,0.2,0.1)$ & $5$ & $0.5$ & $1.0002446$ & $1.0000024$ & $1.0000000$ & $+0.0002$ \\
$4$ & $(0.35,0.3,0.2,0.15)$ & $4$ & $0.3$ & $0.9845438$ & $0.9984312$ & $0.9998429$ & $-1.5712$ \\
$4$ & $(0.35,0.3,0.2,0.15)$ & $4$ & $0.5$ & $1.0000872$ & $1.0000009$ & $1.0000000$ & $+0.0001$ \\
$4$ & $(0.35,0.3,0.2,0.15)$ & $5$ & $0.3$ & $0.9847728$ & $0.9984335$ & $0.9998429$ & $-1.5709$ \\
$4$ & $(0.35,0.3,0.2,0.15)$ & $5$ & $0.5$ & $1.0002839$ & $1.0000028$ & $1.0000000$ & $+0.0003$ \\
$5$ & $(0.3,0.25,0.2,0.15,0.1)$ & $5$ & $0.3$ & $0.9831627$ & $0.9982888$ & $0.9998286$ & $-1.7140$ \\
$5$ & $(0.3,0.25,0.2,0.15,0.1)$ & $5$ & $0.5$ & $1.0001132$ & $1.0000011$ & $1.0000000$ & $+0.0001$ \\
$5$ & $(0.3,0.25,0.2,0.15,0.1)$ & $6$ & $0.3$ & $0.9834181$ & $0.9982914$ & $0.9998286$ & $-1.7137$ \\
$5$ & $(0.3,0.25,0.2,0.15,0.1)$ & $6$ & $0.5$ & $1.0003327$ & $1.0000033$ & $1.0000000$ & $+0.0003$ \\
$5$ & $(0.28,0.24,0.2,0.16,0.12)$ & $5$ & $0.3$ & $0.9838487$ & $0.9983574$ & $0.9998355$ & $-1.6454$ \\
$5$ & $(0.28,0.24,0.2,0.16,0.12)$ & $5$ & $0.5$ & $1.0001238$ & $1.0000012$ & $1.0000000$ & $+0.0001$ \\
$5$ & $(0.28,0.24,0.2,0.16,0.12)$ & $6$ & $0.3$ & $0.9841283$ & $0.9983603$ & $0.9998355$ & $-1.6451$ \\
$5$ & $(0.28,0.24,0.2,0.16,0.12)$ & $6$ & $0.5$ & $1.0003638$ & $1.0000036$ & $1.0000000$ & $+0.0004$ \\
\bottomrule
\end{tabular}

  \par\smallskip
  \begin{minipage}{0.96\textwidth}\footnotesize
    Fisher information $I(u)$ of the verdict graph with respect to the coordinate
      $u=e_K(\theta)-e_K(\theta_0)$, $e_K(\theta)=\prod_c\theta_c$, along the interior fiber, at $u=0$, computed by summing over all verdict graphs in extended (long double)
      precision at fixed $n\in\{K,K+1\}$, relative to $L_0=\binom nKK^K\Delta^{2K-2}/(q(1-q))^{K-1}$, and the
      first-order coefficient $(I(0)/L_0-1)/\Delta$ at $\Delta=10^{-4}$. Rounded to the digits shown.
  \end{minipage}
\end{table}

\section{Conclusion}
\label{sec:conclusion}

Observed through a weak pairwise judge with a known channel, the semantic entropy is as hard to estimate as its rare
meanings make it. With at most $s$ meanings below a fixed probability, the minimax squared risk is of order $\psi_s(N)$ on
every fixed polynomial window, where $N=n^K\Delta^{2K-2}$ is the scale of the last power sum. For every $K\ge3$ the
entropy is uniformly consistently estimable exactly when $N\to\infty$. One all-tree moment estimator attains the rate of every class
without knowing $s$ or $\tau$. At interior points with pairwise distinct probabilities it attains the sharp local constant,
which an average over stars multiplies by $K^{K-3}$.

Adaptation rests on a one-sided local transfer, which bounds the
entropy error of the projected estimate even when the estimate leaves the class. The exponent $1/s$ of the transfer
cannot be improved for $s\ge2$. The matching lower bounds rest on an information inequality for
finite-order sweeps whose coefficients are controlled uniformly by Cauchy estimates. The inequality needs no single
macroscopic meaning and so reaches the mixed faces of the simplex. In practice, these results show how many answers a weak judge
needs: their number must grow faster than $\Delta^{-(2K-2)/K}$, and, for a given accuracy, far fewer suffice when few
meanings can be rare.

The polynomial windows cover $\Delta=n^{-\gamma}$ for every fixed $\gamma>1/2$. Whether the matching lower bounds hold on the
whole window $n\Delta^2\le c_1$, where $n\Delta^2$ may decay more slowly than every power of $n$ or stay bounded away from zero,
is open in general. Natural extensions let the number of meanings grow with $n$, calibrate the channel
from data, and allow heterogeneous or dependent verdicts.

\section*{Acknowledgment}
This work was supported by the Young Doctoral Talent Incubation Program of the Second Affiliated Hospital, Army Medical
University, under Grant 2025YQB034.

\bibliographystyle{IEEEtran}
\bibliography{references}

\clearpage
\appendix
\phantomsection\addcontentsline{toc}{section}{Appendix}%
\noindent{\Large\bfseries Appendix}\par
\section{Proofs of the General Tools}
\label{app:tools}

This appendix proves the general tools behind the lower bounds. They extend to $K$ meanings the tools of the concurrent
analyses of at most three \cite{anonymous2026three} and at most four meanings
\cite{anonymous2026four}: the bounds on the edge weights, the Glauber identity
and the score representation, the Hellinger path bound and the two-point bound, the sweeps, the
locality of their coefficients, the exact identity for the full sweep, the local decomposition of
the sweep with its derivative bounds, and the integral representation of the entropy. In this appendix the sweeps
and the local decomposition cover every order, whereas the three-meaning analysis uses a single sweep, and the
four-meaning analysis writes the local decomposition and its derivative bounds for the sweep of order three. We
state and prove them here so that the paper is self-contained.
Throughout this appendix $q\in[q_0,1-q_0]$ with $q_0\in(0,1/2]$ and $0<\Delta\le q_0/2$, so that $p\le1-q_0/2$.

\subsection{Edge Weights, Fields and the Score}

Put $\lambda:=\log(p/q)>0$, $\nu:=\log\frac{1-p}{1-q}<0$ and $J_{ij}:=\nu+(\lambda-\nu)A_{ij}$ for $i\ne j$,
the log-likelihood ratio of a verdict. Since $\Bern(a;p)/\Bern(a;q)=\exp(\nu+(\lambda-\nu)a)$ for
$a\in\{0,1\}$,
\begin{equation}
  \label{eq:a:likelihood}
  \Prob(A\mid Z)=\prod_{i<j}\Bern(A_{ij};q)\cdot\exp\Big(\sum_{i<j}J_{ij}W_{ij}\Big).
\end{equation}
The field of meaning $c$ at vertex $i$ is $F_{ic}:=\sum_{j\ne i}J_{ij}\Ind{Z_j=c}$, so
$F_i=\sum_{j\ne i}J_{ij}\mathbf e_{Z_j}\in\Real^K$, and $\pi_c(F):=\theta_ce^{F_c}/\sum_b\theta_be^{F_b}$ is the
tilted law.

\begin{lemma}[Edge Weights]
  \label{lem:weights}
  \begin{enumerate}[label=(\roman*),leftmargin=*]
    \item $0<\lambda\le\Delta/q_0$, $0<-\nu\le2\Delta/q_0$ and $|J_{ij}|\le2\Delta/q_0$.
    \item Given $Z$, the weights $J_{ij}$, $i<j$, are conditionally independent,
      $|\Exp[J_{ij}\mid Z]|\le2\Delta^2/q_0^2$ and $\Var(J_{ij}\mid Z)\le3\Delta^2/q_0^2$.
    \item For every integer $a\ge1$,
      \begin{equation}
        \label{eq:a:moments}
        \big|\Exp[J_{ij}^a\mid Z]\big|\le\Big(\frac{2\Delta}{q_0}\Big)^{c(a)},
        \qquad c(1):=2,\quad c(a):=a\ \ (a\ge2).
      \end{equation}
  \end{enumerate}
\end{lemma}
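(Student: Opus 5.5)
The three claims about $J_{ij}=\nu+(\lambda-\nu)A_{ij}$ reduce to elementary bounds on $\lambda=\log(p/q)$ and $\nu=\log\frac{1-p}{1-q}$, together with the fact that $J_{ij}$ is a two-point variable whose mean is a Kullback--Leibler-type quantity. I will take them in order.

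For (i), I use $\log(1+x)\le x$ to get $0<\lambda=\log(1+\Delta/q)\le\Delta/q\le\Delta/q_0$. Next, $-\nu=-\log(1-\Delta/(1-q))$. Since $1-q\ge q_0$ and $\Delta\le q_0/2$, we have $x:=\Delta/(1-q)\le1/2$, and $-\log(1-x)\le 2x$ on $[0,1/2]$. This gives $-\nu\le2\Delta/q_0$. Because $J_{ij}\in\{\nu,\lambda\}$, it follows that $|J_{ij}|\le\max\{\lambda,-\nu\}\le2\Delta/q_0$.

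For (ii), conditional independence is immediate: given $Z$, the $A_{ij}$ are independent by \eqref{eq:model}, and $J_{ij}$ is a function of $A_{ij}$ alone. For the mean, given $Z$ we have $A_{ij}\sim\Bern(r)$ with $r\in\{q,p\}$. If $r=q$, then $\Exp J=q\lambda+(1-q)\nu=-\mathrm{KL}(\Bern(q)\|\Bern(p))$. If $r=p$, then $\Exp J=\mathrm{KL}(\Bern(p)\|\Bern(q))$. Both divergences are bounded by the $\chi^2$ divergence, namely $\Delta^2/(p(1-p))$ and $\Delta^2/(q(1-q))$ respectively. Since $q,p\in[q_0,1-q_0/2]$, these products are at least $q_0\cdot q_0/2$, so each divergence is at most $2\Delta^2/q_0^2$. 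For the variance, $\Var(J\mid Z)=(\lambda-\nu)^2r(1-r)$. Here $\lambda-\nu=\log\frac{p(1-q)}{q(1-p)}$, which is at most $\Delta/q+\Delta/(1-p)$ by the mean value theorem for the logit. Bounding $r(1-r)$ carefully should give the constant $3$. Of the whole proof, this constant bookkeeping is the only mildly delicate step. If the crude bound $(3\Delta/q_0)^2/4$ overshoots, I will instead use $r(1-r)\le r$ and $1-p\ge q_0/2$ separately in each case.

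For (iii), when $a\ge2$ I bound $|\Exp J^a|\le\Exp|J|^a\le(2\Delta/q_0)^a$ using (i). When $a=1$, the bound is exactly the mean bound of (ii): $2\Delta^2/q_0^2\le(2\Delta/q_0)^2$.
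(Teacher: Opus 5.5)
Your proposal is correct and follows essentially the same route as the paper. It bounds $\lambda$ and $\nu$ by elementary logarithm inequalities, identifies $\Exp[J_{ij}\mid Z]$ with $\pm$ a Bernoulli Kullback--Leibler divergence bounded by the $\chi^2$ divergence using $p(1-p),\,q(1-q)\ge q_0^2/2$, and derives (iii) from (i) and (ii). The variance step you left open closes with the crude bound, exactly as in the paper: $\lambda-\nu\le3\Delta/q_0$ and $r(1-r)\le1/4$ give $\tfrac94\Delta^2/q_0^2\le3\Delta^2/q_0^2$, so no case-by-case fallback is needed.
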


\begin{proof}
  (i) $\lambda=\log(1+\Delta/q)\le\Delta/q\le\Delta/q_0$. With $x:=\Delta/(1-q)\in(0,1)$ and
  $\log(1-x)\ge-x/(1-x)$ we get $\nu\ge-\Delta/(1-p)\ge-2\Delta/q_0$. Hence
  $|J_{ij}|\le\max\{\lambda,-\nu\}\le2\Delta/q_0$.

  (ii) Each $J_{ij}$ is a function of $A_{ij}$ alone, whence the conditional independence. Given $Z$,
  $\Exp[J_{ij}\mid Z]=\nu+(\lambda-\nu)r$ with $r\in\{q,p\}$; for $r=p$ this is the Kullback--Leibler
  divergence $D(p\,\|\,q)$ and for $r=q$ it is $-D(q\,\|\,p)$. By $D(x\,\|\,y)\le(x-y)^2/(y(1-y))$,
  $q(1-q)\ge q_0/2$ and $p(1-p)\ge q_0\cdot q_0/2$, both have modulus at most $2\Delta^2/q_0^2$. Finally,
  $\Var(J_{ij}\mid Z)=(\lambda-\nu)^2r(1-r)\le\frac14(3\Delta/q_0)^2\le3\Delta^2/q_0^2$.

  (iii) For $a\ge2$ use (i); for $a=1$ use (ii) and $2\Delta^2/q_0^2\le(2\Delta/q_0)^2$.
\end{proof}

\begin{lemma}[Glauber Identity]
  \label{lem:glauber}
  Let $\theta_c>0$ for all $c$ and $U\subseteq[n]$. Given $(A,Z_{-U})$, the conditional law of $Z_U$ is
  proportional to
  \begin{equation}
    \label{eq:a:blockweight}
    z'_U\ \mapsto\ \prod_{k\in U}\theta_{z'_k}\cdot
    \exp\Big(\sum_{e=\{k,l\}:\,e\cap U\ne\varnothing}J_e\Ind{z_k=z_l}\Big),
    \qquad z:=(z'_U,Z_{-U}).
  \end{equation}
  In particular $\Prob(Z_i=c\mid A,Z_{-i})=\pi_c(F_i)$.
\end{lemma}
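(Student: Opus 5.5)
The plan is to compute the conditional law of $Z_U$ directly from Bayes' rule, using the likelihood factorization \eqref{eq:a:likelihood} and the independence of the labels.

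First, since $Z_1,\dots,Z_n$ are i.i.d.\ from $\theta$, the joint law of $(Z,A)$ is $\prod_k\theta_{z_k}\cdot\Prob(A\mid Z=z)$. By \eqref{eq:a:likelihood},
\[
\Prob(A\mid Z=z)=\prod_{i<j}\Bern(A_{ij};q)\cdot\exp\Big(\sum_{i<j}J_{ij}\Ind{z_i=z_j}\Big).
\]
Then condition on $(A,Z_{-U})$ and regard the joint law as a function of $z'_U$ alone, with $z=(z'_U,Z_{-U})$. The factor $\prod_{i<j}\Bern(A_{ij};q)$ does not involve the labels. The factor $\prod_{k\notin U}\theta_{Z_k}$ is fixed by $Z_{-U}$. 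In the exponent, split the edges into those with $e\cap U=\varnothing$ and those with $e\cap U\ne\varnothing$. For the first group, $\Ind{z_k=z_l}$ depends only on $Z_{-U}$, so these terms give a constant factor.

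What remains as a function of $z'_U$ is exactly \eqref{eq:a:blockweight}. Dividing by its sum over $z'_U\in[K]^U$ gives the conditional law. This sum is positive, because $\theta_c>0$ and the exponentials are positive. The same positivity guarantees that $\Prob(A,Z_{-U})>0$, so the conditioning is well defined for every $(A,Z_{-U})$.

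For the special case $U=\{i\}$, set $z'_i=c$. The edges meeting $U$ are $\{i,j\}$ with $j\ne i$, and $\Ind{c=Z_j}$ picks out $\sum_{j\ne i}J_{ij}\Ind{Z_j=c}=F_{ic}$. The weight of $c$ is therefore $\theta_ce^{F_{ic}}$. After normalization this gives $\pi_c(F_i)$.

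There is no real obstacle here. The only points needing care are the positivity needed for conditioning and the bookkeeping that separates edges inside $[n]\setminus U$ from edges meeting $U$.
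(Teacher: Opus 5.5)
Your proposal is correct and follows essentially the same route as the paper: factor the joint law via \eqref{eq:a:likelihood}, observe that the Bernoulli factors, the prior weights of $Z_{-U}$ and the edges not meeting $U$ are constant in $z'_U$, normalize, and specialize to $U=\{i\}$ to get $\theta_c e^{F_{ic}}$. Your positivity remark is a small addition the paper leaves implicit; it also uses $0<q<p<1$, which holds under the appendix's standing assumptions.
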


\begin{proof}
  By \eqref{eq:a:likelihood}, $\Prob(Z_U=z'_U,A\mid Z_{-U})$ equals $\prod_{k\in U}\theta_{z'_k}$ times
  $\prod_{i<j}\Bern(A_{ij};q)\exp(\sum_eJ_eW_e(z))$, and the factors of the edges that do not meet $U$
  and the Bernoulli factors do not depend on $z'_U$. Normalizing gives the claim; for $U=\{i\}$ the
  weight of $c$ is $\theta_ce^{F_{ic}}$.
\end{proof}

\begin{lemma}[Score Representation]
  \label{lem:score}
  Let $u\mapsto\theta(u)\in\Simplex$ be continuously differentiable on an interval with all
  $\theta_c(u)>0$. Write $P_u$ for the law of $A$ under $\theta(u)$, $\Sobs(A):=\partial_u\log P_u(A)$,
  $\ell_c:=\dot\theta_c/\theta_c$ and
  \[
    \Phi(F):=\frac{\sum_c\dot\theta_ce^{F_c}}{\sum_c\theta_ce^{F_c}}=\sum_c\pi_c(F)\,\ell_c,
    \qquad \SG:=\sum_{i=1}^n\Phi(F_i).
  \]
  Then $\Sobs=\Exp_u[\sum_i\ell_{Z_i}\mid A]=\Exp_u[\SG\mid A]$, $\Exp_u\Sobs=0$, $\Phi(0)=0$, and
  $|\Phi(F)|\le\max_c|\ell_c|$ for every real $F$.
\end{lemma}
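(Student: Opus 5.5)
The plan is to derive both identities from the Glauber identity of \Cref{lem:glauber} together with the factorization \eqref{eq:a:likelihood}. The first step is to write the likelihood as a mixture over labelings:
\[
P_u(A)=\sum_z\prod_i\theta_{z_i}(u)\,\Prob(A\mid Z=z).
\]
By \eqref{eq:a:likelihood}, the factor $\Prob(A\mid Z=z)$ does not depend on $u$. The sum is finite and every term is $C^1$ in $u$. Also, $P_u(A)>0$, because all $\theta_c>0$ and every $\Bern(A_{ij};q)>0$ when $0<q<p<1$. We can therefore differentiate term by term:
\[
\partial_u P_u(A)=\sum_z\prod_i\theta_{z_i}\Big(\sum_i\ell_{z_i}\Big)\Prob(A\mid z).
\]
Dividing by $P_u(A)$ gives $\Sobs=\Exp_u[\sum_i\ell_{Z_i}\mid A]$.

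Next, for each $i$ we use the tower property through $(A,Z_{-i})$:
\[
\Exp_u[\ell_{Z_i}\mid A]=\Exp_u\big[\Exp_u[\ell_{Z_i}\mid A,Z_{-i}]\mid A\big].
\]
By \Cref{lem:glauber}, the inner conditional expectation is $\sum_c\pi_c(F_i)\ell_c=\Phi(F_i)$. Summing over $i$ gives $\Sobs=\Exp_u[\SG\mid A]$. The identity $\sum_c\pi_c(F)\ell_c=\sum_c\dot\theta_ce^{F_c}/\sum_c\theta_ce^{F_c}$ is immediate from the definition of $\pi_c$.

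The remaining claims are elementary. The mean-zero property holds because
\[
\Exp_u\Sobs=\sum_A\partial_uP_u(A)=\partial_u\sum_AP_u(A)=\partial_u1=0,
\]
where the exchange of sum and derivative is allowed because the set of verdict graphs is finite. We have $\Phi(0)=\sum_c\dot\theta_c/\sum_c\theta_c=0$, since $\sum_c\theta_c=1$ along the path and so $\sum_c\dot\theta_c=0$. Finally, for real $F$ the weights $\pi_c(F)$ form a probability vector, so $|\Phi(F)|\le\max_c|\ell_c|$.

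There is no real obstacle in this argument. The only points that need care are the positivity of $P_u$, which justifies taking the logarithm and dividing, and the requirement that all $\theta_c>0$, which both \Cref{lem:glauber} and the definition of $\ell_c$ need.
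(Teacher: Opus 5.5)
Your proof is correct and follows the paper's argument essentially step for step. It differentiates the finite mixture $\sum_z\prod_i\theta_{z_i}(u)\Prob(A\mid z)$ term by term, applies the tower property through $(A,Z_{-i})$ together with the Glauber identity, and gets the mean-zero property, $\Phi(0)=0$ and the bound on $|\Phi|$ from the same elementary facts. Your remark on the positivity of $P_u(A)$ is a small addition that the paper leaves implicit.
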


\begin{proof}
  $P_u(A)=\sum_z\prod_i\theta_{z_i}(u)\Prob(A\mid Z=z)$ is a polynomial in $\theta(u)$; differentiating term
  by term gives $\partial_uP_u(A)=\sum_z(\sum_i\ell_{z_i})\prod_i\theta_{z_i}\Prob(A\mid z)$, that is,
  $\Sobs=\Exp_u[\sum_i\ell_{Z_i}\mid A]$. By \Cref{lem:glauber} and the tower property,
  $\Exp_u[\ell_{Z_i}\mid A]=\Exp_u[\sum_c\pi_c(F_i)\ell_c\mid A]=\Exp_u[\Phi(F_i)\mid A]$. Next,
  $\Exp_u\Sobs=\sum_A\partial_uP_u(A)=0$, and $\Phi(0)=\sum_c\dot\theta_c=0$ because $\theta(u)\in\Simplex$.
  For real $F$, $\Phi(F)$ is an average of the $\ell_c$ under the probability vector $\pi(F)$.
\end{proof}

\begin{lemma}[Derivatives of $\Phi$]
  \label{lem:phi}
  For $F\in\Cpx^K$ with $\sum_b\theta_be^{F_b}\ne0$ the formulas of \Cref{lem:score} define complex
  weights $\pi_c(F)$ and $\Phi(F)=\sum_c\pi_c(F)\ell_c$. For every $m\ge1$ and all meanings $c_1,\dots,c_m$, the
  derivative $\partial_{c_1}\cdots\partial_{c_m}\Phi(F)$ is the joint cumulant, under $\pi(F)$, of the variables
  $\ell,\mathbf 1_{c_1},\dots,\mathbf 1_{c_m}$ on $[K]$, where $\mathbf 1_c$ is the indicator of $c$. If
  $F\in\Real^K$ and $g:=\max_c|F_c|$, then $|\Phi(F)|\le\|\dot\theta\|_1e^{2g}$ and
  $|\partial_{c_1}\cdots\partial_{c_m}\Phi(F)|\le C_m\|\dot\theta\|_1e^{2g}$ with
  $C_m:=\mathrm{Bell}(m+1)\,m!$.
\end{lemma}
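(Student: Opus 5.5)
The plan is to recognize $\Phi$ as a derivative of a log-partition function and then read off both the cumulant identity and the bounds from the moment–cumulant formula. For $F\in\Cpx^K$ with $\sum_b\theta_be^{F_b}\ne0$ and $t\in\Cpx$ near $0$, I will introduce
\[
  \mathcal K(t,F):=\log\sum_b\theta_b\exp\big(t\ell_b+F_b\big),
\]
taking the branch of the logarithm that is holomorphic near $(0,F)$. The normalizer is nonzero at $t=0$, so this is well defined. Differentiating once in $t$ at $t=0$ gives $\sum_b\theta_b\ell_be^{F_b}/\sum_b\theta_be^{F_b}=\sum_c\pi_c(F)\ell_c$. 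Since $\theta_b\ell_b=\dot\theta_b$, this equals $\Phi(F)$.

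The key observation is that the coordinate $F_c$ enters the exponent exactly as the coefficient of the variable $\mathbf 1_c(b)$. Thus $\mathcal K(t,F+s)$, as a function of $(t,s)$, is the cumulant generating function of the vector $(\ell,\mathbf 1_1,\dots,\mathbf 1_K)$ under the tilted weights $\pi(F)$.

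\emph{Step 1: the cumulant identity.} The mixed derivative $\partial_t\partial_{c_1}\cdots\partial_{c_m}\mathcal K$ at $t=0$ equals $\partial_{c_1}\cdots\partial_{c_m}\Phi(F)$. By the standard formal identity between mixed derivatives of a log-moment generating function and joint cumulants, it is the joint cumulant of $\ell,\mathbf 1_{c_1},\dots,\mathbf 1_{c_m}$ under $\pi(F)$. That identity expresses the cumulant through the moments by the partition formula
\[
  \kappa(X_0,\dots,X_m)=\sum_{\sigma}(|\sigma|-1)!\,(-1)^{|\sigma|-1}\prod_{B\in\sigma}\Exp_{\pi(F)}\Big[\prod_{i\in B}X_i\Big],
\]
where $\sigma$ runs over the set partitions of $\{0,\dots,m\}$. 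The identity is purely algebraic in the moments $\sum_b\pi_b(F)\prod_iX_i(b)$. Hence it holds verbatim for complex weights $\pi(F)$ summing to one, by analytic continuation from real $F$, or directly by the chain rule.

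\emph{Step 2: the bounds for real $F$.} With $g:=\max_c|F_c|$, the normalizer lies in $[e^{-g},e^g]$, so $\pi_b(F)\le e^{2g}\theta_b$. In each partition $\sigma$, exactly one block contains the index $0$ of $\ell$. The moment of that block is bounded in modulus by
\[
  \sum_b\pi_b(F)|\ell_b|\le e^{2g}\sum_b|\dot\theta_b|=e^{2g}\|\dot\theta\|_1 .
\]
Every other block is a moment of a product of indicators under a probability vector, so it lies in $[0,1]$. Each term is therefore at most $(|\sigma|-1)!\,e^{2g}\|\dot\theta\|_1\le m!\,e^{2g}\|\dot\theta\|_1$. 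There are $\mathrm{Bell}(m+1)$ partitions, which gives the constant $C_m=\mathrm{Bell}(m+1)\,m!$. The bound on $|\Phi(F)|$ itself is the case of the single block $\{0\}$.

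I do not expect a serious obstacle here. The only point that needs care is justifying the cumulant formula for complex $F$, and I would handle it by noting that both sides are holomorphic in $F$ on the region where the normalizer is nonzero and agree for real $F$.
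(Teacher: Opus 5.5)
Your proposal is correct and follows essentially the paper's route. Both arguments identify the derivatives through the cumulant generating function of the tilted law $\pi(F)$ and expand the cumulant over the $\mathrm{Bell}(m+1)$ partitions with coefficients $(|\sigma|-1)!\le m!$. Both then bound the block containing $\ell$ by $e^{2g}\|\dot\theta\|_1$ and the indicator blocks by $1$.

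The only difference is cosmetic. You add $t\ell$ as an extra tilting variable, whereas the paper writes $\Phi=\sum_{c_0}\ell_{c_0}\partial_{c_0}\mathcal K$ and uses multilinearity of cumulants. Your remark that the partition formula for derivatives of $\log f$ holds for any nonvanishing holomorphic normalizer also makes explicit the complex case, which the paper leaves implicit.
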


\begin{proof}
  $\mathcal K(F):=\log\sum_b\theta_be^{F_b}$ is the cumulant generating function of the family $\pi(F)$ of
  laws on $[K]$, so $\partial_{c_0}\cdots\partial_{c_m}\mathcal K$ is the joint cumulant of
  $\mathbf 1_{c_0},\dots,\mathbf 1_{c_m}$ under $\pi(F)$; since $\Phi=\sum_{c_0}\ell_{c_0}\partial_{c_0}\mathcal K$,
  multilinearity of cumulants gives the formula. In the expansion of this cumulant over the
  $\mathrm{Bell}(m+1)$ partitions of $\{0,1,\dots,m\}$, with coefficients of modulus at most $m!$, the
  variable $\ell$ enters only through the block containing $0$, whose moment is $\Phi(F)$, the product
  $\pi_c(F)\ell_c=\dot\theta_ce^{F_c}/\sum_b\theta_be^{F_b}$ for a single meaning $c$, or zero. For real $F$,
  $\sum_b\theta_be^{F_b}\ge e^{-g}$ and $|\sum_c\dot\theta_ce^{F_c}|\le\|\dot\theta\|_1e^g$, so each of these
  has modulus at most $\|\dot\theta\|_1e^{2g}$; the other blocks are probabilities.
\end{proof}

\subsection{Two-Point Bounds}

\begin{lemma}[Hellinger Path Bound]
  \label{lem:hellinger}
  Let $\{P_u\}_{u\in[u_0,u_1]}$, $u_1-u_0\le1$, be distributions on a finite set with $P_u(A)>0$ and
  $u\mapsto P_u(A)$ continuously differentiable, and let $I(u):=\sum_AP_u(A)(\partial_u\log P_u(A))^2$.
  Then $\TV(P_{u_0},P_{u_1})\le\frac12(\int_{u_0}^{u_1}I(u)\,\mathrm du)^{1/2}$.
\end{lemma}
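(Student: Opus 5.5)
The plan is to pass through the Hellinger distance and integrate the derivative of the square-root densities along the path. Put $f_u(A):=\sqrt{P_u(A)}$. Since $P_u(A)>0$ and $u\mapsto P_u(A)$ is continuously differentiable, each $f_u(A)$ is continuously differentiable in $u$, with
\[
  \partial_uf_u(A)=\tfrac12 f_u(A)\,\partial_u\log P_u(A).
\]
First, I will bound total variation by Hellinger distance. With $H^2(P,Q):=\sum_A(\sqrt{P(A)}-\sqrt{Q(A)})^2$, Cauchy--Schwarz gives
\[
  \TV(P,Q)=\tfrac12\sum_A|\sqrt P-\sqrt Q|\,(\sqrt P+\sqrt Q)\le\tfrac12 H(P,Q)\Big(\sum_A(\sqrt P+\sqrt Q)^2\Big)^{1/2}.
\]
The last factor is at most $2$, because $\sum(\sqrt P+\sqrt Q)^2=2+2\sum\sqrt{PQ}\le4$. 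Hence $\TV\le H$.

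Second, I will bound $H$ along the path. Because the set is finite, the fundamental theorem of calculus applies coordinatewise: $f_{u_1}(A)-f_{u_0}(A)=\int_{u_0}^{u_1}\partial_uf_u(A)\,\mathrm du$. Applying Minkowski's integral inequality in $\ell_2$ over $A$, or equivalently Cauchy--Schwarz in $u$ coordinatewise and then summing, gives
\[
  H(P_{u_0},P_{u_1})\le\int_{u_0}^{u_1}\Big(\sum_A(\partial_uf_u(A))^2\Big)^{1/2}\mathrm du=\tfrac12\int_{u_0}^{u_1}I(u)^{1/2}\,\mathrm du.
\]
Cauchy--Schwarz in $u$ together with $u_1-u_0\le1$ then yields $H\le\tfrac12(\int I)^{1/2}$.

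Combining the two steps gives $\TV\le\frac12(\int I)^{1/2}$, which is the claim.

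The only delicate point is the constant. The crude bound $\TV\le H$ is exactly what is needed, provided $H^2$ is defined without the factor $\tfrac12$; I would double-check this normalization. No integrability issue arises, since everything is finite-dimensional and $I$ is continuous.
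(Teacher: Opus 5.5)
Your proof is correct and follows the paper's argument: you write $\sqrt{P_{u_1}(A)}-\sqrt{P_{u_0}(A)}$ as the integral of $\frac12\sqrt{P_u(A)}\,\partial_u\log P_u(A)$, use Cauchy--Schwarz in $u$ on an interval of length at most one to get $\sum_A(\sqrt{P_{u_1}}-\sqrt{P_{u_0}})^2\le\frac14\int I$, and pass to total variation via $\sum_A|P-Q|\le2(\sum_A(\sqrt P-\sqrt Q)^2)^{1/2}$. Your normalization of $H$ without the factor $\frac12$ matches the paper's, and the Minkowski variant only sharpens the intermediate bound to $\frac12\int I^{1/2}$.
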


\begin{proof}
  For every $A$,
  \[
    \sqrt{P_{u_1}(A)}-\sqrt{P_{u_0}(A)}=\int_{u_0}^{u_1}\frac12\sqrt{P_u(A)}\,\partial_u\log P_u(A)\,\mathrm du,
  \]
  whose square is at most $\frac14\int_{u_0}^{u_1}P_u(A)(\partial_u\log P_u(A))^2\mathrm du$ by
  Cauchy--Schwarz on an interval of length at most one. Summing over $A$,
  $\sum_A(\sqrt{P_{u_1}}-\sqrt{P_{u_0}})^2\le\frac14\int I$. Finally,
  $\sum_A|P-Q|\le(\sum_A(\sqrt P-\sqrt Q)^2)^{1/2}(\sum_A(\sqrt P+\sqrt Q)^2)^{1/2}\le2(\sum_A(\sqrt P-\sqrt Q)^2)^{1/2}$,
  so $\TV=\frac12\sum_A|P-Q|\le\frac12(\int I)^{1/2}$.
\end{proof}

\begin{lemma}[Two Points]
  \label{lem:lecam}
  Let $\Theta\subseteq\Simplex$, $\omega>0$ and $\theta^{(0)},\theta^{(1)}\in\Theta$ with
  $|\Ent(\theta^{(0)})-\Ent(\theta^{(1)})|\ge\omega$ and $\TV(P_{\theta^{(0)}},P_{\theta^{(1)}})\le1/2$. Then
  every estimator $\hH=\hH(A,\Xi)$, where $\Xi$ is auxiliary randomness independent of $(Z,A)$, satisfies
  $\max_{k\in\{0,1\}}\Exp_{\theta^{(k)}}(\hH-\Ent(\theta^{(k)}))^2\ge\omega^2/16$; hence
  $\inf_{\hH}\sup_{\theta\in\Theta}\Exp_\theta(\hH-\Ent(\theta))^2\ge\omega^2/16$.
\end{lemma}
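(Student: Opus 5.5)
The plan is the standard Le Cam two-point reduction, carried out with randomized tests so that the auxiliary randomness $\Xi$ causes no difficulty. Fix an estimator $\hH=\hH(A,\Xi)$ and write $H_k:=\Ent(\theta^{(k)})$ for $k\in\{0,1\}$. Let $\mathcal A$ be the event $\{|\hH-H_0|\ge|\hH-H_1|\}$, that is, the event that $\hH$ is at least as close to $H_1$ as to $H_0$. On $\mathcal A$ the triangle inequality gives $2|\hH-H_0|\ge|\hH-H_0|+|\hH-H_1|\ge|H_0-H_1|\ge\omega$, so $(\hH-H_0)^2\ge\omega^2/4$ there. Symmetrically, on $\mathcal A^c$ we get $(\hH-H_1)^2\ge\omega^2/4$. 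Chebyshev's inequality in this form then yields
\[
\Exp_{\theta^{(0)}}(\hH-H_0)^2+\Exp_{\theta^{(1)}}(\hH-H_1)^2\ge\frac{\omega^2}{4}\big[\Prob_{\theta^{(0)}}(\mathcal A)+\Prob_{\theta^{(1)}}(\mathcal A^c)\big].
\]

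Next I bound the bracket. Since $\Xi$ is independent of $(Z,A)$ and has the same law under both parameters, the function $\varphi(A):=\Prob(\mathcal A\mid A)=\Exp[\Ind{(A,\Xi)\in\mathcal A}\mid A]$ is a test with values in $[0,1]$ that does not depend on $\theta$. It satisfies $\Prob_{\theta^{(k)}}(\mathcal A)=\sum_AP_{\theta^{(k)}}(A)\varphi(A)$. For any $[0,1]$-valued $\varphi$ we have $|\sum_A(P_{\theta^{(0)}}-P_{\theta^{(1)}})\varphi|\le\TV(P_{\theta^{(0)}},P_{\theta^{(1)}})$. Therefore
\[
\Prob_{\theta^{(0)}}(\mathcal A)+\Prob_{\theta^{(1)}}(\mathcal A^c)\ge1-\TV(P_{\theta^{(0)}},P_{\theta^{(1)}})\ge\tfrac12.
\]
Combining the two displays, the sum of the two risks is at least $\omega^2/8$, so the larger of them is at least $\omega^2/16$. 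Both points lie in $\Theta$, so the supremum over $\Theta$ of the risk of $\hH$ is at least the larger of the two risks, and taking the infimum over $\hH$ gives the minimax statement.

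There is no real obstacle. The only point that needs care is measurability and the handling of $\Xi$, and conditioning on $A$ to form the randomized test $\varphi$ resolves it. The laws $P_\theta$ live on the finite set of verdict graphs, so the total variation bound applies directly with no integrability issues.
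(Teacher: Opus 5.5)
Your proof is correct and follows essentially the paper's argument: a Le Cam two-point reduction in which a Markov-type bound on an event where the error is at least $\omega/2$, together with $1-\TV\ge1/2$, gives a sum of risks of at least $\omega^2/8$. The differences are cosmetic: the paper uses the disjoint events $\{|\hH-\Ent(\theta^{(k)})|<\omega/2\}$ and handles $\Xi$ by noting that $P_{\theta^{(0)}}\otimes\mathcal L(\Xi)$ and $P_{\theta^{(1)}}\otimes\mathcal L(\Xi)$ are at the same total variation distance as $P_{\theta^{(0)}}$ and $P_{\theta^{(1)}}$, whereas you use the nearest-point event and the randomized test $\varphi(A)$.
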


\begin{proof}
  The pair $(A,\Xi)$ has laws $P_{\theta^{(k)}}\otimes\mathcal L(\Xi)$, whose total variation distance equals
  $\TV(P_{\theta^{(0)}},P_{\theta^{(1)}})$. The events $E_k:=\{|\hH-\Ent(\theta^{(k)})|<\omega/2\}$ are
  disjoint, so $\Prob_0(E_0^c)+\Prob_1(E_1^c)\ge\Prob_0(E_0^c)+\Prob_1(E_0)\ge1-\TV\ge1/2$. By Markov's
  inequality $\sum_k\Exp_k(\hH-\Ent(\theta^{(k)}))^2\ge(\omega/2)^2\cdot\frac12$, and the maximum is at
  least half the sum.
\end{proof}

\subsection{Sweeps and Motifs}
\label{app:sweeps}

Fix a path as in \Cref{lem:score} and a value $u$. For $U\subseteq[n]$ let $\Pu_Uf:=\Exp_u[f\mid A,Z_{-U}]$,
with $\Pu_\varnothing:=\mathrm{Id}$, and for $S\subseteq[n]$ let $\DS_S:=\sum_{U\subseteq S}(-1)^{|S|-|U|}\Pu_U$. The
sweep of order $r\ge0$ is $\Sw r:=\sum_{S\subseteq[n],\,|S|\le r}\DS_S\SG$.

\begin{lemma}[Sweeps]
  \label{lem:sweep}
  \begin{enumerate}[label=(\roman*),leftmargin=*]
    \item For every $f(A,Z)$ and every nonempty $S$, $\Exp_u[\DS_Sf\mid A]=0$. Consequently
      $\Exp_u[\Sw r\mid A]=\Sobs$, $\Exp_u\Sw r=0$ and $I(u)\le\Exp_u\Sw r^2$ for every $r\ge0$.
    \item $\sum_{S\subseteq[n]}\DS_S=\Pu_{[n]}$; in particular $\sum_{S\subseteq[n]}\DS_S\SG=\Sobs$.
  \end{enumerate}
\end{lemma}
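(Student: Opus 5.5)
The statement to prove is \Cref{lem:sweep}: the correction operators $\DS_S$ have conditional mean zero given $A$, and they sum to $\Pu_{[n]}$. The plan is to derive everything from the tower property of the nested conditional expectations $\Pu_U$, together with an inclusion--exclusion (Möbius) identity on the Boolean lattice of subsets.

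For (i), fix $f(A,Z)$ and a nonempty $S$. Since $\Pu_U f=\Exp_u[f\mid A,Z_{-U}]$ and the $\sigma$-field generated by $(A,Z_{-U})$ contains $\sigma(A)$, the tower property gives
\[
\Exp_u[\Pu_Uf\mid A]=\Exp_u[f\mid A]
\]
for every $U$, including $U=\varnothing$, where $\Pu_\varnothing=\mathrm{Id}$. Therefore
\[
\Exp_u[\DS_Sf\mid A]=\Exp_u[f\mid A]\sum_{U\subseteq S}(-1)^{|S|-|U|},
\]
and the alternating sum equals $(1-1)^{|S|}=0$ because $S\ne\varnothing$.

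Applying this with $f=\SG$ kills every term of $\Sw r$ except $S=\varnothing$, where $\DS_\varnothing=\Pu_\varnothing=\mathrm{Id}$. Hence $\Exp_u[\Sw r\mid A]=\Exp_u[\SG\mid A]$, and this equals $\Sobs$ by \Cref{lem:score}.

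Two consequences follow. First, $\Exp_u\Sw r=\Exp_u\Sobs=0$, again by \Cref{lem:score}. Second, conditional Jensen (equivalently, Cauchy--Schwarz) gives
\[
I(u)=\Exp_u\Sobs^2=\Exp_u\big(\Exp_u[\Sw r\mid A]\big)^2\le\Exp_u\Sw r^2 .
\]
All quantities are finite, because the state space of $(A,Z)$ is finite.

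For (ii), exchange the order of summation:
\[
\sum_{S\subseteq[n]}\DS_S=\sum_{U\subseteq[n]}\Pu_U\sum_{S:\,U\subseteq S\subseteq[n]}(-1)^{|S|-|U|}.
\]
Writing $S=U\cup T$ with $T\subseteq[n]\setminus U$, the inner sum is $(1-1)^{n-|U|}$. This vanishes unless $U=[n]$, in which case it equals $1$, so the total is $\Pu_{[n]}$.

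Finally, $\Pu_{[n]}\SG=\Exp_u[\SG\mid A,Z_{-[n]}]=\Exp_u[\SG\mid A]$, since $Z_{-[n]}$ is empty. This equals $\Sobs$ by \Cref{lem:score}, which proves the last claim.

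There is no real obstacle here. The only points needing care are that $\Pu_\varnothing$ is the identity, so that the $S=\varnothing$ term reproduces $\SG$, and that the conditioning on $Z_{-U}$ is legitimate. The latter holds because every $\theta_c(u)>0$, which makes all conditional laws well defined.
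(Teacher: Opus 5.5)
Your proposal is correct and follows the paper's proof essentially step for step. In (i) you use the tower property to reduce $\Exp_u[\DS_Sf\mid A]$ to the vanishing alternating sum, then \Cref{lem:score} and conditional Jensen; in (ii) you swap the order of summation so that the inner alternating sum vanishes unless $U=[n]$. Your extra remarks that $\DS_\varnothing=\Pu_\varnothing=\mathrm{Id}$ and that $\Pu_{[n]}$ conditions on $A$ alone are precisely the implicit steps the paper relies on.
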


\begin{proof}
  (i) By the tower property $\Exp_u[\Pu_Uf\mid A]=\Exp_u[f\mid A]$ for every $U$, so
  $\Exp_u[\DS_Sf\mid A]=\Exp_u[f\mid A]\sum_{U\subseteq S}(-1)^{|S|-|U|}=0$ for $S\ne\varnothing$. With
  \Cref{lem:score}, $\Exp_u[\Sw r\mid A]=\Exp_u[\SG\mid A]=\Sobs$, and the conditional Jensen inequality gives
  $\Sobs^2\le\Exp_u[\Sw r^2\mid A]$, whose expectation is $I(u)=\Exp_u\Sobs^2$.
  (ii) $\sum_S\sum_{U\subseteq S}(-1)^{|S\setminus U|}\Pu_U=\sum_U\Pu_U\sum_{S\supseteq U}(-1)^{|S\setminus U|}=\Pu_{[n]}$,
  since the inner sum vanishes unless $U=[n]$; and $\Pu_{[n]}\SG=\Exp_u[\SG\mid A]=\Sobs$.
\end{proof}

\subsubsection{The Sweep as an Analytic Function of the Edge Weights}
Fix a labeling $z\in[K]^n$. The operators $\Pu_U$ depend on $A$ only through $J=(J_{kl})_{k<l}$. For a
vector $J$ let
\[
  w(z):=\prod_v\theta_{z_v}\prod_{k<l}e^{J_{kl}W_{kl}(z)},\qquad
  \Pu_Uf(z):=\frac{\sum_{z'_U}w(z'_U,z_{-U})\,f(z'_U,z_{-U})}{\sum_{z'_U}w(z'_U,z_{-U})} ,
\]
which at $J=J(A)$ is the conditional expectation by \Cref{lem:glauber}. With $\SG$ written as the
function $\sum_i\Phi(\sum_{j\ne i}J_{ij}\mathbf e_{z_j})$ of $(J,z)$, the sweep becomes $\Sw r=\mathcal S_r(J;z)$ for an
explicit function $\mathcal S_r$, a finite combination of ratios of exponential polynomials in $J$
whose denominators equal $1$ at $J=0$; so $\mathcal S_r$ is analytic on a complex neighborhood of
$J=0$. A \emph{motif} is an edge multiset $M$ on $[n]$ with multiplicities $m_e$; we write
$J^M:=\prod_eJ_e^{m_e}$, $M!:=\prod_em_e!$, $|M|:=\sum_em_e$, $V(M)$ for the set of vertices it touches,
$v(M):=|V(M)|$ and $d(M)$ for the number of distinct edges, and
\[
  B^{(r)}_M(z):=\coef M\mathcal S_r:=\frac1{M!}\,\partial^M\mathcal S_r(J;z)\Big|_{J=0}.
\]
The same notation $\coef M$ is used for the coefficient of $J^M$ in any function of $J$ analytic near
$0$. A motif is connected if it is connected as a multigraph on $V(M)$.

\subsubsection{Motif Systems}
For a finite vertex set $V$, the motif system on $V$ is the same construction with the vertices of
$V$ only: the weight $w_V(z):=\prod_{v\in V}\theta_{z_v}\prod_{\{k,l\}\subseteq V}e^{J_{kl}W_{kl}}$, the fields,
$\Phi$, the operators $\Pu^{(V)}_U$ for $U\subseteq V$, their combinations $\DS^{(V)}_S$, and
$\SG^{(V)}:=\sum_{i\in V}\Phi(F^{(V)}_i)$ all use only the edges inside $V$.

\begin{lemma}[Locality]
  \label{lem:local}
  Let $M$ be a motif, $V:=V(M)$, $z\in[K]^n$ and $S\subseteq[n]$.
  \begin{enumerate}[label=(\roman*),leftmargin=*]
    \item If $S\not\subseteq V$, then $\coef M\DS_S\SG=0$.
    \item If $S\subseteq V$, then $\coef M\DS_S\SG(z)=\coef M\DS^{(V)}_S\SG^{(V)}(z_V)$.
    \item Hence $B^{(r)}_M(z)$ depends only on $z_V$, and
      $B^{(r)}_M=\coef M\sum_{S\subseteq V,\,|S|\le r}\DS^{(V)}_S\SG^{(V)}$.
    \item If $M$ is disconnected, then $B^{(r)}_M=0$.
  \end{enumerate}
\end{lemma}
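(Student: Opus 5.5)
The plan is to work directly from the definitions of $\Pu_U$, $\DS_S$ and $\SG$ as functions of $J$ for a fixed labeling $z$. I would prove (i) and (ii) through a single observation: the coefficient of $J^M$ in any analytic function of $J$ is unchanged if we set $J_e=0$ for every edge $e$ not in $M$. We may therefore compute $\coef M\DS_S\SG$ after restricting to the subspace $\{J_e=0 \text{ for } e\notin M\}$. On that subspace, every edge weight that touches a vertex outside $V$ is zero.

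For (i), suppose $S\not\subseteq V$ and pick $k\in S\setminus V$. On the restricted subspace, the block weight \eqref{eq:a:blockweight} for any $U\ni k$ factorizes as $\theta_{z'_k}$ times a weight on $U\setminus\{k\}$, because the edges at $k$ carry zero weight. Likewise the field $F_k$ vanishes, and the fields $F_i$, $i\ne k$, do not depend on $z_k$, since the edge $ik$ has weight zero. Hence $\SG$, and every $\Pu_{U\setminus\{k\}}\SG$, is independent of $z_k$, and averaging over $z'_k$ under $\theta$ changes nothing. This gives $\Pu_U\SG=\Pu_{U\setminus\{k\}}\SG$ on the subspace. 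We then pair $U$ with $U\triangle\{k\}$ in the alternating sum defining $\DS_S$, and the sum cancels, so $\DS_S\SG\equiv0$ on the subspace and its $J^M$ coefficient vanishes. The step needing care is the formal justification that we may restrict before applying $\Pu_U$. This holds because $\Pu_U\SG$ is analytic in all of $J$, restricting commutes with the ratio formula, and denominators stay nonzero near $0$.

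For (ii), take $S\subseteq V$ and restrict as above. In the ratio formula for $\Pu_U$ with $U\subseteq V$, the factors $\theta_{z_v}$ and the edge factors for vertices outside $V$ cancel between numerator and denominator. What remains are exactly the motif-system weights $w_V$. In $\SG$, the terms $\Phi(F_i)$ with $i\notin V$ have $F_i=0$, so $\Phi(0)=0$ and they drop out. For $i\in V$, $F_i$ equals $F^{(V)}_i$. Thus $\DS_S\SG$ coincides with $\DS^{(V)}_S\SG^{(V)}$ on the subspace, and the coefficients agree.

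Part (iii) follows by summing (i) and (ii) over $|S|\le r$. For (iv), suppose $M$ is disconnected, with $V=V_1\sqcup V_2$ split by its components. By (iii) we may work in the motif system on $V$, restricted to $J_e=0$ for edges between $V_1$ and $V_2$, which is again harmless for the $J^M$ coefficient. The system then factorizes. The posterior averages factor, each $\Pu^{(V)}_U$ acts as $\Pu^{(V_1)}_{U\cap V_1}\Pu^{(V_2)}_{U\cap V_2}$, and $\SG^{(V)}=\SG^{(V_1)}+\SG^{(V_2)}$. For $S$ meeting both parts, the alternating sum over the side that $\SG^{(V_j)}$ does not see cancels, exactly as in (i), so $\DS^{(V)}_S\SG^{(V)}=0$. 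For $S\subseteq V_1$, the result depends only on the $J_e$ inside $V_1$, so it contains no monomial involving edges of $V_2$, and the $J^M$ coefficient is zero. The same holds for $S\subseteq V_2$. The main obstacle is this factorization bookkeeping; the paper handles it with the cancellation argument used for (i).
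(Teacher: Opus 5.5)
Your proposal is correct and follows essentially the paper's proof: restrict to the subspace $\{J_e=0,\ e\notin M\}$, use the decoupling of a vertex $k\in S\setminus V$ to get $\Pu_{U\cup\{k\}}\SG=\Pu_U\SG$ and cancel the alternating sum for (i), observe that the vertices outside $V$ are isolated for (ii), and use factorization across the components for (iv). The only difference is in (iv), where the paper needs no cancellation at all. For $i\in V_1$, each single term $\Pu_U\Phi(F_i)=\Pu_{U\cap V_1}\Phi(F_i)$ already contains no weight of the other component, and symmetrically for $i\in V_2$, so no term can produce the mixed monomial $J^M$. This also avoids a small imprecision in your argument at $S=\varnothing$: there $\SG^{(V)}$ is a sum of a function of the $V_1$-edges and a function of the $V_2$-edges, not a function of the $V_1$-edges alone.
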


\begin{proof}
  The coefficient $\coef M$ depends only on the restriction of a function to the subspace
  $\{J_e=0,\ e\notin M\}$, and we compute there.
  (i) Take $w\in S\setminus V$. On the subspace every edge weight at $w$ vanishes, so $\SG$ does not
  depend on $z_w$ (the fields of the other vertices give $z_w$ the weight $0$, and
  $\Phi(F_w)=\Phi(0)=0$), and in the weight \eqref{eq:a:blockweight} the coordinate $z_w$ is not
  coupled to the others. Hence $\Pu_{U\cup\{w\}}\SG=\Pu_U\SG$ for every $U\subseteq S\setminus\{w\}$, and
  $\DS_S=\sum_{U\subseteq S\setminus\{w\}}(-1)^{|S\setminus\{w\}|-|U|}(\Pu_{U\cup\{w\}}-\Pu_U)$ annihilates $\SG$.
  (ii) On the subspace the vertices outside $V$ are isolated: their terms $\Phi$ vanish, and the
  fields and posteriors of the vertices of $V$ involve only edges inside $V$.
  (iii) follows from (i) and (ii).
  (iv) Let $M$ split into nonempty vertex-disjoint parts $M_1,M_2$. On the subspace there is no edge
  between $V(M_1)$ and $V(M_2)$. For $i\in V(M_1)$, $\Phi(F_i)$ contains no weight of $M_2$, and the
  posterior of $Z_U$ factorizes between the two parts, so $\Pu_U\Phi(F_i)=\Pu_{U\cap V(M_1)}\Phi(F_i)$
  contains no weight of $M_2$; the case $i\in V(M_2)$ is symmetric, and vertices outside $V(M)$
  contribute nothing. Hence no term contains the monomial $J^M$.
\end{proof}

Let $\varkappa_M$ denote the joint cumulant of $\{W_e\}_{e\in M}$, each $W_e$ repeated $m_e$ times, under
$Z\sim\theta^{\otimes n}$; equivalently $\log\Exp_\theta\exp(\sum_eJ_eW_e)=\sum_M\varkappa_MJ^M/M!$.

\begin{lemma}[Exact Identity]
  \label{lem:identity}
  Let the path be real analytic. For every motif $M$, $V:=V(M)$, every $r\ge0$ and every $z$,
  \begin{equation}
    \label{eq:a:identity}
    B^{(r)}_M(z)=\frac{\partial_u\varkappa_M}{M!}
    -\sum_{S\subseteq V,\ |S|>r}\coef M\big(\DS^{(V)}_S\SG^{(V)}\big)(z).
  \end{equation}
  Moreover $\varkappa_M$ is a polynomial in $P_2,\dots,P_{v(M)}$ whose coefficients depend only on $M$.
\end{lemma}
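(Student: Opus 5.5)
The identity has two parts, and both reduce to the full sweep. \Cref{lem:sweep}(ii) says $\sum_{S\subseteq[n]}\DS_S\SG=\Sobs$. \Cref{lem:local}(i)--(ii) say that for the coefficient of $J^M$ only sets $S\subseteq V$ contribute, each through the motif system on $V$. Together these give
\[
\coef M\Sobs=\sum_{S\subseteq V}\coef M\DS^{(V)}_S\SG^{(V)}.
\]
Splitting this sum at $|S|\le r$ and using \Cref{lem:local}(iii) for $B^{(r)}_M$ then yields \eqref{eq:a:identity}. This holds provided $\coef M\Sobs$, viewed as an analytic function of $J$ with $z$ fixed, equals $\partial_u\varkappa_M/M!$. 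So the plan has three steps: identify $\Sobs$ as a function of $J$, compute its Taylor coefficients, and verify the cumulant polynomial claim.

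\emph{Score as a function of $J$.} Note that $\Pu_{[n]}\SG$ in the analytic continuation does not depend on $z$: averaging over all labels removes the dependence. By \eqref{eq:a:likelihood}, $P_u(A)=\prod\Bern(A_{ij};q)\cdot\Exp_{\theta(u)}\exp(\sum_eJ_eW_e)$, and the Bernoulli prefactor does not depend on $u$. Hence
\[
\Sobs=\partial_u\log\Exp_{\theta(u)}\exp\Big(\sum_{e}J_eW_e\Big),
\]
and this identity extends to the analytic function of $J$, because $\Pu_{[n]}\SG$ and this expression agree as rational functions of $e^{J}$. To check this, I would repeat the proof of \Cref{lem:score} at symbolic $J$: differentiate $\sum_z\prod\theta_{z_i}e^{\sum J_eW_e(z)}$ in $u$, and apply the Glauber tower step for the complex weights, which is an algebraic identity.

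\emph{Taylor coefficients.} The cumulant generating function expansion $\log\Exp\exp(\sum J_eW_e)=\sum_M\varkappa_MJ^M/M!$ converges near $J=0$, and it depends analytically on $u$ uniformly on a small polydisc, since the path is real analytic. Differentiating term by term in $u$, justified by Cauchy estimates, gives $\coef M\Sobs=\partial_u\varkappa_M/M!$.

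\emph{Polynomial in the power sums.} I would expand $\varkappa_M$ by the moment--cumulant formula over set partitions of the edge list of $M$ with repetitions. Each block contributes the mixed moment $\Exp\prod_{e\in\text{block}}W_e$. This is the probability that every connected component of the block's edge graph is monochromatic, which equals $\prod_{\text{components}}P_{|\text{comp}|}$ by independence of the labels. Every component has at most $v(M)$ vertices, and singleton components give $P_1=1$, so $\varkappa_M$ is a polynomial in $P_2,\dots,P_{v(M)}$ with coefficients depending only on $M$.

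The main obstacle is the first step: making rigorous that the analytic continuation of $\Pu_{[n]}\SG$ coincides with $\partial_u\log\Exp\exp(\sum J_eW_e)$ on a neighborhood of $0$, rather than only at the real points $J=J(A)$, which take just two values per edge. I would argue directly in the motif system on $V$, where everything is a finite ratio of exponential polynomials, so the identity can be checked algebraically for all complex $J$ near $0$.
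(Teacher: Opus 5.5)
Your proposal is correct and follows the paper's proof. The steps are the same: the M\"obius identity $\sum_S\DS_S=\Pu$ together with locality to isolate the $|S|>r$ terms, then the algebraic Glauber step $\sum_{z_i}w\,\ell_{z_i}=\Phi(F_i)\sum_{z_i}w$ to identify the full posterior average of $\SG$ with $\partial_u\log\Exp_\theta\exp(\sum_eJ_eW_e)$. Joint analyticity then commutes $\coef M$ with $\partial_u$, and the moment--cumulant expansion with $\Exp\prod_{e\in M'}W_e=\prod_QP_{v(Q)}$ gives the polynomial claim.

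The only difference is where you start. You work on $[n]$ and then localize, whereas the paper applies the M\"obius identity directly in the motif system on $V$, so that $\Pu^{(V)}_V\SG^{(V)}=\partial_u\log\sum_zw_V(z)$ holds as an identity of functions of complex $J$ from the outset. That is exactly the resolution you propose for the obstacle you flagged.
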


\begin{proof}
  On the motif system, \Cref{lem:sweep}(ii) gives $\sum_{S\subseteq V}\DS^{(V)}_S=\Pu^{(V)}_V$, so, by
  \Cref{lem:local}(iii), $B^{(r)}_M=\coef M\Pu^{(V)}_V\SG^{(V)}-\sum_{|S|>r}\coef M\DS^{(V)}_S\SG^{(V)}$. The
  operator $\Pu^{(V)}_V$ averages all labels of $V$ with the weight $w_V$. As a function of $z_i$,
  $w_V(z)$ is $\theta_{z_i}e^{F_{i,z_i}}$ times a factor free of $z_i$, and $F_i$ does not depend on $z_i$,
  so $\sum_{z_i}w_V\,\ell_{z_i}=\Phi(F_i)\sum_{z_i}w_V$. Therefore
  \[
    \Pu^{(V)}_V\SG^{(V)}=\frac{\sum_zw_V(z)\sum_{i\in V}\ell_{z_i}}{\sum_zw_V(z)}
    =\partial_u\log\sum_zw_V(z)=\partial_u\log\Exp_\theta\exp\Big(\sum_{\{k,l\}\subseteq V}J_{kl}W_{kl}\Big),
  \]
  using $\partial_uw_V=w_V\sum_{i\in V}\ell_{z_i}$. The function $\log\sum_zw_V(z)$ is jointly analytic in
  $(u,J)$ for $J$ near $0$, because the path is real analytic and $\sum_zw_V=1$ at $J=0$; so $\coef M$
  commutes with $\partial_u$, and taking $\coef M$ gives $\partial_u\varkappa_M/M!$. Finally, a joint
  cumulant is a polynomial in the joint moments of its sub-multisets, with coefficients depending only on
  the multiset. Moreover, $\Exp\prod_{e\in M'}W_e=\prod_QP_{v(Q)}$, the product running over the connected
  components $Q$ of $M'$: the vertices of a component share a meaning, and different components involve
  disjoint sets of independent labels. Each component has between $2$ and $v(M)$ vertices.
\end{proof}

\subsection{Local Decomposition and Small Fields}

For $i\in[n]$ and $S\subseteq[n]$ with $|S|\le r$, let $\Omega:=\{i\}\cup S$, let $E_\Omega$ be the set of pairs
inside $\Omega$, and define the external fields $F^{\mathrm{ext}}_v:=\sum_{m\notin \Omega}J_{vm}\mathbf e_{z_m}\in\Real^K$
for $v\in \Omega$. By linearity,
\begin{equation}
  \label{eq:a:local}
  \Sw r=\sum_{i=1}^n\ \sum_{|S|\le r}\Psi_{i,S},\qquad
  \Psi_{i,S}:=\DS_S\Phi(F_i)=g_{i,S}(x,f;z_\Omega),
\end{equation}
with $x:=J_{E_\Omega}$ and $f:=(F^{\mathrm{ext}}_v)_{v\in \Omega}$, where $g_{i,S}$ is explicit: for $U\subseteq S$ the
conditional weight of $Z_U$ is proportional to
\[
  \Big(\prod_{k\in U}\theta_{z'_k}\Big)\exp\Big[\sum_{k\in U}F^{\mathrm{ext}}_{k,z'_k}
  +\sum_{k\in U,\ l\in \Omega\setminus U}J_{kl}\Ind{z'_k=z_l}+\sum_{k<l,\ k,l\in U}J_{kl}\Ind{z'_k=z'_l}\Big],
\]
and $F_i=F^{\mathrm{ext}}_i+\sum_{k\in \Omega\setminus\{i\}}J_{ik}\mathbf e_{z_k}$. The variables $x$ and $f$ are separate
arguments of $g_{i,S}$, and $f$ is linear in $J$.

\begin{lemma}[Connected Decomposition]
  \label{lem:connected}
  If the graph $(\Omega,\mathrm{supp}\,x)$ is disconnected, then $g_{i,S}(x,f)=0$. Consequently
  \begin{gather*}
    g_{i,S}(x,f)=\sum_{B\subseteq E_\Omega:\ (\Omega,B)\text{ connected}}x^B\,h_B(x_B,f),\\
    h_B(x_B,f):=\int_{[0,1]^B}\partial_{x_B}g_{i,S}\big(t\odot x_B,0_{E_\Omega\setminus B},f\big)\,\mathrm dt .
  \end{gather*}
\end{lemma}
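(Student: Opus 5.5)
The plan has two parts: first the vanishing statement, which is a factorization argument in the spirit of \Cref{lem:local}(iv) and (i); then the expansion, which is the multivariate Hadamard (iterated fundamental theorem of calculus) formula combined with M\"obius inversion over edge sets.

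\emph{Vanishing.} Suppose $(\Omega,\mathrm{supp}\,x)$ is disconnected. Let $\Omega_1$ be the component containing $i$ and $\Omega_2:=\Omega\setminus\Omega_1\ne\varnothing$. Since $\Omega=\{i\}\cup S$, we have $\Omega_2\subseteq S$. Every $J_{kl}$ with $k\in\Omega_1$, $l\in\Omega_2$ vanishes.

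I would then check four facts from the explicit conditional weight defining $g_{i,S}$, in which $f$ enters as a fixed argument.
\begin{enumerate}
  \item For $U\subseteq S$, the weight of $z'_U$ is a product of a factor in $z'_{U\cap\Omega_1}$ and a factor in $z'_{U\cap\Omega_2}$. Each $k\in U$ is coupled only to its own external field $F^{\mathrm{ext}}_k$ and to internal edges, and no internal edge crosses between the two parts.
  \item $F_i=F^{\mathrm{ext}}_i+\sum_{k\in\Omega_1\setminus\{i\}}J_{ik}\mathbf e_{z_k}$ depends on no label in $\Omega_2$.
  \item Consequently $\Pu_U\Phi(F_i)=\Pu_{U\cap\Omega_1}\Phi(F_i)$ for every $U\subseteq S$.
  \item Fix $w\in\Omega_2\subseteq S$. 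Then $\Pu_{U\cup\{w\}}\Phi(F_i)=\Pu_U\Phi(F_i)$ for $U\subseteq S\setminus\{w\}$.
\end{enumerate}
Pairing the terms of $\DS_S$ as in the proof of \Cref{lem:local}(i),
\[
\DS_S=\sum_{U\subseteq S\setminus\{w\}}(-1)^{|S|-1-|U|}\big(\Pu_{U\cup\{w\}}-\Pu_U\big),
\]
shows that $g_{i,S}(x,f)=\DS_S\Phi(F_i)=0$. The only care needed is to confirm that treating $f$ as an independent argument does not couple the two parts. It does not, because each external field is attached to a single vertex. This is the step I expect to need the most attention.

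\emph{Expansion.} For real arguments all denominators in $g_{i,S}$ are positive, so $g:=g_{i,S}(\cdot,f)$ is smooth in $x$. For $B\subseteq E_\Omega$ define
\[
\Delta_B(x):=\sum_{B'\subseteq B}(-1)^{|B\setminus B'|}\,g\big(x_{B'},0_{E_\Omega\setminus B'},f\big).
\]
M\"obius inversion on the Boolean lattice gives $g(x)=\sum_{B\subseteq E_\Omega}\Delta_B(x)$. Applying the fundamental theorem of calculus once in each coordinate of $B$ gives
\[
\Delta_B(x)=x^B\int_{[0,1]^B}\partial_{x_B}g\big(t\odot x_B,0_{E_\Omega\setminus B},f\big)\,\mathrm dt=x^Bh_B(x_B,f).
\]

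Finally, suppose $(\Omega,B)$ is disconnected. Then every $(\Omega,B')$ with $B'\subseteq B$ is also disconnected, so by the first part every term in $\Delta_B$ vanishes and $\Delta_B\equiv0$. Dropping these terms leaves exactly the sum over connected $(\Omega,B)$ claimed in the statement.
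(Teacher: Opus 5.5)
Your proposal is correct and follows the paper's proof. It uses the same factorization of the conditional weight between the component $\Omega_1\ni i$ and $\Omega_2=\Omega\setminus\Omega_1\subseteq S$, and then the same M\"obius inversion over edge sets with the iterated fundamental theorem of calculus. The only differences are cosmetic: you cancel $\DS_S$ by pairing on a single $w\in\Omega_2$, where the paper splits the alternating sum into two factors, and you apply the FTC to every $B$ before discarding the disconnected ones, where the paper applies it only to connected $B$.
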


\begin{proof}
  Let $\Omega$ split into nonempty parts $\Omega_1\ni i$ and $\Omega_2$ with no edge of $\mathrm{supp}\,x$ between them.
  Then $\Phi(F_i)$ does not depend on $z_{\Omega_2}$, and the conditional weight of $Z_U$ factorizes between
  $\Omega_1$ and $\Omega_2$: the local weights between the parts vanish, and the external fields are fixed
  vectors that do not couple labels. Hence $\Pu_U\Phi(F_i)=\Pu_{U\cap \Omega_1}\Phi(F_i)$ and
  \[
    \DS_S\Phi(F_i)=\sum_{U_1\subseteq S\cap \Omega_1}(-1)^{|S\cap \Omega_1|-|U_1|}\Pu_{U_1}\Phi(F_i)
    \sum_{U_2\subseteq S\cap \Omega_2}(-1)^{|S\cap \Omega_2|-|U_2|}=0,
  \]
  since $S\cap \Omega_2=\Omega_2\ne\varnothing$. For the decomposition, Möbius inversion in the coordinates of $x$
  gives $g_{i,S}=\sum_{B\subseteq E_\Omega}g_B(x_B)$ with $g_B:=\sum_{B'\subseteq B}(-1)^{|B\setminus B'|}g_{i,S}(x_{B'},0,f)$. If
  $(\Omega,B)$ is disconnected, so is $(\Omega,B')$ for every $B'\subseteq B$, and $g_B=0$. If it is connected, $g_B$
  vanishes whenever $x_e=0$ for some $e\in B$, and the fundamental theorem of calculus in each coordinate
  of $B$ gives $g_B=x^Bh_B$, because $\partial_{x_B}g_B=\partial_{x_B}g_{i,S}(\cdot,0,f)$.
\end{proof}

\begin{lemma}[Derivative Bounds]
  \label{lem:derivatives}
  For every multi-index $\alpha$ and real $(x,f)$,
  \begin{equation}
    \label{eq:a:derivatives}
    \big|\partial^\alpha g_{i,S}(x,f)\big|\ \le\ C_{|\alpha|,r,K}\,\|\dot\theta\|_1\,
    \exp\Big(2\big(\|F^{\mathrm{ext}}_i\|_\infty+\|x\|_1\big)\Big),
  \end{equation}
  and each $h_B$ obeys the same bound with $C_{|\alpha|+|B|,r,K}$ in place of $C_{|\alpha|,r,K}$. Moreover
  $|\Psi_{i,S}|\le2^{|S|}\max_c|\ell_c|$ for real $J$.
\end{lemma}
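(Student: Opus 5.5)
The plan is to write $g_{i,S}$ as the finite signed sum $\sum_{U\subseteq S}(-1)^{|S|-|U|}\Pu_U\Phi(F_i)$ and to bound each term $\Pu_U\Phi(F_i)$ and its derivatives separately. The $2^{|S|}\le2^r$ terms are then absorbed into $C_{|\alpha|,r,K}$.

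\emph{Structure of one term.} For real $(x,f)$ and fixed $z_\Omega$, the local decomposition \eqref{eq:a:local} shows that $\Pu_U\Phi(F_i)$ is an expectation $\Exp_{\pi}[\varphi(z'_U)]$. Here $\pi$ is a probability law on $[K]^U$ with weight $\prod_{k\in U}\theta_{z'_k}\exp(\langle t,T(z'_U)\rangle)$, where $t$ collects the coordinates of $x$ and of $F^{\mathrm{ext}}_k$ for $k\in U$. The statistics $T$ are indicators $\Ind{z'_k=z_l}$, $\Ind{z'_k=z'_l}$ and $\Ind{z'_k=c}$, all with values in $[0,1]$. The integrand is $\varphi(z'_U)=\Phi(F_i(z'_U))$ with $F_i=F^{\mathrm{ext}}_i+\sum_{k\in\Omega\setminus\{i\}}J_{ik}\mathbf e_{z_k}$. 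This field is linear in $(x,F^{\mathrm{ext}}_i)$, and $F^{\mathrm{ext}}_i$ does not depend on $z'_U$ because the labels outside $\Omega$ are fixed.

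\emph{Derivative bound.} Differentiating an exponential-family expectation in its natural parameters gives a Leibniz-type formula. The derivative $\partial^\alpha\Exp_\pi[\varphi]$ is a finite sum, over the ways of splitting $\alpha$ between the integrand and the weight, of joint cumulants under $\pi$ of one derivative $\partial^\beta\varphi$ with $|\alpha|-|\beta|$ of the statistics $T$. The number of terms and their combinatorial coefficients depend only on $|\alpha|$, $r$ and $K$, and each term carries $\varphi$ only through its derivatives. The moment expansion of such a cumulant has $\partial^\beta\varphi$ in exactly one block. The other factors are moments of the $T$, bounded by $1$, and $\pi$ is a genuine probability law for real parameters. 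Hence each cumulant is at most a constant times $\sup_{z'_U}|\partial^\beta\varphi(z'_U)|$.

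The chain rule writes $\partial^\beta\varphi$ as $\partial_{c_1}\cdots\partial_{c_m}\Phi(F_i)$ times coefficients in $\{0,1\}$. \Cref{lem:phi} then bounds it by $C_m\|\dot\theta\|_1e^{2g}$, where $g=\|F_i\|_\infty\le\|F^{\mathrm{ext}}_i\|_\infty+\|x\|_1$ uniformly in $z'_U$. Combining these gives \eqref{eq:a:derivatives}.

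\emph{The functions $h_B$.} Differentiate under the integral in \Cref{lem:connected}:
$\partial^\alpha h_B(x_B,f)=\int_{[0,1]^B}t^{\alpha_B}\,(\partial^{\alpha}\partial_{x_B}g_{i,S})(t\odot x_B,0,f)\,\mathrm dt$. The factor $t^{\alpha_B}$ lies in $[0,1]$, and the derivative has order $|\alpha|+|B|$. Since $\|t\odot x_B\|_1\le\|x\|_1$ and the exponential is monotone, the same bound follows with $C_{|\alpha|+|B|,r,K}$.

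\emph{The bound on $\Psi_{i,S}$.} For real $J$, $\Phi(F_i)$ is an average of the $\ell_c$ under the probability vector $\pi(F_i)$ (\Cref{lem:score}), so $|\Phi(F_i)|\le\max_c|\ell_c|$. The operator $\Pu_U$ is an average under a probability law, so $|\Pu_U\Phi(F_i)|\le\max_c|\ell_c|$ as well. The $2^{|S|}$ terms of $\DS_S$ then give $|\Psi_{i,S}|\le2^{|S|}\max_c|\ell_c|$.

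The main obstacle is the derivative bound. The delicate points are the bookkeeping of derivatives that fall on $F^{\mathrm{ext}}_i$ inside $\Phi$ versus on the weight, and the need to use positivity of $\pi$ rather than Cauchy estimates. This is why the bound is stated only for real $(x,f)$.
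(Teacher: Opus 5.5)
Your proposal is correct and follows essentially the paper's proof. Like the paper, you write $g_{i,S}$ as a signed sum of at most $2^r$ posterior averages, control derivatives of the softmax weights through logits affine in $(x,f)$ with $\{0,1\}$ coefficients (your exponential-family cumulant expansion is a tidy repackaging of the paper's Leibniz rule), bound the derivatives of $\Phi$ by \Cref{lem:phi} with $g\le\|F^{\mathrm{ext}}_i\|_\infty+\|x\|_1$, differentiate under the integral for $h_B$, and bound $\Psi_{i,S}$ by averaging.
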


\begin{proof}
  $g_{i,S}$ is a signed sum of at most $2^r$ functions $\Pu_U\Phi(F_i)$, $U\subseteq S$. Each is a sum over
  $z'_U$ of normalized weights times $\Phi$ evaluated at $F^{\mathrm{ext}}_i+\sum_{k\in \Omega\setminus\{i\}}x_{ik}\mathbf e_{z''_k}$,
  where $z''_k$ is $z'_k$ or $z_k$. The normalized weights are softmax functions of logits that are
  affine in $(x,f)$ with coefficients in $\{0,1\}$, so each derivative of order $m$ of a weight is
  bounded by a constant depending only on $m$. By \Cref{lem:phi} the derivatives of $\Phi$ of order $m$
  are at most $C_m\|\dot\theta\|_1e^{2g}$ with $g\le\|F^{\mathrm{ext}}_i\|_\infty+\|x\|_1$. The Leibniz and chain rules
  give \eqref{eq:a:derivatives}; differentiating under the integral, whose integrand is smooth on the
  compact cube $[0,1]^B$ and whose arguments satisfy the same bound on $g$, gives the bound for $h_B$.
  Finally, for real $J$ each $\Pu_U\Phi(F_i)$ is an average of values of $\Phi$, each at most
  $\max_c|\ell_c|$ in modulus (\Cref{lem:score}).
\end{proof}

\begin{lemma}[Small Fields]
  \label{lem:small-field}
  Let $\mathcal G:=\{\max_{v,c}|F_{vc}|\le\sqrt\delta\log n\}$. There is $n_0$ depending only on $q_0$ such
  that $\delta\le1$ and $n\ge n_0$ imply $\Prob(\mathcal G^c)\le2Kn\,e^{-q_0^2\log^2n/50}$.
\end{lemma}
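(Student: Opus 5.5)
The plan is a union bound over vertices and meanings, followed by a concentration inequality for each field. Fix $v\in[n]$ and $c\in[K]$. Then $F_{vc}=\sum_{j\ne v}J_{vj}\Ind{Z_j=c}$. Conditionally on $Z$, the summands are independent by \Cref{lem:weights}(ii), each bounded by $2\Delta/q_0$ in modulus by \Cref{lem:weights}(i). At most $n-1$ of them are nonzero. So I would condition on $Z$ and control $F_{vc}$ as a sum of bounded independent variables. The tail bound will be uniform in $Z$, and the unconditional bound follows by taking expectations.

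\emph{Mean.} By \Cref{lem:weights}(ii), $|\Exp[J_{vj}\mid Z]|\le2\Delta^2/q_0^2$. Hence the conditional mean of $F_{vc}$ has modulus at most
\[
n\cdot2\Delta^2/q_0^2=2\delta/q_0^2\le2/q_0^2 .
\]
Since $\delta\le1$, this is at most $\frac12\sqrt\delta\log n$ once $\sqrt\delta\log n\ge4\sqrt\delta/q_0^2$, that is, once $\log n\ge4/q_0^2$. This is where $n_0$ enters, and it depends only on $q_0$.

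\emph{Fluctuation.} Each centered summand lies in an interval of length at most $4\Delta/q_0$. Hoeffding's inequality then gives
\[
\Prob\bigl(|F_{vc}-\Exp[F_{vc}\mid Z]|\ge t\mid Z\bigr)\le2\exp\Bigl(-\frac{2t^2}{n(4\Delta/q_0)^2}\Bigr)=2\exp\Bigl(-\frac{q_0^2t^2}{8\delta}\Bigr).
\]
Take $t=\frac12\sqrt\delta\log n$. The bound becomes $2\exp(-q_0^2\log^2n/32)$, which is at most $2e^{-q_0^2\log^2n/50}$. Combined with the mean bound, this gives
\[
\Prob(|F_{vc}|>\sqrt\delta\log n)\le2e^{-q_0^2\log^2n/50}.
\]

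\emph{Union bound.} Summing over the $nK$ pairs $(v,c)$ yields $\Prob(\mathcal G^c)\le2Kn\,e^{-q_0^2\log^2n/50}$. There is slack in the constant $50$, which I would use to absorb the mean term more loosely if needed. The only mildly delicate point is the mean: the drift $2\delta/q_0^2$ is not small relative to $\sqrt\delta$ when $\delta$ is tiny. The bound survives only because of the $\log n$ factor, which is what forces $n\ge n_0(q_0)$.
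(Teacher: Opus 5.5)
Your proposal is correct and follows the paper's proof. Both arguments condition on $Z$, absorb the drift $2\delta/q_0^2\le\frac12\sqrt\delta\log n$ once $\log n\ge4/q_0^2$, apply a concentration inequality at $t=\frac12\sqrt\delta\log n$, and finish with a union bound over the $nK$ pairs. The only difference is that you use Hoeffding where the paper uses Bernstein. Hoeffding works here because the range-based proxy $n(4\Delta/q_0)^2=16\delta/q_0^2$ is already of order $\delta$, and it gives the better constant $32$, with $n_0$ needed only for the drift.
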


\begin{proof}
  Given $Z$, $F_{vc}$ is a sum of at most $n$ conditionally independent terms $J_{vm}$. By
  \Cref{lem:weights}, after centering each is bounded by $4\Delta/q_0$, the conditional variances sum to at
  most $3\delta/q_0^2$, and the conditional mean has modulus at most $2\delta/q_0^2$, which is at most
  $\frac12\sqrt\delta\log n$ once $\log n\ge4/q_0^2$. Bernstein's inequality with $t=\frac12\sqrt\delta\log n$,
  variance proxy $3\delta/q_0^2$ and range $4\Delta/q_0$, together with $\Delta/\sqrt\delta=n^{-1/2}$, gives
  $\Prob(|F_{vc}|>\sqrt\delta\log n\mid Z)\le2\exp(-q_0^2\log^2n/50)$ for $n\ge n_0(q_0)$. A union bound
  over the $nK$ pairs $(v,c)$ gives the claim.
\end{proof}

\subsection{The Entropy in Symmetric Coordinates}

\begin{lemma}[Integral Representation]
  \label{lem:intrep}
  For $\theta,\theta'\in\Simplex$, zero coordinates allowed,
  \begin{equation*}
    \Ent(\theta)-\Ent(\theta')=\int_0^\infty\log\frac{\pbar_\theta(t)}{\pbar_{\theta'}(t)}\,\mathrm dt,
  \end{equation*}
  the integrand being absolutely integrable on $(0,\infty)$.
\end{lemma}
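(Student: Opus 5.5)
The plan is to reduce to the one-variable identity $-x\log x=\int_0^\infty\big[\log(t+x)-\log t-\frac{x}{1+t}\big]\,\mathrm dt$, valid for every $x\in[0,1]$. At $x=0$ the integrand is identically zero, since $\log(t+0)-\log t=0$, so the identity holds trivially. For $x>0$, the bracket is $O(1/t^2)$ at infinity, because
\[
\log(1+x/t)-\frac{x}{1+t}=\frac{x}{t}-\frac{x}{1+t}+O(t^{-2})=O(t^{-2}),
\]
and it is $O(\log(1/t))$ at zero; both behaviours are integrable. To evaluate the integral, integrate over $[0,T]$:
\[
\int_0^T\log(t+x)\,\mathrm dt=(T+x)\log(T+x)-T-x\log x,\qquad \int_0^T\log t\,\mathrm dt=T\log T-T,
\]
\[
\int_0^T\frac{x}{1+t}\,\mathrm dt=x\log(1+T).
\]
Combining these, the integral over $[0,T]$ equals $(T+x)\log(T+x)-T\log T-x\log(1+T)-x\log x$. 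As $T\to\infty$, $(T+x)\log(T+x)-T\log T=x\log T+x+o(1)$ and $x\log(1+T)=x\log T+o(1)$, so the integral tends to $x-x\log x$. This is off by $x$, so I correct the compensator and use the identity
\[
-x\log x=\int_0^\infty\Big[\log(t+x)-\log t-\frac{x}{1+t}\Big]\mathrm dt-x .
\]

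Sum this identity over the coordinates $x=\theta_c$. Since $\sum_c\theta_c=1$, both the compensator terms and the constant $-x$ contribute the same amount for every $\theta\in\Simplex$. Hence
\[
\Ent(\theta)=\int_0^\infty\Big[\log\pbar_\theta(t)-K\log t-\frac1{1+t}\Big]\mathrm dt-1,
\]
and subtracting the same formula for $\theta'$ gives the claim, because the terms $K\log t$, $1/(1+t)$ and the constant $-1$ cancel. Zero coordinates need no special treatment: a factor $t+0=t$ in $\pbar_\theta$ cancels the corresponding $\log t$ exactly.

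To justify taking the difference inside the integral, I check absolute integrability of $\log(\pbar_\theta/\pbar_{\theta'})$ directly rather than relying on the subtraction. Near infinity, both polynomials have the form $t^K+t^{K-1}+O(t^{K-2})$ because $e_1=1$, so the ratio is $1+O(t^{-2})$ and the log is $O(t^{-2})$. Near zero, each $\pbar$ is comparable to $t^{j}$ with $j$ the number of zero coordinates, so the log is $O(\log(1/t))$, which is integrable on $(0,1]$. On compact subsets of $(0,\infty)$ the integrand is continuous. This also legitimises splitting the integral over $[0,T]$ and letting $T\to\infty$ for each summand. The only delicate point, which I would treat as the main obstacle, is the bookkeeping of the compensating terms so that each per-coordinate integral converges separately; choosing the compensator $x/(1+t)$ is exactly what makes this work.
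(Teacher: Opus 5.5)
Your argument is correct and is essentially the paper's proof: both rest on the explicit primitive $\int_0^T\log(t+x)\,\mathrm dt=(T+x)\log(T+x)-T-x\log x$, and both use $\sum_c\theta_c=1$ to cancel the divergent parts as $T\to\infty$. The bookkeeping differs: the paper truncates the difference and expands $(M+x)\log(M+x)$ in $1/M$, while your per-coordinate compensators $\log t+x/(1+t)$, with the constant $-x$, give the slightly stronger absolute formula $\Ent(\theta)=\int_0^\infty[\log\pbar_\theta(t)-K\log t-\frac1{1+t}]\,\mathrm dt-1$; in a write-up, state the corrected identity from the outset rather than the first version, which is off by $x$.
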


\begin{proof}
  For $x\in[0,1]$ and $M>0$, $\int_0^M\log(t+x)\,\mathrm dt=(M+x)\log(M+x)-M-x\log x$. Summing over $c$ and
  subtracting the same identity for $\theta'$,
  \[
    \int_0^M\log\frac{\pbar_\theta(t)}{\pbar_{\theta'}(t)}\,\mathrm dt=\Ent(\theta)-\Ent(\theta')
    +\sum_c\big[\phi_M(\theta_c)-\phi_M(\theta'_c)\big],\qquad \phi_M(x):=(M+x)\log(M+x).
  \]
  By Taylor's theorem $\phi_M(x)=M\log M+x(\log M+1)+x^2/(2M)+O(M^{-2})$ with an absolute constant, uniformly in
  $x\in[0,1]$, and $\sum_c\theta_c=\sum_c\theta'_c=1$, so the bracket sum is $(P_2(\theta)-P_2(\theta'))/(2M)+O(M^{-2})\to0$, with a
  constant depending only on $K$. For
  integrability: on $t\ge1$, $\pbar_\theta(t)-\pbar_{\theta'}(t)=\sum_{j\ge2}(e_j(\theta)-e_j(\theta'))t^{K-j}$ has
  modulus at most $D_e\,t^{K-2}$ with $D_e=D_e(\theta,\theta')=\sum_{j\ge2}|e_j(\theta)-e_j(\theta')|$, while
  $\pbar\ge t^K+t^{K-1}$ because $e_0=e_1=1$ and $e_j\ge0$; with $|\log(x/y)|\le\log(1+|x-y|/\min\{x,y\})$ the
  integrand is at most $D_e/(t(t+1))$. On $(0,1]$, $|\log\pbar_\theta(t)|\le K|\log t|+K\log2$, which is
  integrable, and the same holds for $\theta'$.
\end{proof}

\section{Proofs of the Fiber Lemmas}
\label{app:fibers}

This appendix proves the fiber lemmas, with the constructions and the notation of
\Cref{sec:fibers}.

\subsection{The Mixed Fiber}

\begin{proof}[Proof of \Cref{lem:fiber-mixed}]
  (i) \emph{Small roots.} Let $Y(\eps,w):=\prod_c(1-\eps w/y_c)$. Since
  $\prod_c(\eps w-y_c)=(-1)^m(\prod_cy_c)Y$ and $\prod_j(w-x_j)=2^{1-2s}T_s(2w-1)$ (both sides are monic of
  degree $s$ with the same roots),
  \[
    p_u(\eps w)=\eps^s2^{1-2s}\Big(\prod_cy_c\Big)\,G(u,\eps,w),\qquad G(u,\eps,w):=(-1)^mY(\eps,w)\,T_s(2w-1)+(u-\tfrac12).
  \]
  $G$ is real analytic in $(u,\eps,w)$. At $\eps=0$, $G=(-1)^mT_s(2w-1)+u-\frac12$, whose zeros solve
  $T_s(2w-1)=c_u:=(-1)^{m+1}(u-\frac12)$ with $|c_u|\le1/4$. Writing $2w-1=\cos\varphi$ with $\varphi\in(0,\pi)$, the
  equation $\cos(s\varphi)=c_u$ has exactly one solution in each interval $((i-1)\pi/s,i\pi/s)$, $i=1,\dots,s$, and
  at these points $|\partial_wT_s(2w-1)|=2s|\sin(s\varphi)|/\sin\varphi\ge2s\sqrt{15}/4$. Since $|c_u|\le1/4<1$, the
  solutions stay a distance depending only on $s$ away from $\varphi\in\{0,\pi\}$, that is, from $w\in\{0,1\}$,
  uniformly in $u\in[1/4,3/4]$. The analytic implicit function theorem and compactness give
  $\eps_1>0$ and real-analytic $w_j(u,\eps)$ on a neighborhood of $[1/4,3/4]\times[0,\eps_1]$ that are simple,
  mutually separated zeros of $G(u,\eps,\cdot)$, with $|\partial_wG(u,\eps,w_j)|\ge g_0>0$; shrinking $\eps_1$
  gives $w_{\min}\le w_j\le2$. Hence the $\eps w_j$ are $s$ positive real roots of $p_u$. At $u=\frac12$,
  $p_{1/2}$ has the roots $\eps x_j$, and by uniqueness in the implicit function theorem these are the
  $\eps w_j(\frac12,\eps)$.

  \emph{Large roots.} Let $\Psi_c(z,\eps,\zeta):=\prod_{c'}(z-y_{c'}(\eps))\prod_j(z-\eps x_j)+\zeta$. At
  $(\bar y_c,0,0)$, $\Psi_c=0$ and $\partial_z\Psi_c=\prod_{c'\ne c}(\bar y_c-\bar y_{c'})\,\bar y_c^s\ne0$. The analytic
  implicit function theorem gives a root $z_c(\eps,\zeta)$ near $\bar y_c$ with $z_c(\eps,0)=y_c(\eps)$ and
  $|\partial_\zeta z_c|\le C$; then $\theta_c(u):=z_c(\eps,(u-\frac12)\sigma_\eps)$ is real analytic in $u$ and
  $|\theta_c-y_c|\le C\sigma_\eps\le C\eps^s$. For small $\eps$, $\theta_c>\tau$ because $\bar y_c>\tau$, and the large
  roots are distinct from each other and from the small roots.

  These $m+s=K$ distinct real roots are all the roots of $p_u$; they are simple and positive. Only the
  constant term of $p_u$ depends on $u$, so their sum is $e_1=\sum_cy_c+\eps\sum_jx_j=(1-\eps s/2)+\eps s/2=1$.
  Hence $\theta(u)\in\Simplex$, with $m=K-s$ coordinates above $\tau$.

  (ii) $p_u(z)=\sum_j(-1)^je_j(\theta(u))z^{K-j}$ and only the constant term $(-1)^Ke_K$ depends on $u$, with
  increment $(u-\frac12)\sigma_\eps$. By \eqref{eq:d:pk},
  $\dot P_K=(-1)^{K-1}K\dot e_K=(-1)^{K-1}K(-1)^K\sigma_\eps=-K\sigma_\eps$.

  (iii) Differentiating $p_u(\theta_c(u))=0$ gives $p_u'(\theta_c)\dot\theta_c+\sigma_\eps=0$. At a large root,
  $|p_u'(\theta_c)|$ is bounded below by a positive constant, so $|\dot\theta_c|\le C\sigma_\eps\le C\eps^s$. At a small
  root, differentiating $p_u(\eps w)=\eps^s2^{1-2s}(\prod y)G$ in $w$ gives
  $p_u'(\eps w_j)=\eps^{s-1}2^{1-2s}(\prod y)\,\partial_wG(u,\eps,w_j)$, so
  $|\dot\theta_c|=\sigma_\eps/|p_u'|=\eps/|\partial_wG|\le\eps/g_0$ and $|\dot\theta_c|/\theta_c\le1/(g_0w_{\min})$. Finally,
  $\dot P_k=k\sum_c\theta_c^{k-1}\dot\theta_c$: the large roots contribute at most $kmC\eps^s$ and the small ones at
  most $ks(2\eps)^{k-1}\eps/g_0\le ks2^{k-1}\eps^s/g_0$, since $k\ge K>s$ and $\eps\le1$. Conditions (H0)--(H4)
  follow, the interval $[1/4,3/4]$ having length $1/2$.

  (iv) By \Cref{lem:intrep}, and because $\pbar_{\theta(u)}(t)=\sum_je_j(\theta(u))t^{K-j}$ changes only through
  $e_K(\frac34)-e_K(\frac14)=(-1)^K\sigma_\eps/2$, the substitution $t=\eps v$ gives
  \[
    [\Ent]^{3/4}_{1/4}=\eps\int_0^\infty f_\eps(v)\,\mathrm dv,\qquad
    f_\eps(v):=\log\frac{\pbar_{\theta(3/4)}(\eps v)}{\pbar_{\theta(1/4)}(\eps v)},
  \]
  with $\pbar_{\theta(u)}(\eps v)=\prod_{c\le m}(\eps v+\theta_c(u))\cdot\eps^s\prod_j(v+w_j(u,\eps))$. Pointwise, the products
  over the large roots tend to $\prod_c\bar y_c$ for both values of $u$, so
  $f_\eps(v)\to f_0(v):=\log\big(\prod_j(v+w_j(\frac34,0))/\prod_j(v+w_j(\frac14,0))\big)$. For domination,
  $\min\{\pbar_{\theta(3/4)},\pbar_{\theta(1/4)}\}(\eps v)\ge\tau^m\eps^s(v+w_{\min})^s$, so
  $|f_\eps(v)|\le\frac{\sigma_\eps/2}{\tau^m\eps^s(v+w_{\min})^s}\le\frac C{(v+w_{\min})^s}$, which is integrable on $[0,\infty)$ for
  $s\ge2$. By dominated convergence, $[\Ent]^{3/4}_{1/4}/\eps\to H_s^*:=\int_0^\infty f_0(v)\,\mathrm dv$. The monic polynomial
  $Q_u(w):=\prod_j(w-w_j(u,0))$ equals $2^{1-2s}[T_s(2w-1)+(-1)^m(u-\frac12)]$ (both are monic of degree $s$ with the
  same $s$ simple roots), so $\prod_j(v+w_j(u,0))=(-1)^sQ_u(-v)$ and
  \[
    \prod_j\big(v+w_j(\tfrac34,0)\big)-\prod_j\big(v+w_j(\tfrac14,0)\big)=(-1)^s2^{1-2s}(-1)^m\cdot\tfrac12=(-1)^K2^{-2s}.
  \]
  Both products are positive for $v\ge0$, so $f_0(v)=\log\big(1+(-1)^K2^{-2s}/\prod_j(v+w_j(\frac14,0))\big)$ is
  nonzero with the constant sign $(-1)^K$, and $H_s^*\ne0$. The $w_j(u,0)$ depend only on $(s,m,u)$, so
  $h_s:=|H_s^*|$ depends only on $(K,s)$; choose $\eps_H$ with $|[\Ent]^{3/4}_{1/4}/\eps-H_s^*|\le h_s/2$ for $\eps\le\eps_H$.
\end{proof}

\subsection{The Fiber for \texorpdfstring{$s=1$}{s = 1}}

\begin{proof}[Proof of \Cref{lem:fiber-one}]
  (i) $p_u(z)=(z-\omega)G(z;\eps,\omega)$, where $G(z;\eps,\omega):=(F_\eps(z)-F_\eps(\omega))/(z-\omega)$ is a monic polynomial of
  degree $K-1$ in $z$ whose coefficients are polynomials in $(\eps,\omega)$. At $(\eps,\omega)=(0,0)$,
  $F_0(z)=z\prod_j(z-b_j)$ and $G(z;0,0)=\prod_j(z-b_j)$, whose roots $b_j$ are pairwise distinct with
  $|\partial_zG|=\prod_{j'\ne j}|b_j-b_{j'}|>0$. The implicit function theorem gives roots $\theta_j(\eps,\omega)$, analytic
  near $(0,0)$, with $|\theta_j-b_j|\le C(\eps+|\omega|)\le3C\eps$ for $\omega\in[\eps,2\eps]$, and
  $\theta_j(u):=\theta_j(\eps,\eps(1+u))$ is analytic in $u$. For small $\eps$ the $\theta_j$ exceed $\tau$ (as $b_j>\tau$), are
  pairwise distinct, and differ from $\omega(u)\le2\eps<\tau$. Hence $p_u$ has $K$ distinct positive real roots. The
  coefficient of $z^{K-1}$ in $F_\eps$ is $-(\eps+\sum_jy_j)=-1$, and subtracting the constant $F_\eps(\omega)$ does not
  change it, so the roots sum to $1$.

  (ii) All coefficients of $p_u$ except the constant term are those of $F_\eps$, which do not depend on
  $u$; the constant term is $F_\eps(0)-F_\eps(\omega(u))=(-1)^Ke_K(u)$. Hence $\dot e_K=(-1)^{K+1}\eps F_\eps'(\omega)$ and, by
  \eqref{eq:d:pk}, $\dot P_K=K\eps F_\eps'(\omega)$.

  (iii) Differentiating $p_u(\theta_j(u))=0$ and using $p_u'=F_\eps'$ gives
  $F_\eps'(\theta_j)\dot\theta_j-F_\eps'(\omega)\dot\omega=0$, so $\dot\theta_j=\eps F_\eps'(\omega)/F_\eps'(\theta_j)$. As $\eps\to0$,
  $F_\eps'(\theta_j)\to F_0'(b_j)=b_j\prod_{j'\ne j}(b_j-b_{j'})\ne0$, and $|F_\eps'(\omega)|\le C$; so $|\dot\theta_j|\le C\eps$.

  (iv) With $h(x):=-x\log x$, $[\Ent]^1_0=(h(2\eps)-h(\eps))+\sum_j(h(\theta_j(1))-h(\theta_j(0)))$. The first difference is
  $-2\eps\log(2\eps)+\eps\log\eps=\eps\log(1/\eps)-2\eps\log2$. In the second, $|\theta_j(1)-\theta_j(0)|\le C\eps$ and
  $|h'(x)|=|1+\log x|\le1+\log(1/\tau)$ on $[\tau,1]$, so its modulus is at most $C'\eps$. Hence
  $[\Ent]^1_0\ge\eps\log(1/\eps)-(2\log2+C')\eps\ge\frac12\eps\log(1/\eps)$ once $\log(1/\eps)\ge2(2\log2+C')$.
\end{proof}

\subsection{The Perturbation for \texorpdfstring{$s=0$}{s = 0}}

\begin{proof}[Proof of \Cref{lem:fiber-zero}]
  (i) The coordinates of $\bar y$ are distinct, so the roots of $p_{\bar y}$ are simple, and the implicit
  function theorem applies to $(z,\zeta)\mapsto p_{\bar y}(z)+\zeta$ at each $(\bar y_c,0)$. Only the constant term
  changes, so the roots still sum to $1$; for small $\eta$ they stay above $\tau$ because $\bar y_c>\tau$.
  (ii) The constant term $(-1)^Ke_K$ increases by $(-1)^K\eta u$; use \eqref{eq:d:pk}.
  (iii) $\dot\theta_c=-(-1)^K\eta/p_u'(\theta_c)$, and $|p_u'(\theta_c)|$ is bounded below; then
  $|\dot P_k|\le k\|\dot\theta\|_1$.
  (iv) $\pbar_{\theta(u)}(t)=\pbar_{\bar y}(t)+\eta u$, so, by \Cref{lem:intrep},
  $[\Ent]^1_0=\int_0^\infty\log(1+\eta/\pbar_{\bar y}(t))\,\mathrm dt$. Since $\pbar_{\bar y}(t)\ge\prod_c\bar y_c\ge\eta$ and
  $\log(1+x)\ge x-x^2/2\ge x/2$ for $x\in[0,1]$, $[\Ent]^1_0\ge\frac\eta2\Lam(\bar y)$. Finally,
  $\pbar_{\bar y}(t)\ge\max\{\prod_c\bar y_c,t^K\}$ gives $\Lam(\bar y)<\infty$.
\end{proof}

\end{document}